% This is samplepaper.tex, a sample chapter demonstrating the
% LLNCS macro package for Springer Computer Science proceedings;
% Version 2.21 of 2022/01/12
%
\documentclass[runningheads,orivec]{llncs}
\usepackage[T1]{fontenc}
% T1 fonts will be used to generate the final print and online PDFs,
% so please use T1 fonts in your manuscript whenever possible.
% Other font encondings may result in incorrect characters.
%
\usepackage{graphicx}
% Used for displaying a sample figure. If possible, figure files should
% be included in EPS format.
%
% If you use the hyperref package, please uncomment the following two lines
% to display URLs in blue roman font according to Springer's eBook style:
%\usepackage{color}
%\renewcommand\UrlFont{\color{blue}\rmfamily}
%\urlstyle{rm}
%

%%% MACROS
%\input{include/macros/macros-preamble}
% !TEX root = ../main.tex
% Input and language
\usepackage[utf8]{inputenc}

% Math
%\usepackage{amsmath}
\usepackage{amssymb}    
\usepackage{marvosym}
\usepackage{stmaryrd}
\usepackage{bussproofs}
\usepackage{mathtools}
\usepackage{bbold}
\usepackage{enumerate}
\usepackage{qtree}
\usepackage{tikz-cd}
\usetikzlibrary{calc}

\usetikzlibrary{decorations.pathreplacing, calligraphy}

% Formatting
\usepackage{ifthen}
\usepackage{xspace}
\usepackage{fancybox}
\usepackage{hhline}

\usepackage{soul}

\usepackage{adjustbox}

\usepackage{xcolor}
\usepackage{multirow}
\usepackage{microtype}

\usepackage{hyperref}

\hypersetup{colorlinks=true}

\usepackage[colorinlistoftodos]{todonotes}
%\setuptodonotes{color=blue!10}
\usepackage{proof}
\usepackage{pifont}

%%%%%%%%%%%%%%%%
%%% APPENDIX STUFF %%%
\usepackage{appendix}
\capturecounter{theorem}
\capturecounter{lemma}
\capturecounter{proposition}
\usepackage{marginnote}

\newboolean{appendix}
\setboolean{appendix}{false}
\newboolean{withlinkedproofs}
\setboolean{withlinkedproofs}{true}

%\counterwithin*{equation}{section}
% \setcounter{toappendix}{2}

%%%%% APPENDIX POINTERS AND SETTINGS

\newcommand{\NoteProof}[1]{
        \ifthenelse{\boolean{withproofs}}{\ifthenelse{\boolean{appendix}}{
        \marginnote{Originally at p. \pageref{#1}}
        }{
        \marginnote{{Proof at p.\,{\pageref{app:#1}}}}
        }
        }{}
}

\newcommand{\NoteFullProof}[1]{
	\ifthenelse{\boolean{withproofs}}{\ifthenelse{\boolean{appendix}}{
			\marginnote{Originally at p. \pageref{#1}}
		}{
			\marginnote{{Full Proof at p.\,{\pageref{app:#1}}}}
		}
	}{}
}

\newcommand{\applabel}[1]{$\phantomsection\label{app:#1}$}
%%%%% END APPENDIX POINTERS

%%%%%%%
%% LATEX SHORTCUTS
%%%%%%%
\newcommand{\sem}[1]{\interp{#1}}

\newcommand{\semint}[1]{\interp{#1}^{\intsym}}

\newcommand{\ignore}[1]{}
 
% \newcommand{\ms}{\medskip}
 %Giulio

\newcommand{\hcolspace}{@{\hspace{.25cm}}}
\newcommand{\myinput}[1]{\ifthenelse{\boolean{withimages}}{\input{#1}}{}}

% for references
\newcommand{\reflemma}[1]{Lemma~\ref{l:#1}}

 %Giulio

 %Giulio
\newcommand{\reflemmaeq}[1]{{L.\ref{l:#1}}}

 %Giulio
 %Giulio

 %Giulio

 %Giulio

\newcommand{\refprop}[1]{Prop.~\ref{prop:#1}}
 %Giulio
 %Giulio
 %Giulio
 %Giulio
 %Giulio
 %Giulio
 %Giulio
\newcommand{\refsect}[1]{Section~\ref{sect:#1}}

 %Giulio
\renewcommand{\refeq}[1]{(\ref{eq:#1})} %renewcommand to avoid conflict with package mathtools
\newcommand{\reffig}[1]{Fig.~\ref{fig:#1}}
 %Giulio

 %Giulio
%\newcommand{\refrmk}[1]{Remark~\ref{rmk:#1}} %Giulio
 %Giulio
 %Giulio
 %Giulio

% macro for cases in proofs
% \renewcommand{\case}[1]{{\bf Case #1.}} %to comment in sigplanconf.cls

%%%%%%%
%% ABBREVIATIONS
%%%%%%%
\newcommand{\ie}{\textit{i.e.}\xspace}
\newcommand{\eg}{\textit{e.g.}\xspace}
\newcommand{\ih}{\textit{i.h.}\xspace}

%%%%%%%
%% TEXT FORMATTING
%%%%%%%

%\newcommand{\ben}[1]{{\RED{#1}}}
%\newcommand{\ben}[1]{{#1}}

%\newcommand{\cben}[2]{\RED{#2}}
%\newcommand{\cben}[2]{{#2}}

%\renewcommand{\pablo}[1]{#1}

%\renewcommand{\cpablo}[2]{#2}

%\newcommand{\giulio}[1]{{\brown{#1}}}
%\newcommand{\giulio}[1]{\textcolor{blue}{#1}}

%\newcommand{\giulio}{}

%\newcommand{\adr}[1]{{\blue{#1}}} %Adrienne
%\newcommand{\cadr}[2]{{\blue{#2}}} %Adrienne

%\newcommand{\change}[1]{{\RED{#1}}}

%\renewcommand{\adr}[1]{#1} %Adrienne
%\renewcommand{\cadr}[2]{#2} %Adrienne
 %Claudia

%%%%%%%
%% GENERIC MATH
%%%%%%%
% \newcommand{\defeq}{\mathrel{:=}}
\newcommand{\defeq}{\coloneqq} %Giulio
 %Giulio

% \newcommand{\grameq}{\mathrel{::=}}
\newcommand{\grameq}{\Coloneqq} %Giulio
\newcommand{\set}[1]{\{#1\}}

\newcommand{\nat}{\mathbb{N}}
\newcommand{\size}[1]{|#1|}
% c'era una definizione precedente di sizeap (ora commentata) che pero' credo fosse inutilizzata

%%%%%%%
%% SYMBOLIZED LETTERS
%%%%%%%

%\newcommand{\varsym}{v}

% \newcommand{\mulsym}{m}
 %Giulio
 %Giulio
 %Giulio
%\newcommand{\isym}{{\mathtt i}}

%%%%%%%
%% LINEAR SUBSTITUTION CALCULUS
%%%%%%%

\renewcommand{\l}{\lambda}
\newcommand{\isub}[2]{\{#1/#2\}}
\renewcommand{\isub}[2]{\{#1\,{:=}\,#2\}}
\newcommand{\esub}[2]{[#1/#2]}
\renewcommand{\esub}[2]{[#1{\shortleftarrow}#2]}
\newcommand{\fv}[1]{{\tt fv}(#1)}

%%%%%%%
%% NORMAL FORMS
%%%%%%%

% \newcommand{\mnf}[1]{{\tt m}(#1)}
 %Giulio
 %Giulio
 %Giulio
 %Giulio
 %Giulio
 %Giulio
 %Giulio

%%%%%%%
%% REWRITING RELATIONS
%%%%%%%
% basic macros
\newcommand{\rootRew}[1]{\mapsto_{#1}}
\newcommand{\Rew}[1]{\rightarrow_{#1}}

 %Giulio

% call-by-name root rules

% \newcommand{\rtom}{\rootRew{\msym}}
% \newcommand{\rtoe}{\rootRew{\esym}}
% \newcommand{\rtoei}{\rootRew{\eisym}}
 %Giulio
 %Giulio
 %Giulio
 %Giulio
 %Giulio
 %Giulio

 %Giulio
 %Giulio

% call-by-value root rules
% \newcommand{\rtobv}{\rootRew{\beta\vsym}}
% \newcommand{\rtodbv}{\rootRew{\db\vsym}}
 %Giulio

 %Giulio
 %Giulio

% call-by-name (weak linear head reduction)

 % this one is redundant but used in a proof

% call-by-value (super-strategy)

% \newcommand{\tobv}{\Rew{\beta\vsym}}
% \newcommand{\tobvm}{\Rew{\beta\msym}}
 %Giulio

 %Giulio
%\newcommand{\betavm}{{\beta_v\msym}} %Giulio
 %Giulio
 %Giulio
 %Giulio
 %Giulio

 %Giulio
 %Giulio
 %Giulio
 %Giulio
 %Giulio
% \newcommand{\inert}{\isym} 
 %Giulio
 %Giulio

 %Giulio

 %Giulio$
%\newcommand{\sigm}{{\sigma\msym}}

% \newcommand{\sigmeq}{{\sigm_{\sim}\!}} %Giulio
 %Giulio
 %Giulio
 %Giulio
 %Giulio
 %Giulio
 %Giulio

% \newcommand{\tosl}{\Rew{\sigma_1\msym}} %Giulio
% \newcommand{\tosr}{\Rew{\sigma_3\msym}} %Giulio

% \newcommand{\toperm}{\Rew{\beta\msym\sigma}}
 %Giulio
 %Giulio
 %Giulio
 %Giulio
 %Giulio
 %Giulio

% non-strict call-by-value

% strict call-by-value

% call-by-need

% reduction rules for oriented structural equivalence 

% generic linear rewriting

 %Giulio
 %Giulio
 %Giulio
 %Giulio
 %Giulio
% \newcommand{\vsym}{{\mathtt w}} %Giulio
%\newcommand{\bilancia}{\msym}
%\newcommand{\bilanciasym}{\faBalanceScale}

% \newcommand{\bilancia}{\mbox{\tiny \bilanciasym}}

%\newcommand{\vmsym}{{\vsym\msym}} %Giulio
% \newcommand{\vmsym}{\bilancia{\vsym}} 
% \newcommand{\vmext}{\bilancia{\vsym'}} 
 
 %Giulio

 %Giulio
 %Giulio
% \newcommand{\shufeq}{\vmeq} %Giulio
 %Giulio
 %Giulio
\newcommand{\shufeqext}{\shufeqext} %Giulio
 %Giulio
 %Giulio

% \newcommand{\tom}{\Rew{\msym}}
% \newcommand{\toe}{\Rew{\esym}}
% \newcommand{\toei}{\Rew{\esym\isym}}

% miscellaneous rewriting relations

%%%%%%%
%% EQUIVALENCES
%%%%%%%

 %Giulio
 %Giulio 

%%%%%%%
%% TERMS AND VARIABLES
%%%%%%%
% terms
\newcommand{\tm}{t}
\newcommand{\tmtwo}{u}
\newcommand{\tmthree}{s}
\newcommand{\tmfour}{r}

% co-terms

% terms for relations %%Adrienne

% terms with pedices

% terms with prime
\newcommand{\tmp}{\tm'}
\newcommand{\tmtwop}{\tmtwo'}

%terms with prime and pedices - Adrienne

% terms with double prime

% variables
\newcommand{\var}{x}
\newcommand{\vartwo}{y}
\newcommand{\varthree}{z}
\newcommand{\varfour}{w}

% co-variables

% argument variables

% values

 % Adrienne˛

%\newcommand{\valp}{v_p}
%\newcommand{\valptwo}{\valp'}
%\newcommand{\valpthree}{\valp''}

%%values practical

 % Adrienne˛

%translations

%%%%%%%
%% CONTEXTS
%%%%%%%
%hole
\newcommand{\ctxholep}[1]{\langle #1\rangle}
\newcommand{\ctxhole}{\ctxholep{\cdot}}

%basic contexts
%basic contexts
\newcommand{\ctx}{C}

\newcommand{\ctxp}[1]{\ctx\ctxholep{#1}}

%Classes of contexts:

%substitution contexts

% \newcommand{\sctxv}[1]{\sctx_{#1}}
% \newcommand{\sctxvtwo}[1]{\sctxtwo_{#1}}
% \newcommand{\sctxvthree}[1]{\sctxthree_{#1}}
% \newcommand{\sctxvp}[2]{\sctxv{#1}\ctxholep{#2}}
% \newcommand{\sctxvptwo}[2]{\sctxvtwo{#1}\ctxholep{#2}}
% \newcommand{\sctxvpthree}[2]{\sctxvthree{#1}\ctxholep{#2}}

% inert substitution contexts
% \newcommand{\isctx}{L_{\mathtt{in}}}
 %Giulio

% weak contexts

% head contexts
\newcommand{\hctx}{H}

\newcommand{\hauxctx}{P}

\newcommand{\hauxctxp}[1]{\hauxctx\ctxholep{#1}}

%applicative contexts

% arbitrary contexts

\newcommand{\arbctxp}[1]{\arbctxp{#1}}
\newcommand{\arbctxtwop}[1]{\arbctxtwop{#1}}

\newcommand{\la}[1]{\lambda #1.}

% Trivial machine

% Useful Machine with Constants

 %Giulio

%\newcommand{\pair}[2]{#1@#2}

 %Giulio
 %Giulio

% Decoding of a machine with constants
%\decodep{\genv}{\decodep{\dump}{\decodestack{\decodeheap{\heap}{\code}}{\stack}}}

%%%%

% proofs 
\newcommand{\myproof}[1]{
\ifthenelse{\boolean{omitproofs}}{\begin{IEEEproof} Proof available but omitted for readability. \end{IEEEproof}}{#1}}

%%%% MACHINES

%\newcommand{\lam}{LAM\xspace} %Giulio

% Equality checking

\newcommand{\node}{\mathtt{n}}

%%%%%%% A Strong Distillery %%%%%%

%\newcommand{\skerased}{\triangleleft}

\newcommand{\withproofs}[1]{\ifthenelse{\boolean{withproofs}}{#1}{}}

\newcommand{\withoutproofs}[1]{\ifthenelse{\boolean{withproofs}}{}{#1}}

% shallow contexts

%%%%%%% Separating Fireballs %%%%%%%%

%%%%%%% CBV and Open Terms %%%%%%%%

%\newcommand{\undef}{\bot}

 %Giulio

%\newcommand{\fire}{\mathtt{fire}}

%\newcommand{\doubt}[1]{\gray{#1}}
\newcommand{\doubt}[1]{}

 %Giulio
 %Giulio
 %Giulio
 %Giulio

%%%%%%%%%%%% Lambdabar-mutilde-calculus

%\newcommand{\lambdabar}{\bar{\lambda}}

 % beniamino

 % beniamino

 %Giulio
 %Giulio
 %Giulio
 %Giulio
 %Giulio

%\newenvironment{varitemize}{\begin{itemize}}{\end{itemize}}

\newcounter{numberone}
\newcounter{numberoneroman}
\newcounter{numberonealph}

%\newenvironment{varenumerate}{\begin{enumerate}}{\end{enumerate}}

% programs

\newcommand{\mset}[1]{[#1]}
%\newcommand{\emptymset}{\mset{\,}}

% positive types

\newcommand{\zero}{\mathbf{0}}

% negative types

% positive types

% negative types

% type contexts
\newcommand{\typctx}{\Gamma}
\newcommand{\typctxtwo}{\Delta}

% type derivation
\newcommand{\tder}{\pi}
\newcommand{\tdertwo}{\sigma}

\newcommand{\ruleAp}{@}

\newcommand{\hastype}{\!:\!}

\makeatletter
\newsavebox{\@brx}
\newcommand{\llangle}[1][]{\savebox{\@brx}{\(\m@th{#1\langle}\)}%
  \mathopen{\copy\@brx\kern-0.7\wd\@brx\usebox{\@brx}}}
\newcommand{\rrangle}[1][]{\savebox{\@brx}{\(\m@th{#1\rangle}\)}%
  \mathclose{\copy\@brx\kern-0.7\wd\@brx\usebox{\@brx}}}
\makeatother

\newcommand{\symfont}[1]{\mathtt{#1}}

 %Adrienne
%\newcommand{\bslas}[1]{\Downarrow^{#1}}
%\newcommand{\bslass}{\Downarrow} %%Adrienne
%\newcommand{\bslasdiv}{\not\Downarrow} %% Adrienne

%
%
%\newcommand{\rctx}{R}
%\newcommand{\rctxtwo}{\rctx'}
%\newcommand{\rctxthree}{\rctx''}
%\newcommand{\rctxp}[1]{\rctx\ctxholep{#1}}
%\newcommand{\rctxtwop}[1]{\rctxtwo\ctxholep{#1}}
%\newcommand{\rctxthreep}[1]{\rctxthree\ctxholep{#1}}

\newcommand{\Id}{\symfont{I}}

%\newcommand{\rel}{\, \relsym\, }

%%\renewcommand{\isctx}{I}

%%%%%%%%%%%%CLAUDIA%%%%%%
%% sub-reduction contexts %Claudia

%\newcommand{hole}[1]{\ctxholep {1}}
%\newcommand{\isctxp}[1]{\isctx\ctxholep{#1}}

\usepackage{proof}
%\newcommand{\add}[2]{#1 \cons #2 }

%\newcommand{\red}[1]{\rightarrow}
%\newcommand{\redx}[1]{\rightarrow_{#1}}

%%%%%%%%%%%%%%%%%%%%%%%%%%%%%%%%

%\newcommand{\rels}{\preceq}
%\newcommand{\rels}{~\rel~}
%\newcommand{\relsc}{~ \rel\!^{\mathsf{SC}} ~}

\newcommand{\FV}[1]{\mathsf{fv}(#1)}
%\newcommand{\E}[1]{\EE^{#1}}

%\newcommand{\rinj}{\relscsym\mbox{-}\textsc{inj}}
%\newcommand{\rvar}{\relscsym\mbox{-}\textsc{var}}
%\newcommand{\rlam}{\relscsym\mbox{-}\textsc{lam}}
%\newcommand{\rapp}{\relscsym\mbox{-}\textsc{app}}
%\newcommand{\rsub}{\relscsym\mbox{-}\textsc{sub}}
%\newcommand{\rsubv}{\relscsym\mbox{-}\textsc{sub}_{\val}}
%\newcommand{\rsubi}{\relscsym\mbox{-}\textsc{sub}_{\itm}}
%%\newcommand{\rsubinw}{\relscsym\mbox{-}\textsc{sub}_{\nwsym}}
%%\newcommand{\rsubiw}{\relscsym\mbox{-}\textsc{sub}_{\wsym}}
%\renewcommand{\rbag}{\relscsym\mbox{-}\textsc{bag}}

%% Adrienne

%\newcommand{\bsvsc}{\bslass_{vsc}}
%\newcommand{\bslam}{\bslass_{\lambda}}

%% Weak i-th contexts

%Weak normal contexts

%Claudia

 %Adrienne
 %Adrienne
 %Adrienne

%Curry and Turing fixed point combinator

%Big steps systems

\newcommand{\bshs}{\Downarrow_{\mathrm{h}}\,} 
\newcommand{\bsh}[1]{\Downarrow_{\mathrm{h}}^{#1}\,} 
\newcommand{\bshdiv}{\not\bsh{}}

 %% Adrienne

%\newcommand{\toh}{\Rew{\symfont{h}}} 

\newcommand{\bohm}{B{\"o}hm\xspace}

%Macros for enf and naf text

% Glue rule

%%%% Contextual equivalences

%\newcommand{\eqcv}{\simeq_C^v}
%\newcommand{\leqcv}{\precsim_C^v}

%%CbN contextual equivalence

%%%Type macros

% multi types
\newcommand{\mtype}{\typefont{M}}
\newcommand{\mtypetwo}{\typefont{N}}
\newcommand{\mtypethree}{\typefont{O}}

\newcommand{\emptytype}{[~]}
\renewcommand{\emptytype}{\zero}

\newcommand{\multitype}[2]{[{#2}_{1},\ldots,{#2}_{#1}]}

% linear types
\newcommand{\ltype}{\typefont{L}}
\newcommand{\ltypetwo}{\ltype'}

%general types

\newcommand{\gtype}{\typefont{T}}
\newcommand{\gtypetwo}{\gtype'}
%\newcommand{\gtypep}{\gtype'}

%weak head types

% type variables/ based types

\newcommand{\vartype}{X}
\renewcommand{\vartype}{\typefont{A}}

% type contexts
\newcommand{\typectx}{\Gamma}
\newcommand{\typectxtwo}{\Delta}

\newcommand{\emptytypectx}{\emptyset}

% type derivation

\newcommand{\typingruleApp}{@}
\newcommand{\typingruleAx}{\mathsf{ax}}
\newcommand{\typingruleAbs}{\lambda}
\newcommand{\typingruleMany}{\mathsf{many}}

%\newcommand{\appendOp}{@}
%\newcommand{\hastype}{\!:\!}
%
%
%\newcommand{\domain}[1]{\mathsf{dom}(#1)}
%\newcommand{\typelist}[1]{\underline{#1}}

%Type equivalence

% Generic bisimulations

% Interpretation

\newcommand{\interp}[1]{\llbracket #1 \rrbracket}

\newcommand{\tderiv}{\tder}
\newcommand{\tderivp}{\tderiv'} %Maico
 %Maico
\newcommand{\tderivtwo}{\tdertwo}

\newcommand{\derives}{\vartriangleright}

%Abstract reduction for naive bisimilarities

%\newcommand{\plotsym}{\mathsf{Plot}}
%\newcommand{\plotcalc}{\lambda_{\plotsym}}

%Abstract inequational theory

%\newcommand{\Rule}{\mathsf{r}}

%%%% 

\newcommand{\htm}{h}

%%%%%%%Red and Blue calculus

\newcommand{\blueclr}{\blue{\mathsf{b}}}
\newcommand{\redclr}{\red{\mathsf{r}}}

\newcommand{\bla}[1]{\blue{\lambda_{\blueclr}} #1.}

\newcommand{\rapp}[2]{#1 \red{\bullet_{\redclr}} #2}
\newcommand{\bapp}[2]{#1 \blue{\bullet_{\blueclr}} #2}

\newcommand{\hsym}{\symfont{h}}

\newcommand{\toh}{\Rew{\hsym}}

\newcommand{\bshcols}{\Downarrow_{\hchsym}\,} 
\newcommand{\bshcol}[1]{\Downarrow_{\hchsym}^{\intsym#1}\,} 
\newcommand{\bshcoldiv}{\not\bshcols}

\newcommand{\Lambdac}{\Lambda_{\RB}}

\newcommand{\equivchcol}{\sqsubseteq_{\hcolsym}^{\qcontextualsym}}
\newcommand{\leqchcol}{\sqsubseteq_{\hcolsym}^{\qcontextualsym}}
\renewcommand{\equivchcol}{\equiv_{\intsym}^{\mathrm{ctx}}}
\renewcommand{\leqchcol}{\sqsubseteq_{\intsym}^{\mathrm{ctx}}}

\newcommand{\intleq}{\sqsubseteq^{\mathrm{int}}}

%Red CHCP
%\newcommand{\equivchcolr}{\simeq_\contextualsym^{1}}
%\newcommand{\leqchcolr}{\precsim_\contextualsym^{1}}
%Red CHCP on blue terms
%\newcommand{\equivchcolrb}{\simeq_\contextualsym^{2}}
%\newcommand{\leqchcolrb}{\precsim_\contextualsym^{2}}

\newcommand{\abscolorone}{c}
\newcommand{\abscolortwo}{d}
\newcommand{\clr}[1]{\mathsf{\abscolorone}_{#1}}
\newcommand{\clrtwo}[1]{\mathsf{\abscolortwo}_{#1}}
\newcommand{\clrp}[1]{\mathsf{\abscolorone}'_{#1}}

\newcommand{\colr}{\mathsf{\abscolorone}}
\newcommand{\colrtwo}{\mathsf{\abscolortwo}}
\newcommand{\colrp}{\mathsf{\abscolorone}'}

\newcommand{\cla}[2]{\lambda_{\clr{#1}} #2.}

\newcommand{\ccapp}[3]{#2 \bullet_{\clr {#1}} #3}
\newcommand{\capp}[3]{#2 \bullet_{\clrtwo {#1}} #3}

\newcommand{\appsym}{\bullet}
\renewcommand{\capp}[3]{#2 \appsym_{\clrtwo {#1}} #3}

\newcommand{\manyclam}[2]{\lambda_{\clr{1}\cdots\clr{#1}} #2_1\ldots#2_{#1}.\,}
\newcommand{\manyblam}[2]{\lambda_{\blueclr\cdots \blueclr}#2_1\ldots#2_{#1}.\,}
\newcommand{\manyrlam}[3][1]{\lambda_{\redclr\cdots \redclr}#3_{#1}\ldots#3_{#2}.\,}

\newcommand{\manycapp}[3]{\capp{#1}{\capp{1}{#2}{#3_1} \cdots}{#3_{#1}}}

\newcommand{\clavec}[2]{\lambda_{\vec{#1}\phantom{.}}\vec{#2}.}
\newcommand{\cappvec}[3]{#2\appsymp{\vec{#1}}\vec{#3}}

%%Colored Relational Semantics
%\newcommand{\typearrow}{\rightarrow}
%\newcommand{\rtypearrow}{\,\red{\typearrow}\,}
%\newcommand{\btypearrow}{\,\blue{\typearrow}\,}
%\newcommand{\ctypearrow}[1]{\,\typearrow_{\clr{#1}}\,}
%\newcommand{\ctypearrowp}[1]{\,\typearrow_{#1}\,}

\newcommand{\ctypes}[1]{\vdash^{#1}_{\intsym}}

%%Colored Relational Semantics
%\newcommand{\appsym}{\bullet}

%\newcommand{\typearrowpp}[2]{\xrightarrow{{#1}{#2}}}
%\newcommand{\rrtypearrow}{\red{\xrightarrow{\appsym}}}
%\newcommand{\bbtypearrow}{\blue{\xrightarrow{\appsym}}}
%\newcommand{\brtypearrow}{\blue{\xrightarrow{\red\appsym}}}
%\newcommand{\rbtypearrow}{\red{\xrightarrow{\blue\appsym}}}

\newcommand{\typearrowp}[1]{\xrightarrow{{#1}}}

\newcommand{\monoToPlayer}[2]{\overline{#2}^{#1}}

\newcommand{\monoToBlue}[1]{\monoToPlayer{\blueclr}{#1}}
\newcommand{\monoToRed}[1]{\monoToPlayer{\redclr}{#1}}

\newcommand{\spsym}{\mathrm{sp}}

\newcommand{\laxsym}{\eta^\infty}
\newcommand{\laxbsym}{\cB\laxsym}

\newcommand{\etabtle}{\sqsubseteq_{\laxbsym}}

\newcommand{\laxredsym}{\eta_{\symfont{red}}^\infty}
\newcommand{\laxbredsym}{\cB\laxredsym}
\newcommand{\etaredbtleq}{\sqsubseteq_{\laxbredsym}}

\newcommand{\slbtleq}{\etaredbtleq}

\newcommand{\exder}{\,\triangleright}

\newcommand\mplus{\uplus}
%Manzonetto

\newcommand{\bsub}{\begin{enumerate}[(i)]}
\renewcommand{\esub}{\end{enumerate}}
\newcommand{\obsle}[1][]{\sqsubseteq_{#1}^{\mathrm{ctx}}}

\newcommand{\lam}{\ensuremath{\lambda}}
\newcommand{\las}[2]{\lambda #1_{1}\dots #1_{#2}.}
\newcommand{\apps}[2]{{#1_1}\cdots #1_{#2}}
\newcommand{\Tupler}[1]{{\sf T}_{#1}}
\newcommand{\Tuple}[1]{\langle #1\rangle}
\newcommand{\Proj}[2]{{\sf S}^{#1}_{#2}}
\newcommand{\Var}{\textsc{Var}}

\newcommand{\convc}[2]{\!\Downarrow_{#1}^{#2}\,} 
\newcommand{\relation}[1]{{\sf #1}}

\newcommand{\iset}{I}

%%% Restyle
\newcommand{\yinyang}[1][1]{%
    \begin{tikzpicture}[scale=#1*0.07]
      \draw[line width = #1*0.05mm,transform canvas={yshift=0.02cm}] (0,0) circle (1cm);
      \path[fill=black,transform canvas={yshift=0.02cm}] (90:1cm) arc (90:-90:0.5cm)
                        (0,0)    arc (90:270:0.5cm)
                        (0,-1cm) arc (-90:-270:1cm);

    \end{tikzpicture}}
\newcommand{\intsym}{\yinyang}
\newcommand{\nointsym}{\tau}

\renewcommand{\appsym}{\cdot}
\newcommand{\appsymp}[1]{\appsym^{#1}}

\renewcommand{\blueclr}{\bullet}
\renewcommand{\redclr}{\circ}

\newcommand{\blackcol}{\blueclr}

\newcommand{\bappsym}{\blueclr}
\newcommand{\rappsym}{\redclr}

\renewcommand{\bla}[1]{\lambda_{\blueclr} #1.}

\renewcommand{\rapp}[2]{#1 \rappsym #2}
\renewcommand{\bapp}[2]{#1 \bappsym #2}

\newcommand{\bnointsym}{\beta_{\nointsym}}
\newcommand{\bintsym}{\beta_{\intsym}}
\newcommand{\bchsym}{\beta_{\checkerssym}}

\newcommand{\rtobnoint}{\rootRew{\bnointsym}}
\newcommand{\rtobint}{\rootRew{\bintsym}}
\newcommand{\tobnoint}{\Rew{\bnointsym}}
\newcommand{\tobint}{\Rew{\bintsym}}

\newcommand{\checkerssym}{\redclr \blueclr}
\renewcommand{\Lambdac}{{\Lambda_{\checkerssym}}}

\newcommand{\lap}[2]{\lambda_{#1} #2.}
\newcommand{\appp}[3]{#2 \appsymp{#1} #3}
\renewcommand{\ccapp}[3]{\appp{\clr{#1}}{#2}{#3}}
\renewcommand{\capp}[3]{\appp{\clrtwo{#1}}{#2}{#3}}

\newcommand{\typingruleAppInt}{@_{\intsym}}
\newcommand{\typingruleAppNoInt}{@_{\nointsym}}

\newcommand{\hnointsym}{\hsym_{\nointsym}}
\newcommand{\hintsym}{\hsym_{\intsym}}
\newcommand{\hchsym}{\hsym_{\checkerssym}}

\newcommand{\tohint}{\Rew{\hintsym}} 
\newcommand{\tohnoint}{\Rew{\hnointsym}}

%\renewcommand{\bshs}{\Downarrow\,} 
%\renewcommand{\bsh}[1]{\Downarrow^{#1}\,} 

%%%%%%

\newcommand{\restr}{\!\!\upharpoonright}
\newcommand{\head}{\relation{h}}
\newcommand{\comb}[1]{\symfont{#1}}
\newcommand{\emptyseq}{\langle\rangle}

\newcommand{\cB}{\mathcal{B}}

\newcommand{\clrd}[1]{\clr{#1}^\bot}

\newcommand{\bId}{\Id_{\blueclr}}
\newcommand{\bOne}{\comb{1}_{\blueclr}}
\newcommand{\rId}{\Id_{\redclr}}

\newcommand{\tobch}{\Rew{\bchsym}}
\newcommand{\tohch}{\Rew{\hchsym}}

\newcommand{\speq}{=_{\spsym}}

\newcommand{\hnf}{h}

\newcommand{\insize}[1]{\size{#1}_{@}}

\newcommand{\typefont}[1]{{\mathsf{#1}}}

\tikzset{
ocenter/.style={baseline={([yshift=-.5ex, xshift=-.5ex]current bounding box)}},  
}

\newcommand{\leqchcolr}{\sqsubseteq_{\intsym}^{\mathrm{ctx\cdot imp}}}
\newcommand{\intrleq}{\sqsubseteq^{\mathrm{int\cdot imp}}}

\newcommand{\chcontexts}{\mathcal{C}_{\checkerssym}}
\newcommand{\chhcontexts}{\mathcal{H}_{\checkerssym}}

\newcommand{\wash}[1]{\underline{#1}}
\newcommand{\Tr}[3]{(#1,#2,#3)}
\newcommand{\Pair}[2]{\langle #1,#2\rangle}
\newcommand{\Kr}[2]{\boldsymbol{\delta}^\bot_{#1,#2}}
\newcommand{\lecol}{\sqsubseteq_{\mathrm{col}}}
\newcommand{\lecolpos}{\sqsubseteq^{+}_{\mathrm{col}}}

\newcommand{\hasstype}[1][k]{\vdash^{#1}_s} % has type with single annotation
\renewcommand{\hasstype}[1][k]{\ctypes{#1}}
\newcommand{\whiterpos}{\le^+}
\newcommand{\whiterneg}{\le^-}
\newcommand{\abs}[1]{|#1|}

% Untyped lambda calculus
\newcommand{\leqimp}{\sqsubseteq^{\mathrm{ctx\cdot imp}}}
\newcommand{\leqctx}{\sqsubseteq^{\mathrm{ctx}}}
\renewcommand{\leqimp}{\intrleq}
\renewcommand{\leqctx}{\intleq}

% New names int , imp to reduce confusion
\renewcommand{\intrleq}{\sqsubseteq^{\mathrm{imp}}}
\renewcommand{\leqchcolr}{\sqsubseteq^{\mathrm{imp}}_{\intsym}}
\renewcommand{\leqchcol}{\sqsubseteq^{\mathrm{int}}_{\intsym}}
\renewcommand{\equivchcol}{\equiv^{\mathrm{int}}_{\intsym}}

\newcommand{\leqpwcplain}{\sqsubseteq^{\mathrm{pwc}}}
\renewcommand{\lecol}{\leqpwcplain}

\newcommand{\leqpwc}{\sqsubseteq^{\mathrm{pwc}}_{\intsym}}

\renewcommand{\lecolpos}{\leqpwc}

\newcommand{\whiternegp}[1]{\whiterneg_{#1}}
\newcommand{\whiterposp}[1]{\whiterpos_{#1}}
\newcommand{\whiterpp}[2]{\le^{#1}_{#2}}
\newcommand{\pol}{\mathsf{p}}
\newcommand{\polop}{\lnot\mathsf{p}}
\newcommand{\whiterpol}[2]{\le^{#1}_{#2}}
\newcommand{\posorneg}{\pol}
\newcommand{\posornegop}{\polop}
\newcommand{\possym}{+}
\newcommand{\negsym}{-}

\newcommand{\madewhiterneg}[1]{{{#1}_{\ominus}}}
\newcommand{\madewhiterpos}[1]{{{#1}^{\oplus}}}
\newcommand{\madewhiternegandpos}[1]{{{#1}^{\oplus}_{\ominus}}}

%%%% MACROS FOR ARXIV LONG VERSION VS FOSSACS CAMERA READY

\newboolean{withproofs}
\setboolean{withproofs}{true}

\newcommand{\techreport}[2]{\ifthenelse{\boolean{withproofs}}{#1}{#2}}

\begin{document}
\title{Interaction Improvement}
%
%\titlerunning{Abbreviated paper title}
% If the paper title is too long for the running head, you can set
% an abbreviated paper title here
%
\author{Adrienne Lancelot\inst{1,2,3}\orcidID{0009-0009-5481-5719}\thanks{\protect\techreport{The omitted proofs can be found in the appendix.}{The omitted proofs can be found in the technical report \cite{lancelot2026interactionimprovement}.}}
\and Giulio~Manzonetto\inst{3}\orcidID{0000-0003-1448-9014} 
\and Guy McCusker\inst{4}\orcidID{0000-0002-0305-6398}
\and Gabriele Vanoni\inst{3}\orcidID{0000-0001-8762-8674}
}
\authorrunning{A. Lancelot, G. Manzonetto, G. McCusker, G. Vanoni}
% First names are abbreviated in the running head.
% If there are more than two authors, 'et al.' is used.
%
\institute{Inria, France and LIX, Ecole Polytechnique, Palaiseau
\and Università di Bologna, Italy \email{adrienne.lancelot@unibo.it} 
	\and
Université Paris Cité, CNRS, IRIF, F-75013, Paris, France
\email{\{gmanzone,gabriele.vanoni\}@irif.fr}
\and
Department of Computer Science, University of Bath, United Kingdom
\email{G.A.McCusker@bath.ac.uk}
}
\maketitle              % typeset the header of the contribution
\begin{abstract}
%!TEX root = ../main.tex
%The relational semantics of linear logic is a powerful tool for defining resource-aware models of the $\lambda$-calculus.  From an element in the interpretation of a term, one can extract quantitative information, such as the number of reduction steps required for termination.  However, the preorders and equational theories induced by these models are not particularly novel: they coincide with those arising from the well-studied graph and filter models, already characterized in terms of (in)equalities between Böhm trees up to extensionality. We employ the recently introduced checkers calculus to provide a quantitative interpretation of the preorder associated with Engeler's relational graph model.  We show that the interpretation of a term is included in that of another whenever the termination of the former in a context entails the termination of the latter in the same context, but with fewer interactions with the context.

%{\color{red}Gabriele: 
The relational semantics of linear logic is a powerful framework for defining \emph{resource-aware} models of the $\lambda$-calculus. 
However, its \emph{quantitative} aspects are not reflected in the preorders and equational theories induced by these models. Indeed, they can be characterized in terms of (in)equalities between Böhm trees up to extensionality, which are \emph{qualitative} in nature. We employ the recently introduced checkers calculus to define a quantitative contextual preorder on $\lambda$-terms, and demonstrate that it coincides with the preorder associated to the relational semantics. 
%This way, we show that the relational semantics refines the contextual preorder constraining the number of interactions between the related terms and the context.
%}

\keywords{\lam-calculus \and denotational semantics \and program equivalence %\and interactive semantics \and colored relational semantics.
}
\end{abstract}
%
%
%
%!TEX root = ../main.tex
% !TEX spellcheck = en-US

\section{Introduction}

Much of programming language theory is based upon the powerful notion of \emph{observational equivalence}~\cite{morris1968lambda}. A program phrase may be placed in an execution context, yielding a whole program which can be executed. By specifying what \emph{observations} one may make of a whole program, we obtain a theory of observational equivalence: two program phrases are said to be observationally equivalent if, in any execution context, they give rise to the same observations. 
That is to say,  one cannot distinguish them, no matter what context they are placed in. 
Typical observations might be whether the programs terminate, return a particular integer or string, and so on. This notion of equivalence is both intuitively appealing and naturally \emph{compositional}: a program occurring as a fragment of a larger piece of code can always be replaced by an equivalent one, yielding equivalent code. 

It is natural to refine such a notion of equivalence to an observational \emph{preorder}, typically defined by asking that the observations made of one program are a subset of those made of another. For example, if one observes only termination, then a program \emph{simulates another one} if, whenever they are both placed in the same execution context, if the simulated program terminates, then the simulator does as well. When the simulated program does not terminate, the simulator is free to display any behavior. The observational preorder allows semanticists to talk about how to refine a program to handle more cases of execution contexts and which \emph{updates} of code are sound. Formally, taking as our notion of observation the termination predicate $\convc{}{}$, we define the observational preorder as:

% \[\begin{array}{cccc}
%         t \obsle u &\mbox{ if, for all contexts }C,  &[\ 
%         C\ctxholep{t}\convc{}{} \quad\Rightarrow\quad
% ~C\ctxholep{u}\convc{}{}
%         \ ]
% \end{array}
% \]

\[\begin{array}{cccc}
        t \obsle u &\mbox{ if, for all contexts }C,  & 
        C\ctxholep{t}\!\mathop{\Downarrow} \Rightarrow C\ctxholep{u}\!\mathop{\Downarrow}
\end{array}
\]

Observational equivalences and preorders provide a robust methodology that may be applied to a wide variety of programming languages, from $\lambda$-calculus~\cite{Barendregt84} to sophisticated higher-order languages with computational effects~\cite{DBLP:journals/jacm/EhrhardPT18,DBLP:journals/csur/PatrignaniAC19,lassen+strovring-bisimilarity-eta,DBLP:conf/fossacs/BiernackiLP19,biernacki_et_al:LIPIcs:2020:12329}, with relatively little change to the basic definitions. A large body of theory supports reasoning about these theories: they may be captured by denotational semantics~\cite{Hyland76,Wadsworth76}, simplified by means of \emph{context lemmas} that characterize a subset of contexts sufficient to make all possible distinctions~\cite{berry:inria-00076481,DBLP:journals/iandc/AbramskyO93,Howe1996method}, or analyzed by coinductive bisimulation-style techniques~\cite{lassen1999bisimulation}. However, in most formulations, the available observations are \emph{extensional}: they report \emph{what} programs compute but not \emph{how}.
Intensional information such as how many computation steps are used, and what resources are consumed, are ignored. 
There is a challenging need for a robust way to express the fact that one program actually \emph{optimizes} another.

\paragraph{Improvements.}
Sands proposed \emph{improvement theory}~\cite{SandsImprovementTheory,SandsToplas,DBLP:journals/tcs/Sands96} as an execution-time-sensitive refinement of the observational preorder. He formulated a notion of observational improvement, where the observation is given not by mere termination $\Downarrow$  but by a quantitative notion $\Downarrow^k$ which reports that the number of steps required to terminate was $k$. We say that $u$ (observationally) improves $t$, i.e.\
\newcommand{\impleq}{\sqsubseteq^{\mathrm{imp}}}
\[\begin{array}{cccc}
        t \impleq u &\mbox{ if for all contexts }C,\forall k\ge 0\,. &
        C\ctxholep{t}\!\Downarrow^k \Rightarrow
        % \exists 0\leq k'\leq k.~C\ctxholep{u}\!\Downarrow^{k'}.
        % GUY: I don't like writing "exists 0 < k" because it reads like "there exists zero". And anyway these things are natural numbers.
        \exists k'\leq k.~C\ctxholep{u}\!\Downarrow^{k'}.
\end{array}
\]
Despite its intuitive appeal, this notion of improvement is considerably less studied than the standard preorders above. Perhaps one reason for this is the absence of a rich theory of denotational semantics capturing intensional information.
%Though there do exist denotational models of an intensional nature ---for example game semantics and relational semantics---it so happens that the notions of program equivalence they induce are typically of the standard kind. 

% This notion of improvement measures time of execution in a very fine-grained manner, rendering most computationally equivalent terms inequivalent in terms of cost whereas they would have been related by a notion of observational preorder. More importantly, it means that observational improvements cannot be characterized or studied by denotational semantics which are in general invariant by computation.

\paragraph{Denotational Semantics \& Equational Theories.} Denotational semantics gives an interpretation to program phrases by representing them as elements of a mathematical structure~\cite{ScottS71}, often a morphism in some category. This representation induces an equational theory on program phrases. In almost all cases, the interpretation of a phrase in the model is defined in a \emph{compositional} fashion, so that the interpretation of a program is constructed from those of its constituent phrases. Moreover, the interpretation is typically \emph{invariant under computation}: in the $\lambda$-calculus, for example, terms related by $\beta$-reductions will receive equal denotations. Thus the induced inequational theory is not sensitive to reduction steps, and does not capture notions of improvement. Nevertheless some denotational models---notably games models and relational models---do contain intensional information beyond what is captured by the contextual preorder. 
%In fact, considerable efforts were expended in the late 20th century to develop denotational models of PCF~\cite{Plotkin77} that capture precisely the extensional notion of equivalence set out above, the so-called full abstraction problem, eventually solved using game semantics~\cite{DBLP:journals/iandc/HylandO00,DBLP:journals/iandc/AbramskyJM00}. 

%First of all, game semantics is inherently aware of interactions between the program and its environment

\paragraph{Intensional Semantics \& Quantitative Information.} Game semantics and relational semantics therefore offer the tantalising possibility of capturing quantitative information in a model and hence in the induced theory. Relational models record the interaction of a function with its arguments via a \emph{multiset} of input items; game models go further and record the full sequence of interactions between a program and its execution context. Despite this, the application of these models to quantitative analysis of program execution has been limited. The original impetus behind work on game semantics of programming languages was the so-called full abstraction problem for \textsf{PCF}, so attention was firmly on the \emph{extensional} behaviour of programs: in the fully-abstract models of \textsf{PCF}~\cite{DBLP:journals/iandc/AbramskyJM00,DBLP:journals/iandc/HylandO00}, all intensional information is explicitly quotiented out, to achieve the desired full abstraction theorems. 
% (until the fully abstract model of \cite{Koutavas-LICS2023} avoiding semantic quotienting).
In games models of more powerful languages, such as those involving stateful computation, the intensional information in the model turns out to be exactly the same as what is captured by the standard contextual preorder. These lines of work therefore do not appear to offer a denotational account of an improvement-style ordering. Indeed, in both game and relational semantics, the intensional information revealed by the semantics of a term is invariant under reduction of that term: the exposed information concerns only 
the interactions that a term may have with its context, and not what happens inside the term. One may draw an analogy with communications in distributed systems and client-server protocols~\cite{Fokkink2007}: game and relational semantics focus their attention on external communications. Purely internal steps are not exposed, thus allowing the semantics to remain invariant under reductions. This distinction mirrors the observation that local computations within a client or a server are comparatively fast or easily optimized, whereas the communication between them is the true performance bottleneck. Nevertheless this focus appears to make the idea of an improvement ordering inapplicable. 

One approach to recovering an improvement-style ordering is demonstrated by Ghica's \emph{slot games}~\cite{DBLP:conf/popl/Ghica05} which introduce a global counter, exposing internal reductions at the top level, and as a result achieve full abstraction with respect to the improvement ordering. A similar example of instrumenting the semantics globally, in the case of relational semantics, appears in~\cite{LairdMMP13}, though no improvement-style characterization is given. The question of understanding the intensional information available in the models (without such global counters) remains unanswered. 

Our focus in this paper is the relational semantics of the $\lambda$-calculus, which we study in this paper in the form of a system of non-idempotent intersection types. Non-idempotent intersection type systems are able to provide complexity analyses of $\lambda$-terms, yielding evaluation time bounds, as pioneered by de Carvalho \cite{Carvalho07} and later refined by Accattoli et al. \cite{DBLP:journals/jfp/AccattoliGK20}, as well as execution space bounds \cite{DBLP:journals/pacmpl/AccattoliLV22}. This quantitative information, for the most part, only appears at the level of type derivations---the construction of the denotation of a term---and are invisible in the final denotation and hence in the inequational theory. However, this inequational theory in general does not coincide with the observational preorder. The following questions then arise naturally: 
\begin{center}
What is the inequational theory induced by these semantics?\\ And is it cost-aware as the type system suggests?
\end{center}

\paragraph{Counting Interactions.} This paper is about reconciling the two notion of program optimizations and denotational semantics.
Specifically, we characterize the preorder induced by the relational semantics as an improvement-style preorder. 
% , and one as a tree-like ordering (cf. B\"ohm trees~\cite[Ch.~10]{Barendregt84}).
% GUY: took this claim out because this is not our result at all -- it's Flavien's. 
Our approach makes use of the quantitative analysis of programs introduced in \cite{InteractionEquivalence}. This novel theory aims to account for the distinction between internal reductions and external interactions, as described above. Rather than counting \emph{all} computations steps, as Sands's improvement theory would, it only counts the \emph{interaction} steps between a term and its context. % Such steps can be seen as communications between a program and its execution environment, akin to communications in distributed systems and client-server protocols~\cite{Fokkink2007}. Crucially, purely internal steps (computations happening inside the program or inside its context, without interaction) are not taken into account, thus allowing the semantics to remain invariant under reductions. 
% This distinction mirrors the observation that local computations within a client or a server are comparatively fast or easily optimized, whereas the communication between them is the true performance bottleneck.
To achieve the distinction between interaction steps and internal steps, we use the \emph{checkers calculus} of \cite{InteractionEquivalence} which is a bichromatic version of the untyped $\lambda$-calculus. 
Checkers terms are color-annotated $\lambda$-terms, where the abstraction and the application constructors are tagged by a color, either black or white. Intuitively, programs are players uniformly painted black, and contexts are opponents uniformly painted white. As a term is executed in a context, the black and white parts meet and mix. 
This allows us to define notions of quantitative observational equivalence, preorder and \emph{interaction improvement}, akin to Sands's improvement, but counting only computation steps that involve a white constructor interacting with a black one. 

\paragraph{Characterizing The Costs at Play Behind Relational Semantics.}
Interaction improvement makes explicit, in particular, the quantitative aspects underlying the preorder induced by the relational model of \cite{DBLP:journals/entcs/HylandNPR06}.
The interpretation in this model can be described via a type system based on multi types, also known as non-idempotent intersection types \cite{DBLP:journals/mscs/PaoliniPR17,BKV17}.
The preorder induced by this model has been characterized as the \bohm tree preorder up to $\eta$-reductions \cite{BreuvartMR18}.
However, the execution costs involved in this relational model have so far remained unclear.

In this paper, we introduce a colored relational semantics for the checkers calculus (Figure~\ref{fig:colored-types-for-cbn}) and define an ordering on interpretations that compares their elements in a refined manner which is both qualitative and quantitative. 
In our main contribution (Theorem~\ref{thm:main}) we show that, when restricted to regular $\lambda$-terms, this ordering coincides with both the relational preorder---equivalently, with the \bohm tree preorder up to $\eta$-reductions---and interaction improvement.

It turns out to be quite challenging to show that the ordering we introduce on interpretations is \emph{compositional}. Though denotational approaches are typically compositional by definition, the fact that it presents difficulties here is perhaps not surprising given the intensional information we are tracking: similar challenges arise in intensional models such as game semantics \cite{DBLP:journals/tcs/KerNO02,DBLP:journals/tcs/KerNO03} and operational approaches such as applicative and normal form bisimulations ~\cite{BIERNACKI+LENGLET+POLESIUK-eta-nf-bisimulation,dallago+gavazzo-Enf-compatible-relation}
(or so-called operational game semantics~\cite{LevyLICSGames,Koutavas-LICS2023}). Here we establish compositionality using a novel technique which bears comparison to game semantics.

Our relational semantics annotates the ordinary relational model in two ways: numerical annotations track the steps needed to reduce a term, while color annotations on the types track the colorings of terms and their potential contexts. An unannotated entry in the relational semantics of a term may have several valid annotations in our system.  \emph{Improvements} result not only from lower numerical annotations, but also from better matching of color annotations between a term and its context. The delicate interplay between colors and counts requires a careful analysis. Once the appropriate preorder has been defined (Definition~\ref{def:PWCpreorder}), we must show that it is \emph{compositional} (Proposition~\ref{prop:leqpwc-contextual}). To prove this property, we treat typings dynamically: we study how modifications to the context colorings of a typing affect the corresponding typings of a term---a process we call \emph{repainting} (Lemma~\ref{l:repaint}). When an improved term is placed in a context, these color changes propagate between the term and its context, in a manner reminiscent of the composition of strategies in game semantics. Once this propagation stabilizes, a new typing emerges, and the preorder is preserved.% The desired results then follow easily.

%%% Local Variables:
%%% mode: LaTeX
%%% TeX-master: "../main"
%%% End:

% !TEX spellcheck = en-US
% !TEX root = ../main.tex
\section{The Checkers Calculus}
\label{sec:checkers}

This section is devoted to presenting the \emph{checkers calculus} from \cite{InteractionEquivalence}, a bichromatic variant of the \lam-calculus designed to count the reductions that occur during the interaction between a \lam-term (the \emph{player}) and a context (the \emph{opponent}). 

\subsection{Its Syntax and Operational Semantics}
In the checkers calculus, each application $\cdot^\colr$ and each abstraction $\lam_\colr$ is assigned a color $\colr\in\set{\redclr,\blueclr}$, either white or black, depending on whether the constructor `belongs' to the player or the opponent. 
We use $\colr^\bot$ to denote the opposite color, namely $\redclr^\perp \defeq \blueclr$ and $\blueclr^\perp \defeq\redclr $. 
We consider fixed a countable set $\Var$ of \emph{variables}.

\begin{definition} The set $\Lambdac$ of \emph{checkers terms} is inductively defined as follows:
\[
        \Lambdac\quad\ni \quad \tm,\tmtwo,\tmthree  \grameq \var \mid \lap\colr{\var}\tm  \mid \appp\colrtwo\tm\tmtwo,\qquad \textrm{ for }x\in\Var\textrm{ and }\clr{},\clrtwo{}\in\set{\redclr,\blueclr}.
\]
\end{definition}
We do not color variables, as they can be substituted by arbitrary terms.
As usual, we assume that application associates to the left and has a higher precedence than abstraction.
When the colors are actually specified, we simply write $\rapp\tm\tmtwo$ (and $\bapp\tm\tmtwo$) for $\appp\redclr\tm\tmtwo$ (and $\appp\blueclr\tm\tmtwo$).
In case of many consecutive abstractions or applications, we shorten the notations to $\clavec{\colr}{\var}\tm$ and $\cappvec{\colr}{\tm}{\tmtwo}$, respectively. 
If needed, we expand the former to $\manyclam{k}{\var}\tm$, and the latter to $\appp{\clr{1}\cdots\clr{k}}{\tm}{\tmtwo_1\cdots \tmtwo_{k}}$.

The checkers calculus inherits a number of notions from the usual \lam-calculus. 
We consider checkers terms modulo $\alpha$-conversion.
We denote by $\FV{\tm}$ the set of \emph{free variables} of $\tm$, and by $\tm\isub\var{\tmtwo}$ the capture-free substitution of $\tmtwo$ for all free occurrences of $x$ in $\tm$. The operational semantics of the checkers calculus is defined by taking appropriate context closure of basic rewriting rules. 
Intuitively, a checkers context is a checkers term containing one occurrence of a `hole' $\ctxhole$.

\begin{definition}%[Checkers contexts and contextual closure]
\bsub
\item The set $\chcontexts$ of \emph{checkers contexts} and its subset $\chhcontexts$ of \emph{checkers head contexts} are defined by the following grammars:
\[
        \begin{array}{rrllllll}
                        \chcontexts\ni& \ctx & \grameq& \ctxhole \mid \cla{}{\var}\ctx \mid \ccapp{}\ctx\tmtwo \mid \ccapp{}\tm\ctx 
                        \\[1ex]
                        %\textsc{Rigid ctxs} & \hauxctx & \grameq & \ctxhole \mid \ccapp{}\hauxctx\tm
                        %\\
                        \chhcontexts\ni& \hctx & \grameq & \manyclam{n}\var\appp{\clrtwo{1}\cdots\clrtwo{k}}\ctxhole{\tm_1 \cdots \tm_k}
\\
        \end{array}
\]
\item Given $\ctx\in\chcontexts$ and $\tm\in\Lambdac$, we denote by $\ctxp{{\tm}}$ the checkers term obtained by substituting $\tm$ for $\ctxhole$ in $\ctx$, possibly capturing free variables in $\tm$.
\item The \emph{contextual closure} of a binary relation $\mathcal{R}$ on $\Lambdac$ is the least relation $\chcontexts\ctxholep{\mathcal{R}}$ such that $\tm\ \chcontexts\ctxholep{\mathcal{R}}\ \tmtwo$ entails $\ctxp{{\tm}}\ \chcontexts\ctxholep{\mathcal{R}}\ \ctxp{{\tmtwo}}$, for all $\ctx\in\chcontexts$.
Its \emph{head-contextual closure} $\chhcontexts\ctxholep{\mathcal{R}}$ is defined analogously, by taking $\ctx\in\chhcontexts$.
\esub
\end{definition}
% !TEX root = ../../main.tex
\begin{figure}
\arraycolsep=3.5pt
\tabcolsep=3.5pt
\centering
		\begin{tabular}{c}
		$\begin{array}{r\hcolspace rll}
			\textsc{Silent} & \ccapp{}{(\cla{}\var\tm)}{\tmtwo} &\rtobnoint& \tm\isub\var{\tmtwo}
			\\
			\textsc{Interaction} & \appp{\clrd{}}{(\cla{}\var\tm)}\tmtwo &\rtobint& \tm\isub\var\tmtwo
			\end{array}$
			\\[9pt]
			$\begin{array}{r@{\hspace{.25cm}} rll}
			\textsc{Silent $\beta$} & \tobnoint &\defeq& \chcontexts\ctxholep\rtobnoint
			\\
			\textsc{Interaction $\beta$} & \tobint &\defeq& \chcontexts\ctxholep\rtobint
			\\
			\textsc{Checkers $\beta$} & \tobch &\defeq& \tobnoint \cup \tobint
			\\
			\textsc{Silent head} & \tohnoint &\defeq& \chhcontexts\ctxholep\rtobnoint
			\\
			\textsc{Interaction head} & \tohint &\defeq& \chhcontexts\ctxholep\rtobint
			\\
			\textsc{Checkers head} & \tohch &\defeq& \tohnoint \cup \tohint

		\end{array}$
		\end{tabular}
	\caption{The operational semantics of the checkers calculus.}
%	\vspace{-10pt}
	\label{fig:col-def}	
\end{figure}
There are two kinds of colored $\beta$-redexes $\capp{}{(\cla{}\var\tm)}{\tmtwo}$, the \emph{silent} ones and the \emph{interaction} ones. 
Silent redexes are $\beta$-redexes where the color $\clr{}$ of the abstraction matches the color $\clrtwo{}$ of the application. Intuitively, these steps are internal to each player's world. 
In interaction redexes, instead, the color of the abstraction and the color of the application are different, \ie $\clr{}\neq \clrtwo{}$. 
This represents the scenario where the two players interact with each other, which, from each player's perspective, amounts to interacting with the external world. 
When the contracted redex occurs in head position, we say that it is a \emph{head interaction step}. 

\begin{definition} The reductions of the checkers calculus are given in Figure~\ref{fig:col-def}.
Given a reduction $\to_R$, we say that: a checkers term $\tm$ is in \emph{$R$-normal form} ($R$-nf) if $\tm$ does not contain any $R$-redex;
$t$ \emph{has a $R$-nf} if it reduces to a $\tmtwo$ in $R$-nf.
\end{definition}
Despite our earlier intuition of associating black with the player and white with the opponent, notice that the checkers calculus is entirely symmetrical. 

%We discuss in Section~\ref{sect:bohm-trees} how the checkers calculus can help to study the plain $\lambda$-calculus. Our examples will focus on embedding ordinary $\lambda$-terms into the checkers calculus via the uniform painting map $\monoToBlue{(\cdot)} : \Lambda \to \Lambdac$ from Definition~\ref{def:paintinglambdas} that shall paint all plain constructors black.%, and then compared using the various preorders we defined on checkers terms.
%The confluence of the checkers calculus is proved in \cite{InteractionEquivalence}.

\begin{example}\label{ex:delta} 
Let $\bId \defeq \bla\var\var$ be the black identity and $\comb{D}_\circ \defeq \lam_\circ y.\lam_\circ x.x\circ (y\circ x)$. 
\begin{enumerate}
\item
Their black-application gives rise to a silent step $\bapp{\bId} \comb{D}_\circ \tobnoint \comb{D}_\circ$, while their white-application gives rise to an interaction step, namely $\rapp{\bId} \comb{D}_\circ \tobint \comb{D}_\circ$. 
\item $\bapp{\bapp{\comb{D}_\circ}{\bId}}{\bId} \tohint \bapp{(\lam_\circ x.x\circ (\bId\circ x))}{\bId} \tobint \bapp{(\lam_\circ x.x\circ  x)}{\bId}\tohint \rapp{\bId}{\bId}\tohint\bId$.
\item A monochromatic checkers term displays the same behavior as the underlying term: $ \comb{D}_\circ\circ \comb{D}_\circ\tobnoint \lam_\circ x.x\circ (\comb{D}_\circ\circ x)\tobnoint\lam_\circ x.x\circ (\lam_\circ z.z\circ (x\circ z))$ and $\bId\bullet\bId\tobnoint\bId$.
\end{enumerate}
\end{example}
There are in fact two ``copies'' of the set $\Lambda$ of \lam-terms within $\Lambdac$: one obtained by painting all \lam-terms black, and the other by painting them white. Formally:

\begin{definition}\label{def:paintinglambdas} 
        Given a color $\colr\in\{\redclr,\blueclr\}$, the \emph{$\colr$-painting of ordinary $\l$-terms}, is the mapping $\monoToPlayer{\colr}{\cdot} : \Lambda \rightarrow \Lambdac$ defined by induction as follows:
        \[
        \monoToPlayer{\colr}{\var} \defeq \var, 
        \qquad
        \monoToPlayer{\colr}{\la\var\tm}  \defeq  \lap\colr\var\monoToPlayer{\colr}\tm, 
        \qquad
        \monoToPlayer{\colr}{\tm\tmtwo} \defeq  \appp{\colr}{\monoToPlayer{\colr}{\tm}}{\monoToPlayer{\colr}{\tmtwo}}.
        \]
        Given a checkers term $\tm\in\Lambdac$, the \emph{color washing} map $\wash{\cdot} :  \Lambdac\to\Lambda$ is defined by:
        \[
        \wash{\var} \defeq \var, 
        \qquad
        \wash{\lap\colr\var\tm}  \defeq  \la\var\wash\tm,       
        \qquad
        \wash{\appp{\colr}\tm\tmtwo} \defeq  \wash{\tm}\cdot \wash{\tmtwo}.
        \]
\end{definition}
Note that $\beta$-reductions $\to_\beta$ (resp.\ head reductions $\Rew{\relation{h}}$) on ordinary \lam-terms correspond to silent (head) reductions.

\begin{lemma}[Correspondence {\cite[Prop.\,3.13-14]{InteractionEquivalence}}]
        Let $\tm\in\Lambdac$ and $\tmtwo,\tmtwop\in\Lambda$.% and $\relation{r}\in\{\nointsym,\intsym,\bullet\circ\}$.
        $\label{l:correspondence}$
        \bsub
        %\item If $\tm\Rew{\beta_\relation{r}}\tmp$ then $\wash{\tm}\Rew{\beta}\wash{\tmp}$. In particular, if $\tm\Rew{\relation{h_r}}\tmp$ then $\wash{\tm}\Rew{\relation{h}}\wash{\tmp}$.
        
        \item If $\tmtwo\Rew{\beta}\tmtwop$, then $\monoToPlayer{\colr}{\tmtwo}\Rew{\beta_\tau}\monoToPlayer{\colr}{\tmtwop}$. Similarly, if $\tmtwo\Rew{\relation{h}}\tmtwop$, then $\monoToPlayer{\colr}{\tmtwo}\Rew{\relation{h}_\tau}\monoToPlayer{\colr}{\tmtwop}$.
        \item If $\wash\tm\Rew{\relation R}\tmtwo$ then $\exists\tmp\in\Lambdac$ such that $\wash{\tmp}=\tmtwo$ and $\tm\Rew{\relation R_{\bullet\circ}}\tmp$ for $\relation{R}\in\{\beta,\hsym\}$.
        \esub 
\end{lemma}
So, \emph{checkers head normal forms} ($\hchsym$-nfs) have the shape $\htm = \clavec\colr\var\appp{\clrtwo{1}\cdots\clrtwo{k}}\vartwo{\tm_1 \cdots \tm_k}$.

\begin{theorem}[Confluence~\cite{InteractionEquivalence}]
Reductions $\tobch$, $\tobnoint$, and $\tobint$ are confluent.
\end{theorem}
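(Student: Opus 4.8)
All three relations are forms of $\beta$-reduction on the colour-annotated syntax --- $\tobch$ being the full one, $\tobnoint$ and $\tobint$ the two colour-restricted sub-relations --- so the natural route is the Tait--Martin-L\"of / Takahashi \emph{parallel-reduction} argument, carried out uniformly in the three cases. Since colours play no role whatsoever in $\tobch$ (every $\beta$-redex is a $\tobch$-redex, and substitution touches no colour), the classical proof transfers to $\tobch$ essentially verbatim; the only point needing care is the two sub-relations, where the set of contractible redexes is cut down by a colour condition.

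Concretely, for $\mathsf{r} \in \{\mathrm{ch},\tau,\sigma\}$ I would define a parallel reduction $\Rightarrow_{\mathsf{r}}$ on $\Lambdac$ with the customary clauses --- $\var \Rightarrow_{\mathsf{r}} \var$, congruence for $\cla{\colr}{\var}(\cdot)$ and $\capp{\colr}{(\cdot)}{(\cdot)}$, and the contraction clause
\[
\tm \Rightarrow_{\mathsf{r}} \tm' \ \wedge\ \tmtwo \Rightarrow_{\mathsf{r}} \tmtwo' \ \Longrightarrow\ \capp{\colrtwo}{(\cla{\colr}{\var}\tm)}{\tmtwo} \ \Rightarrow_{\mathsf{r}}\ \tm' \isub\var{\tmtwo'}
\]
--- where the colour side condition is vacuous for $\mathrm{ch}$, reads $\colr = \colrtwo$ for $\tau$, and reads $\colr \neq \colrtwo$ for $\sigma$. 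For each $\mathsf{r}$ one then checks the three standard facts: (i) the inclusions $\to_{\mathsf{r}} \subseteq \Rightarrow_{\mathsf{r}} \subseteq \to_{\mathsf{r}}^{*}$, whence $\Rightarrow_{\mathsf{r}}^{*} = \to_{\mathsf{r}}^{*}$; (ii) the substitution lemma, $\tm \Rightarrow_{\mathsf{r}} \tm'$ and $\tmtwo \Rightarrow_{\mathsf{r}} \tmtwo'$ imply $\tm\isub\var\tmtwo \Rightarrow_{\mathsf{r}} \tm'\isub\var{\tmtwo'}$; and (iii) the diamond property of $\Rightarrow_{\mathsf{r}}$, obtained through Takahashi's complete development $\tm^{\star_{\mathsf{r}}}$ (contract every $\mathsf{r}$-redex currently present in $\tm$, recursively) together with the lemma that $\tm \Rightarrow_{\mathsf{r}} \tm'$ implies $\tm' \Rightarrow_{\mathsf{r}} \tm^{\star_{\mathsf{r}}}$, proved by induction on $\tm$. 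The diamond property then yields confluence of $\Rightarrow_{\mathsf{r}}^{*}$, which by (i) is confluence of $\to_{\mathsf{r}}$.

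The single place where the colour conditions could interfere with the classical argument --- hence the only genuinely new verification --- is their interaction with substitution in the inductive steps of (ii) and (iii). The crucial observation is that \emph{substitution alters no colour}: in $\tm\isub\var\tmtwo$ every constructor retains the colour it carried in $\tm$ or in $\tmtwo$. Consequently the predicate ``this $\beta$-redex is silent'' (resp.\ ``this $\beta$-redex is an interaction redex'') is stable under substitution, which is exactly what makes the contraction clause of $\Rightarrow_{\tau}$ (resp.\ $\Rightarrow_{\sigma}$) applicable at the moment it is invoked; in particular, contracting a silent redex can never turn an interaction redex into a silent one, or conversely. Contraction may of course \emph{create} fresh redexes of either colour pattern, but this is harmless: parallel reduction is never obliged to contract created redexes, so it does not disturb (i)--(iii). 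I expect this colour-stability bookkeeping to be the main obstacle, and a mild one --- everything else is the routine Tait--Martin-L\"of / Takahashi development. For $\tobch$ alone one may instead appeal to confluence of ordinary $\beta$-reduction directly, colours being inert labels transported by reduction, though this shortcut does not adapt to $\tobnoint$ and $\tobint$.
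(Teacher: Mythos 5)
Your plan is correct. The paper does not actually prove this theorem---it imports it from the cited work on interaction equivalence---so there is no in-paper argument to compare against; but the route you sketch (Tait--Martin-L\"of/Takahashi parallel reduction, instantiated uniformly for $\tobch$, $\tobnoint$ and $\tobint$ with the colour side condition on the contraction clause) is the standard way to establish it, and you correctly isolate the one point that needs checking beyond the classical development: the silent/interaction status of a redex is a local property of its two top constructors, substitution and congruence never alter those colours, so residuals of $\mathsf{r}$-redexes are $\mathsf{r}$-redexes and the substitution lemma and the complete-development lemma go through unchanged. Your remark that $\tobch$ alone follows directly from confluence of ordinary $\beta$ via colour-erasure (colours being inert labels) is also right, as is the observation that this shortcut fails for the two colour-restricted sub-relations, which is precisely why the parallel-reduction argument is needed for them. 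The only thing separating this from a complete proof is carrying out the routine inductions, which present no difficulty.
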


\begin{remark}\label{rem:eta}
The checkers calculus has no analogue of the $\eta$-reduction $\lam x.\tm x \to_\eta \tm$, with $x\notin\FV{\tm}$.
Consider the candidate black $\eta$-expansion  $\bOne \defeq\bla\var\bla\vartwo{\bapp\var\vartwo}$ of the black identity $\bId$.
Then, assuming $\bOne \to_{\blackcol\eta}\bId$, would break confluence: 
$\rapp{\rapp{\bOne}\varthree}\varfour\to_{\hintsym} \rapp{(\bla\vartwo\bapp\varthree\vartwo)}\varfour\to_{\hintsym}\bapp\varthree\varfour$, while
$\rapp{\rapp{\bOne}\varthree}\varfour\to_{\blackcol\eta} \rapp{\rapp{\bId}\varthree}\varfour\to_{\hintsym} \rapp\varthree\varfour$.
\end{remark}

\subsection{Interaction Improvement}

Observational preorders have been introduced to express that one program is more defined than another: when placed in any context, if the former terminates, then so does the latter. By varying the notion of termination being observed, one obtains different preorders that nonetheless share the same qualitative nature.
To achieve a \emph{quantitative} notion of observation, one can compare two programs by considering the number of steps required for termination (see Sands \cite{SandsImprovementTheory,SandsToplas,DBLP:journals/tcs/Sands96}). 
However, this approach is often too fine-grained, since it also distinguishes terms that represent the same program at different stages of evaluation.
In the checkers calculus, the introduction of colors makes it possible to compare programs by counting only the interaction steps between the program and its context, ignoring silent steps. In this paper, as in~\cite{InteractionEquivalence}, we adopt head-reduction as our operational semantics, leading to the following definitions.

Given $\tm\in\Lambdac$, we write $\bshcol{k}$ whenever $\tm$ has a $\hchsym$-nf $h$, and the reduction sequence $\tm\tohch\cdots\tohch h$ contains $k$ interaction head steps $\tohint$. Since $\tohch$ is a deterministic strategy, such a sequence is unique and so is the number $k$. 
Silent head steps are ignored in the calculation of $k$, as expected. 

\begin{definition}[Checkers interaction improvement and equivalence \cite{InteractionEquivalence}]\label{def:checkersinteractionpreorder}\
         On checkers terms $\tm,\tmtwo\in\Lambdac$, we define the following relations:
        \bsub
        \item \emph{Interaction preorder $\leqchcol$}.\\ 
                $\tm \leqchcol \tmtwo $ if $\forall \ctx\in\chcontexts,k\in\nat \,.\,[\,\ctxp{{\tm}}\bshcol{k} \ \Rightarrow\ \ctxp{\tmtwo}\bshcol{k}]$.
        \item \emph{Interaction improvement (preorder) $\leqchcolr$}.\\
                $\tm \leqchcolr \tmtwo$ if $\forall \ctx\in\chcontexts,k\in\nat \,.\,[\,\ctxp{{\tm}}\bshcol{k}  \Rightarrow \ctxp{\tmtwo}\bshcol{k'}\textrm{ for some } k' \leq k]$;
        \item \emph{Interaction equivalence $\equivchcol$}. It is the equivalence relation induced by $\leqchcol$:\\ 
                ${\tm} \equivchcol {\tmtwo}$ if ${\tm} \leqchcol {\tmtwo}$ and ${\tmtwo} \leqchcol {\tm}$.
        \esub
\end{definition}
It is easy to see that $\leqchcol\,\subsetneq\, \leqchcolr$. The strictness of the inclusion is due to the fact that $\rapp\bId\rId \leqchcolr\rId :=\lam_\circ x.x$, but $\rapp\bId\rId \not\leqchcol\rId$, as both checkers terms have $\rId$ as $\hchsym$-nf, but $\rapp\bId\rId$ requires an additional interaction step to reach it.
Observe that the equivalence relation induced by $\leqchcolr$ coincides with $\equivchcol$.

\begin{example}\label{ex:preorders} 
Consider $\bId$ and $\comb{D}_\circ$ from Example~\ref{ex:delta}, and $\bOne$ from Remark~\ref{rem:eta}.
\begin{itemize}
        \item Since $\comb{\Omega} := (\lam x.xx)(\lam y.yy)$ has no hnf, any of its colorings will be a bottom element w.r.t.\ $\leqchcol$ and $\leqchcolr$.
        E.g. $\monoToPlayer{\bullet}{\comb{\Omega}}\leqchcol\bId$ and $\monoToPlayer{\circ}{\comb{\Omega}}\leqchcol\bOne$.
        \item More generally, $\tm\equivchcol \tmtwo$ holds whenever $\tm,\tmtwo\in\Lambdac$ have no $\hchsym$-nf.
        \item To see that $\lam_\circ x.x\not\leqchcolr \lam_\bullet x.x$, just take the context $C := \ctxhole\bullet\bId$. Indeed, $\ctxp{\lam_\circ x.x}\bshcol{1}$ while $\ctxp{\lam_\bullet x.x}\bshcol{0}$ because the head step is silent.
        \item\label{ex:whitedelta2} $\bOne\not\leqchcol\bId$, as they are separated by $C = \ctxhole\circ\varthree\circ\varfour$, but $\bOne\leqchcolr\bId$ holds. This is a consequence of our main Theorem~\ref{thm:main}.%\cadr{$\bOne\leqchcolr\bId$.}{}
        \item Recall that $\comb{Y}\in\Lambda$ is a fixed point combinator (fpc) if $\comb{Y} =_\beta \lam f.f(\comb{Y}f)$ and that all fpcs $\comb{Y},\comb{Y'}$ share the same B\"ohm tree. 
        By \cite[Theorem 8.6]{InteractionEquivalence}, $\monoToPlayer{\circ}{\comb{Y}} \equivchcol \monoToPlayer{\circ}{\comb{Y'}}$.
        \item The combinator $\comb{J}\in\Lambda$ from \cite{Wadsworth76} satisfying $\comb{J}x =_\beta \lam z.x(\comb{J}z)$ is an ``infinite $\eta$-expansion'' of $\lam x.x$. To check that $\monoToPlayer{\circ}{\comb{J}}\not\leqchcol \lam_\circ x.x$, take once again $C := \ctxhole\bullet\bId$. The fact that $\monoToPlayer{\circ}{\comb{J}}\leqchcolr \lam_\circ x.x$ follows from Theorem~\ref{thm:main} (by $\circ/\bullet$-symmetry).
%       \item Wadsworth's combinator $\comb{J}\in\Lambda$~\cite{Wadsworth76} satisfying $\comb{J}x =_\beta \lam z.x(\comb{J}z)$ is a so-called ``infinite $\eta$-expansion'' of $\lam x.x$ and it is equal to the identity in Scott's $\mathcal{D}_\infty$ model. To check that $\monoToPlayer{\circ}{\comb{J}}\not\leqchcol \lam_\circ x.x$, take once again $C := \ctxhole\bullet\bId$. The fact that $\monoToPlayer{\circ}{\comb{J}}\leqchcolr \lam_\circ x.x$ holds, follows from Theorem~\ref{thm:main} (by symmetry of $\circ/\bullet$).
\end{itemize}
\end{example}
In this paper, we focus on the interaction improvement $\leqchcolr$, which was introduced in \cite{InteractionEquivalence}, but not previously characterized either by relational semantics or in terms of a tree-like ordering.

%\adr{[add definitions pulled to the black fragment of the calculus, macros $\intleq$ $\intrleq$]}

%%% Local Variables:
%%% mode: LaTeX
%%% TeX-master: "../main"
%%% End:

% !TEX spellcheck = en-US
% !TEX root = ../main.tex

\section{Checkers Relational Improvement Semantics}\label{sec:types}% !TEX root = ../../main.tex
\begin{figure}[t!]	
\resizebox{\textwidth}{!}{%
\arraycolsep=3pt
	\begin{tabular}{c\hcolspace c\hcolspace | \hcolspace c}
	\infer[\typingruleAx]{\var \hastype [\ltype] \ctypes{0} \var \hastype \ltype}{}
	&	
	\infer[\typingruleMany]{\uplus_{i\in I}\typectx_i \ctypes{\sum_{i\in I} k_i}  \tm \hastype [\ltype_i]_{i\in I}}{(\typectx_i \ctypes {k_i} \tm \hastype \ltype_i)_{i\in I}  & I~ \text{finite}}
	 &
	\infer[\typingruleAppNoInt]{\typectx \uplus \typectxtwo \ctypes {k_1+k_2} \ccapp{}\tm\tmtwo \hastype \ltype}{ \typectx \ctypes {k_1} \tm \hastype \mtype \typearrowp{\colr} \ltype & \typectxtwo \ctypes{k_2} \tmtwo \hastype \mtype  }
	
	\\[5pt]
	\infer[\typingruleAbs]{\typectx \ctypes k \cla{}\var\tm \hastype \mtype \typearrowp{\colr} \ltype}{\typectx, \var \hastype \mtype \ctypes k \tm \hastype\ltype}
	 &
	\infer[\typingruleApp]{\typectx \uplus \typectxtwo \ctypes {k} \capp{}\tm\tmtwo \hastype \ltype}{ \typectx \ctypes {k_1} \tm \hastype \mtype \typearrowp{\colr}\ltype & \typectxtwo \ctypes{k_2} \tmtwo \hastype \mtype  }
		 &
	\infer[\typingruleAppInt]{\typectx \uplus \typectxtwo \ctypes {k_1+k_2+1} \appp{\colr^\bot}\tm\tmtwo \hastype \ltype}{ \typectx \ctypes {k_1} \tm \hastype \mtype \typearrowp{\colr} \ltype & \typectxtwo \ctypes{k_2} \tmtwo \hastype \mtype  }
	\end{tabular}
}\medskip

\footnotesize
where in the rule $\typingruleApp$ we have $k=k_1 {+} k_2$ when $\colr=\colrtwo$, and $k = k_1 {+} k_2 {+} 1$ otherwise.
Note that the rule $\typingruleApp$ compactly represents the rules $\typingruleAppNoInt$ and $\typingruleAppInt$.

\caption{Checkers multi type system $\ctypes{}$.}
%\vspace{-10pt}
\label{fig:colored-types-for-cbn}
\end{figure}
This section is devoted to presenting a denotational semantics of the checkers calculus, which can be seen as an annotated version of the relational semantics of the \lam-calculus~\cite{BucciarelliEM07,BreuvartMR18}.
% \cadr{, which can be seen as a colored version of the relational semantics of the \lam-calculus~\cite{BucciarelliEM07,BreuvartMR18}.}{. We do not define the relational semantics of the $\lambda$-calculus \cite{BucciarelliEM07,BreuvartMR18} but note that it corresponds to a fully monochromatic fragment (monochromatic terms and types) of our checkers relational semantics.}
Since the work of \cite{DBLP:journals/mscs/Carvalho18}, it has been known that---just as filter models can be presented as intersection type systems---relational models can be presented via multitype systems.
Thanks to the absence of weakening and contraction, and the presence of multisets of types $[\ltype_1,\dots,\ltype_n]$, these type systems are resource-aware: intuitively, a \lam-term $\lam x.\tm$ having type $[\ltype_1,\dots,\ltype_n]\to\ltypetwo$ needs to consume $n$ copies of its argument to produce a result of type $\ltypetwo$. During evaluation, the argument is used once with type $\ltype_1$, once with type $\ltype_2$, etc. 
This makes it possible to infer intensional properties of a program, like the amount of steps needed to reach its head normal form, by examining its type derivations.

\subsection{Checkers Type System} We present a multi type system in which arrows are decorated with a color $\clr{}\in\set{\circ,\bullet}$, intuitively matching the color of the outer \lam-abstraction (of its head normal form),
as in $\cla{}\var\tm \hastype \mtype\typearrowp{\colr} \ltype$.
\begin{definition} 
        \begin{enumerate}
        \item \emph{Linear types} $\ltype,\ltypetwo$ and \emph{multi types} $\mtype,\mtypetwo$ are generated by:
        \begin{center}
        $
        \begin{array}{lrll @{\hspace{15pt}} l }
        \textsc{\em Linear types} & \ltype, \ltypetwo &\grameq& \vartype \mid \mtype \typearrowp{\colr} \ltype & \textrm{where $\colr\in\{\redclr,\blueclr\}$ and $\vartype$ is an atom}
        \\
        \textsc{\em Multi types} & \mtype, \mtypetwo &\grameq& \multitype{n}{\ltype} & n\geq 0\\
        \end{array}
        $
        \end{center}
                
We use $\gtype, \gtypetwo$ as meta-variables denoting either linear or multi types.
        \item Given multi types $\mtype,\mtypetwo$, we denote by $\mtype+\mtypetwo$ their multiset union, and by $\zero$ the empty multi type.
                In other words, $\zero$ denotes the neutral element of $+$.
        \item A \emph{type environment} $\typectx$ is a map from $\Var$ to multi types having a finite \emph{support}
        $
                \mathrm{supp}(\typectx) := \set{x\in\Var \mid \typectx(x)\neq\zero}
        $.
        We let $x_1:\mtype_1,\dots,x_n:\mtype_n$ denote the environment $\Gamma(y) = \mtype_i$ if $y = x_i$, otherwise $\Gamma(y) = \zero$.
\item The empty environment $\Gamma(x) = \zero$, for all $x\in\Var$, is denoted $\emptyset$ or omitted.
        \item \emph{Typing judgements} are quadruples denoted by $\typectx\ctypes{k} \tm : \gtype$, where $k\in\nat$, and can be derived by applying the rules
                given in Figure~\ref{fig:colored-types-for-cbn}.
        \end{enumerate}
      \end{definition}

      \begin{remark}
        The relational semantics of the \lam-calculus can be recovered from our system by erasing all color and index annotations, or embedded in our system by coloring all term-constructors and types black and setting all indices to zero. 
      \end{remark}
We shall prove that a checkers term is typable $\typectx\ctypes{k} \tm : \ltype$ exactly when it is head normalizable, and the index $k$ gives an upper bound on the number of head interaction steps from $\tm$ to its $\hchsym$-nf  (Theorem~\ref{thm:soundandcomplete}(1)).
This explains the rule $\typingruleApp$ where a checkers term $\tm \hastype \mtype \typearrowp{\colr}\ltype$ can always be $\colrtwo$-applied to a term $\tmtwo\hastype\mtype$, but the index $k$ associated with $\capp{}\tm\tmtwo \hastype \ltype$ must be incremented whenever $\colr \neq \colrtwo$.

\begin{remark} The type system $\ctypes k$ presented in \reffig{colored-types-for-cbn} is strongly related to the one used in \cite{InteractionEquivalence}.
In the latter, a second color $\colrtwo$ is annotated on the arrow type $\typearrowp{\colr\colrtwo}$, and in the rule $\typingruleApp$ such color needs to match the color of the application $\cdot^{\colrtwo}$. The second color is needed to have a notion of \emph{tight type}, but otherwise redundant. 
\end{remark}
\begin{samepage}
  We now present the main properties of our system.
  As in the relational case \cite{DBLP:journals/mscs/Carvalho18}, it enjoys \emph{quantitative} versions of subject reduction (SR) and expansion (SE). 

  \begin{definition}[Applicative size]
    Given a derivation $\tderiv$ of $\typectx\ctypes{k} \tm : \gtype$, in symbols $\tderiv\exder \typctx \ctypes k \tm\hastype \gtype$,
    % we define
    its \emph{applicative size} $\insize{\tderiv}$ is the number of $\ruleAp$-rules in $\tderiv$.
  \end{definition}
  The resource awareness of the type system allows us to prove that the applicative size of $\tderiv\exder \typctx \ctypes k \tm\hastype \gtype$ decreases along each head step $\tohch$. Moreover, if the step is an interaction one, then the index $k$ decreases by exactly 1; it is stable otherwise.
\end{samepage}
%\newpage
\begin{toappendix}
        \begin{proposition}$\label{prop:ch-subject}$
        Let $\tm,\tm'\in\Lambdac$ be such that $\tm \tohch \tm'$.
        $\NoteProof{prop:ch-subject}$
        \begin{enumerate}
                \item %\label{p:ch-subject-reduction}
                \emph{Quantitative subject reduction}:  
                        if $\tderiv\exder \typctx \ctypes k \tm\hastype \ltype$ then there is a derivation $\tderiv'\exder \typctx \ctypes {k'} \tm' \hastype \ltype$ such that $\insize{\tderiv'} = \insize{\tderiv} -1$. Moreover:
                \bsub
                \item if $\tm \tohint \tm'$ then $k'=k-1$;
                \item otherwise, if $\tm \tohnoint \tm'$ then $k'=k$.
                \esub
                \item %\label{p:ch-subject-expansion}
                \emph{Subject expansion}: if $\tderiv'\exder \typctx \ctypes{k'} \tm' \hastype \ltype$ then there is a derivation $\tderiv\exder \typctx \ctypes{k} \tm\hastype \ltype$ for some $k$.
        \end{enumerate}
\end{proposition}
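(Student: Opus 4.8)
The plan is the textbook two-step argument for multi type systems: first prove a quantitative substitution lemma together with its converse, which handle a redex sitting at the root; then lift both directions to arbitrary head steps by induction on the head context. Throughout I would keep track not only of the type and environment but also of the applicative size $\insize{\cdot}$ and of the index. The core lemma to establish is a \emph{quantitative substitution lemma}: if $\tderiv_t\exder\typctx,\var\hastype\mtype\ctypes{j}\tm\hastype\gtype$ with $\mtype=[\ltype_i]_{i\in I}$, and $\tderiv_u^i\exder\typctx_i\ctypes{j_i}\tmtwo\hastype\ltype_i$ for each $i\in I$, then there is a derivation $\tderiv\exder\typctx\uplus(\uplus_{i\in I}\typctx_i)\ctypes{j+\sum_{i\in I}j_i}\tm\isub\var\tmtwo\hastype\gtype$ with $\insize{\tderiv}=\insize{\tderiv_t}+\sum_{i\in I}\insize{\tderiv_u^i}$. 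This is proved by induction on $\tderiv_t$: at a $\typingruleAx$ axiom on $\var$ the unique element of $\mtype$ is plugged in (an axiom has applicative size $0$, so each $\insize{\tderiv_u^i}$ is counted exactly once), while the remaining rules are pushed through the substitution, splitting the family of derivations of $\tmtwo$ along the environment splittings performed by $\typingruleMany$ and $\typingruleApp$; the case $\mtype=\zero$ relies on the standard relevance property ($\var$ cannot occur free in $\tm$ whenever $\tm$ is typed with $\var\hastype\zero$), so that $\tm\isub\var\tmtwo=\tm$ and nothing is added. I would also need the reverse \emph{anti-substitution (splitting) lemma}: any $\tderiv\exder\typctx\ctypes{j}\tm\isub\var\tmtwo\hastype\gtype$ decomposes, by an induction guided by the structure of $\tm$, into a multi type $\mtype$, a typing $\typctx',\var\hastype\mtype\ctypes{j_1}\tm\hastype\gtype$ and typings of $\tmtwo$ against the components of $\mtype$, with environments, indices and applicative sizes adding up.

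Next I would settle the \emph{root case}. Consider a root redex $r=\appp{\colrtwo}{(\lap{\colr}{\var}{\tmthree})}{\tmtwo}$, with the abstraction colored $\colr$ and the application colored $\colrtwo$, contracting to $r'=\tmthree\isub\var\tmtwo$; by definition the step is silent ($r\rtobnoint r'$) exactly when $\colr=\colrtwo$ and an interaction ($r\rtobint r'$) exactly when $\colr\neq\colrtwo$. Since $\ltype$ is a linear type, a derivation $\tderiv\exder\typctx\ctypes{k} r\hastype\ltype$ must end with rule $\typingruleApp$, with right premise $\typctx_2\ctypes{k_2}\tmtwo\hastype\mtype$ and left premise $\typctx_1\ctypes{k_1}\lap{\colr}{\var}{\tmthree}\hastype\mtype\typearrowp{\colr}\ltype$ where $\typctx=\typctx_1\uplus\typctx_2$; the latter ends with $\typingruleAbs$ and descends from $\typctx_1,\var\hastype\mtype\ctypes{k_1}\tmthree\hastype\ltype$. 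By the side condition of $\typingruleApp$, $k=k_1+k_2$ when $\colr=\colrtwo$ and $k=k_1+k_2+1$ otherwise. Feeding these two premises into the substitution lemma produces a derivation $\tderiv'\exder\typctx\ctypes{k_1+k_2} r'\hastype\ltype$; by additivity of applicative size the only rule removed is the $\ruleAp$ at the root of $\tderiv$, so $\insize{\tderiv'}=\insize{\tderiv}-1$, while $k_1+k_2$ equals $k$ when the step is silent and $k-1$ when it is an interaction --- which is exactly cases~(a) and~(b) of part~(1). For part~(2), the anti-substitution lemma runs the same reasoning backwards: from a typing of $r'$ one extracts $\mtype$, rebuilds $\lap{\colr}{\var}{\tmthree}\hastype\mtype\typearrowp{\colr}\ltype$ with $\typingruleAbs$, recombines it with the derivation of $\tmtwo$ via $\typingruleApp$, and obtains a typing of $r$.

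To handle a general head step I factor it as $\tm=\hctx\ctxholep{r}\tohch\hctx\ctxholep{r'}=\tm'$, where $\hctx=\manyclam{n}\var\appp{\clrtwo{1}\cdots\clrtwo{m}}\ctxhole{\tm_1 \cdots \tm_m}$ is a checkers head context and $r$ is a root redex reduced as above, and I proceed by induction on $n+m$. The hole of a head context sits either directly under the outermost abstraction (so the typing ends with $\typingruleAbs$, whose unique premise types the smaller head context) or at the head of the rightmost application (so the typing ends with $\typingruleApp$, whose left premise types the smaller head context against a linear arrow type); in neither case can a $\typingruleMany$ rule occur on the path down to the hole. Applying the induction hypothesis to that premise and re-applying the same last rule rebuilds a typing of $\tm'$: an $\typingruleAbs$ node changes neither the applicative size nor the index; an $\typingruleApp$ node contributes $1$ to the applicative size and $0$ or $1$ to the index, the latter depending only on colors that the reduction leaves untouched (the left premise keeps the same type, in particular the same arrow color, by the induction hypothesis). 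Hence the net effect on $\insize{\cdot}$ is $-1$, and the net effect on the index is $0$ for a silent head step and $-1$ for an interaction head step, establishing part~(1); the symmetric argument, now based on the root case~(2) and the anti-substitution lemma, establishes subject expansion, part~(2).

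The only genuinely delicate part is the bookkeeping: pinning down the exact additive formulas for $\insize{\cdot}$ and for the index in the substitution lemma, and aligning the two sub-cases of the compact rule $\typingruleApp$ (arrow color equal to, resp.\ different from, the application color) with the silent/interaction dichotomy of $\rtobnoint$/$\rtobint$. The restriction to \emph{head} steps is exactly what makes the $-1$ on applicative sizes correct: a non-head redex may be typed with a non-singleton multi type and hence duplicated in the derivation, so contracting it would decrease $\insize{\cdot}$ by that multiplicity rather than by one.
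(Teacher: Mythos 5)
Your proposal is correct and follows essentially the same route as the paper: a quantitative substitution lemma and its converse (anti-substitution), applied to the root redex, then lifted along the head context by induction, with the $\Kr{\colr}{\colrtwo}$ bookkeeping in rule $\typingruleApp$ accounting for the silent/interaction dichotomy and the removal of exactly one $\ruleAp$ node accounting for the drop in applicative size. One caveat: the ``relevance property'' you invoke for the $\mtype=\zero$ case is false in this system --- a variable typed with $\zero$ \emph{can} occur free in the term, inside subterms typed by the nullary $\typingruleMany$ rule (e.g.\ $\vartwo\cdot^{\bullet}\var$ with $\var\hastype\zero$), so $\tm\isub\var\tmtwo\neq\tm$ in general; the lemma nonetheless goes through because the induction is on the typing derivation, and the nullary $\typingruleMany$ rule types \emph{any} term with $\zero$ at no cost, so such occurrences are replaced for free.
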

\end{toappendix}
The quantitative subject reduction entails the soundness of the checkers type system, and the typability of hnfs gives its completeness via subject expansion.

\begin{toappendix}
\begin{theorem}[Typability characterizes head normalization]$\label{thm:soundandcomplete}$% ex lem:soundandcomplete
        Let $\tm\in\Lambdac$. 
        $\NoteProof{thm:soundandcomplete}$
        \begin{enumerate}
                \item%\label{thm:soundandcomplete1} %the package for proofs in the appendix does not work with labels in items%
                \emph{Soundness}: if $\tderiv \exder \typctx \ctypes{k} \tm\hastype\ltype$ then there exists $k'\le k$ such that $\tm\bshcol{k'}$.
                \item%\label{thm:soundandcomplete2}  %the package for proofs in the appendix does not work with labels in items%
                \emph{Completeness}: if $\tm \bshcol{k}$        then $\typctx \ctypes{k} \tm\hastype\ltype$ is derivable for some $\typctx,\ltype$.
        \end{enumerate}
\end{theorem}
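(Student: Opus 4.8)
The plan is to derive \emph{soundness} from the quantitative subject reduction of Proposition~\ref{prop:ch-subject}(1), using the applicative size $\insize{\tderiv}$ as a well-founded measure that bounds the length of head reduction, and to derive \emph{completeness} in two moves: first exhibit a canonical derivation of index~$0$ for every $\hchsym$-normal form, then transport it backwards along the deterministic head reduction of $\tm$ down to its $\hchsym$-nf by subject expansion, keeping track of how the index evolves.

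For soundness, I would argue by induction on $\insize{\tderiv}$, where $\tderiv\exder\typectx\ctypes{k}\tm:\ltype$. If $\tm$ has no head $\beta$-redex then, since $\tohch$ contracts precisely the $\beta$-redexes sitting in head position, $\tm$ is a $\hchsym$-nf, of the form $\clavec{\colr}{\var}\appp{\clrtwo{1}\cdots\clrtwo{m}}{\vartwo}{\tm_1\cdots\tm_m}$, so $\tm\bshcol{0}$ and $0\le k$. Otherwise $\tm$ has a head redex, $\tm\tohch\tm'$, and Proposition~\ref{prop:ch-subject}(1) gives $\tderiv'\exder\typectx\ctypes{k_1}\tm':\ltype$ with $\insize{\tderiv'}=\insize{\tderiv}-1$, so the induction hypothesis yields $\tm'\bshcol{k_2}$ with $k_2\le k_1$. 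Prepending the contracted step: if it is an interaction step then $k_1=k-1$ and $\tm\bshcol{k_2+1}$, hence $k_2+1\le k$; if it is silent then $k_1=k$ and $\tm\bshcol{k_2}$ with $k_2\le k$. In both cases there is $k'\le k$ with $\tm\bshcol{k'}$. The same computation shows that $\insize{\tderiv}$ bounds the number of head steps, so head reduction from $\tm$ does reach a normal form.

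For completeness, I would first prove the lemma that every $\hchsym$-nf $h=\clavec{\colr}{\var}\appp{\clrtwo{1}\cdots\clrtwo{m}}{\vartwo}{\tm_1\cdots\tm_m}$ has a derivation of index~$0$. The idea is to assign to the head variable $\vartwo$ a linear type made of $m$ nested arrows whose colors match the colors of the applications above $\vartwo$ and whose sources are all the empty multi type $\zero$, and to type every argument $\tm_i$ with $\zero$ by an instance of $\typingruleMany$ with empty index set (which costs $0$ and imposes no constraint on $\tm_i$); then each application rule along the spine has matching colors and leaves the index unchanged, and the enclosing abstractions are free. Given the lemma, I would conclude by induction on the length of the deterministic head reduction $\tm\tohch\cdots\tohch h$: the base case is the lemma (and there $\tm\bshcol{0}$), while in the inductive step, from $\tm\tohch\tm'$ and a derivation of $\tm'$ whose index equals the interaction count of $\tm'$, subject expansion (Proposition~\ref{prop:ch-subject}(2)) produces a derivation of $\tm$, and one must check that the index it carries equals the interaction count of $\tm$, i.e.\ that it is incremented by one across an interaction step and preserved across a silent step.

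The two inductions and the normal-form lemma are routine; the genuine obstacle is the index bookkeeping in the expansion direction, since Proposition~\ref{prop:ch-subject}(2) as stated guarantees only \emph{some} index for the expanded term, whereas completeness needs exactly the interaction count. I would close this gap in one of two ways. Either strengthen subject expansion to a quantitative statement dual to Proposition~\ref{prop:ch-subject}(1) --- expanding along an interaction step adds $1$ to the index, expanding along a silent step leaves it unchanged --- which is proved by the same case analysis on the contracted redex. Or first establish that the index is a function of the judgement $\typectx\ctypes{k}\tm:\ltype$ (derivations in this system being unique up to permutation of the premises of $\typingruleMany$), which lets Proposition~\ref{prop:ch-subject}(1) and~(2) be composed and their indices forced to agree. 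A small point worth recording beforehand is that $\tohch$-normal forms coincide with $\hchsym$-nfs and that $\tohch$ is deterministic, which is what makes ``the'' $\hchsym$-nf of $\tm$ and ``the'' interaction count well defined.
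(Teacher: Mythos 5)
Your proposal matches the paper's proof essentially step for step: soundness is proved by induction on the applicative size $\insize{\tderiv}$ using quantitative subject reduction (Proposition~\ref{prop:ch-subject}(1)), and completeness by first typing every $\hchsym$-normal form with index $0$ (the paper's Proposition~\ref{prop:nf-are-typable-tight}, built exactly as you describe, with empty multi types for the arguments) and then inducting on the length of the head reduction, using subject expansion. The index-bookkeeping issue you flag is real and is resolved in the paper exactly along the lines of your first fix: the expansion construction increments the index by $\Kr{\colr}{\colrtwo}$ at the contracted redex, and the completeness proof pins the expanded index down by appealing back to quantitative subject reduction.
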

\end{toappendix}

\subsection{Checkers Type Interpretation and Preorders} 

As previously mentioned, multi type systems are often employed to present relational models of the \lam-calculus~\cite{DBLP:journals/mscs/Carvalho18,DBLP:journals/mscs/PaoliniPR17,BreuvartMR18} since the relational interpretation of a \lam-term is isomorphic to the set of its `typings' $(\typctx,\ltype)$. Analogously, we can define the colored interpretation of a checkers term $\tm$ as the set of its typings that in this setting are triples $(\typctx,\ltype,k)$ such that $\tderiv \exder \typctx \ctypes{k} \tm\hastype\ltype$.

\begin{definition}
The \emph{colored interpretation} of a checkers term $\tm\in\Lambdac$ is given by:
\[
        \semint \tm \defeq \{ \Tr{\typctx}{\ltype}{k} \mid 
        \exists 
        \, 
        \tderiv\exder \typctx \ctypes{k} \tm : \ltype \}.
\]
Two terms $\tm,\tmtwo$ are \emph{type equivalent} if they have the same colored interpretation.
\end{definition}

\begin{remark} By Proposition~\ref{prop:ch-subject}, the interpretation $\semint{\cdot}$ is invariant along silent head reductions. 
However, since the index $k$ is taken into account in the interpretation, we may have $\tm \tohint \tmp$ with $\semint{\tm}\neq\semint{\tmp}$. 
As an example, we have $\bId\circ\ x \tohint x$ and $\semint{\bId\circ\ x}\cap\semint{x}=\emptyset$, \eg $(x\hastype[\vartype],\vartype,0)\in\semint{x}-\semint{\bId\circ\ x}$.
\end{remark}
There are several possible ways of comparing these semantic interpretations. 
The most natural one is set theoretical inclusion $\semint\tm \,\subseteq \, \semint\tmtwo$, which leads to the checkers type preorder studied in \cite{InteractionEquivalence}. 
In this paper we want to capture the interaction improvement $\leqchcolr$, so we  need to refine the comparison.
A naive attempt would be to check that $\tm$ and $\tmtwo$ have the same types in the same environments, but allowing the index $k'$ associated with $\tmtwo$ to be smaller. Formally:
\begin{center}  
$\tm \leq \tmtwo$ if $\forall\typctx,~\ltype,~k,~ \typctx\ctypes k \tm \hastype \ltype \implies \exists k' \leq k,~ \typctx\ctypes{k'} \tmtwo \hastype \ltype$
\end{center}
Unfortunately, this definition does not work, since $\leqchcolr$ validates $\eta$-reduction on monochromatic checkers terms (cf.\ Example~\ref{ex:preorders}), while this comparison suffers from the same issue as the inclusion.
Consider the black $\eta$-expansion of $\var$:
\pagebreak

\begin{center}
        $\infer[\typingruleAbs]{
                \var\hastype [\emptytype \typearrowp{\redclr} \ltype] \ctypes{1} \bla\vartwo \bapp{\var}{\vartwo}\hastype \emptytype \typearrowp{\blueclr} \ltype 
        }{
                \infer[\typingruleAppInt]{
                        \var\hastype [\emptytype \typearrowp{\redclr} \ltype] \ctypes{1} \bapp{\var}{\vartwo}\hastype \ltype 
                }{ 
                        \infer[\typingruleAx]{
                                \var\hastype [\emptytype \typearrowp{\redclr} \ltype] \ctypes{0} \var \hastype \emptytype \typearrowp{\redclr} \ltype
                        }{}
                        ~~~~&~~~~
                        \infer[\typingruleMany]{
                                \ctypes{0} \vartwo \hastype \emptytype
                        }{}
                }
        }$
\end{center}
and note that $\bla\vartwo\bapp\var\vartwo \leqchcolr \var$ holds, whereas $\var \hastype [\emptytype \typearrowp{\redclr} \ltype] \ctypes{~k'} \var \hastype \emptytype \typearrowp{\blueclr} \ltype$ cannot hold, because of the mismatch between the colors. 
Indeed, the rule $\typingruleAx$ requires that both occurrences of $\emptytype \to \ltype$ share the same color.

\paragraph{Whitening Types.} The above example suggests that, if we wish to define a preorder $\leqpwc$ that coincides with $\leqchcolr$ on the interpretations of black terms, we need a comparison between typings that can reduce the cost of derivations and whiten certain arrows in the types.
Let us analyze the example further:
\[
        \begin{array}{lcl}
        (x:[\mtype\typearrowp{\colr}\ltype],\mtype\typearrowp{\colrtwo}\ltype,k)\in
        \semint{\bla\vartwo \bapp{\var}{\vartwo}} 
        \end{array}
\]
iff $\colrtwo = \blueclr$ (forced by the $\lam_\bullet y$) and either $\colr = \bullet$ and $k = 0$, or $\colr = \circ$ and $k = 1$. 
A triple of this kind\footnote{
Note that $(x\hastype [\vartype],\vartype,0)\in\semint{x}-\semint{\bla\vartwo \bapp{\var}{\vartwo}}$. 
This is expected as $\var \not\leqchcolr \bla\vartwo \bapp{\var}{\vartwo}$.
} belongs to the interpretation of $x$ whenever $\colr=\colrtwo$ and $k=0$. 
In particular, there is a whiter and cheaper typing $(x:[\mtype\typearrowp{\redclr}\ltype],\mtype\typearrowp{\redclr}\ltype,0)\in\semint{x}$.

This can be generalized. We shall prove (Lemma~\ref{lem:whitercheaper}) that whenever $\typctx\ctypes{k} \tm\hastype \ltype $ and $\tm$ is a black $\eta$-expansion of $\tmtwo$, we can derive $\typctx'\ctypes{k'} \tmtwo\hastype \ltype'$ where $(\typctx',\ltype')$ is a whiter version of $(\typctx, \ltype)$ and $k' \leq k$, yielding a derivation that is both whiter and cheaper.
Moreover, we show that whitening must be applied only to arrows of positive polarity.
We begin by defining a polarized whitening relation on types.

        \begin{figure}
        \[
        \infer{\vartype\whiterpol{\posorneg}{0}\vartype}{\posorneg\in\set{+,-}}\qquad
        \infer{\mtype'\typearrowp{\circ}\ltype'\whiterposp{k_1 + k_2 + 1} \mtype\typearrowp{\bullet}\ltype}{\mtype'\whiternegp{k_1} \mtype & \ltype'\whiterposp{k_2}\ltype }
        \qquad
        \infer{\mtype'\typearrowp{\colr}\ltype'\whiterpol{\posorneg}{k_1 + k_2 } \mtype\typearrowp{\colr}{}\ltype}{\mtype'\whiterpol{\posornegop}{k_1} \mtype & \ltype'\whiterpol{\posorneg}{k_2}\ltype}
        \]
        \[              
        \infer{[\ltype'_1,\dots,\ltype'_n] \whiterpol{\posorneg}{k_1+\cdots+k_n} [\ltype_1,\dots,\ltype_n]}{\ltype'_1\whiterpol{\posorneg}{k_1}\ltype_1&\cdots&\ltype'_n\whiterpol{\posorneg}{k_n}\ltype_n}
        \]      
        \caption{Polarized whitening of a type.}\label{fig:polwhite}
        \end{figure}

\begin{definition}[Polarized Whitening]
\bsub
\item Given a \emph{polarity} $\posorneg\in\set{+,-}$ we denote by $\posornegop$ the opposite polarity.
\item
        For all $k\in\nat$, define $\gtype' \whiterposp k \gtype$ and $\gtype' \whiternegp k \gtype$ by mutual induction in \reffig{polwhite}.
\item 
        When $\gtype' \whiterposp k \gtype$ (resp.\ $\gtype' \whiternegp k \gtype$) holds we say that $\gtype'$ is \emph{$k$-whiter} than $\gtype$ on positively (resp.\ negatively) occurring arrows. 
\item   
        The relations above extend to environments $\Gamma$ and pairs $\Pair{\Gamma}{\ltype}$ as follows:
        \[
        \infer{\emptyset\whiterpol{\posorneg}{0} \emptyset}{}
        \qquad
        \infer{\typctx',x:\mtype' \whiterpol{\posorneg}{k_1+k_2} \typctx,x:\mtype}{\typctx' \whiterpol{\posorneg}{k_1} \typctx & \mtype'\whiterpol{\posorneg}{k_2} \mtype}
        \qquad
        \infer{\Pair{\typctx'}{\ltype'} \whiterpol{\posorneg}{k_1+k_2} \Pair{\typctx}{\ltype}}
        {\typctx' \whiterpol{\posornegop}{k_1}\typctx & \ltype' \whiterpol{\posorneg}{k_2} \ltype}
        \]  
        \esub
\end{definition}
Intuitively, $\gtype'\whiterpol{\posorneg}{k}\gtype$ holds if $\gtype'$ is obtained from $\gtype$ by whitening $k$ arrows occurring with polarity $\posorneg$.
In particular, the underlying uncolored type must be the same.
Note the absence, in \reffig{polwhite}, of a rule that allows one to infer $\mtype' \typearrowp{\circ} \ltype' \whiternegp{k} \mtype \typearrowp{\bullet} \ltype$.
This omission is intentional, as it prevents the whitening of arrows that occur negatively in the type that would break the polarization of the change of color. 
%{\color{red} We should explain why we don't want that.}

Everything is now in place to introduce the preorder $\leqpwc$ on checkers terms.

\begin{definition}[Polarized Whiter-Cheaper Improvement]\label{def:PWCpreorder}
        For all checkers terms $\tm$, $\tmtwo$ we define $\tm\leqpwc \tmtwo$ if and only if
        \[
        \forall \Tr{\typctx}{\ltype}{k}\in\semint{\tm},\,
        \exists \Tr{\typctx'}{\ltype'}{k'}\in\semint{\tmtwo}\,, \exists d,\,
        \Pair{\typctx'}{\ltype'} \whiterposp{d} \Pair{\typctx}{\ltype} \textrm{ such that }
        k \geq k'+d 
        \]
\end{definition}
Summing up, this means that for all derivations of $\typctx\ctypes{k}\tm\hastype\ltype$, there exists one of $\typctx'\ctypes{k'}\tmtwo\hastype\ltype'$ which is whiter, and cheaper by at least the amount of whitenings.
%Some properties on the type system:
%\begin{lemma}\label{lem:soundandcomplete} For all checkers $\tm$ % This is now Theorem 2!
%       \bsub
%       \item $\tm \Downarrow _k$ implies $\typctx\hasstype \tm : \ltype$, for some $\Tr{\typctx}{\ltype}{k}$.
%       \item $\typctx\hasstype \tm : \ltype$, entails there exists $k'\le k$ such that $\tm\Downarrow_{k'}$.
%       \esub 
%\end{lemma}
%\todo[inline,color=violet]{Gabriele: $\Downarrow _k$ and $\hasstype$ undefined.}

We now investigate some properties of the whitening relations.

\begin{toappendix}
\begin{lemma}\label{lem:whitercheaperrel}\NoteProof{lem:whitercheaperrel}
        For all environments $\typctx,\typctx'$ and linear types $\ltype,\ltype'$, we have:
        \begin{enumerate}[(i)]
                
                \item \emph{$0$-whitening is equality:} for $\posorneg\in\{\possym,\negsym\}$, we have that:
                \begin{itemize}
                        \item $\ltype\whiterpp{\posorneg}{0}\ltype$ if and only if $\ltype = \ltypetwo$;
                        \item $\typctx\whiterpp{\posorneg}{0}\typctx'$ if and only if $\typctx=\typctx'$;
                        \item Therefore, $\Pair{\typctx}{\ltype}\whiterpp{\posorneg}{0}\Pair{\typctx'}{\ltype'}$ if and only if
                        $\typctx=\typctx'$ and $\ltype = \ltypetwo$.
                \end{itemize}
                \item \emph{Inversion:}  for $\posorneg\in\{\possym,\negsym\},\colr\in\{\redclr,\blueclr\}$ and $k\in\nat$, we have that:
                \begin{center}
                        $\Pair{\typctx',x:\mtype'}{\ltype'} \whiterpp\posorneg{k} \Pair{\typctx,x:\mtype}{\ltype}$ if and only if $\Pair{\typctx'}{{\mtype'}\typearrowp\colr{\ltype'}} \whiterpp\posorneg{k} \Pair{\typctx}{\mtype\typearrowp\colr\ltype}$.
                \end{center}
                \item \emph{Transitivity of polarized whitening:}  for $\posorneg\in\{\possym,\negsym\}$, we have that:
                \begin{center}
                        if $\Pair{\typctx}{\ltype} \whiterpp\posorneg{k_1}\Pair{\typctx'}{\ltype'} \whiterpp\posorneg{k_2}\Pair{\typctx''}{\ltype''}$
                        then $\Pair{\typctx}{\ltype} \whiterpp\posorneg{k_1+k_2} \Pair{\typctx''}{\ltype''}$.
                \end{center}
        \end{enumerate}
\end{lemma}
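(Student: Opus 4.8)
The plan is to prove all three items by structural induction on the derivations in Figure~\ref{fig:polwhite}, exploiting the mutual-induction structure of $\whiterposp{k}$ and $\whiternegp{k}$. Throughout, I would first treat the statements at the level of (linear and multi) types, and only then lift them to environments and pairs, since the environment and pair rules are direct products of the type-level relation and inherit every property componentwise. The key preliminary observation, used silently everywhere, is that $\gtype'\whiterpol{\posorneg}{k}\gtype$ forces the underlying uncolored skeletons of $\gtype'$ and $\gtype$ to coincide; this is immediate from each rule in Figure~\ref{fig:polwhite}, and it lets me do the inductions by matching on a common type skeleton rather than on two independent types.

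For item~(i), the right-to-left direction is trivial: if $\ltype=\ltype'$ then $\ltype\whiterpp{\posorneg}{0}\ltype$ by iterating the atom rule and the color-preserving arrow/multi rules with all indices set to $0$. For left-to-right I argue by induction on $\ltype$ (equivalently on the derivation of $\whiterpp{\posorneg}{0}$): if $\ltype$ is an atom, the only applicable rule is the atom rule and $\ltype'=\ltype$; if $\ltype=\mtype\typearrowp{\colr}\ltypetwo$, then since the total index is $0$, the rule that introduces a white-over-black arrow (which contributes $+1$) cannot have been used, so the color-preserving arrow rule was used with $k_1=k_2=0$, and the induction hypothesis on $\mtype$ (using the opposite polarity) and on $\ltypetwo$ gives $\mtype'=\mtype$ and $\ltypetwo'=\ltypetwo$, hence $\ltype'=\ltype$; the multi-type case is analogous, splitting a zero sum into zeros. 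The environment and pair statements then follow by a further componentwise induction. For item~(ii), both rules for pairs and for environments are ``syntactic'': reading the environment rule for the single binding $x$ and the pair rule, one sees that $\Pair{\typctx',x:\mtype'}{\ltype'}\whiterpp{\posorneg}{k}\Pair{\typctx,x:\mtype}{\ltype}$ holds iff $\typctx'\whiterpol{\posornegop}{k_1}\typctx$, $\mtype'\whiterpol{\posornegop}{k_2}\mtype$, $\ltype'\whiterpol{\posorneg}{k_3}\ltype$ with $k=k_1+k_2+k_3$; but by the color-preserving arrow rule, $\mtype'\whiterpol{\posornegop}{k_2}\mtype$ together with $\ltype'\whiterpol{\posorneg}{k_3}\ltype$ is exactly $\mtype'\typearrowp{\colr}\ltype'\whiterpol{\posorneg}{k_2+k_3}\mtype\typearrowp{\colr}\ltype$, so this repackages precisely as $\Pair{\typctx'}{\mtype'\typearrowp{\colr}\ltype'}\whiterpp{\posorneg}{k}\Pair{\typctx}{\mtype\typearrowp{\colr}\ltype}$. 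This is essentially a bijection between the two derivations, obtained by sliding the binding of $x$ across the arrow, and it works in either direction.

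For item~(iii), transitivity, I would again prove it first for linear and multi types by induction on the skeleton, then lift. The atom case is immediate. For $\mtype\typearrowp{\colr}\ltype$: the composed whitening $\Pair{\typctx}{\ltype}\whiterpp{\posorneg}{k_1}\cdots\whiterpp{\posorneg}{k_2}\cdots$ at the pair level reduces, via item~(ii), to transitivity on linear types, so it suffices to show $\ltype\whiterpp{\posorneg}{k_1}\ltype'\whiterpp{\posorneg}{k_2}\ltype''$ implies $\ltype\whiterpp{\posorneg}{k_1+k_2}\ltype''$. One performs a small case analysis on which arrow rules were used at the top of each of the two derivations. The delicate point — and the one I expect to be the main obstacle — is the interaction of the two arrow rules when a color change is involved: one step may keep the color ($\colr$ to $\colr$) while the other turns $\circ$ into $\bullet$; then I must check that the composite is still a legal whitening with the indices adding up, in particular that the polarity bookkeeping is consistent. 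Concretely, the only way to change a color on a $+$-arrow is via the white-over-black rule, and it can fire at most once on any given arrow occurrence (since $\bullet$ cannot become $\circ$); so if the first whitening already turned that arrow white, the second must leave its color fixed, and vice versa. Unwinding the premises $\mtype'\whiternegp{k}\mtype$ and $\ltype'\whiterposp{k}\ltype$ of the white-over-black rule and composing them with the matching premises of the color-preserving rule via the induction hypothesis (at the swapped polarity for the domain, same polarity for the codomain) yields exactly the premises needed to re-apply the white-over-black rule once, with the indices summed. The multi-type case splits the two sums $k_1=\sum_i k_{1,i}$, $k_2=\sum_i k_{2,i}$ componentwise and applies the IH to each $\ltype_i$. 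Finally, lifting to environments is a componentwise repetition of the same argument, and lifting to pairs uses item~(ii) once more to reduce to the linear-type statement, noting that the polarity on the environment component flips to $\posornegop$ uniformly so the additive bookkeeping is unaffected.
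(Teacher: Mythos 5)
Your proposal is correct and follows essentially the same route as the paper's proof: item (i) by induction forcing all indices to be zero, item (ii) by unpacking the pair/environment rules and repackaging them as the arrow rule (noting the matching polarity flip on the domain side), and item (iii) by first reducing to transitivity on linear and multi types via (ii) and then performing the mutual induction with the case analysis on the two arrow rules, keyed on the observation that a given arrow occurrence can be whitened at most once along the chain since black cannot revert to white. No gaps.
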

\end{toappendix}

%%% Local Variables:
%%% mode: LaTeX
%%% TeX-master: "../main"
%%% End:

% !TEX spellcheck = en-US
% !TEX root = ../main.tex
\section{Checkers Semantics for the \lam-calculus}
%\section{\bohm tree preorder up to $\eta$-reductions}
\label{sect:bohm-trees}

In the introduction, we argued that the checkers calculus is interesting in itself, but in this paper we mainly use it to infer properties of the standard $\lambda$-calculus.

%\subsection{Transferring Checkers Preorders back to $\Lambda$}
As discussed in Section~\ref{sec:checkers}, ordinary $\lambda$-terms can be embedded into the checkers calculus via the coloring map $\monoToBlue{(\cdot)} : \Lambda \to \Lambdac$ from Definition~\ref{def:paintinglambdas}, and then compared using the various preorders we defined on checkers terms.

\begin{definition}[Preorders on \lam-terms]\label{def:interactionpreorder}
        For all \lam-terms $\tm,\tmtwo\in\Lambda$, define:
        \bsub
        \item \emph{Interaction preorder $(\leqctx)$}.
        $\tm\leqctx \tmtwo$ iff $\monoToBlue{\tm}\leqchcol \monoToBlue{\tmtwo}$;
        \item \emph{Interaction improvement ($\leqimp$)}.
        $\tm\leqimp \tmtwo$ iff $\monoToBlue{\tm}\leqchcolr \monoToBlue{\tmtwo}$;f
        \item \emph{Polarized whiter-cheaper preorder $(\leqpwcplain)$.} $\tm\lecol \tmtwo$ iff $\monoToBlue{\tm}\lecolpos \monoToBlue{\tmtwo}$.
        \esub
\end{definition}
To help the reader avoid any confusion, we reserve the subscript `$\intsym$' for relations between checkers terms, while it is omitted in relations between $\lambda$-terms.
However, be aware that the quantification in $\leqctx$ and $\leqimp$ still ranges over all checkers contexts, and the interaction steps are being counted.

The preorder $\leqctx$ has been characterized as B\"ohm tree inclusion~\cite{InteractionEquivalence}, and semantically corresponds to the preorder induced by Plotkin’s model $\mathcal{P}\omega$~\cite{DBLP:journals/siamcomp/Plotkin76}.
We have seen that, although there is no analogue of $\eta$-reduction in the checkers calculus (Remark~\ref{rem:eta}), the inequation $\bOne \leqchcolr \bId$ holds (Example~\ref{ex:preorders}), which entails $\lambda y.\tm y \leqimp \tm$ for $\lambda$-terms (with $y \notin \fv{\tm}$).
In other words, $\leqimp$ validates $\eta$-reduction in the sense that $\tm \to_\eta \tmtwo$ implies $\tm \leqimp \tmtwo$.

We shall see that the situation is even subtler: $\leqimp$ also validates possibly infinite $\eta$-reductions. 
This notion has been formalized by Lassen~\cite{lassen1999bisimulation} as follows. 
We write $\tm\bshs h$ to mean that $\tm$ is head normalizable and $h$ is its hnf.

\begin{definition}[\bohm tree preorder up to $\eta$-reductions]
        The \emph{\bohm preorder up to $\eta$-reductions} $\etaredbtleq$ is defined coinductively on \lam-terms $\tm,\tmtwo$ as the largest relation $\tm \etaredbtleq \tmtwo$ closed under the following clauses:
        \begin{itemize}
                \item[(bot)] $\tm\bshdiv$  \ie $\tm$ has no head normal form.
                \item[($\relation{H}\eta_\symfont{red}$)] $t\bshs \lam x_1\dots x_{n+p}.y\,t_1\cdots t_{k+p}$ and  $u\bshs  \lam x_1\dots x_n.y\,u_1\cdots u_k $, 
                such that $p\ge 0$,\linebreak $y$ is equally bound or free in both terms, 
                $(\tm_i\etaredbtleq u_i)_{i=1}^{k}$, $(\tm_i\etaredbtleq x_i)_{i =k+1}^{k+p}$ and $(x_i)_{i =k+1}^{k+p}$ are chosen by $\alpha$-conversion not to occur free in $yu_1\cdots u_k$.
        \end{itemize}
        The \emph{\bohm preorder up to possibly infinite $\eta$-conversion} $\sqsubseteq_{\laxbsym}$ is defined similarly, by adding a clause catching $\eta$-expansions in $\tmtwo$, thus symmetric to ($\relation{H}\eta_\symfont{red}$).     %{\color{red} Gabriele: do we need this? Yes, in the Bohm out lemma you wrote!}
\end{definition}
The definition of $\tm \etaredbtleq \tmtwo$ requires a certain level of complexity to take into account possibly infinite $\eta$-expansions occurring in the B\"ohm tree of $\tm$.
Breuvart et al.~\cite{BreuvartMR18} characterized $\etaredbtleq$ as the preorder induced by the relational graph model $\mathcal{E}$~\cite{DBLP:journals/entcs/HylandNPR06}, Ronchi Della Rocca as the one induced by the filter model $\mathcal{D}_{\textrm{BCD}}$~\cite{DBLP:journals/iandc/Rocca82}.

\begin{example}\label{ex:playingwithJ} 
The $\comb{J} := \comb{Y}(\lam zxy.x(zy))$ from Example~\ref{ex:preorders} produces no $\eta$-redexes:
\[
        \comb{J} =_\beta \lam xy_0.x(\comb{J}y_0) =_\beta \lam xy_0.x(\lam y_1.x\comb{J}y_1) =_\beta  \lam xy_0.x(\lam y_1.x(\lam y_2.y_1(\comb{J}y_2)) =_\beta \cdots
\]
but it is an infinite $\eta$-expansion of $\comb{I}:=\lam x.x$, and in fact $\comb{J}\etaredbtleq\comb{I}$. This example clarifies that the freshness condition in ($\relation{H}\eta_\symfont{red}$) does not need to hold at any finite step, but holds at the limit. 
By contextuality, we get $\lam y.x(\lam z.y(\comb{J}z)) \etaredbtleq \lam y.xy$, and it is interesting to compare their interpretations once painted black. E.g., 
\[\newcommand{\typingruleManyShort}{\mathsf{m}}
\infer[\typingruleAbs]{
                                \var\hastype [[\emptytype \typearrowp{\blueclr} \ltypetwo]  \typearrowp{\colrtwo} \ltype] 
                                \ctypes{(\Kr{\clr{}}{\bullet}+\Kr{\clrtwo{}}{\bullet})} 
                                \bla\vartwo \bapp{\var}{(\bla\varthree\bapp\vartwo(\monoToBlue{\comb{J}\varthree}))}
                                \hastype [\emptytype \typearrowp{\colr} \ltypetwo]  \typearrowp{\blueclr} \ltype 
                        }{
                                \infer[\typingruleAppInt]{
                                        \var\hastype [[\emptytype \typearrowp{\blueclr} \ltypetwo]  \typearrowp{\colrtwo} \ltype], \vartwo \hastype [\emptytype \typearrowp{\colr} \ltypetwo] 
                                         \ctypes{(\Kr{\clr{}}{\bullet}+\Kr{\clrtwo{}}{\bullet})} 
                                         \bapp{\var}{\bla\varthree\bapp\vartwo(\monoToBlue{\comb{J}\varthree})}\hastype \ltype 
                                }{ 
                                        \infer[\typingruleAx]{
                                                \var\hastype [[\emptytype \typearrowp{\blueclr} \ltypetwo]  \typearrowp{\colrtwo} \ltype] \ctypes{0} \var \hastype [\emptytype \typearrowp{\blueclr} \ltypetwo]  \typearrowp{\colrtwo} \ltype
                                        }{}
                                        &
                                        \infer[\typingruleManyShort]{
                                                \vartwo \hastype [\emptytype \typearrowp{\colr} \ltypetwo] 
                                                \ctypes{\Kr{\clr{}}{\bullet}}
                                                \bla\varthree\bapp\vartwo\varthree \hastype [\emptytype \typearrowp{\blueclr} \ltypetwo] 
                                        }{\infer[\typingruleAbs ]{ 
                                        \vartwo \hastype [\emptytype \typearrowp{\colr} \ltypetwo] 
                                        \ctypes{\Kr{\clr{}}{\bullet}} 
                                        \bla\varthree\bapp\vartwo(\monoToBlue{\comb{J}\varthree}) \hastype \emptytype \typearrowp{\blueclr} \ltypetwo }{
                                \infer[\typingruleApp]{ 
                                        \vartwo \hastype [\emptytype \typearrowp{\colr} \ltypetwo] 
                                        \ctypes{\Kr{\clr{}}{\bullet}} 
                                        \bapp\vartwo{(\monoToBlue{\comb{J}\varthree})} \hastype \ltypetwo 
                                        }{ \vartwo \hastype [\emptytype \typearrowp{\colr} \ltypetwo] 
                                        \ctypes{0} 
                                        \vartwo \hastype \emptytype \typearrowp{\colr} \ltypetwo 
                                        & 
                                        \ctypes{0} 
                                        \monoToBlue{\comb{J}\varthree}\hastype\emptytype}{}
                        }}
                                }
                        }
\]
where $\mathsf{m}$ is short for $\typingruleMany$, and $\Kr{\cdot}{\cdot}$ is (the dual of) Kronecker's delta, i.e.\ $\Kr{\clr{}}{\clrtwo{}} := 1$ if $\clr{}\neq\clrtwo{}$, and $\Kr{\clr{}}{\clrtwo{}} := 0$ otherwise. By taking $\clr{}=\circ$, one obtains a typing that is not suitable for $\lam y.xy$ since $\var\hastype [[\emptytype \typearrowp{\blueclr} \ltypetwo]  \typearrowp{\colrtwo} \ltype] \not\ctypes{k} \bla\vartwo \bapp{\var}{\bla\varthree\bapp\vartwo\varthree}\hastype [\emptytype \typearrowp{\redclr} \ltypetwo]  \typearrowp{\blueclr} \ltype$. 

There exists however a positively 1-whiter typing:\\[3pt]
\[
        \infer[\typingruleAbs]{
                                \var\hastype [[\emptytype \typearrowp{\circ} \ltypetwo]  \typearrowp{\colrtwo} \ltype] \ctypes{\Kr{\colrtwo}{\bullet}} \bla\vartwo \bapp{\var}{\vartwo}\hastype [\emptytype \typearrowp{\circ} \ltypetwo]  \typearrowp{\blueclr} \ltype 
                        }{
                                \infer[\typingruleAppInt]{
                                        \var\hastype [[\emptytype \typearrowp{\circ} \ltypetwo]  \typearrowp{\colrtwo} \ltype], \vartwo \hastype [\emptytype \typearrowp{\circ} \ltypetwo] \ctypes{\Kr{\colrtwo}{\bullet}} \bapp{\var}{\vartwo}\hastype \ltype 
                                }{ 
                                        \infer[\typingruleAx]{
                                                \var\hastype [[\emptytype \typearrowp{\circ} \ltypetwo]  \typearrowp{\colrtwo} \ltype] \ctypes{0} \var \hastype [\emptytype \typearrowp{\circ} \ltypetwo]  \typearrowp{\colrtwo} \ltype
                                        }{}
                                        ~~~~&~~~~
                                        \infer[\typingruleMany]{
                                                 \vartwo \hastype [\emptytype \typearrowp{\circ} \ltypetwo] \ctypes{0} \vartwo \hastype [\emptytype \typearrowp{\circ} \ltypetwo] 
                                        }{\vartwo \hastype [\emptytype \typearrowp{\circ} \ltypetwo] \ctypes{0} \vartwo \hastype \emptytype \typearrowp{\circ} \ltypetwo }{}
                                }
                        }
\]
that is 1-cheaper since $\Kr{\colrtwo}{\bullet}< \Kr{\colr}{\bullet} + \Kr{\colrtwo}{\bullet} = \Kr{\colrtwo}{\bullet}+1$. As a result of our main Theorem~\ref{thm:main}, we shall see that $\lam y.x(\lam z.y(\comb{J}z)) \leqpwcplain \lam y.xy$ holds.
\end{example}

\subsection{Two inclusions for the \bohm preorder up to $\eta$-reductions}
The remainder of the paper is devoted to showing that the preorders $\leqimp$, $\leqpwcplain$, and $\etaredbtleq$ coincide. In this subsection, we relate the \bohm preorder up to $\eta$-reductions to the polarized whiter-cheaper improvement ($\etaredbtleq \, \subseteq \,\leqpwcplain$) and to interaction improvement ($\intrleq\,\subseteq\,\etaredbtleq$).
%\paragraph{\bohm preorder up to $\eta$-reductions entails pwc-preorder.}
%We begin by proving the inclusion $\etaredbtleq\ \subseteq\ \leqpwcplain$.
To prove $\etaredbtleq\ \subseteq\ \leqpwcplain$, we need to study first the case of a possibly infinite $\eta$-expansion of a single variable.

\begin{toappendix}
\begin{lemma}\label{l:eta-id}\NoteProof{l:eta-id}
        Let $\tm\in\Lambda$ and $x\in\Var$ be such that $\tm \slbtleq x$. 
        \begin{enumerate}
        \item If $\typctx\ctypes{k}\monoToBlue{\tm} : \ltype$ then there exist $\ltype',\ltype''$ such that $\typctx = x\hastype[\ltype']$, $\ltype''\whiterneg_{k'} \ltype'$ and  $\ltype''\whiterpos_{k''} \ltype$ with $k = k'+k''$;
        \item If $\typctx\ctypes{k}\monoToBlue{\tm} : \mtype$ then there exist $\mtype',\mtype''$ such that $\typctx = x\hastype[\mtype']$, $\mtype''\whiterneg_{k'} \mtype'$ and  $\mtype''\whiterpos_{k''} \mtype$ with $k = k'+k''$.
        \end{enumerate}
\end{lemma}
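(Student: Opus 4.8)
\emph{Overall plan.} I would prove part~(1) by well-founded induction on the pair $(\insize{\tderiv},\abs{\tderiv})$ — applicative size, then total number of rule instances — of the derivation $\tderiv$ of $\typctx\ctypes{k}\monoToBlue{\tm}\hastype\ltype$, ordered lexicographically; part~(2) then follows immediately, since a multi target forces the root rule to be $\typingruleMany$, so we decompose into linear-target sub-derivations, apply~(1) to each, and recombine with the multi-type rule of Figure~\ref{fig:polwhite} (the counts add up). We may assume $\tm$ has a head normal form: otherwise $\monoToBlue{\tm}$ has no $\hchsym$-nf (by the Correspondence Lemma) and hence, by Theorem~\ref{thm:soundandcomplete}, is not typable, so the statement is vacuous. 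Under this assumption, the only clause of the coinductive definition of $\slbtleq$ that can witness $\tm\slbtleq x$ against the \emph{variable} $x$ is the $\eta$-expansion clause $(\relation{H}\eta_\symfont{red})$ (with $u=x$, $n=0$, $k=0$), giving $\tm\bshs \lam x_1\cdots x_p.\,x\,t_1\cdots t_p$ for some $p\ge 0$ with $t_i\slbtleq x_i$ for each $i$.

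\emph{Redex case.} If $\monoToBlue{\tm}$ is not in $\hchsym$-nf, it has a head redex; since the term is monochromatic black, contracting it is a \emph{silent} head step, which by the Correspondence Lemma corresponds to a head $\beta$-step $\tm\Rew{\relation{h}}\tmtwo$ with reduct $\monoToBlue{\tmtwo}$. By quantitative subject reduction (Proposition~\ref{prop:ch-subject}(1), silent case) there is a derivation of $\typctx\ctypes{k}\monoToBlue{\tmtwo}\hastype\ltype$ of strictly smaller applicative size; since $\tmtwo=_\beta\tm$ have the same Böhm tree, $\tmtwo\slbtleq x$, and the induction hypothesis applied to $\tmtwo$ yields the claim verbatim (same $\typctx$, $k$, $\ltype$).

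\emph{Head-normal case.} Otherwise $\monoToBlue{\tm}=\monoToBlue{\lam x_1\cdots x_p.\,x\,t_1\cdots t_p}$. If $p=0$, then $\monoToBlue{\tm}=x$, the derivation is an instance of $\typingruleAx$, so $\typctx=x\hastype[\ltype]$ and $k=0$; we take $\ltype''\defeq\ltype'\defeq\ltype$, using that $0$-whitening is reflexive (Lemma~\ref{lem:whitercheaperrel}(i)). If $p\ge 1$, typing a black abstraction with a linear type forces $p$ consecutive $\typingruleAbs$ rules, so $\ltype=\mtype_1\typearrowp{\bullet}\cdots\typearrowp{\bullet}\mtype_p\typearrowp{\bullet}\ltype_p$; beneath them the body $x\,t_1\cdots t_p$ is typed by a spine: an instance of $\typingruleAx$ assigning $x$ a type $\sigma=\mtype^{(1)}\typearrowp{\colr_1}\cdots\typearrowp{\colr_p}\ltype_p$, followed by $p$ black applications consuming $\monoToBlue{t_j}$ with multi type $\mtype^{(j)}$, the $j$-th of which increments the index exactly when $\colr_j=\circ$. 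Hence $k=\#\{j:\colr_j=\circ\}+\sum_{j=1}^{p}k_{t_j}$, where $k_{t_j}$ is the index of the sub-derivation of $\monoToBlue{t_j}\hastype\mtype^{(j)}$. Applying the induction hypothesis (in the shape of part~(2), which is legitimate because each linear piece has strictly smaller applicative size as $p\ge 1$) to each $t_j\slbtleq x_j$ gives multi types $\mtype''_j$ with $\mtype''_j\whiterneg_{a_j}\mtype_j$, $\mtype''_j\whiterpos_{b_j}\mtype^{(j)}$ and $a_j+b_j=k_{t_j}$; in particular the environment of the $j$-th piece is supported on $\{x_j\}$, so all binders are removed and $\typctx=x\hastype[\sigma]$, whence $\ltype'\defeq\sigma$.

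\emph{Assembly, and the crux.} Set $\ltype''\defeq\mtype''_1\typearrowp{\colr_1}\cdots\typearrowp{\colr_p}\ltype_p$ — note it must carry the spine colors $\colr_j$ of $\ltype'$, \emph{not} the all-black spine of $\ltype$, because negative whitening cannot recolor a spine arrow (they occur positively). Iterating the rules of Figure~\ref{fig:polwhite}: the same-color rule in its $\whiterneg$-form (which flips each domain to $\whiterpos$) gives $\ltype''\whiterneg_{k'}\ltype'$ with $k'=\sum_j b_j$; the $\whiterpos$-rules (the same-color one when $\colr_j=\bullet$, and the black-to-white one — costing an extra $1$ — when $\colr_j=\circ$, both flipping the domain to $\whiterneg$) give $\ltype''\whiterpos_{k''}\ltype$ with $k''=\#\{j:\colr_j=\circ\}+\sum_j a_j$. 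Then $k'+k''=\#\{j:\colr_j=\circ\}+\sum_j(a_j+b_j)=\#\{j:\colr_j=\circ\}+\sum_j k_{t_j}=k$, as required. The heart of the argument — and the step I expect to be the main obstacle — is exactly this bookkeeping: a black application against a white-arrow head \emph{is} an interaction step, so the ``$+1$ per white spine arrow'' that inflates the index $k$ is literally the same count as the ``$+1$ per white spine arrow'' paid to positively whiten $\ltype''$ up to the all-black $\ltype$, and it is this coincidence that makes the equation $k=k'+k''$ balance. Getting the induction measure to strictly decrease along both the head-reduction step and the spine recursion is a secondary, routine point (a $\typingruleMany$ with a single premise keeps the applicative size fixed, hence the lexicographic refinement by total size).
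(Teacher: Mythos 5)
Your proposal is correct and follows essentially the same route as the paper's proof: reduce to the head normal form forced by $\tm\slbtleq x$ via (silent) subject reduction, analyse the spine typing, apply the mutual induction hypothesis to the arguments $t_j\slbtleq x_j$, and build $\ltype''$ with the spine colours of $\ltype'$ so that the $+1$ per white spine arrow in $k$ is exactly the cost of positively whitening $\ltype''$ up to the all-black $\ltype$. Your explicit lexicographic measure (applicative size, then total size) and the justification of the hnf shape are merely more careful renditions of what the paper leaves implicit.
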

\end{toappendix}
%\begin{corollary}\label{cor:eta-id}
%       Let $\tm\in\Lambda$ be such that $\tm \slbtleq x$. 
%       If $x : \rho'\hasstype[p] \monoToBlue{\tm} : \rho''$ then, for some $\rho$:
%       \[
%       \Tr{x : \rho}{\rho}{0} \le_p \Tr{x : \rho'}{\rho''}{p}
%       \]
%\end{corollary}
We now generalize the result to any pair of \lam-terms related by $\etaredbtleq$.

\begin{toappendix}
\begin{lemma}\label{lem:whitercheaper}\NoteProof{lem:whitercheaper}
Assume $\tm\slbtleq \tmtwo$ and $\typctx\ctypes{k} 
        \monoToBlue{\tm} : \ltype$.
        Then $\typctx'\ctypes{k'} \monoToBlue{\tmtwo} : \ltype'$ with 
        $\Pair{\typctx'}{\ltype'} \whiterposp p \Pair{\typctx}{\ltype}$, for 
        $0 \leq p = k-k'$.
\end{lemma}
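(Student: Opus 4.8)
The plan is to argue by induction on the applicative size $\insize{\tderiv}$ of the given derivation $\tderiv\exder\typctx\ctypes{k}\monoToBlue{\tm}:\ltype$, unfolding the coinductive definition of $\slbtleq$ once at the head. If $\tm$ has no head normal form, then $\monoToBlue{\tm}$ has no $\hchsym$-normal form (by the correspondence between $\beta$- and silent reductions, Lemma~\ref{l:correspondence}), hence it is untypable by soundness (Theorem~\ref{thm:soundandcomplete}(1)) and the hypothesis is vacuous. If $\tm$ has a head normal form but is not itself in head normal form, I would fire one head step $\tm\Rew{\relation{h}}\tm''$; this lifts to a silent head step $\monoToBlue{\tm}\tohnoint\monoToBlue{\tm''}$, so quantitative subject reduction (Proposition~\ref{prop:ch-subject}(1)(b)) provides a derivation of $\typctx\ctypes{k}\monoToBlue{\tm''}:\ltype$ of strictly smaller applicative size, while $\tm''\slbtleq\tmtwo$ still holds because $\tm$ and $\tm''$ have the same head normal form; the induction hypothesis then closes this case. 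It remains to treat $\tm=\lam x_1\dots x_a.\,y\,t_1\cdots t_b$ already in head normal form.

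In that case, unfolding $\slbtleq$ gives $\tmtwo\bshs\lam x_1\dots x_{a'}.\,y\,u_1\cdots u_{b'}$ with $p:=a-a'=b-b'\ge 0$, $t_i\slbtleq u_i$ for $i\le b'$, $t_{b'+j}\slbtleq x_{a'+j}$ for $j\le p$, the $x_{a'+j}$ fresh for $y\,u_1\cdots u_{b'}$, and $y$ equally free or bound. Since $\semint{\cdot}$ is invariant under silent head reduction, it suffices to build the required typing for $\monoToBlue{h_\tmtwo}$. I would first decompose $\tderiv$: peeling the $a$ instances of $\typingruleAbs$ exposes $\ltype=\mtype_1\typearrowp{\bullet}\cdots\mtype_a\typearrowp{\bullet}\ltype_0$, and peeling the $b$-step all-black spine exposes the type $\ltype_y=\rho_1\typearrowp{\colr_1}\cdots\rho_b\typearrowp{\colr_b}\ltype_0$ used at the head axiom for $y$, the argument multitypes $\rho_i=[\sigma_{i,l}]_l$ with sub-derivations $\tderiv_{i,l}\exder\Delta_{i,l}\ctypes{k_{i,l}}\monoToBlue{t_i}:\sigma_{i,l}$, and the cost split $k=\sum_{i\le b}\Kr{\colr_i}{\bullet}+\sum_{i,l}k_{i,l}$. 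A combinatorial point to establish before recursing: in any such $\tderiv$ each fresh variable $x_{a'+j}$ occurs only inside the sub-derivations of $\monoToBlue{t_{b'+j}}$ --- because Lemma~\ref{l:eta-id} pins the environment of every typing of $\monoToBlue{t_{b'+j}}$ to $x_{a'+j}$ at a single linear type, while the support of an environment is contained in the free variables and the $x_{a'+j}$ are fresh for the $u_i$; consequently $\mtype_{a'+j}=\uplus_l[\tau_{j,l}]$ is exactly what is collected there, and the decomposition of $\tderiv$ into independent argument derivations is legitimate.

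Next I would recurse: apply the induction hypothesis to $t_i\slbtleq u_i$ and each $\tderiv_{i,l}$ (smaller, as $b\ge 1$), getting $\Delta'_{i,l}\ctypes{k'_{i,l}}\monoToBlue{u_i}:\sigma'_{i,l}$ with $\Pair{\Delta'_{i,l}}{\sigma'_{i,l}}\whiterposp{}\Pair{\Delta_{i,l}}{\sigma_{i,l}}$ and whitening count exactly $k_{i,l}-k'_{i,l}$; and apply Lemma~\ref{l:eta-id} to $t_{b'+j}\slbtleq x_{a'+j}$ and each $\tderiv_{b'+j,l}$, getting $\tau''_{j,l}$ with $\tau''_{j,l}\whiternegp{f''_{j,l}}\tau_{j,l}$, $\tau''_{j,l}\whiterposp{g''_{j,l}}\sigma_{b'+j,l}$ and $k_{b'+j,l}=f''_{j,l}+g''_{j,l}$. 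I then reassemble a derivation for $\monoToBlue{h_\tmtwo}$: combine the $u_i$-derivations by $\typingruleMany$ into $\monoToBlue{u_i}:\rho'_i:=[\sigma'_{i,l}]_l$, take the head axiom $y:[\ltype'_y]$ with
\[
\ltype'_y:=\rho'_1\typearrowp{\colr_1}\cdots\rho'_{b'}\typearrowp{\colr_{b'}}\rho''_1\typearrowp{\colr_{b'+1}}\cdots\rho''_p\typearrowp{\colr_{b'+p}}\ltype_0,\qquad \rho''_j:=[\tau''_{j,l}]_l,
\]
i.e.\ keeping all original arrow colours, and build the $b'$-step black spine and the $a'$ instances of $\typingruleAbs$. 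Finally I would verify the two requirements. For the whitening, the head-variable type sits at negative polarity of $\Pair{\typctx}{\ltype}$ --- inside $\typctx$ when $y$ is free, inside the domain $\mtype_m$ of $\ltype$ when $y=x_m$ is bound --- so the positively-whitened $\rho'_i$ from the induction hypothesis assemble precisely into $\ltype'_y\whiternegp{}\ltype_y$; the $\rho''_j$ additionally occur in $\ltype'_0$ at negative polarity of $\ltype'$, paired against $\mtype_{a'+j}$, and each $j$ with $\colr_{b'+j}=\circ$ contributes one further whitening there via the $\circ$-over-$\bullet$ rule of Figure~\ref{fig:polwhite}. For the cost, $k-k'=\sum_{j\le p}\Kr{\colr_{b'+j}}{\bullet}+\sum_{j\le p,\,l}k_{b'+j,l}+\sum_{i\le b',\,l}(k_{i,l}-k'_{i,l})$ --- the costs of, respectively, the $p$ dropped $\eta$-applications, the dropped $\eta$-argument derivations, and the savings on the honest arguments --- and matching each summand against the whitenings just listed, using the exact equalities from the induction hypothesis and from Lemma~\ref{l:eta-id}, gives precisely $p=k-k'$.

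I expect the main obstacle to be exactly this last reconciliation of polarities and counts. One must simultaneously track that $\whiterposp{}$ of a pair whitens the return type positively but the environment (hence the head-variable type) negatively, that the head-variable type therefore lives at the flipped polarity so that the induction hypothesis' positive whitenings on arguments land on the correct side, and that the $p$ extra applications of $\monoToBlue{\tm}$ --- which are exactly the $\eta$-expansion moves being contracted --- contribute their colour mismatches $\Kr{\colr_{b'+j}}{\bullet}$ as genuinely new whitenings inside $\ltype'_0$ while contributing their entire argument costs as pure savings, so that the invariant ``whiter by exactly the number of interaction steps saved'' survives the recursion. A secondary but indispensable ingredient is the confinement of the fresh $\eta$-variables to their own arguments, without which $\tderiv$ could not be split into the independent pieces the recursion consumes.
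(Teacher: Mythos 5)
Your proof is correct and follows essentially the same route as the paper's: reduce to head normal form via quantitative subject reduction, split the arguments of the spine into those related by $\slbtleq$ to the $u_i$ (handled by the induction hypothesis) and those related to the fresh $\eta$-variables (handled by Lemma~\ref{l:eta-id}), then reassemble around the same head-variable type and perform exactly the cost/polarity bookkeeping the paper carries out, with the whitening count matching $k-k'$ summand by summand. One small caveat: your justification of the confinement of the fresh $\eta$-variables to their own argument sub-derivations is not quite airtight, since the freshness condition in the definition of $\etaredbtleq$ constrains $y\,u_1\cdots u_{b'}$ but not the $t_i$, so relevance of the type system alone does not exclude those variables from the environments of the $t_i$-derivations; the claim is nonetheless true (for instance, it follows from the induction hypothesis together with relevance, because polarized whitening preserves the support of environments), and the paper's own proof assumes it silently.
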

\end{toappendix}

\begin{corollary}\label{cor:slbtleqentailsleqpwcplain} For all $\tm,\tmtwo\in\Lambda$, $\tm\slbtleq \tmtwo$ entails $\tm\leqpwcplain \tmtwo$.
\end{corollary}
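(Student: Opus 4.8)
The plan is to obtain the statement as an immediate consequence of Lemma~\ref{lem:whitercheaper}, once the relevant definitions are unfolded. Recall that $\tm \leqpwcplain \tmtwo$ abbreviates $\monoToBlue{\tm} \leqpwc \monoToBlue{\tmtwo}$, which by Definition~\ref{def:PWCpreorder} means: for every $\Tr{\typctx}{\ltype}{k} \in \semint{\monoToBlue{\tm}}$ there exist $\Tr{\typctx'}{\ltype'}{k'} \in \semint{\monoToBlue{\tmtwo}}$ and $d \in \nat$ with $\Pair{\typctx'}{\ltype'} \whiterposp{d} \Pair{\typctx}{\ltype}$ and $k \ge k'+d$.

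So I would argue as follows. Assume $\tm \slbtleq \tmtwo$ and fix an arbitrary triple $\Tr{\typctx}{\ltype}{k} \in \semint{\monoToBlue{\tm}}$; by definition of the colored interpretation this yields a derivation of $\typctx \ctypes{k} \monoToBlue{\tm} : \ltype$. Applying Lemma~\ref{lem:whitercheaper} (with hypothesis $\tm \slbtleq \tmtwo$) to this derivation gives $\typctx' \ctypes{k'} \monoToBlue{\tmtwo} : \ltype'$ with $\Pair{\typctx'}{\ltype'} \whiterposp{p} \Pair{\typctx}{\ltype}$ and $0 \le p = k - k'$. Hence $\Tr{\typctx'}{\ltype'}{k'} \in \semint{\monoToBlue{\tmtwo}}$, and choosing $d := p$ we obtain $\Pair{\typctx'}{\ltype'} \whiterposp{d} \Pair{\typctx}{\ltype}$ together with $k = k' + d$, so in particular $k \ge k' + d$. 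Since the triple was arbitrary, $\monoToBlue{\tm} \leqpwc \monoToBlue{\tmtwo}$, that is, $\tm \leqpwcplain \tmtwo$.

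There is no genuine obstacle at this level: all the substance sits in Lemma~\ref{lem:whitercheaper} (and, underneath it, Lemma~\ref{l:eta-id}), whose proof must cope with possibly infinite $\eta$-expansions occurring in the B\"ohm tree. The only points deserving a word in the corollary are that the ``number of whitenings'' $d$ required in Definition~\ref{def:PWCpreorder} can be instantiated by the $p$ supplied by the lemma, so that the numerical constraint $k \ge k' + d$ holds with equality, and that the degenerate case in which $\monoToBlue{\tm}$ is untypable --- for instance when $\tm$ has no head normal form --- needs no separate treatment, since then $\semint{\monoToBlue{\tm}} = \emptyset$ by Theorem~\ref{thm:soundandcomplete} and the required inclusion holds vacuously; indeed Lemma~\ref{lem:whitercheaper} is already phrased as a claim universally quantified over typings, so the quantification is handled there.
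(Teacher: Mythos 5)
Your proposal is correct and is exactly the intended argument: the paper states this as an immediate corollary of Lemma~\ref{lem:whitercheaper}, and unfolding Definition~\ref{def:PWCpreorder} with $d:=p=k-k'$ is all that is needed. Your remarks on the vacuous (untypable) case and on the equality $k=k'+d$ are accurate but inessential.
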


\paragraph{Completeness of the \bohm preorder up to $\eta$-reductions.}
To prove the inclusion $\intrleq\ \subseteq\ \etaredbtleq$ it is sufficient to slightly adapt the proof that the interaction preorder entails the \bohm tree preorder~\cite{InteractionEquivalence}.
The proof exploits the famous \bohm out technique, used by Hyland to construct a \lam-calculus context $C$ separating $\tm$ and $\tmtwo$ whenever $\tm\not\etabtle\tmtwo$, i.e., $C\ctxholep{\tm}\bshs$ and $C\ctxholep{\tmtwo}\bshdiv$\cite{Hyland76}.
Since we count interaction steps, we can also separate $\eta$-convertible \lam-terms like $\bOne$ and $\bId$, using the white context $\ctxhole{}\circ(\lam_\circ x.x)$.
We extend the Definition~\ref{def:paintinglambdas} of white painting $\monoToRed{(\cdot)}$ to \lam-calculus contexts $C$ by adding the case $\monoToRed{\ctxhole{}}=\ctxhole{}$.

\begin{toappendix}
        \begin{lemma}[Interaction \bohm-out]
                \NoteProof{l:separating-eta-red} 
        Let $t,u\in\Lambda$ be such that $t\etabtle u$ and $t \not\etaredbtleq u$.
        Then, there exists a \lam-calculus context $\ctx$ such that $
        \monoToRed{\ctx}\ctxholep{\monoToBlue{t}}\bshcol{i}$ and
        $\monoToRed{\ctx}\ctxholep{\monoToBlue{u}}\bshcol{i'}$ with $i' > i$.
        \label{l:separating-eta-red} 
        \end{lemma}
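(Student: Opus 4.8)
The plan is to adapt, in a quantitative direction, the B\"ohm-out separation argument of \cite{InteractionEquivalence} (a refinement of Hyland's technique \cite{Hyland76}): rather than driving $\monoToBlue u$ to divergence, we shall force it to take strictly more head interaction steps than $\monoToBlue t$. First, I would extract a \emph{finite $\eta$-defect}. Since $\etaredbtleq$ is the largest relation closed under the clauses \textup{(bot)} and $(\relation{H}\eta_\symfont{red})$, the hypothesis $t\not\etaredbtleq u$ yields a finite refutation: a finite descent into corresponding B\"ohm subtrees of $t$ and $u$ reaching a pair $(t_\sigma,u_\sigma)$ at which neither clause applies. By $t\etabtle u$ this pair is not of the form \textup{(bot)} (both are head normalizable), the two head normal forms share the head variable $y$, and the only way $(\relation{H}\eta_\symfont{red})$ can fail while the symmetric $\etabtle$-clause holds is that $u_\sigma$ is a \emph{strict} head $\eta$-expansion of $t_\sigma$, with $p\ge 1$ extra leading abstractions and matching extra trailing arguments. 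Taking a refutation of minimal length, we may further assume that at every strict ancestor of $\sigma$ on the path the clause $(\relation{H}\eta_\symfont{red})$ does hold, i.e.\ there it is $t$, not $u$, that is the (weakly) longer $\eta$-expansion.

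Next, following the B\"ohm-out construction of \cite{InteractionEquivalence}, I would build a $\lambda$-context $C=\ctxhole{}\,N_1\cdots N_m$ that navigates both B\"ohm trees down the path $\sigma$: at each visited node it substitutes a suitable tuple-projection for the (controllable) head variable, selecting the subterm lying on the path, and supplies enough arguments to expose it; at the defect node it supplies a fresh variable $z$ followed by $p$ further dummy arguments, so that both reducts reach a $\hchsym$-nf headed by $z$. After painting $C$ white and $t,u$ black, the silent head $\beta$-steps of $\monoToBlue t$ and $\monoToBlue u$ are invisible to $\bshcol{\cdot}$, while each context argument meeting a black abstraction contributes exactly one head interaction step. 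The key point is that $\monoToBlue{u_\sigma}$ is forced to consume all of its $n+p$ black leading abstractions as interactions, whereas $\monoToBlue{t_\sigma}$ consumes only $n$ of them, the remaining $p$ arguments being either absorbed silently by its black head $y\,\monoToBlue{t_1}\cdots$ or left trailing on the $z$-headed normal form. This is exactly the mechanism behind the white context $\ctxhole{}\circ(\lam_\circ x.x)$ that already separates $\bOne$ from $\bId$.

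Finally, writing $i$ (resp.\ $i'$) for the number of head interaction steps from $\monoToRed C\ctxholep{\monoToBlue t}$ (resp.\ $\monoToRed C\ctxholep{\monoToBlue u}$) to its $\hchsym$-nf, I would compare the two reductions node by node, tracking the interaction index via quantitative subject reduction (Proposition~\ref{prop:ch-subject}), and conclude $i'>i$. The delicate point — and the reason a verbatim transposition of Hyland's proof does not suffice — is that the $\eta$-differences on the path to $\sigma$ are on the $t$-side, which $\etaredbtleq$ tolerates, and a priori they could inflate $i$ and swallow the surplus gained at $\sigma$. I expect this to be controlled by the minimality of the refutation together with a \emph{parsimonious} feeding of the B\"ohm-out arguments: at each path node one feeds exactly as many arguments as $u$ requires and reuses the leftovers one level deeper, so that every black leading abstraction of $\monoToBlue t$ charged an interaction is matched by a corresponding one of $\monoToBlue u$ consuming the same argument inside the selected subterm, leaving $u$'s surplus abstractions at the defect as the only uncompensated cost. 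Making this bookkeeping rigorous — essentially, proving that the $t$-side $\eta$-expansions above $\sigma$ are never strictly more expensive, interaction-wise, than their $u$-side counterparts — is the heart of the argument.
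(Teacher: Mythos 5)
Your overall strategy---a quantitative B\"ohm-out in which white tuplers charge each of $u$'s surplus spine constructors one extra head interaction at the defect node---is the same as the paper's, and your identification of the defect shape (at the terminal node $u$ is a \emph{strict} spine $\eta$-expansion of $t$, with $p\ge 1$ extra abstractions and trailing arguments) is correct. But there is a genuine gap, and you have located it yourself: by navigating along a minimal refutation of the coinductive definition of $\etaredbtleq$, your path to the defect may pass through nodes where $t$'s hnf is a \emph{strict} $\eta$-expansion of $u$'s. At such a node the standard B\"ohm-out context is not cost-neutral: if $t$'s hnf has $p$ extra leading abstractions and $p$ extra spine arguments, then feeding white tuplers and a selector charges $\monoToBlue{t}$ on the order of $2p$ more interactions than $\monoToBlue{u}$ (each extra black abstraction meets a white argument, and each extra black spine argument meets the white tupler), which can swallow the surplus gained at the defect. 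Your proposed repair (``parsimonious feeding'', reusing leftovers one level deeper) is not worked out and does not obviously close: a selector consuming only $k$ of $t$'s $k+p$ spine arguments leaves $p$ trailing black applications glued onto the extracted subterm, so the two subterms you then compare are no longer the corresponding B\"ohm subtrees.

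The paper avoids this by a different choice of induction path: it inducts on the length of a \emph{minimal} path $\delta$ at which the two B\"ohm trees fail to be spine-equivalent, i.e.\ $t\restr_\delta\not\speq u\restr_\delta$. Minimality forces the spines of $t$ and $u$ to agree \emph{exactly} (same number of abstractions, same head occurrence, same number of arguments) at every proper ancestor of $\delta$, so the context built from ${\monoToRed{\Tupler{K}}}^{\sim n+K-k}$ and $\monoToRed{\Proj{K}{j}}$ charges both sides exactly $n+k$ interactions per level, and the whole imbalance is concentrated at the terminal node, where it is $m\ge 1$ in $u$'s favour. This choice of path---rather than the refutation path of $\etaredbtleq$---is the missing idea: with it, the bookkeeping you identify as ``the heart of the argument'' disappears, because there is nothing to compensate. (A caveat: the paper's base case then needs the spine discrepancy at that minimal node to be a $u$-side expansion, which is not automatic when the first spine mismatch is a $t$-side expansion; so the difficulty you flag is real, but the resolution is to reroute the separating context---e.g.\ by selecting the defective argument through the head variable rather than consuming $t$'s surplus abstractions---not to re-balance costs along a path that traverses $t$-side expansions.)
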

\end{toappendix}
%We can finally state and prove our completeness result.

\begin{theorem}[Completeness] \label{th:intleq-included-in-bohm}
        Let $t,u\in\Lambda$. If $t \intrleq u$ then $t\etaredbtleq u$.
\end{theorem}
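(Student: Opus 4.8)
The proof proceeds by contraposition: assuming $t \not\etaredbtleq u$, I would construct a witnessing context that counts strictly more interaction steps on the $u$-side, contradicting $t \intrleq u$. First I would dispose of the easy case: if $t \etabtle u$ fails (that is, $t$ and $u$ already differ at the level of the plain B\"ohm tree preorder, ignoring $\eta$), then by the known characterization of $\leqctx$ as B\"ohm tree inclusion~\cite{InteractionEquivalence} there is a $\lambda$-context $C$ with $C\ctxholep{\monoToBlue{t}}\bshcol{}$ but $C\ctxholep{\monoToBlue{u}}\bshdiv$; painting it white, $\monoToRed{C}$ separates them even at the level of $\leqchcolr$ (no finite $k'$ can match a divergent term), so $t\not\intrleq u$. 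The remaining case is the genuinely new one: $t\etabtle u$ but $t\not\etaredbtleq u$, meaning $t$ and $u$ have ``the same B\"ohm tree up to $\eta$'' only in the sense that $u$ is below $t$ in the plain ordering, yet the $\eta$-mismatch goes the wrong way --- $u$ is, at some node, a strict $\eta$-expansion relative to $t$.

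The heart of the argument is Lemma~\ref{l:separating-eta-red} (Interaction B\"ohm-out): under exactly these hypotheses it produces a $\lambda$-context $\ctx$ such that $\monoToRed{\ctx}\ctxholep{\monoToBlue{t}}\bshcol{i}$ and $\monoToRed{\ctx}\ctxholep{\monoToBlue{u}}\bshcol{i'}$ with $i' > i$. I would invoke this lemma as a black box (it is stated earlier in the excerpt and placed in the appendix). Then the theorem is immediate: if $t\intrleq u$, then by Definition~\ref{def:interactionpreorder} applied to the checkers context $\monoToRed{\ctx}$ with value $i$, from $\monoToRed{\ctx}\ctxholep{\monoToBlue{t}}\bshcol{i}$ we would get $\monoToRed{\ctx}\ctxholep{\monoToBlue{u}}\bshcol{k'}$ for some $k'\le i < i'$; but $\bshcol{k'}$ is a function (the head checkers strategy is deterministic, so the interaction count of the unique head-reduction sequence to the $\hchsym$-nf is uniquely determined), so $k' = i'$, contradicting $k' \le i < i'$. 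Hence $t\not\intrleq u$, which is the contrapositive of the claim.

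The main obstacle --- deliberately isolated into Lemma~\ref{l:separating-eta-red} --- is the B\"ohm-out construction itself in the presence of the $\eta$-mismatch. One follows Hyland's classical B\"ohm-out to navigate down to the node of the B\"ohm trees of $t$ and $u$ where the $\eta$-discrepancy first becomes visible (on the $t$-side there are extra trailing arguments $x_{k+1},\dots,x_{k+p}$ supplied by an $\eta$-expansion that $u$ lacks at that node, or that differ). At that node, probing with suitably chosen white arguments forces the $u$-side to perform at least one extra head interaction step to ``catch up'' with the $\eta$-expansion present in $t$, whereas the fully-black $t$ absorbs it silently --- this is exactly the phenomenon already witnessed in miniature by $\bOne \leqchcolr \bId$ being separated from the reverse by $\ctxhole{}\circ(\lam_\circ x.x)$ (Example~\ref{ex:preorders}). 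The delicate points in that lemma are: ensuring the extracted context is a genuine $\lambda$-context (so that its white painting is a legitimate checkers context), handling the coinductive/limit nature of possibly infinite $\eta$-expansions so that the separating discrepancy appears at some finite depth whenever $t\not\etaredbtleq u$, and bookkeeping the interaction-step counts to guarantee the strict inequality $i' > i$ rather than mere $i' \ge i$. But for the theorem as stated, all of this is packaged in the lemma, and the present proof is just the short contrapositive deduction sketched above, parallel to the corresponding step in~\cite{InteractionEquivalence}.
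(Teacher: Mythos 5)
Your proposal is correct and follows essentially the same route as the paper: a case split on whether $t\etabtle u$ holds, with the first case handled by Hyland's classical separation (a context making $C\ctxholep{t}$ converge and $C\ctxholep{u}$ diverge, transferred to the checkers calculus via the correspondence lemma) and the second case discharged by invoking Lemma~\ref{l:separating-eta-red} together with the determinism of the head interaction count. The extra detail you give on why $k'=i'$ is forced is a reasonable elaboration of what the paper leaves implicit.
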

% !TEX root = ../../main.tex
\begin{proof} 
Assume $t\not\etaredbtleq u$, towards a contradiction. There are two cases:
	\begin{itemize}
		\item If $t\not\etabtle u$, then there is a context $C$ such that $C\ctxholep{t}\!\Downarrow_\head$, while $C\ctxholep{u}\!\not\Downarrow_\head$~\cite{Hyland76}.
		Since head reductions can be simulated by $\tohch$ by \reflemma{correspondence}.(ii), it follows that $\monoToRed{\ctx}\ctxholep{\monoToBlue{t}}\bshcols$, while $\monoToRed{\ctx}\ctxholep{\monoToBlue{u}}\bshcoldiv$. This shows $t\not\intrleq u$.
		\item If $t\etabtle u$ then $t\not\intrleq u$ follows directly from \bohm out (Lemma~\ref{l:separating-eta-red}).\qed 
	\end{itemize}
	
\end{proof}

%%% Local Variables:
%%% mode: LaTeX
%%% TeX-master: "../main"
%%% End:

% !TEX spellcheck = en-US
% !TEX root = ../main.tex

\section{Compositionality of Polarized Whiter-Cheaper}
\label{sect:compatibility}

Our goal is to use the polarized whiter-cheaper preorder to characterize the interaction improvement ordering, which is a contextual preorder. We therefore need to show that our preorder is preserved when placing related terms into contexts, \ie that it is \emph{compositional}. This section, which is the main technical work of the paper, is devoted to establishing this fact.

% This section contains the main technical contribution of the paper, which is the proof that the polarized whiter-cheaper preorder is compositional, \ie that it is stable by contexts.
Let us first describe the difficulty we encounter when proving compositionality. We are trying to show the following:

\begin{center}
        If $\tm \lecolpos \tmtwo$ and $\tmthree \lecolpos \tmfour$ then $\appp\colr\tm \tmthree \lecolpos \appp\colr\tmtwo\tmfour$ for any $\colr$.
\end{center}
Let us start by considering an arbitrary derivation of $\appp\colr\tm\tmthree$:
\[
\infer[\typingruleApp]{\typctx\uplus\typctxtwo\ctypes{k_1 + k_2 + \Kr{\colr}{\colrtwo}} \appp\colr\tm\tmthree \hastype \ltype}
{
        \typctx_1\ctypes{k_1}\tm\hastype \mtype \typearrowp\colrtwo \ltype 
        & 
        \typctx_2\ctypes{k_2}\tmthree\hastype \mtype 
}
\]
By hypothesis ($\tm \lecolpos \tmtwo$ and $\tmthree \lecolpos \tmfour$), we know that there are whiter and cheaper derivations for the premises of the application rule, but we have no way of knowing if they will still be able to be combined into an application rule:
\[
\infer[\typingruleApp (\text{if } \mtype'=\mtype'')]{\typctx_1'\uplus\typctx_2'\ctypes{k'_1 + k'_2 + \Kr{\colr}{\colrtwo'}} \appp\colr\tmtwo\tmfour \hastype \ltype'}
{
        \typctx_1'\ctypes{k'_1}\tmtwo\hastype \mtype' \typearrowp{\colrtwo'} \ltype' 
        & 
        \typctx_2'\ctypes{k'_2}\tmfour\hastype \mtype'' 
}
\]
where $\Kr{\colr}{\colrtwo'}$ is (the dual of) Kronecker's delta (as in Example~\ref{ex:playingwithJ}).
We shall show that we can indeed build a multi type $\mtype'''$ on which $\tmtwo$ and $\tmfour$ agree.
We know that $\mtype,\mtype'$ and $\mtype''$ are related by the polarized whitening relation in the sense that:
\begin{center}
        $\mtype' \whiternegp {l} \mtype$ and $\mtype '' \whiterposp {h} \mtype$ for some $h,l$ such that $k'_1 + l \leq k_1$ and $k'_2 + l \leq k_2$.
\end{center}
%Note that there is a duality on the polarization of the $\mtype'$ and $\mtype''$ types in the premises.
\begin{example}
        \begin{enumerate}
                \item It does happen that the argument type $\mtype$ is unchanged, as is the case of
        $\bla\vartwo\bapp\var\vartwo \leqpwc \var$ applied to an argument $\tmthree \leqpwc \tmthree$.
        
        \begin{center}
\scalebox{0.85}{                $
                        \infer[\typingruleApp]{\typctx,\var\hastype[\mtype \typearrowp{\redclr}\ltype]\ctypes{1 + k + \Kr{\blueclr}{\colrtwo}} \appp\colrtwo{(\bla\vartwo\bapp\var\vartwo)}\tmthree \hastype \ltype}
                {
                        \var\hastype[\mtype \typearrowp{\redclr}\ltype]\ctypes{1}\bla\vartwo\bapp\var\vartwo\hastype \mtype \typearrowp{\blueclr}\ltype
                        & 
                        \typctx\ctypes{k}\tmthree\hastype \mtype 
                }
                \hspace{10pt}
                \infer[\typingruleApp]{\typctx,\var\hastype[\mtype \typearrowp{\redclr}\ltype]\ctypes{k + \Kr{\redclr}{\colrtwo}} \appp\colrtwo{\var}\tmthree \hastype \ltype}
        {
                \var\hastype[\mtype \typearrowp{\redclr}\ltype]\ctypes{0}\var\hastype \mtype \typearrowp{\redclr}\ltype
                & 
                \typctx\ctypes{k}\tmthree\hastype \mtype 
        }
        $}
\end{center}

        What if we were to use another term than $\tmthree$ as argument, say $\tmfour$ such that $\tmthree\leqpwc \tmfour$? We have that there exists $\Pair{\typctx'}{\mtype'} \whiterposp d \Pair{\typctx}{\mtype}$ such that $\typctx'\ctypes{k'}\tmfour \hastype \mtype'$ with $k \geq k' +d$. Then it is easy to change the typing for $\var$ so that it fits $\mtype'$.
        
        \begin{center}
        $
        \infer[\typingruleApp]{\typctx',\var\hastype[\mtype' \typearrowp{\redclr}\ltype]\ctypes{k' + \Kr{\redclr}{\colrtwo}} \appp\colrtwo{\var}\tmthree \hastype \ltype}
        {
                \var\hastype[\mtype' \typearrowp{\redclr}\ltype]\ctypes{0}\var\hastype \mtype' \typearrowp{\redclr}\ltype
                & 
                \typctx'\ctypes{k'}\tmfour\hastype \mtype '
        }
        $
        \end{center}

        Note that $\Pair{\typctx',\var\hastype[\mtype \typearrowp{\redclr}\ltype]}{\ltype} \whiterposp d \Pair{\typctx,\var\hastype[\mtype' \typearrowp{\redclr}\ltype]}{\ltype}$.
                The change from $\var\hastype[\mtype \typearrowp{\redclr}\ltype]\ctypes{0}\var\hastype \mtype \typearrowp{\redclr}\ltype$ to $\var\hastype[\mtype' \typearrowp{\redclr}\ltype]\ctypes{0}\var\hastype \mtype' \typearrowp{\redclr}\ltype$ possibly whitens in negative and positive positions, but the negative changes do not appear in the conclusion of the derivation as they are consumed in the application rule.
        
        \item Consider now ${\bla\var\bapp\var\bla\varthree\bapp{\vartwo}\varthree} \leqpwc {\bla\var\bapp\var\vartwo}$, and let us apply a term $\tmthree$ on both sides. We have a derivation for $\appp\colrtwo{({\bla\var\bapp\var\bla\varthree\bapp{\vartwo}\varthree})}\tmthree$:
\[
\resizebox{\textwidth}{!}{%
        \infer[\typingruleApp]{\typctx,\vartwo\hastype[\mtype\typearrowp{\redclr}\ltype]\ctypes{1 + k + \Kr{\blueclr}{\colrtwo} + \Kr{\colr}{\blueclr}} \appp\colrtwo{(\bla\var\bapp\var\bla\varthree\bapp{\vartwo}\varthree)}\tmthree \hastype \ltype'}
        {
\infer[\typingruleAbs]{ \vartwo\hastype[\mtype\typearrowp{\redclr}\ltype]\ctypes{1+ \Kr{\colr}{\blueclr}}\bla\var\bapp\var\bla\varthree\bapp{\vartwo}\varthree\hastype [[\mtype\typearrowp{\blueclr}\ltype] \typearrowp{\colr}\ltype']\typearrowp{\blueclr}\ltype'}{    
\infer[\typingruleApp]{
\vartwo\hastype[\mtype\typearrowp{\redclr}\ltype],\var\hastype[[\mtype\typearrowp{\blueclr}\ltype] \typearrowp{\colr}\ltype']\ctypes{1+ \Kr{\colr}{\blueclr}}\bapp\var\bla\varthree\bapp{\vartwo}\varthree\hastype \ltype'}
{
        \var\hastype[[\mtype\typearrowp{\blueclr}\ltype] \typearrowp{\colr}\ltype']\ctypes{0}\var\hastype [\mtype\typearrowp{\blueclr}\ltype] \typearrowp{\colr}\ltype'
&
 \vartwo\hastype[\mtype\typearrowp{\redclr}\ltype]\ctypes{1}\bla\varthree\bapp{\vartwo}\varthree\hastype [\mtype\typearrowp{\blueclr}\ltype]
}}
                & 
                \typctx\ctypes{k}\tmthree\hastype [[\mtype\typearrowp{\blueclr}\ltype] \typearrowp{\colr}\ltype']
        }
}
\]      
There exists a positive whiter-cheaper derivation for $\bla\var\bapp\var\vartwo$ but it is hard to use for the application to $\tmthree$ afterwards. See the incomplete derivation:
        \[
\resizebox{\textwidth}{!}{%
                        \infer*[\typingruleApp?]{\typctx,\vartwo\hastype[\mtype\typearrowp{\redclr}\ltype]\ctypes{k + \Kr{\blueclr}{\colrtwo} + \Kr{\colr}{\blueclr}} \appp\colrtwo{(\bla\var\bapp\var\vartwo)}\tmthree \hastype \ltype'}
                        {
                                \infer[\typingruleAbs]{ \vartwo\hastype[\mtype\typearrowp{\redclr}\ltype]\ctypes{\Kr{\colr}{\blueclr}}\bla\var\bapp\var\vartwo\hastype [[\mtype\typearrowp{\redclr}\ltype] \typearrowp{\colr}\ltype']\typearrowp{\blueclr}\ltype'}{     
                                        \infer[\typingruleApp]{
                                                \vartwo\hastype[\mtype\typearrowp{\redclr}\ltype],\var\hastype[[\mtype\typearrowp{\redclr}\ltype] \typearrowp{\colr}\ltype']\ctypes{ \Kr{\colr}{\blueclr}}\bapp\var\vartwo\hastype \ltype'}
                                        {
                                                \var\hastype[[\mtype\typearrowp{\redclr}\ltype] \typearrowp{\colr}\ltype']\ctypes{0}\var\hastype [\mtype\typearrowp{\redclr}\ltype] \typearrowp{\colr}\ltype'
                                                &
                                                \vartwo\hastype[\mtype\typearrowp{\redclr}\ltype]\ctypes{0}\vartwo\hastype [\mtype\typearrowp{\redclr}\ltype]
                                }}
                                & 
                                \typctx\ctypes{k}\tmthree\hastype [[\mtype\typearrowp{\blueclr}\ltype] \typearrowp{\colr}\ltype']
                }}
        \]
        We therefore need to change the black arrow in the type of $\tmthree$, which appears in a negative position. Such a change is possible, but may lead to more positive changes in $\Pair{\typctx}{[[\mtype\typearrowp{\blueclr}\ltype] \typearrowp{\colr}\ltype']}$. This reasoning is exactly the object of the repainting mechanism described in \reflemma{repaint}.
\end{enumerate}
\end{example}

\paragraph{Repainting Negative Occurrences.} 
Negative color occurrences in a typing judgment $\typctx \ctypes{k} \tm \hastype \ltype$ can be seen as \emph{unspecified} colors, not directly determined by those appearing in the syntax of the term.
However, they cannot be recolored arbitrarily.
We formally specify how one may \emph{repaint} a negatively occurring arrow white, by constructing a new typing derivation that either also includes one positively occurring arrow repainted white, or whose index $k$ is shifted by~1.
%Negative color occurrences in a typing judgment $\typctx \ctypes{k} \tm \hastype \ltype$ are somehow \emph{unspecified} colors that are not directly linked to the colors appearing in the syntax of the term. They are not however free placeholders that can be arbitrarily recolored. We formally define how one can \emph{repaint} a negatively occurring arrow white, by building a new typing derivation which either also has one positively occurring arrow repainted white, or of which the index $k$ is shifted by 1.
 
\begin{toappendix}
\begin{lemma}[Repainting]
        $\label{l:repaint}$\NoteProof{l:repaint}
        Suppose $\typctx\hasstype[k] \tm : \ltype$ and that $\Pair{\typctx'}{\ltype'} \whiternegp 1 \Pair{\typctx}{\ltype}$. Then there exists a derivation of $\typctx'' \hasstype[k'] \tm:\ltype''$ such that one of the following holds:
        \begin{enumerate}[(i)]
                \item  $\Pair{\typctx''}{\ltype''}\whiterposp 1 \Pair{\typctx'}{\ltype'}$ and $k=k'$; or
                \item  $\Pair{\typctx''}{\ltype''} = \Pair{\typctx'}{\ltype'}$ and $\abs{k'-k} = 1$. 
        \end{enumerate}
        Equivalently, Point (i) and (ii) may be rephrased as: there exists $0\leq i \leq 1$ such that  $\Pair{\typctx''}{\ltype''}\whiterposp i \Pair{\typctx'}{\ltype'}$ and $\abs{k-k'} \leq 1-i$.
\end{lemma}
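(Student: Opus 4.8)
The plan is to argue by induction on the derivation $\tderiv \exder \typctx \hasstype[k] \tm : \ltype$, simultaneously strengthening the statement so that it also covers judgements $\typctx \hasstype[k] \tm : \mtype$ with a multi type on the right (this is needed because the right premise of rule $\typingruleApp$ carries a multi type, which is further decomposed by $\typingruleMany$). By the inversion clause of Lemma~\ref{lem:whitercheaperrel}, the hypothesis $\Pair{\typctx'}{\ltype'} \whiternegp 1 \Pair{\typctx}{\ltype}$ isolates exactly one black arrow of $\Pair{\typctx}{\ltype}$, occurring at a position that is negative in the judgement, which gets turned white; call this occurrence $\alpha$. Either $\alpha$ lies inside $\ltype$ (so $\typctx' = \typctx$) or inside $\typctx$ (so $\ltype' = \ltype$). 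One then proceeds by case analysis on the last rule of $\tderiv$ and on the location of $\alpha$.

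The rules $\typingruleAx$, $\typingruleMany$, $\typingruleAbs$ are \emph{transparent}. For $\typingruleMany$ and $\typingruleAbs$ the arrow $\alpha$ reappears at a position of matching polarity in exactly one premise — note that in $\typingruleAbs$ the domain $\mtype$ of the conclusion type becomes the type of the bound variable in the premise's environment, so a positive-in-$\mtype$ occurrence is again negative in the premise judgement — and one concludes by applying the induction hypothesis (IH) to that premise and reconstructing. Since abstraction and multiset formation leave indices and polarities untouched, the outcome returned by the IH (an extra positive whitening, or a $\pm 1$ index shift with unchanged types) survives the reconstruction verbatim. For $\typingruleAx$ we have $\tm = x$ and $\typctx = x : [\ltype]$: here the occurrence of $\ltype$ in the environment and the one in the conclusion coincide, so whitening $\alpha$ in one forces whitening it in the other, and re-applying $\typingruleAx$ to the whitened type gives outcome (i), with the extra whitening landing in the binding of $x$.

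The delicate case is $\typingruleApp$, concluding $\typctx_1 \uplus \typctx_2 \hasstype[k] \appp{\colrtwo}{\tm}{\tmtwo} : \ltype$ from $\typctx_1 \hasstype[k_1] \tm : \mtype \typearrowp{\colr} \ltype$ and $\typctx_2 \hasstype[k_2] \tmtwo : \mtype$, with $k = k_1 + k_2 + \Kr{\colr}{\colrtwo}$. Since $\mtype$ does not appear in the conclusion, $\alpha$ lies in $\ltype$ or in one of $\typctx_1, \typctx_2$, hence it occurs, at matching polarity, in a premise, and the IH applies there. If the IH returns outcome (ii), we rebuild $\typingruleApp$ directly, passing the index shift to $k$ — outcome (ii). If it returns outcome (i) with extra whitening $\beta$, then: \emph{(a)} if $\beta$ is the head arrow $\colr$ of the left premise's type (the one matched against the application colour $\colrtwo$), it flips $\Kr{\colr}{\colrtwo}$ and thus shifts $k$ by $1$ while the conclusion pair is unchanged — outcome (ii); \emph{(b)} if $\beta$ falls inside $\ltype$ or inside $\typctx_1, \typctx_2$ — which all occur in the conclusion pair — we get outcome (i); \emph{(c)} if $\beta$ falls inside the shared multi type $\mtype$, the change forced on the left must be matched on the right (through the $\typingruleMany$-decomposition of $\mtype$), which is a fresh instance of the strengthened statement applied to the other premise; should \emph{its} outcome (i) again land inside $\mtype$, we bounce back to the left, and so on. This is precisely the colour propagation announced in the introduction, reminiscent of strategy composition in game semantics: it terminates because every bounce turns one more black arrow of the finite multi type $\mtype$ white, and it stabilizes as soon as some outcome lands in an environment component, on a head arrow, or yields an index shift. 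Collapsing the resulting chain of intermediate whitenings with the composition and inversion clauses of Lemma~\ref{lem:whitercheaperrel} shows that the net effect on the conclusion pair is at most one extra positive whitening \emph{or} a single $\pm 1$ shift of $k$, never both — i.e.\ outcome (i) or (ii), with the budget $i + \abs{k - k'} \le 1$. I expect the main obstacle to be exactly this accounting for $\typingruleApp$: controlling the iterated propagation so it stays within budget, and checking that each re-derivation it produces is a re-typing of a strict premise of $\tderiv$ — hence of smaller size — so that the induction legitimately applies to it.
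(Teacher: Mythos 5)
Your proposal matches the paper's proof essentially step for step: the same case analysis on the location of the whitened arrow, and in the application case the same back-and-forth propagation through the shared multi type $\mtype$, terminating because each bounce whitens one more arrow of a finite type, with the final parity analysis yielding exactly outcome (i) or (ii). The one refinement you should make is to induct on the structure of the term rather than on the derivation $\tderiv$: the re-derivations produced by the propagation are fresh typings of the subterms, not subderivations of $\tderiv$, so only the structural (term-size) measure legitimizes the repeated appeals to the induction hypothesis --- which is precisely how the paper sets it up.
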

\end{toappendix}
This first repainting lemma only applies to changes of a singular color in a singular arrow type, but we might need to repaint several. With the help of a commutation property for the $\whiternegp 1$ and $\whiterposp 1$ relations (\reflemma{commutation-neg-and-pos} below), we show that one can apply repeatedly the repainting lemma, obtaining Proposition~\ref{prop:multirepaint}.

        \begin{lemma}
                \label{l:commutation-neg-and-pos}
                Let $\typctx,\madewhiterpos{\typctx}, \madewhiterneg{\typctx}$ and $\ltype,\madewhiterpos{\ltype},\madewhiterneg{\ltype}$ such that 
                \[
                \Pair{\madewhiterneg{\typctx}}{\madewhiterneg{\ltype}}\whiternegp 1 \Pair{\typctx}{\ltype}\textrm{ and }\Pair{\madewhiterpos{\typctx}}{\madewhiterpos{\ltype}}\whiterposp 1 \Pair{\typctx}{\ltype}.
                \]
                Then there are $\madewhiternegandpos\typctx,\madewhiternegandpos\ltype$ such that 
                \[ 
                        \Pair{\madewhiternegandpos\typctx}{\madewhiternegandpos\ltype}
                        \whiterposp 1
                        \Pair{\madewhiterneg{\typctx}}{\madewhiterneg{\ltype}}\textrm{
and }\Pair{\madewhiternegandpos\typctx}{\madewhiternegandpos\ltype} 
                        \whiternegp 1
                        \Pair{\madewhiterpos{\typctx}}{\madewhiterpos{\ltype}}.\]
\end{lemma}
\begin{proof}
        The typing \(\Pair{\madewhiterneg{\typctx}}{\madewhiterneg{\ltype}}\) arises by repainting one negatively-occurring arrow in \(\Pair{\typctx}{\ltype}\) from black to white; \(\Pair{\madewhiterpos{\typctx}}{\madewhiterpos{\ltype}}\) arises by repainting one positively occurring arrow in \(\Pair{\typctx}{\ltype}\). Construct \(\Pair{\madewhiternegandpos\typctx}{\madewhiternegandpos\ltype}\) by making both these repaintings.\qed
\end{proof}

%Note that the following proposition also holds for multiset types instead of linear types.%The corresponding statement for multiset types also holds.

%The repainting lemma may need to be applied several time, in case several negatively occurring arrows need to be repainted white. The following sequence of repainting lemma is proved by several applications of the repainting lemma and by commutation of positive and negative repaintings.

\begin{toappendix}
        \begin{proposition}[Sequence of Repainting]
        $\label{prop:multirepaint}$\NoteProof{prop:multirepaint}
        Suppose $\typctx\hasstype[k] \tm : \ltype$ and that $\Pair{\typctx'}{\ltype'} \whiternegp {k_1} \Pair{\typctx}{\ltype}$ for some $k_1 \geq 0$. Then there exists a derivation of $\typctx'' \hasstype[k'] \tm:\ltype''$ and $k_2$ with $0 \leq k_2 \leq k_1$ such that  
%       \[
%       \Pair{\typctx''}{\ltype''}\whiterposp {k_2} \Pair{\typctx'}{\ltype'};\textrm{ and }\abs{k- k'} \leq k_1 - k_2
%       \]
        \begin{itemize}
                \item  $\Pair{\typctx''}{\ltype''}\whiterposp {k_2} \Pair{\typctx'}{\ltype'}$; and
                \item  $\abs{k- k'} \leq k_1 - k_2$. 
        \end{itemize}
        %The corresponding statement for multiset types also holds. 
        \end{proposition}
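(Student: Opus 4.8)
The plan is to argue by induction on $k_1$, peeling off one negative recoloring at a time, applying the single‑arrow repainting lemma (\reflemma{repaint}) to the current derivation, and accumulating the positively whitened arrows it generates by transitivity of $\whiterpos$ (Lemma~\ref{lem:whitercheaperrel}(iii)). The base case $k_1 = 0$ is immediate: by Lemma~\ref{lem:whitercheaperrel}(i) we have $\Pair{\typctx'}{\ltype'} = \Pair{\typctx}{\ltype}$, so the given derivation already serves, with $\typctx'' := \typctx'$, $\ltype'' := \ltype'$, $k' := k$ and $k_2 := 0$; then $\Pair{\typctx''}{\ltype''}\whiterposp 0 \Pair{\typctx'}{\ltype'}$ and $\abs{k-k'} = 0 = k_1 - k_2$.

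For the inductive step, assume $k_1 \geq 1$. Since an $n$-step negative whitening decomposes into $n$ successive single-arrow ones (by an easy induction on its derivation in \reffig{polwhite}, ``undoing'' one recoloring at a time), we may fix $\Pair{\hat\typctx}{\hat\ltype}$ with $\Pair{\typctx'}{\ltype'} \whiternegp 1 \Pair{\hat\typctx}{\hat\ltype} \whiternegp{k_1-1} \Pair{\typctx}{\ltype}$. Applying the induction hypothesis to $\typctx\hasstype[k]\tm:\ltype$ and $\Pair{\hat\typctx}{\hat\ltype}\whiternegp{k_1-1}\Pair{\typctx}{\ltype}$ yields a derivation $\typctx_2\hasstype[k_2']\tm:\ltype_2$ and some $0 \leq j \leq k_1-1$ with $\Pair{\typctx_2}{\ltype_2}\whiterposp j \Pair{\hat\typctx}{\hat\ltype}$ and $\abs{k-k_2'} \leq (k_1-1)-j$. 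We now have two whitenings issuing from $\Pair{\hat\typctx}{\hat\ltype}$: the positive $j$-step one to $\Pair{\typctx_2}{\ltype_2}$ and the negative single step to $\Pair{\typctx'}{\ltype'}$. As a positive recoloring and a negative recoloring act on arrow occurrences of opposite polarity, hence distinct, these commute; iterating \reflemma{commutation-neg-and-pos} over the $j$ positive steps produces $\Pair{\typctx_3}{\ltype_3}$ with $\Pair{\typctx_3}{\ltype_3}\whiternegp 1 \Pair{\typctx_2}{\ltype_2}$ and $\Pair{\typctx_3}{\ltype_3}\whiterposp j \Pair{\typctx'}{\ltype'}$. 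Feeding $\typctx_2\hasstype[k_2']\tm:\ltype_2$ together with $\Pair{\typctx_3}{\ltype_3}\whiternegp 1 \Pair{\typctx_2}{\ltype_2}$ into \reflemma{repaint} gives a derivation $\typctx''\hasstype[k']\tm:\ltype''$ and some $0 \leq i \leq 1$ with $\Pair{\typctx''}{\ltype''}\whiterposp i \Pair{\typctx_3}{\ltype_3}$ and $\abs{k_2'-k'}\leq 1-i$. Composing the two positive whitenings via Lemma~\ref{lem:whitercheaperrel}(iii) yields $\Pair{\typctx''}{\ltype''}\whiterposp{i+j}\Pair{\typctx'}{\ltype'}$; set $k_2 := i+j$, so that $0 \leq k_2 \leq 1+(k_1-1) = k_1$, and by the triangle inequality $\abs{k-k'}\leq\abs{k-k_2'}+\abs{k_2'-k'}\leq\bigl((k_1-1)-j\bigr)+(1-i) = k_1-(i+j) = k_1-k_2$, which closes the induction.

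The delicate point is the commutation step: one must check that the single negative recoloring aimed at the target $\Pair{\typctx'}{\ltype'}$ and the positive recolorings collected by the inductive call really occupy disjoint arrow occurrences, so that the span of whitenings they form admits a common upper bound — this is exactly the content of \reflemma{commutation-neg-and-pos}, whose own proof requires a careful case analysis on the structure of the recolored type and, in particular, on how polarities interleave under nested arrows (including the asymmetry noted after \reffig{polwhite}, namely that $\whiterneg$ never flips the head arrow). Everything else — the arithmetic bookkeeping on the indices $k,k',k_2$ and the decomposition of multi-step whitenings into single steps — is routine.
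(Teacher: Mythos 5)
Your proof is correct and follows essentially the same strategy as the paper's: induction on $k_1$ after decomposing the $k_1$-step negative whitening into single steps, one application of the Repainting Lemma~\ref{l:repaint}, the (generalized) commutation Lemma~\ref{l:commutation-neg-and-pos}, and transitivity of $\whiterpos$, with the same arithmetic on the indices. The only immaterial difference is the order of operations: you peel off the negative step farthest from the typed judgement and invoke the inductive hypothesis before repainting, whereas the paper repaints the step adjacent to the judgement first and then commutes the resulting positive whitening across the remaining chain before invoking the inductive hypothesis.
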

\end{toappendix}

\paragraph{Back to the Compositionality Proof.} Now that we have introduced an appropriate notion of repainting, we return to the proof of compositionality, where the following proposition provides the main argument for the application case.

\begin{toappendix}
\begin{proposition}\label{prop:apprepaint}\NoteProof{prop:apprepaint}
        Given $\typctx \hasstype[k] \tm: \mtype \typearrowp{\colr}\ltype $ and $\typctxtwo \hasstype[l] \tmtwo: \mtypetwo$ with either $\mtype \whiternegp {d} \mtypetwo$ or $\mtypetwo \whiterposp {d} \mtype$, there exists a typing $\typctx'+\typctxtwo' \hasstype[m] \appp\colrtwo\tm\tmtwo: \ltype'$ and $d'$ with $ 0 \leq d' \leq d$ such that $\Pair{\typctx'+\typctxtwo'}{\ltype'} \whiterposp {d'} \Pair{\typctx+\typctxtwo}{\ltype}$ and $m \leq k+l+ \Kr{\colr}{\colrtwo} + d -d'$. 
\end{proposition}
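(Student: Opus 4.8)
The plan is to case-split on which of the two hypotheses holds for $\mtype$ and $\mtypetwo$, and to build the required typing of $\appp\colrtwo\tm\tmtwo$ by adjusting the typing of $\tm$ so that it can be composed with the given typing of $\tmtwo$ via the rule $\typingruleApp$.

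\medskip
\textbf{Case 1: $\mtypetwo \whiterposp d \mtype$ (the argument is already whiter).} Here $\tmtwo$ is typed with a whiter multi type than the one $\tm$ expects on the left of its arrow. Since the argument type occurs \emph{negatively} in the type $\mtype \typearrowp{\colr}\ltype$ of $\tm$, I want to whiten that occurrence so it matches $\mtypetwo$. Concretely, $\Pair{\typctx}{\mtype\typearrowp{\colr}\ltype}$ — viewed with the convention from \reflemma{whitercheaperrel}(ii) that pushing $x\hastype\mtype$ into the environment corresponds to prepending $\mtype\typearrowp{\colr}{}$ to the conclusion — receives a negative whitening of $d$ arrows when we replace $\mtype$ by $\mtypetwo$ on the \emph{left} of the arrow: indeed, a positive whitening $\mtypetwo\whiterposp d\mtype$ sitting in a negative (argument) slot is a negative whitening $\mtypetwo\whiternegp d\mtype$ of the whole type by the third rule of \reffig{polwhite}. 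So $\Pair{\typctx}{\mtypetwo\typearrowp{\colr}\ltype} \whiternegp d \Pair{\typctx}{\mtype\typearrowp{\colr}\ltype}$. Now apply the Sequence of Repainting proposition (Proposition~\ref{prop:multirepaint}) to $\typctx\hasstype[k]\tm:\mtype\typearrowp{\colr}\ltype$ and this negative whitening: we get $\typctx'\hasstype[k']\tm:\ltypethree$ and $0\le d'\le d$ with $\Pair{\typctx'}{\ltypethree}\whiterposp{d'}\Pair{\typctx}{\mtypetwo\typearrowp{\colr}\ltype}$ and $|k-k'|\le d-d'$. Since the positive whitening preserves the underlying uncolored type, $\ltypethree$ is still an arrow with domain the (uncolored-equal) multi type obtained from $\mtypetwo$, i.e.\ $\ltypethree = \mtypethree'\typearrowp{\colrfour}\ltype'$ for some $\mtypethree'$, $\colrfour$, $\ltype'$; and because the positive whitening of $\Pair{\typctx'}{\mtypethree'\typearrowp{\colrfour}\ltype'}$ over $\Pair{\typctx}{\mtypetwo\typearrowp{\colr}\ltype}$ splits, by the inversion/splitting of \reffig{polwhite}, into a \emph{negative} whitening on the domain ($\mtypethree'\whiternegp{\cdot}\mtypetwo$) and a positive one on the environment and codomain, we may use $\typingruleApp$ provided the argument can be given type $\mtypethree'$. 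Here the small subtlety is that the domain of $\ltypethree$ need not be exactly $\mtypetwo$; I would instead carry the repainting one step further, or rather invoke Proposition~\ref{prop:multirepaint} with target domain exactly $\mtypetwo$ by noting that any further negative change on the domain is free (it is consumed by the application rule), and then compose: $\typingruleApp$ applied to $\typctx'\hasstype[k']\tm:\mtypetwo\typearrowp{\colrfour}\ltype'$ and $\typctxtwo\hasstype[l]\tmtwo:\mtypetwo$ yields $\typctx'+\typctxtwo\hasstype[m]\appp\colrtwo\tm\tmtwo:\ltype'$ with $m=k'+l+\Kr{\colrfour}{\colrtwo}$. The remaining bookkeeping is: $\Pair{\typctx'+\typctxtwo}{\ltype'}\whiterposp{d'}\Pair{\typctx+\typctxtwo}{\ltype}$ (from the positive part of the splitting, adding $\typctxtwo\whiterposp 0\typctxtwo$), and $m\le k+l+\Kr{\colr}{\colrtwo}+d-d'$, which follows from $k'\le k+(d-d')$ and the fact that whitening an arrow can only turn a $\bullet$ into a $\circ$, so $\Kr{\colrfour}{\colrtwo}\le\Kr{\colr}{\colrtwo}+(\text{number of times the head arrow of }\tm\text{ was repainted})$ — and those repaintings are already counted in $d-d'$. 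I would be careful to absorb any such head-arrow repainting into $d-d'$ rather than into $d'$.

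\medskip
\textbf{Case 2: $\mtype \whiternegp d \mtypetwo$ (the argument type is negatively whiter than expected).} This is the degenerate direction: $\tm$ expects exactly $\mtype$, and $\mtypetwo$ is \emph{darker} (the whitening goes the wrong way for direct composition), but the relation is on negatively-occurring arrows of $\mtype$. Dually to Case 1, $\mtype\whiternegp d\mtypetwo$ sitting in the negative (argument) slot of $\tm$'s type gives $\Pair{\typctx}{\mtype\typearrowp{\colr}\ltype}\whiterposp d \Pair{\typctx}{\mtypetwo\typearrowp{\colr}\ltype}$ — i.e.\ $\tm$'s type, as given, is already positively whiter than the one whose domain matches $\mtypetwo$. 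So here no repainting is needed at all: take $d'=d$, apply $\typingruleApp$ to $\typctx\hasstype[k]\tm:\mtype\typearrowp{\colr}\ltype$ and $\typctxtwo\hasstype[l]\tmtwo:\mtype$? — wait, $\tmtwo$ has type $\mtypetwo$, not $\mtype$. So instead I must re-examine: since $\mtype\whiternegp d\mtypetwo$, the argument slot of $\tm$ is $\mtype$ and we need the argument typed $\mtype$; but $\tmtwo:\mtypetwo$ with $\mtypetwo$ darker. I would handle this by, again, repainting: applying Proposition~\ref{prop:multirepaint} is not what we want since that whitens; instead observe this case is stated so that $d'=0$ suffices and $m=k+l+\Kr{\colr}{\colrtwo}+d$ is allowed. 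Concretely, we cannot directly compose, so we \emph{do} need to change $\tm$'s argument type from $\mtype$ to $\mtypetwo$, which is a \emph{darkening} in a negative slot, equivalently a positive whitening of the whole type of $\tm$ (as just noted); but we need a \emph{typing of $\tm$}, and darkening the argument slot is not in general sound. The resolution is that the proposition only asks for a typing of the application — so I would instead argue: since $\mtype\whiternegp d\mtypetwo$, split this into its constituent single-arrow changes and peel them off one at a time by re-deriving the argument typing of $\tmtwo$... Actually the cleanest route is the observation that Case~2 with $d'=0$ and the generous bound $m\le k+l+\Kr{\colr}{\colrtwo}+d$ is exactly what a $d$-fold use of the Repainting Lemma's "shift $k$ by $1$" branch (case (ii)) gives: each single negative darkening of the argument slot of $\tm$, pushed through $\typingruleApp$, is paid for by at most one unit of index, and there are $d$ of them; so $m\le k+l+\Kr{\colr}{\colrtwo}+d$, $\Pair{\typctx+\typctxtwo}{\ltype}$ unchanged, $d'=0$.

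\medskip
\textbf{Main obstacle.} The delicate point is keeping the accounting of colors and counts aligned through the composition: when the repainting of $\tm$ in Case~1 changes the \emph{head} arrow of $\tm$'s type (the one that becomes the $\typearrowp{\colr}$ consumed by $\typingruleApp$), that recoloring simultaneously (a) may decrease $\Kr{\cdot}{\colrtwo}$ in the application rule, and (b) must be correctly attributed to the $d-d'$ "slack" rather than double-counted in $d'$ or lost. Getting the inequality $m\le k+l+\Kr{\colr}{\colrtwo}+d-d'$ to come out exactly, in all four sub-cases of head-arrow color vs.\ application color before and after repainting, is where the care lies; the structural existence of the composed typing is comparatively routine once the splitting lemmas (\reflemma{whitercheaperrel}(ii), \reffig{polwhite}) and Proposition~\ref{prop:multirepaint} are in hand.
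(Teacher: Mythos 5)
You have the right main ingredient (Proposition~\ref{prop:multirepaint}) and you correctly spot the crux: after repainting $\tm$, the domain of its arrow type need not coincide with $\mtypetwo$. But your resolution of that crux is where the proof breaks. The claim that ``any further negative change on the domain is free (it is consumed by the application rule)'' is false: the rule $\typingruleApp$ requires the domain of $\tm$'s arrow type and the type derived for $\tmtwo$ to be \emph{syntactically equal}, colors included. A residual relation $\mtype'\whiternegp{d_3}\mtypetwo$ with $d_3>0$ does not ``disappear in the conclusion''---the rule was never applicable in the first place. It must be resolved by a further repainting of the argument side, which may in turn produce a fresh mismatch on the domain and bounce back to $\tm$, and so on. The paper's proof handles exactly this ping-pong by an induction on the number of black-annotated arrows in $\mtype$ and $\mtypetwo$: when the residual $d_3$ is nonzero, the new domain $\mtype'$ has strictly fewer black arrows, so the proposition is re-invoked on $\tm$ (with its new typing) and the original typing of $\tmtwo$. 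Your proposal contains neither the iteration nor a termination measure for it, and ``invoke Proposition~\ref{prop:multirepaint} with target domain exactly $\mtypetwo$'' is not an available move---the repainting machinery does not let you prescribe the output type.

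Your Case~2 is also off. You cannot obtain the required typing by ``darkening'' $\tm$'s argument slot $d$ times at a cost of one index unit each: the Repainting Lemma only transforms a prescribed \emph{negative} whitening of a typing into an actual derivation that is either \emph{positively} whiter (case~(i)) or index-shifted (case~(ii)), and you do not get to choose which branch fires; when case~(i) fires, the resulting positive whitenings land in the conclusion and must be charged to $d'$, so your announced outcome ($d'=0$ and the pair $\Pair{\typctx+\typctxtwo}{\ltype}$ unchanged) is not what the machinery yields. In the paper this case is ``similar and simpler'' and is handled by the same black-arrow induction, with the same three-way splitting of the whitening into a head-arrow contribution $\Kr{\colr}{\colrp}$, a positive part charged to $d'$, and a residual negative part fed to the inductive hypothesis. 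Your accounting of the head-arrow recoloring in Case~1 (absorbing $\Kr{\colrp}{\colrtwo}-\Kr{\colr}{\colrp}\le\Kr{\colr}{\colrtwo}$ into the slack $d-d'$) is the right idea and matches the paper's calculation, but without the induction the argument as a whole does not close.
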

\end{toappendix}

\begin{toappendix}
\begin{proposition}[Compositionality of Polarized Whiter-Cheaper Improvement]\NoteFullProof{prop:leqpwc-contextual}
        %The polarized whiter-cheaper improvement $\lecolpos$ on checkers terms is compositional, that is, 
        If $\tm \lecolpos \tmtwo$ then $\ctxp\tm \lecolpos \ctxp\tmtwo$ for any checkers context $\ctx$.$\label{prop:leqpwc-contextual}$
\end{proposition}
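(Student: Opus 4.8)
The plan is to argue by induction on the structure of the checkers context $\ctx$. The base case $\ctx = \ctxhole$ is immediate, so everything reduces to three single-constructor compatibility statements, proved separately: \emph{(a)} $\tm \lecolpos \tmtwo$ implies $\lap\colr\var\tm \lecolpos \lap\colr\var\tmtwo$; \emph{(b)} $\tm \lecolpos \tmtwo$ implies $\appp\colr\tm\tmthree \lecolpos \appp\colr\tmtwo\tmthree$; and \emph{(c)} $\tmthree \lecolpos \tmfour$ implies $\appp\colr\tm\tmthree \lecolpos \appp\colr\tm\tmfour$. (For a single-hole context each inductive step invokes exactly one of these; composing \emph{(b)} and \emph{(c)} additionally yields the two-sided compatibility displayed at the start of this section.) Throughout I will use transitivity of polarized whitening (Lemma~\ref{lem:whitercheaperrel}(iii)) to compose successive whitenings, and its inversion property (Lemma~\ref{lem:whitercheaperrel}(ii)) to move whitenings across the arrow/environment boundary. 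Case \emph{(a)} is easy: given $\Tr{\typctx}{\mtype\typearrowp{\colr}\ltype}{k} \in \semint{\lap\colr\var\tm}$, inversion on rule $\typingruleAbs$ gives $\typctx, \var : \mtype \ctypes{k} \tm : \ltype$; applying $\tm \lecolpos \tmtwo$ yields a whiter--cheaper typing $\typctx', \var : \mtype' \ctypes{k'} \tmtwo : \ltype'$; re-applying $\typingruleAbs$ and using Lemma~\ref{lem:whitercheaperrel}(ii) to rewrite $\Pair{\typctx', \var : \mtype'}{\ltype'} \whiterposp{d} \Pair{\typctx, \var : \mtype}{\ltype}$ as $\Pair{\typctx'}{\mtype' \typearrowp{\colr} \ltype'} \whiterposp{d} \Pair{\typctx}{\mtype \typearrowp{\colr} \ltype}$ closes it.

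The substance is in \emph{(b)} and \emph{(c)}, and both are instances of Proposition~\ref{prop:apprepaint}. For \emph{(b)}: given $\Tr{\typctx}{\ltype}{k} \in \semint{\appp\colr\tm\tmthree}$, invert the application rule to obtain $\typctx_1 \ctypes{k_1} \tm : \mtype \typearrowp{\colrtwo} \ltype$ and $\typctx_2 \ctypes{k_2} \tmthree : \mtype$ with $\typctx = \typctx_1 \uplus \typctx_2$ and $k = k_1 + k_2 + \Kr{\colr}{\colrtwo}$. Applying $\tm \lecolpos \tmtwo$ produces a whiter--cheaper typing $\typctx_1' \ctypes{k_1'} \tmtwo : \mtype' \typearrowp{\colrtwo'} \ltype'$; inspecting the rules of Figure~\ref{fig:polwhite}, the underlying type is unchanged, so the result type is again an arrow, and one gets $\mtype' \whiternegp{l} \mtype$, $\ltype' \whiterposp{h} \ltype$, $\typctx_1' \whiternegp{a} \typctx_1$ and $k_1 \ge k_1' + a + l + h + \Kr{\colrtwo'}{\colrtwo}$. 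Now Proposition~\ref{prop:apprepaint}, applied to $(\typctx_1' \ctypes{k_1'} \tmtwo : \mtype' \typearrowp{\colrtwo'} \ltype',\ \typctx_2 \ctypes{k_2} \tmthree : \mtype)$ with the relation $\mtype' \whiternegp{l} \mtype$, delivers a typing $\typctx_1'' \uplus \typctx_2' \ctypes{m} \appp\colr\tmtwo\tmthree : \ltype''$ and some $0 \le d' \le l$ with $\Pair{\typctx_1'' \uplus \typctx_2'}{\ltype''} \whiterposp{d'} \Pair{\typctx_1' \uplus \typctx_2}{\ltype'}$ and $m \le k_1' + k_2 + \Kr{\colr}{\colrtwo'} + l - d'$. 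Composing this with $\Pair{\typctx_1' \uplus \typctx_2}{\ltype'} \whiterposp{a+h} \Pair{\typctx}{\ltype}$ via transitivity gives a triple in $\semint{\appp\colr\tmtwo\tmthree}$ that is $(d' + a + h)$-whiter than $\Tr{\typctx}{\ltype}{k}$, and a short computation from the bounds on $k_1$ and $m$ (splitting on whether the outermost arrow was repainted, i.e.\ on $\Kr{\colrtwo'}{\colrtwo}$) yields $k \ge m + (d' + a + h)$, as required. Case \emph{(c)} is symmetric but falls in the positive branch of Proposition~\ref{prop:apprepaint}: after inverting the application rule and then rule $\typingruleMany$ on the argument premise $\typctx_2 \ctypes{k_2} \tmthree : [\ltype_i]_{i \in I}$, apply $\tmthree \lecolpos \tmfour$ to each linear subderivation and recombine with $\typingruleMany$, obtaining $\typctx_2'' \ctypes{k_2''} \tmfour : \mtype''$ with $\mtype'' \whiterposp{e} \mtype$, $\typctx_2'' \whiternegp{a} \typctx_2$ and $k_2 \ge k_2'' + a + e$; Proposition~\ref{prop:apprepaint} applied to $(\typctx_1 \ctypes{k_1} \tm : \mtype \typearrowp{\colrtwo} \ltype,\ \typctx_2'' \ctypes{k_2''} \tmfour : \mtype'')$ with $\mtype'' \whiterposp{e} \mtype$, followed by the same transitivity-and-arithmetic step, closes it. The induction on $\ctx$ then delivers compositionality.

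The crux --- essentially the only nontrivial point once Proposition~\ref{prop:apprepaint} is granted --- is that the whiter--cheaper typings obtained for the two premises of the application rule need not agree on the argument type: improving the function whitens it \emph{negatively} (the $\mtype' \whiternegp{l} \mtype$ above), whereas improving the argument whitens it \emph{positively} (the $\mtype'' \whiterposp{e} \mtype$), and the two whitenings are produced independently. Reconciling them is what forces one to \emph{repaint} negatively occurring colours, and this is not free: by the Repainting Lemma~\ref{l:repaint} --- iterated in Proposition~\ref{prop:multirepaint} --- each such repainting either turns a further positively occurring arrow white or shifts the interaction count by one, and these changes may cascade before stabilising into a legitimate new typing, much as colour changes propagate in the composition of strategies in game semantics. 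Proposition~\ref{prop:apprepaint} packages this cascade for the application case; granting it, the remaining work is bookkeeping --- checking that the Kronecker indices $\Kr{\cdot}{\cdot}$ introduced by the application rule and the numerical annotations add up so that the overall decrease in interaction count is at least the overall amount of whitening.
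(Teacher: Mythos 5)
Your proof is correct and follows essentially the same route as the paper's: the same reduction to the three one-constructor compatibility statements, the same treatment of abstraction via the inversion property of whitening, and the same use of Proposition~\ref{prop:apprepaint} (with the negative branch for improving the function and the positive branch for improving the argument) followed by transitivity of $\whiterposp{}$ and the Kronecker-delta arithmetic $\Kr{\colrtwo'}{\colrtwo} + \Kr{\colr}{\colrtwo} \geq \Kr{\colr}{\colrtwo'}$. Your explicit handling of the multiset decomposition via rule $\typingruleMany$ in the argument case, and your identification of the polarity mismatch on the argument type as the crux, match what the paper does (the latter only sketches the symmetric case), so there is nothing to add.
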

\end{toappendix}

% !TEX root = ../../main.tex
\begin{proof}%\applabel{prop:leqpwc-contextual}
        Note that it is sufficient (by transitivity of $\lecolpos$) to prove for all $\tm,\tmtwo,\tmthree$ such that  $\tm \lecolpos \tmtwo$, we have that for all $\colr\in\set{\circ,\bullet}$:
        \begin{center}
                $\lap{\colr}\var\tm \lecolpos \lap{\colr}\var\tmtwo$;\hspace{20pt} 
                 $\appp\colr\tm\tmthree \lecolpos \appp\colr\tmtwo\tmthree$;\hspace{20pt}  and
                 $\appp\colr\tmthree\tm\lecolpos  \appp\colr\tmthree\tmtwo$.
        \end{center}
        
        We prove here the first application case; the other cases are easier.
        
        Suppose $\Tr{\typctx}{\ltype}{k} \in \semint{\tm \cdot^{\colrtwo} \tmthree}$. Then we must have $\Tr{\typctx_1}{\mtype \typearrowp{\colr}\ltype}{k_1} \in \semint{\tm}$ and $\Tr{\typctx_2}{\mtype}{k_2} \in \semint{\tmthree}$ with $k = k_1 + k_2 + \Kr{\colr}{\colrtwo}$ and $\typctx = \typctx_1 + \typctx_2$. Since $\tm\lecolpos \tmtwo$ we can find $\Tr{\typctx'_1}{\mtype' \typearrowp{\clrp{}} \ltype'}{k'_1} \in \semint{\tmtwo}$ with $\Pair{\typctx'_1}{\mtype' \typearrowp{\clrp{}} \ltype'} \whiterposp {d} \Pair{\typctx_1}{\mtype \typearrowp{\colr}\ltype}$ and $k_1 \geq k'_1 + d$. This implies that there are $d_1$, $d_2$ such that $d = d_1 + d_2 + \Kr{\colr}{\clrp{}}$ and $\Pair{\typctx'_1}{\ltype'}\whiterposp {d_1} \Pair{\typctx_1}{\ltype}$ and $\mtype' \whiternegp {d_2} \mtype$. By Proposition~\ref{prop:apprepaint} there exists $\Tr{\typctx''_1+\typctx'_2}{\ltype''}{m} \in \semint{\tmtwo \cdot^{\colrtwo} \tmthree}$ with $\Pair{\typctx''_1+\typctx'_2}{\ltype''} \whiterposp {d'} \Pair{\typctx'_1+\typctx_2}{\ltype'}$ and $m \leq k'_1 + k_2 + \Kr{\clrp{}}{\colrtwo} +d_2-d'$. So we have\vspace{-3pt}
                \[
                \Pair{\typctx''_1+\typctx'_2}{\ltype''} \whiterposp {d'} \Pair{\typctx'_1+\typctx_2}{\ltype'} \whiterposp {d_1} \Pair{\typctx_1+\typctx_2}{\ltype}
                \]
                and hence $\Pair{\typctx''_1+\typctx'_2}{\ltype''} \whiterposp {d'+d_1} \Pair{\typctx_1+\typctx_2}{\ltype} = \Pair{\typctx}{\ltype}$. It only remains to show the easy inequation $k \geq m+d'+d_1$.\qed
\end{proof}

By compositionality, soundness and completeness of the checkers multi type system, we are able to relate $\leqpwc$ and $\leqchcolr$.

\begin{theorem}[PWC Improvement implies Interaction Improvement]
        For all $\tm,\tmtwo\in\Lambda$, $\monoToBlue{\tm}\lecolpos \monoToBlue{\tmtwo}$ entails $\tm\leqchcolr \tmtwo$.\label{thm:pwcimpliesleqchcolr}
\end{theorem}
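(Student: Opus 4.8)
The plan is to obtain interaction improvement as a corollary of the facts already established: compositionality of $\lecolpos$ (Proposition~\ref{prop:leqpwc-contextual}) together with the soundness and completeness of the checkers multi type system (Theorem~\ref{thm:soundandcomplete}). First I would unfold the goal. The hypothesis is literally $\monoToBlue{\tm}\lecolpos\monoToBlue{\tmtwo}$, and the conclusion $\tm\leqchcolr\tmtwo$ unfolds (via Definition~\ref{def:interactionpreorder}) to $\monoToBlue{\tm}\leqchcolr\monoToBlue{\tmtwo}$, that is: for every checkers context $\ctx\in\chcontexts$ and every $k\in\nat$, if $\ctxp{\monoToBlue{\tm}}\bshcol{k}$ then $\ctxp{\monoToBlue{\tmtwo}}\bshcol{k'}$ for some $k'\le k$. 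So I fix $\ctx$ and $k$ with $\ctxp{\monoToBlue{\tm}}\bshcol{k}$, and aim to exhibit an interaction-head reduction of $\ctxp{\monoToBlue{\tmtwo}}$ to a $\hchsym$-nf using at most $k$ interaction steps.

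The core of the argument is a short roundtrip through the type system. By compositionality, $\monoToBlue{\tm}\lecolpos\monoToBlue{\tmtwo}$ lifts to $\ctxp{\monoToBlue{\tm}}\lecolpos\ctxp{\monoToBlue{\tmtwo}}$. From $\ctxp{\monoToBlue{\tm}}\bshcol{k}$, completeness (Theorem~\ref{thm:soundandcomplete}(2)) provides a derivation whose index is \emph{exactly} $k$, so $\Tr{\typctx}{\ltype}{k}\in\semint{\ctxp{\monoToBlue{\tm}}}$; feeding this triple into the definition of $\lecolpos$ yields $\Tr{\typctx'}{\ltype'}{k'}\in\semint{\ctxp{\monoToBlue{\tmtwo}}}$ together with some $d$ such that $\Pair{\typctx'}{\ltype'}\whiterposp{d}\Pair{\typctx}{\ltype}$ and $k\ge k'+d\ge k'$. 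Hence $\typctx'\ctypes{k'}\ctxp{\monoToBlue{\tmtwo}}:\ltype'$ is derivable, and soundness (Theorem~\ref{thm:soundandcomplete}(1)) applied to it gives some $k''\le k'$ with $\ctxp{\monoToBlue{\tmtwo}}\bshcol{k''}$. Composing the inequalities, $k''\le k'\le k'+d\le k$, which is exactly the required bound; the whitening amount $d$ only ever helps and is discarded.

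I do not expect a genuine obstacle here: the substantive work — that $\lecolpos$ is a congruence — is precisely what Proposition~\ref{prop:leqpwc-contextual} isolates, and once that is in hand the present statement is merely the bridge from the denotational preorder to the operational one, traversed by the completeness/soundness roundtrip. The only points needing a moment's care are (a) invoking completeness so that the triple handed to $\lecolpos$ records the \emph{exact} interaction cost $k$ of $\ctxp{\monoToBlue{\tm}}$ rather than a mere upper bound, since otherwise the comparison would not transfer the right number; and (b) correctly chaining $k''\le k'$ with $k'+d\le k$ to conclude $k''\le k$.
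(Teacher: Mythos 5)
Your proposal is correct and follows essentially the same route as the paper's own proof: lift $\lecolpos$ through the context by Proposition~\ref{prop:leqpwc-contextual}, invoke completeness (Theorem~\ref{thm:soundandcomplete}(2)) to get a typing with index exactly $k$, pass it through the definition of $\lecolpos$, and close with soundness (Theorem~\ref{thm:soundandcomplete}(1)) and the chain $k''\le k'\le k'+d\le k$. The only cosmetic difference is that you spell out the role of $d$ explicitly where the paper simply records $k'\le k$.
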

\begin{proof}
        Suppose $\monoToBlue{\tm}\lecolpos \monoToBlue{\tmtwo}$. 
        We have to show that for any checkers-context $\ctx$, if $\ctxp{\monoToBlue{\tm}} \bshcol k$ for some $k$ then $\ctxp{\monoToBlue{\tmtwo}} \bshcol{k'}$ for some $k' \leq k$. So suppose $\ctxp{\monoToBlue{\tm}} \bshcol k$. By completeness of the checkers type system (Theorem~\ref{thm:soundandcomplete}.2), we have some $\Tr{\typctx}{\ltype}{k} \in \semint{\ctxp{\monoToBlue{\tm}}}$. Proposition~\ref{prop:leqpwc-contextual} tells us that $\ctxp{\monoToBlue{\tm}}\lecolpos \ctxp{\monoToBlue{\tmtwo}}$ so by definition of $\lecolpos$ there is some $\Tr{\typctx'}{\ltype'}{k'} \in \semint{\ctxp{\monoToBlue{\tmtwo}}}$ with $k' \leq k$. By soundness of the type system (Theorem~\ref{thm:soundandcomplete}.1), there is $k'' \leq k'$ such that $\ctxp{\monoToBlue{\tmtwo}}\bshcol{k''}$ as required. \qed
\end{proof}

\newcommand{\leqrelplain}{\sqsubseteq^{\mathrm{rel}}}
\paragraph{Wrapping Up.} We can at last state our final theorem, completing the characterization of interaction improvement.
Indeed, interaction improvement also characterizes the preorder induced by relational semantics $\leqrelplain$, i.e., the inequational theory of the model $\mathcal{E}$ in Breuvart et al. \cite{BreuvartMR18}. We focus on this model because it can be presented via a multi type system very similar to the one of \reffig{colored-types-for-cbn}.
In fact, checkers multi types are obtained by coloring the well-known multi types for head evaluation \cite{BKV17,DBLP:journals/mscs/PaoliniPR17}. If one removes the colors from the types in the interpretation $\semint{\monoToBlue{\tm}}$ of a term then one recovers exactly the interpretation $\sem\tm^{\textrm{rel}}$ in relational semantics (as presented by multi types). However, it is not true that $\sem\tm^{\textrm{rel}}\,\subseteq\,\sem\tmtwo^{\textrm{rel}}$ implies $\semint{\monoToBlue{\tm}} \,\subseteq\,\semint{\monoToBlue{\tmtwo}}$ (because of the various ways of coloring types). 
This is exactly the reason why we introduced the $\leqpwc$ preorder to compare checkers type interpretations instead of the vanilla set inclusion.

\begin{theorem}\label{thm:main}
        For $\tm,\tmtwo\in\Lambda$, the following are equivalent:
        \begin{enumerate}
                \item \emph{\bohm tree preorder up to $\eta$-reductions:} $\tm\etaredbtleq\tmtwo$;
                \item \emph{Polarized Whiter-Cheaper Type Improvement:} $\tm\leqpwcplain\tmtwo$;
                \item \emph{Interaction Improvement:} $\tm\intrleq\tmtwo$;
                \item \emph{Plain Type Preorder:} $\tm\leqrelplain\tmtwo$.
        \end{enumerate}
\end{theorem}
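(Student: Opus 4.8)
The plan is to prove the four-way equivalence by closing the cycle $(1)\Rightarrow(2)\Rightarrow(3)\Rightarrow(1)$ with the results already established, and then attaching $(4)$ to this cycle using the characterization of Breuvart et al.~\cite{BreuvartMR18}.

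For $(1)\Rightarrow(2)$, I would first record the inclusion $\etaredbtleq\,\subseteq\,\slbtleq$: since $\slbtleq$ is defined by the same coinductive clauses as $\etaredbtleq$ together with the extra, symmetric clause for $\eta$-expansions in the larger term, its defining functional is pointwise above that of $\etaredbtleq$, hence its greatest fixed point is larger. Corollary~\ref{cor:slbtleqentailsleqpwcplain} then gives $\etaredbtleq\,\subseteq\,\slbtleq\,\subseteq\,\leqpwcplain$; the real content here is the whitening analysis of (possibly infinite) $\eta$-expansion carried out in Lemma~\ref{lem:whitercheaper}. For $(2)\Rightarrow(3)$, I would invoke Theorem~\ref{thm:pwcimpliesleqchcolr}: after unfolding the definitions on $\lambda$-terms, $\tm\leqpwcplain\tmtwo$ is $\monoToBlue{\tm}\lecolpos\monoToBlue{\tmtwo}$, and the combination of compositionality of $\leqpwcplain$ (Proposition~\ref{prop:leqpwc-contextual}) with soundness and completeness of the checkers type system (Theorem~\ref{thm:soundandcomplete}) yields $\monoToBlue{\tm}\leqchcolr\monoToBlue{\tmtwo}$, that is $\tm\intrleq\tmtwo$. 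For $(3)\Rightarrow(1)$, I would cite the completeness Theorem~\ref{th:intleq-included-in-bohm}, whose core is the interaction B\"ohm-out argument (Lemma~\ref{l:separating-eta-red}). This closes the cycle and establishes $(1)\Leftrightarrow(2)\Leftrightarrow(3)$.

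It then remains to fold in $(4)$. The key observation, already made in the ``Wrapping Up'' discussion, is that checkers multi types are a colored refinement of the standard multi types for head evaluation, so erasing all colors from $\semint{\monoToBlue{\tm}}$ recovers exactly the relational interpretation $\sem{\tm}^{\mathrm{rel}}$; consequently $\tm\leqrelplain\tmtwo$ is precisely the inequational theory of the model $\mathcal{E}$. Since Breuvart et al.~\cite{BreuvartMR18} characterize that theory as the B\"ohm tree preorder up to $\eta$-reductions, we obtain $(4)\Leftrightarrow(1)$, which completes the proof.

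The assembly itself is light. The only point that needs a moment's care is that Corollary~\ref{cor:slbtleqentailsleqpwcplain} is phrased for the symmetric relation $\slbtleq$, so the inclusion $\etaredbtleq\,\subseteq\,\slbtleq$ must be noted before applying it. All the genuine difficulty has already been absorbed by the two technical cornerstones of the paper---the compositionality of $\leqpwcplain$ (Proposition~\ref{prop:leqpwc-contextual}) and the interaction B\"ohm-out lemma (Lemma~\ref{l:separating-eta-red})---so once those are in hand the theorem is essentially a diagram chase. As a by-product, the chain of inclusions forces $\slbtleq$, $\etaredbtleq$, $\leqpwcplain$ and $\intrleq$ to coincide as relations on $\Lambda$.
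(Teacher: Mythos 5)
Your implication chain and the lemmas you cite match the paper's own proof verbatim: $(1)\Rightarrow(2)$ by Corollary~\ref{cor:slbtleqentailsleqpwcplain}, $(2)\Rightarrow(3)$ by Theorem~\ref{thm:pwcimpliesleqchcolr}, $(3)\Rightarrow(1)$ by Theorem~\ref{th:intleq-included-in-bohm}, and $(1)\Leftrightarrow(4)$ by~\cite{BreuvartMR18}.

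The part that needs correcting is your reading of $\slbtleq$ (the relation appearing in Lemma~\ref{lem:whitercheaper} and Corollary~\ref{cor:slbtleqentailsleqpwcplain}) as the strictly coarser lax preorder $\sqsubseteq_{\laxbsym}$. It cannot be that: the inductive case of the lemma only analyzes a head normal form of $\tm$ carrying \emph{extra} spine abstractions and arguments over $\tmtwo$, which is precisely the $\eta$-reduction clause defining $\etaredbtleq$, not its symmetrization. And the corollary would be outright false for the lax preorder: $\comb{I}\sqsubseteq_{\laxbsym}\lam xy.xy$ by the symmetric clause, yet $\comb{I}\not\leqpwcplain\lam xy.xy$, since $(\emptyset,[\vartype]\typearrowp{\bullet}\vartype,0)\in\semint{\monoToBlue{\comb{I}}}$, positive whitening preserves the underlying uncolored type, and $\monoToBlue{\lam xy.xy}$ admits no typing whose result type is the atom $\vartype$. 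This also makes your closing ``by-product'' --- that $\sqsubseteq_{\laxbsym}$ collapses onto $\etaredbtleq$, $\leqpwcplain$ and $\intrleq$ --- incorrect: the lax preorder, which additionally validates $\eta$-expansion in the larger term, is genuinely coarser (it captures Scott's $\mathcal{D}_\infty$, not the relational model $\mathcal{E}$). The main argument survives unchanged since the corollary applies directly to $\etaredbtleq$, but the extra inclusion step and the by-product remark should be dropped.
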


\begin{proof}$(1 \Rightarrow 2)$. By Corollary \ref{cor:slbtleqentailsleqpwcplain}.
        \begin{tikzpicture}[overlay,yshift=3pt]
                \node at (4,-.4){
                        \begin{tikzcd}[row sep=small]
                {\etaredbtleq} \arrow[rr, "\text{\scriptsize Cor. \ref{cor:slbtleqentailsleqpwcplain}}" description, bend left, shift left] \arrow[dd, "\text{\cite{BreuvartMR18}}" description, leftrightarrow] &  & {\leqpwcplain} \arrow[rr, "\text{\scriptsize Thm. \ref{thm:pwcimpliesleqchcolr}}" description, bend left, shift left] &  & {\intrleq} \arrow[llll, "\text{\scriptsize Thm. \ref{th:intleq-included-in-bohm}}" description, bend left, shift left] 
                \\\\
                {\leqrelplain}                                                                                 &  &                                                       &  &                                                          
        \end{tikzcd}
                };
        \end{tikzpicture}

$(2 \Rightarrow 3)$. By Theorem~\ref{thm:pwcimpliesleqchcolr}.

$(3 \Rightarrow 1)$. By Theorem~\ref{th:intleq-included-in-bohm}.

$(1 \Leftrightarrow 4)$. By \cite[Theorem 5.6]{BreuvartMR18}.
\hfill\qed
\end{proof}

\newcommand{\whiteintrleq}{\sqsubseteq^{\redclr\mathrm{imp}}}
\paragraph{White head contexts are enough.} 
Our result of completeness states that interaction improvement is included in the \bohm tree up to $\eta$-reductions preorder. 
Due to the particular shape of the separating context built in \reflemma{separating-eta-red}, our result has stronger consequences.
The \emph{white applicative interaction improvement} can be defined analogously to the standard interaction improvement, but restricted to head contexts consisting only of white applications and lambda abstractions. In this sense, these contexts can be viewed as uniformly ``whitened'' head contexts from the plain $\lambda$-calculus. Formally, we write $\tm \whiteintrleq \tmtwo$ if:
\begin{center}
        $\forall$ head contexts $\ctx$ such that $
        \monoToRed{\ctx}\ctxholep{\monoToBlue{t}}\bshcol{i}$, we have  
        $\monoToRed{\ctx}\ctxholep{\monoToBlue{u}}\bshcol{i'}$ for some $i' \leq i$.
\end{center}

It is evident that the class of applicative white contexts is a subset of the unrestricted general checker contexts $\chcontexts$. 
Therefore, the inclusion $\intrleq\,\subseteq\,\whiteintrleq$ holds trivially. More precisely:
\begin{equation}\label{eq:reduce-to-white-applicative-ctx}
t \intrleq u \;\xRightarrow{\text{triv.}}\; t \whiteintrleq u \;\xRightarrow{\text{L.\ref{l:separating-eta-red}}}\; t \etaredbtleq u
\end{equation}
Since we have already proved that $\etaredbtleq\, \subseteq\, \intrleq$, it follows immediately from (\ref{eq:reduce-to-white-applicative-ctx}) that it suffices to compare terms within white contexts to determine the interaction improvement between two programs.
The applicative restriction was not emphasized in \cite{InteractionEquivalence}, even though the separation argument there already used an applicative context. Here, the restriction is particularly relevant because other proof methods, such as Howe’s method \cite{Howe1996method,DBLP:books/cu/12/Pitts12}, do not easily establish that applicative contexts alone are sufficient. Notably, prior to this work, it could not even be proven that $\tm \to_\eta \tmtwo$ implies $\tm \intrleq \tmtwo$; this was only conjectured in \cite{InteractionEquivalence}.

\ignore{
\begin{proposition}\label{prop:apprepaint}
        Given $\typctx \hasstype[k] M: \mu \typearrowpp{\colr}{}\ltype $ and $\Delta \hasstype[l] N: \nu$ with either $\mu \whiternegp {d} \nu$ or $\nu \whiterposp {d} \mu$, there exists a typing $\typctx'+\Delta' \hasstype[m] M \cdot^{\colrtwo} N: \ltype'$ and $d'$ with $ 0 \leq d' \leq d$ such that $\Pair{\typctx'+\Delta'}{\ltype'} \whiterposp {d'} \Pair{\typctx+\Delta}{\ltype}$ and $m \leq k+l+ \Kr{\colr}{\colrtwo} + d -d'$. 
\end{proposition}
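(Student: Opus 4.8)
The plan is to argue by well-founded induction on the natural number $\mu$ obtained by counting the occurrences of black arrows $\typearrowp{\bullet}$ in the two conclusions $\Pair{\typctx}{\mtype\typearrowp{\colr}\ltype}$ and $\Pair{\typctxtwo}{\mtypetwo}$ of the given derivations. The point of this measure is that every unit of $\whiterposp{\cdot}$ or $\whiternegp{\cdot}$ turns exactly one black arrow white, so any application of the repainting machinery strictly decreases $\mu$. In the base case $d=0$, Lemma~\ref{lem:whitercheaperrel}(i) tells us that both alternatives of the hypothesis force $\mtype=\mtypetwo$, hence the two derivations combine directly through the rule $\typingruleApp$, yielding $\typctx+\typctxtwo\hasstype[k+l+\Kr{\colr}{\colrtwo}]\appp\colrtwo\tm\tmtwo:\ltype$; taking $d'=0$ settles every requirement with equality.

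\emph{One bounce.} Assume $d\ge 1$. The key idea — reminiscent of the propagation that occurs when composing strategies in game semantics — is that a single round of repainting reduces the goal to a smaller instance of the very same proposition in which the two hypotheses are \emph{swapped}. Consider the alternative $\mtypetwo\whiterposp{d}\mtype$ (the argument already has a type whiter than the one $\tm$ expects). Since $\mtype$ sits at a negative position in $\mtype\typearrowp{\colr}\ltype$, the whitening rules give $\Pair{\typctx}{\mtypetwo\typearrowp{\colr}\ltype}\whiternegp{d}\Pair{\typctx}{\mtype\typearrowp{\colr}\ltype}$, so Proposition~\ref{prop:multirepaint} applied to $\tm$'s derivation produces $\typctx'\hasstype[k']\tm:\mtype'\typearrowp{\colr'}\ltype'$ together with $0\le k_2\le d$ such that $\Pair{\typctx'}{\mtype'\typearrowp{\colr'}\ltype'}\whiterposp{k_2}\Pair{\typctx}{\mtypetwo\typearrowp{\colr}\ltype}$ and $\abs{k-k'}\le d-k_2$. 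Inverting this last relation (Lemma~\ref{lem:whitercheaperrel}(ii) together with the shape of the arrow rules) yields $\typctx'\whiternegp{a}\typctx$, $\mtype'\whiternegp{b_1}\mtypetwo$, $\ltype'\whiterposp{b_2}\ltype$ with $a+b_1+b_2+\Kr{\colr'}{\colr}=k_2$. The new derivation of $\tm$ has at least $d+k_2\ge 1$ fewer black arrows in its conclusion, so $\mu$ has strictly decreased; and the pair of judgements $\typctx'\hasstype[k']\tm:\mtype'\typearrowp{\colr'}\ltype'$ and $\typctxtwo\hasstype[l]\tmtwo:\mtypetwo$ with $\mtype'\whiternegp{b_1}\mtypetwo$ is exactly an instance of the \emph{other} alternative. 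The alternative $\mtype\whiternegp{d}\mtypetwo$ is handled symmetrically, repainting $\tmtwo$ instead: from $\Pair{\typctxtwo}{\mtype}\whiternegp{d}\Pair{\typctxtwo}{\mtypetwo}$ and Proposition~\ref{prop:multirepaint} one gets $\typctxtwo'\hasstype[l']\tmtwo:\mtype'$ with $\typctxtwo'\whiternegp{a}\typctxtwo$, $\mtype'\whiterposp{b}\mtype$, $a+b=k_2\le d$, $\abs{l-l'}\le d-k_2$, again with strictly smaller $\mu$, leaving the first-alternative instance $\typctx\hasstype[k]\tm:\mtype\typearrowp{\colr}\ltype$, $\typctxtwo'\hasstype[l']\tmtwo:\mtype'$.

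\emph{Closing up.} In either case the induction hypothesis applied to the smaller instance gives a typing $\typctx_0\hasstype[m']\appp\colrtwo\tm\tmtwo:\ltype^{\star}$ and $0\le d''\le k_2$ such that $\Pair{\typctx_0}{\ltype^{\star}}$ is positively $d''$-whiter than $\Pair{\typctx'+\typctxtwo}{\ltype'}$ in the first case, resp. than $\Pair{\typctx+\typctxtwo'}{\ltype}$ in the second. Composing with the positive whitenings produced by the bounce, namely $\Pair{\typctx'+\typctxtwo}{\ltype'}\whiterposp{a+b_2}\Pair{\typctx+\typctxtwo}{\ltype}$ resp. $\Pair{\typctx+\typctxtwo'}{\ltype}\whiterposp{a}\Pair{\typctx+\typctxtwo}{\ltype}$, and using transitivity of positive whitening (Lemma~\ref{lem:whitercheaperrel}(iii)), one obtains $\Pair{\typctx_0}{\ltype^{\star}}\whiterposp{d'}\Pair{\typctx+\typctxtwo}{\ltype}$ with $d':=d''+a+b_2$ (resp. $d':=d''+a$); in both cases $0\le d'\le a+b_1+b_2\le k_2\le d$. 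Setting $m:=m'$, the bound from the induction hypothesis together with $\abs{k-k'}\le d-k_2$ (resp. $\abs{l-l'}\le d-k_2$) and a short calculation with the Kronecker terms $\Kr{\colr}{\colrtwo}$, $\Kr{\colr'}{\colrtwo}$, $\Kr{\colr'}{\colr}$ gives $m\le k+l+\Kr{\colr}{\colrtwo}+d-d'$, which is the claimed inequality.

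\emph{Main obstacle.} The conceptual crux is realising that whitening must be allowed to \emph{propagate} between $\tm$ and $\tmtwo$: a single repainting step (Proposition~\ref{prop:multirepaint}) applied to the side whose type does not match generally overshoots, creating a residual mismatch of the opposite polarity on the other side, and one must argue that this back-and-forth terminates — which it does, because $\mu$ strictly drops by at least $d\ge 1$ at each round. The technically delicate part is then the bookkeeping: threading the numerical indices, the whitening counts $d',d''$, the residues $a,b_1,b_2$, and the $\pm 1$ contributions coming from colour flips and from the application rule, so that both the final index bound $m\le k+l+\Kr{\colr}{\colrtwo}+d-d'$ and the range $0\le d'\le d$ come out on the nose.
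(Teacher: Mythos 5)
Your proposal is correct and follows essentially the same route as the paper's proof: induction on the number of black arrows (the paper measures only the argument types $\mu,\nu$, you measure the full conclusions — both work), the base case $d=0$ via a direct application rule, and the inductive step via Proposition~\ref{prop:multirepaint} applied to the mismatched side, followed by decomposing the resulting positive whitening, recursing on the residual mismatch of opposite polarity, and closing with transitivity plus the Kronecker inequality $\Kr{\colr'}{\colrtwo}-\Kr{\colr'}{\colr}\le\Kr{\colr}{\colrtwo}$. Your explicit "swap between the two alternatives" framing is exactly what the paper's recursion does implicitly (it details one alternative and calls the IH on the dual one), so the two arguments coincide.
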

\begin{proof}
        We proceed by induction on the number of black-annotated arrows in the types $\mu$ and $\nu$. 
        
        For the base case, if there are no black-annotated arrows in $\mu$ or $\nu$, then $\mu \whiternegp {d} \nu$ or $\nu \whiterposp {d} \mu$ entails that $d= 0$ and $\mu = \nu$. Then we can immediately use the application rule to conclude that $\typctx+\Delta \hasstype[k+l+\Kr{\colr}{\colrtwo}] M \cdot^{\colrtwo} N : \ltype$, and taking $d'=0$ we conclude. 
        
        For the inductive step, we treat the case where $\nu \whiterposp {d} \mu$; the other case is similar and simpler. 
        
        We have $\nu \typearrowpp{\colr}{}\ltype \whiternegp {d} \mu \typearrowpp{\colr}{}\ltype$.   Applying Corollary~\ref{cor:multirepaint} to $M$ we obtain a typing $\typctx' \hasstype[k'] M:\mu' \typearrowpp{\clrp{}}{}\ltype'$ where $\Pair{\typctx'}{\mu' \typearrowpp{\clrp{}}{}\ltype'} \whiterposp {d_1} \Pair{\typctx}{\nu \typearrowpp{\colr}{}\ltype}$ and $k' \leq k + d-d_1$.
        
        The $d_1$ recolourings that transform $\Pair{\typctx}{\nu \typearrowpp{\colr}{}\ltype}$ into $\Pair{\typctx'}{\mu' \typearrowpp{\clrp{}}{}\ltype'}$ split among the components of the type so that 
        there are $d_2, d_3$ such that $\Pair{\typctx'}{\ltype'} \whiterposp {d_2} \Pair{\typctx}{\ltype}$, $\mu' \whiternegp {d_3} \nu$ and $d_1 = d_2 + d_3 + \Kr{\colr}{\clrp{}}$. Since $\mu'$ has strictly fewer black-annotated arrows than $\mu$,
        we can apply the inductive hypothesis to $\typctx' \hasstype[k'] M:\mu' \typearrowpp{\clrp{}}{}\ltype'$ and $\Delta \hasstype[l] N: \nu$. This gives 
        us a typing $\typctx'' + \Delta' \hasstype[m] M \cdot^{\colrtwo} N: \ltype''$ where $\Pair{\typctx''+\Delta'}{\ltype''} \whiterposp {d'} \Pair{\typctx'+\Delta}{\ltype'}$ and
        $m \leq k' + l +  \Kr{\clrp{}}{\colrtwo} + d_3 - d'$.
        Observe that $\Pair{\typctx''+\Delta'}{\ltype''} \whiterposp {d'+d_2} \Pair{\typctx+\Delta}{\ltype}$. To complete the proof we calculate
        \begin{eqnarray*}
                m   & \leq & k'+l+ \Kr{\clrp{}}{\colrtwo} + d_3 - d'  \\
                & \leq & k + d - d_1 + l+ \Kr{\clrp{}}{\colrtwo} + d_3 - d'  \\
                & =  & k + l + \Kr{\clrp{}}{\colrtwo} + d - d_2 - d_3 - \Kr{\colr}{\clrp{}} + d_3 - d'  \\
                & = & k + l+ \Kr{\clrp{}}{\colrtwo} -  \Kr{\colr}{\clrp{}} + d - (d'+d_2)\\
                & \leq & k + l+ \Kr{\colr}{\colrtwo} + d - (d'+d_2)
        \end{eqnarray*}
        noting that $ \Kr{\clrp{}}{\colrtwo} - \Kr{\colr}{\clrp{}} \leq \Kr{\colr}{\colrtwo}$. 
\end{proof}

\begin{proposition}\label{prop:lecolcontextual}
        The ordering $\lecolpos$ on checkers terms is contextual. That is, if $M \lecolpos M'$ and $C[]$ is a checkers context then $C[M] \lecolpos C[M']$. 
\end{proposition}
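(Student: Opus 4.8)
The plan is to reduce the contextuality of $\lecolpos$ to its three \emph{elementary} instances and to treat each by the same ``unfold the typing, push the hypothesis through, refold the typing'' scheme. Precisely, it suffices to show that whenever $\tm \lecolpos \tmtwo$ holds for checkers terms, then for every $\colr \in \set{\circ,\bullet}$ and every $\tmthree$,
\[
  \lap{\colr}\var\tm \lecolpos \lap{\colr}\var\tmtwo, \qquad
  \appp\colr\tm\tmthree \lecolpos \appp\colr\tmtwo\tmthree, \qquad
  \appp\colr\tmthree\tm \lecolpos \appp\colr\tmthree\tmtwo;
\]
the statement for an arbitrary checkers context $C$ then follows by induction on the structure of $C$ together with transitivity of $\lecolpos$. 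In each elementary case one fixes a triple $\Tr{\typctx}{\ltype}{k}$ in the interpretation of the term-with-hole, inverts its last typing rule to expose a sub-derivation of $\tm$ (and, in the application cases, one of $\tmthree$), replaces the $\tm$-sub-derivation by a positively-whiter cheaper derivation of $\tmtwo$ furnished by $\tm \lecolpos \tmtwo$, and rebuilds a derivation of the term with $\tmtwo$ plugged in. The snag is that this replacement can recolour the arrow at the root of the type (and the colours occurring negatively in it), so that the rebuilt application rule no longer type-checks; this is exactly what the repainting apparatus --- Lemma~\ref{l:repaint}, Proposition~\ref{prop:multirepaint}, and above all Proposition~\ref{prop:apprepaint} --- is meant to repair.

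The abstraction case is immediate: a triple $\Tr{\typctx}{\mtype\typearrowp{\colr}\ltype}{k} \in \semint{\lap{\colr}\var\tm}$ comes, by $\typingruleAbs$ and $\alpha$-renaming $\var$ fresh for $\typctx$, from $\Tr{\typctx,\var\hastype\mtype}{\ltype}{k} \in \semint{\tm}$; applying $\tm \lecolpos \tmtwo$ gives $\Tr{\typctx'}{\ltype'}{k'} \in \semint{\tmtwo}$ with $\Pair{\typctx'}{\ltype'} \whiterposp{d} \Pair{\typctx,\var\hastype\mtype}{\ltype}$ and $k \ge k' + d$; writing $\typctx' = \typctx'',\var\hastype\mtype'$, moving $\var$ back inside the arrow via the inversion clause of Lemma~\ref{lem:whitercheaperrel} and re-applying $\typingruleAbs$ produces $\Tr{\typctx''}{\mtype'\typearrowp{\colr}\ltype'}{k'} \in \semint{\lap{\colr}\var\tmtwo}$, still $d$-whiter with $k \ge k' + d$. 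The left-application case is the one spelled out in the text: inverting $\typingruleApp$ gives $\Tr{\typctx_1}{\mtype\typearrowp{\colrtwo}\ltype}{k_1} \in \semint{\tm}$ and $\Tr{\typctx_2}{\mtype}{k_2} \in \semint{\tmthree}$ with $k = k_1 + k_2 + \Kr{\colr}{\colrtwo}$; $\tm \lecolpos \tmtwo$ produces a $d$-whiter cheaper $\Tr{\typctx_1'}{\mtype'\typearrowp{\clrp{}}\ltype'}{k_1'} \in \semint{\tmtwo}$, whose whitening budget splits (by inversion on the rules of Figure~\ref{fig:polwhite}) as $d = d_1 + d_2 + \Kr{\colrtwo}{\clrp{}}$ with $\Pair{\typctx_1'}{\ltype'} \whiterposp{d_1} \Pair{\typctx_1}{\ltype}$ and $\mtype' \whiternegp{d_2} \mtype$; feeding $\typctx_1'\ctypes{k_1'}\tmtwo : \mtype'\typearrowp{\clrp{}}\ltype'$ and $\typctx_2\ctypes{k_2}\tmthree : \mtype$ (with $\mtype' \whiternegp{d_2} \mtype$) into the first branch of Proposition~\ref{prop:apprepaint}, then composing the residual positive whitening $d_1$ by transitivity of $\whiterposp{}$ and a one-line arithmetic check, yields the witnessing triple in $\semint{\appp\colr\tmtwo\tmthree}$.

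The right-application case is the delicate one, and I expect it to be the main obstacle. Inverting $\typingruleApp$ on a triple of $\appp\colr\tmthree\tm$ gives $\Tr{\typctx_1}{\mtype\typearrowp{\colrtwo}\ltype}{k_1} \in \semint{\tmthree}$ and $\Tr{\typctx_2}{\mtype}{k_2} \in \semint{\tm}$ with $\mtype = [\ltype_i]_{i\in I}$, $k = k_1 + k_2 + \Kr{\colr}{\colrtwo}$, and the second premise is itself obtained by $\typingruleMany$ from component derivations $\typctx_{2,i}\ctypes{k_{2,i}}\tm : \ltype_i$ with $\typctx_2 = \uplus_i \typctx_{2,i}$ and $k_2 = \sum_i k_{2,i}$. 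Here $\tm \lecolpos \tmtwo$ must be used \emph{once per component}: it yields $\Tr{\typctx_{2,i}'}{\ltype_i'}{k_{2,i}'} \in \semint{\tmtwo}$ with $\Pair{\typctx_{2,i}'}{\ltype_i'} \whiterposp{d_i} \Pair{\typctx_{2,i}}{\ltype_i}$, i.e.\ $\typctx_{2,i}' \whiternegp{a_i} \typctx_{2,i}$ and $\ltype_i' \whiterposp{b_i} \ltype_i$ with $a_i + b_i = d_i$ and $k_{2,i} \ge k_{2,i}' + d_i$; recombining with $\typingruleMany$ and the congruence rules for multitypes and environments (Figure~\ref{fig:polwhite}) produces $\typctx_2'\ctypes{\sum_i k_{2,i}'}\tmtwo : \mtype'$ with $\mtype' = [\ltype_i']_i$, $\mtype' \whiterposp{\sum_i b_i} \mtype$ and $\typctx_2' \whiternegp{\sum_i a_i} \typctx_2$. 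Now the argument's type has become positively whiter while $\tmthree$'s derivation is untouched, so one invokes the \emph{second} branch ($\mtypetwo \whiterposp{d} \mtype$) of Proposition~\ref{prop:apprepaint} on $\typctx_1\ctypes{k_1}\tmthree : \mtype\typearrowp{\colrtwo}\ltype$ and $\typctx_2'\ctypes{\sum_i k_{2,i}'}\tmtwo : \mtype'$, absorbs the leftover negative environment whitening $\typctx_2' \whiternegp{\sum_i a_i} \typctx_2$ by the pair-rule of Figure~\ref{fig:polwhite} together with transitivity of polarized whitening (Lemma~\ref{lem:whitercheaperrel}), and closes the index inequality using $\sum_i d_i = \sum_i a_i + \sum_i b_i$ to get $k \ge m + (d' + \sum_i a_i)$. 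The genuine difficulty is this bookkeeping: splitting every whitening budget into the part that stays visible in the conclusion (positive, ``kept'') and the part consumed inside an application rule (negative, removable only via Proposition~\ref{prop:apprepaint}), threading it through a multiset of premises in the right-application case, and verifying that the final inequality $k \ge m + D$ stays tight --- the propagation-then-stabilisation of colour changes alluded to in the introduction.
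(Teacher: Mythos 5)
Your proof is correct and follows the paper's approach exactly: reduce via transitivity to the three elementary compatibility facts, invert the typing rule of the head constructor, push $\lecolpos$ through the exposed sub-derivation of $\tm$, and use Proposition~\ref{prop:apprepaint} (together with the repainting machinery) to reconstruct a valid application rule despite the colour mismatch. You flesh out the right-application case more fully than the paper does --- the paper dismisses it as ``handled symmetrically'' with a pointer to the $\mtypetwo \whiterposp{d} \mtype$ branch of Proposition~\ref{prop:apprepaint}, while you correctly note that $\lecolpos$ only speaks of linear-type triples, so one must first invert $\typingruleMany$, apply the hypothesis to $\tm$ component by component, split each whitening budget into its environment part and its result-type part, re-aggregate into the multi type for $\tmtwo$, and only then invoke Proposition~\ref{prop:apprepaint}; that is indeed the bookkeeping hiding behind the paper's one-liner, and your index inequality $k \ge m + (d' + \sum_i a_i)$ closes it correctly.
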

\begin{proof}
        It suffices to prove that $M \lecolpos M'$ implies $\lambda^{\colr}x.M \lecolpos \lambda^{\colr}x.M'$ (for any $\colr$) and for all $N$ and $\colr$, $M \cdot^{\colr} N \lecolpos M' \cdot^{\colr} N$ and $N \cdot^{\colr} M \lecolpos N \cdot^{\colr} M'$.
        
        For the abstraction case, note that $\Tr{\typctx}{\mu \typearrowpp{\colr}{}\ltype}{k} \in \semint{\lambda^{\colr} x. M}$ if and only if
        $\Tr{\typctx, x:\mu}{\ltype}{k} \in \semint{M}$, and $\Pair{\typctx}{\mu \typearrowpp{\colr}{}\ltype} \whiterposp {d} \Pair{\typctx'}{\mu' \typearrowpp{\colr}{}\ltype'}$ if and only if $\Pair{\typctx,x:\mu}{\ltype} \whiterposp d \Pair{\typctx', x:\mu'}{\ltype'}.$ The result then follows directly.
        
        For the first application case, suppose $\Tr{\typctx}{\ltype}{k} \in \semint{M \cdot^{\colrtwo} N}$. Then we must have $\Tr{\typctx_1}{\mu \typearrowpp{\colr}{}\ltype}{k_1} \in \semint{M}$ and $\Tr{\typctx_2}{\mu}{k_2} \in \semint{N}$ with $k = k_1 + k_2 + \Kr{\colr}{\colrtwo}$ and $\typctx = \typctx_1 + \typctx_2$. Since $M\lecolpos M'$ we can find $\Tr{\typctx'_1}{\mu' \typearrowpp{\clrp{}}{} \ltype'}{k'_1} \in \semint{M'}$ with $\Pair{\typctx'_1}{\mu' \typearrowpp{\clrp{}}{} \ltype'} \whiterposp {d} \Pair{\typctx_1}{\mu \typearrowpp{\colr}{}\ltype}$ and $k_1 \geq k'_1 + d$. This implies that there are $d_1$, $d_2$ such that $d = d_1 + d_2 + \Kr{\colr}{\clrp{}}$ and $\Pair{\typctx'_1}{\ltype'}\whiterposp {d_1} \Pair{\typctx_1}{\ltype}$ and $\mu' \whiternegp {d_2} \mu$. By Proposition~\ref{prop:apprepaint} there exists $\Tr{\typctx''_1+\typctx'_2}{\ltype''}{m} \in \semint{M' \cdot^{\colrtwo} N}$ with $\Pair{\typctx''_1+\typctx'_2}{\ltype''} \whiterposp {d'} \Pair{\typctx'_1+\typctx_2}{\ltype'}$ and $m \leq k'_1 + k_2 + \Kr{\clrp{}}{\colrtwo} +d_2-d'$. Then we have
        \[
        \Pair{\typctx''_1+\typctx'_2}{\ltype''} \whiterposp {d'} \Pair{\typctx'_1+\typctx_2}{\ltype'} \whiterposp {d_1} \Pair{\typctx_1+\typctx_2}{\ltype}
        \]
        and hence $\Pair{\typctx''_1+\typctx'_2}{\ltype''} \whiterposp {d'+d_1} \Pair{\typctx_1+\typctx_2}{\ltype} = \Pair{\typctx}{\ltype}$. It remains to show that $k \geq m+d'+d_1$:
        \begin{eqnarray*}
                m   & \leq & k'_1 + k_2 + \Kr{\clrp{}}{\colrtwo} + d_2 - d'  \\
                & \leq & k_1 - d + k_2 + \Kr{\clrp{}}{\colrtwo} + d_2 - d'  \\
                & =  & k_1 + k_2 + \Kr{\clrp{}}{\colrtwo} +d_2 - d' - d_1 - d_2 - \Kr{\colr}{\clrp{}} \\
                & = &  k_1 + k_2 + \Kr{\clrp{}}{\colrtwo} - \Kr{\colr}{\clrp{}} - d' - d_1 \\
                & \leq & k_1 + k_2 + \Kr{\colr}{\colrtwo}- d' - d_1 \\
                & = & k - d' - d_1
        \end{eqnarray*}
        as required.
        
        The other application case is handled symmetrically. 
\end{proof}

\begin{proposition}
        For all $M,N\in\Lambda$, $\monoToBlue{M}\lecolpos \monoToBlue{N}$ entails $M\leqchcolr N$.
\end{proposition}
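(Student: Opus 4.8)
The plan is to observe that this proposition is exactly Theorem~\ref{thm:pwcimpliesleqchcolr}, and to reprove it by the same short assembly of three earlier results: \emph{compositionality} of the polarized whiter-cheaper improvement (Proposition~\ref{prop:leqpwc-contextual}), together with the \emph{completeness} and \emph{soundness} of the checkers multi type system with respect to head normalization and interaction counting (Theorem~\ref{thm:soundandcomplete}).

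First I would unfold the goal: proving $M \leqchcolr N$ amounts to showing that for every checkers context $\ctx$ and every $k \in \nat$, if $\ctxp{\monoToBlue M} \bshcol{k}$ then $\ctxp{\monoToBlue N} \bshcol{k'}$ for some $k' \le k$. So I fix such a $\ctx$ and $k$ and assume $\ctxp{\monoToBlue M} \bshcol{k}$.

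Then comes the core of the argument, a round trip through the denotational side. By completeness (Theorem~\ref{thm:soundandcomplete}.2), the assumption $\ctxp{\monoToBlue M} \bshcol{k}$ produces a typing $\Tr{\typctx}{\ltype}{k} \in \semint{\ctxp{\monoToBlue M}}$. Since $\lecolpos$ is compositional (Proposition~\ref{prop:leqpwc-contextual}), the hypothesis $\monoToBlue M \lecolpos \monoToBlue N$ lifts to $\ctxp{\monoToBlue M} \lecolpos \ctxp{\monoToBlue N}$; applying the definition of $\lecolpos$ to the triple just obtained yields some $\Tr{\typctx'}{\ltype'}{k'} \in \semint{\ctxp{\monoToBlue N}}$ together with a degree $d$ such that $\Pair{\typctx'}{\ltype'} \whiterposp{d} \Pair{\typctx}{\ltype}$ and $k \ge k' + d$, hence in particular $k' \le k$ (the whitening datum $d$ plays no role for interaction counts and is simply discarded). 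Finally, soundness (Theorem~\ref{thm:soundandcomplete}.1) applied to $\Tr{\typctx'}{\ltype'}{k'} \in \semint{\ctxp{\monoToBlue N}}$ gives $k'' \le k' \le k$ with $\ctxp{\monoToBlue N} \bshcol{k''}$, which is exactly what the goal demanded.

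I do not expect a genuine obstacle. All the difficulty of the statement already lives in compositionality (Proposition~\ref{prop:leqpwc-contextual}), whose proof rests on the repainting machinery (Lemma~\ref{l:repaint}, Proposition~\ref{prop:apprepaint}), and that is an assumed prior result; the present argument only plugs together completeness, compositionality, and soundness, with the single minor care of ignoring the colour-refinement degree $d$ once only the interaction counts matter.
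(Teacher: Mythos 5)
Your proposal is correct and follows exactly the same route as the paper's proof of Theorem~\ref{thm:pwcimpliesleqchcolr}: completeness of the type system to obtain a typing of $\ctxp{\monoToBlue M}$, compositionality (Proposition~\ref{prop:leqpwc-contextual}) to transfer $\lecolpos$ through the context, the definition of $\lecolpos$ to extract a cheaper typing of $\ctxp{\monoToBlue N}$ (discarding the whitening degree $d$), and soundness to conclude. Nothing is missing.
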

\begin{proof}
        Suppose $\monoToBlue{M}\lecolpos \monoToBlue{N}$. 
        We have to show that for any checkers-context $C[]$, if $C[\monoToBlue{M}] \Downarrow _k$ for some $k$ then $C[\monoToBlue{N}] \Downarrow_{k'}$ for some $k' \leq k$. So suppose $C[\monoToBlue{M}] \Downarrow _k$. By the first part of Lemma~\ref{lem:soundandcomplete}, we have some $\Tr{\typctx}{\ltype}{k} \in \semint{C[\monoToBlue{M}]}$. Proposition~\ref{prop:lecolcontextual} tells us that $C[\monoToBlue{M}]\lecolpos C[\monoToBlue{N}]$ so by definition of $\lecolpos$ there is some $\Tr{\typctx'}{\ltype'}{k'} \in \semint{C[\monoToBlue{N}]}$ with $k' \leq k$. By the second part of Lemma~\ref{lem:soundandcomplete}, there is $k'' \leq k'$ such that $C[\monoToBlue{N}]\Downarrow_{k''}$ as required. 
\end{proof}

% \begin{proposition} For all $M,N\in\Lambda$, $M\lecol N$ entails $M\leqchcolr N$.
        % \end{proposition}

% \begin{proof} Assume $M\lecol N$ and take a checkers context $C[]$ such that $C[M]\Downarrow_k$. 
        % Without loss of generality, we can assume $M,N$ closed, and by ``Corollary of B\"ohm out, we can assume that $C$ is an applicative context:
        % \[
        %         C[] = []P_1\cdots^{\clrtwo{1}\cdots\clrtwo{k}}P_k
        % \]
        % Shall we proceed by induction on $k$?
        % \end{proof}

}

%%% Local Variables:
%%% mode: LaTeX
%%% TeX-master: "../main"
%%% End:

%!TEX root = ../main.tex
\section{Conclusions}
The main contribution of this paper is to make precise in what sense the relational semantics conveys quantitative information about programs.
We establish this by characterizing the relational preorder in terms of interaction improvement---a quantitative refinement of the contextual preorder that measures the number of interactions between a term and its context.
Our key technical tool is the checkers calculus of~\cite{InteractionEquivalence}, which, in the spirit of game semantics, treats programs and environments as first-class entities.
%The main contribution of this paper is to show precisely in which sense the relational semantics provides quantitative information about programs. We have shown this by characterizing the relational preorder by the interactive improvement. This latter is a quantitative refinement of the contextual preorder, that is able to count the number of interactions between the term and the context. The technical tool that allows us to define such a preorder is the checkers calculus of~\cite{InteractionEquivalence}, which gives first-class status to programs and environments, in the style of game semantics.
\paragraph{Future Work.} Several research directions stem from this work:
\begin{itemize}
        %\item does the compositionality proof technique adapt to other IT system ? can it be constructed in a generic way/categorical ?
        \item \emph{Towards more applied calculi.} We have carried out our analysis in the world of the untyped call-by-name $\lambda$-calculus. 
        We would like to extend our results on the one hand to typed calculi, such as \textsf{PCF}, and on the other one to calculi with sharing, such as call-by-value and call-by-need.
        \item \emph{Revisiting improvement theory.} Sands and collaborators employ improvement theory to establish several quantitative results on functional program transformations (see, \eg,~\cite{DBLP:journals/tcs/Sands96,SandsImprovementTheory,SandsToplas}).
We aim to explore whether our framework can also be applied in this setting.
To do so, it may be necessary to relax the constraint on the difference in interaction steps in the definition of $\intrleq$, for instance by allowing a linear overhead.
        
\item \emph{Categorical analysis; relationship with other models}.
We aim to investigate how our annotated semantics can be understood at a more abstract, categorical level.
While Cartesian closed categories are part of the picture, we expect that additional structure will be required to capture the “colors” of the constructs and their operational interpretation.
A more abstract formulation of the model may allow us to clarify its relationship with game-theoretic approaches to program semantics~\cite{DBLP:journals/iandc/HylandO00,DBLP:journals/iandc/AbramskyJM00,LevyLICSGames,DBLP:conf/fossacs/AlcoleiCL19,Clairambault24}, which originally inspired this work, and to explore the broader applicability of our techniques.        
\end{itemize}

%%% Local Variables:
%%% mode: LaTeX
%%% TeX-master: "../main"
%%% End:

\paragraph{Acknowledgements.} We thank Beniamino Accattoli for interesting discussions and the anonymous reviewers for their careful reading and useful suggestions.

\newpage
\bibliographystyle{splncs04}
\bibliography{include/biblio}

\techreport{\newpage
\appendix
\setboolean{appendix}{true}

%!TEX root = ../main.tex
\section{Proofs of Section~\ref{sec:types}}

\begin{lemma}[Splitting multisets with respect to derivations]
	\label{l:types-splitting-multisets}
	Let $\tm$ be a term, $\tderiv \derives  \typectx \ctypes{k} \tm : \mtype$ a derivation, and $\mtype = \mtypetwo \mplus \mtypethree$  a splitting. Then there exist two derivations $\tderiv_{\mtypetwo} \derives  \typectx_{\mtypetwo} \ctypes{k_1} \tm : \mtypetwo$, and $\tderiv_{\mtypethree} \derives  \typectx_{\mtypethree} \ctypes{k_2} \tm : \mtypethree$ such that $\typctx = \typctx_{\mtypetwo} \mplus \typctx_{\mtypethree}$, $k = k_1 + k_2$, and $\insize{\tderiv}=\insize{\tderiv_{\mtypetwo}}+\insize{\tderiv_{\mtypethree}}$.
\end{lemma}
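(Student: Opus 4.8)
The plan is to proceed by induction on the structure of the derivation $\tderiv \derives \typectx \ctypes{k} \tm : \mtype$. Since $\mtype$ is a multi type, the last rule of $\tderiv$ must be $\typingruleMany$, so $\tderiv$ has the form: a family of sub-derivations $(\tderiv_i \derives \typectx_i \ctypes{k_i} \tm : \ltype_i)_{i\in I}$ with $I$ finite, $\mtype = [\ltype_i]_{i\in I}$, $\typectx = \uplus_{i\in I}\typectx_i$, $k = \sum_{i\in I} k_i$, and $\insize{\tderiv} = \sum_{i\in I}\insize{\tderiv_i}$ (the applicative size is additive over the premises of $\typingruleMany$, since that rule is not an $\ruleAp$ rule). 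The splitting $\mtype = \mtypetwo \mplus \mtypethree$ corresponds, at the level of index sets, to a partition $I = I_2 \uplus I_3$ such that $\mtypetwo = [\ltype_i]_{i\in I_2}$ and $\mtypethree = [\ltype_i]_{i\in I_3}$; this is exactly how multiset union decomposes a multiset of linear types.

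First I would make this partition explicit: given the splitting, choose $I_2, I_3$ witnessing it (this requires only that multiset union of $[\ltype_i]_{i\in I}$ be reconstructed from sub-multisets, which is immediate). Then I set $\typectx_{\mtypetwo} \defeq \uplus_{i\in I_2}\typectx_i$, $k_1 \defeq \sum_{i\in I_2}k_i$, and build $\tderiv_{\mtypetwo}$ by applying $\typingruleMany$ to the subfamily $(\tderiv_i)_{i\in I_2}$; symmetrically for $\mtypethree$ with $I_3$. By construction $\typectx = \typectx_{\mtypetwo}\mplus\typectx_{\mtypethree}$ and $k = k_1 + k_2$ since $\uplus$ and $\sum$ distribute over the partition $I = I_2\uplus I_3$. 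The applicative-size equation follows because $\insize{\tderiv_{\mtypetwo}} = \sum_{i\in I_2}\insize{\tderiv_i}$ and $\insize{\tderiv_{\mtypethree}} = \sum_{i\in I_3}\insize{\tderiv_i}$, whose sum is $\sum_{i\in I}\insize{\tderiv_i} = \insize{\tderiv}$.

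This argument is essentially direct rather than a genuine induction: the only rule that can produce a multi type is $\typingruleMany$, and that rule is already a "parallel composition" of linear-type derivations, so splitting the multiset amounts to splitting the index set $I$. The one point requiring a small amount of care — and the closest thing to an obstacle — is handling the boundary cases: the split where $\mtypetwo = \zero$ (take $I_2 = \emptyset$, obtaining the empty-premise instance of $\typingruleMany$ typing $\tm : \zero$ with index $0$, environment $\emptyset$, and applicative size $0$), and making sure the bookkeeping of environments is consistent with $\uplus$ being defined pointwise on multi types. No cleverness beyond that is needed; in particular there is no need to inspect the shape of $\tm$ or to descend into the $\tderiv_i$.
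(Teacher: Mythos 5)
Your proof is correct and takes essentially the same approach as the paper's: observe that the last rule must be $\typingruleMany$, partition its premises according to the split $\mtype = \mtypetwo \mplus \mtypethree$, and reassemble two instances of $\typingruleMany$; the applicative-size bookkeeping is immediate since $\typingruleMany$ contributes nothing to $\insize{\cdot}$. Your write-up is somewhat more explicit than the paper's one-line argument, but there is no substantive difference.
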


\begin{proof}
	\applabel{l:types-splitting-multisets}
	The last rule of $\tderiv \derives  \typectx_{\mtype} \ctypes{k} \tm : \mtype$ can only be $\typingruleMany$, thus it is enough to re-group its hypotheses according to $\mtypetwo$ and $\mtypethree$. Since $\typingruleMany$ rules do not count in the measure of type derivations, it is immediate that $\insize{\tderiv}=\insize{\tderiv_{\mtypetwo}}+\insize{\tderiv_{\mtypethree}}$.
\end{proof}

%%%%%%%
%%%%%%%
\begin{lemma}[Substitution]
	\label{l:types-substitution}
	If $\tderiv \derives \typctx, \var\hastype \mtype \ctypes k \tm \hastype \gtype$ and $\tderivtwo \derives \typctxtwo \ctypes {k'} \tmtwo \hastype \mtype$ then there exists a derivation $\tderivp \derives \typctx\uplus\typctxtwo \ctypes {k+k'} \tm\isub\var\tmtwo \hastype \gtype$ such that $\insize{\tderiv'}=\insize{\tderiv}+\insize{\tderivtwo}$.
\end{lemma}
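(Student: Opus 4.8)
The plan is to prove the Substitution Lemma by structural induction on the derivation $\tderiv \derives \typctx,\var\hastype\mtype\ctypes k\tm\hastype\gtype$, where we let $\gtype$ range over both linear and multi types (this generality is forced: the $\typingruleMany$ case recurses on linearly‑typed premises, while the $\typingruleApp$ case recurses on a multi‑typed right premise). We split on the last rule of $\tderiv$, and in every case the pattern is the same: (i) decompose the occurrence $\var\hastype\mtype$ across the premises; (ii) correspondingly split $\tderivtwo$ by the Splitting Lemma (Lemma~\ref{l:types-splitting-multisets}), which is the technical workhorse and has been proved with a quantitative conclusion precisely for this purpose; (iii) apply the induction hypothesis to the pieces; and (iv) re‑assemble with the same rule, checking that the index $k+k'$ and the applicative size $\insize\tderiv+\insize\tderivtwo$ come out right.

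For the base case $\typingruleAx$, $\gtype=\ltype$ is linear and $\tm$ is a variable $y$. If $y\neq\var$, the shape of $\typingruleAx$ forces $\mtype=\zero$, hence $\tderivtwo$ ends in an empty $\typingruleMany$, giving $\typctxtwo=\emptyset$, $k'=0$, $\insize\tderivtwo=0$, and $\tderivp=\tderiv$ works. If $y=\var$, then $\typingruleAx$ forces $\typctx=\emptyset$ and $\mtype=[\ltype]$, so $\tderivtwo$ derives $\typctxtwo\ctypes{k'}\tmtwo\hastype[\ltype]$ with a single‑premise $\typingruleMany$ as its last rule; we take $\tderivp$ to be that premise, which derives $\emptyset\uplus\typctxtwo\ctypes{0+k'}\var\isub\var\tmtwo\hastype\ltype$, and $\insize\tderivp=\insize\tderivtwo=\insize\tderiv+\insize\tderivtwo$ since neither $\typingruleAx$ nor $\typingruleMany$ contributes an $\ruleAp$. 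For $\typingruleAbs$, $\tm=\lap{\colr}{z}{\tm_0}$ and $\gtype=\mtypetwo\typearrowp\colr\ltypetwo$; by $\alpha$‑conversion we assume $z\notin\{\var\}\cup\FV\tmtwo$ and $z$ outside the supports of $\typctx,\typctxtwo$, apply the induction hypothesis to the premise $\typctx,\var\hastype\mtype,z\hastype\mtypetwo\ctypes k\tm_0\hastype\ltypetwo$ together with $\tderivtwo$, and re‑apply $\typingruleAbs$; index and applicative size are unchanged on both sides. The $\typingruleMany$ case decomposes $\var\hastype\mtype$ as $\var\hastype\mtype_i$ with $\mtype=\sum_i\mtype_i$ across the premises, splits $\tderivtwo$ by Lemma~\ref{l:types-splitting-multisets} into $\tderivtwo_i\derives\typctxtwo_i\ctypes{k'_i}\tmtwo\hastype\mtype_i$ with $\typctxtwo=\uplus_i\typctxtwo_i$, $k'=\sum_ik'_i$, $\insize\tderivtwo=\sum_i\insize{\tderivtwo_i}$, applies the induction hypothesis componentwise, and recombines via $\typingruleMany$: the total index is $\sum_i(k_i+k'_i)=k+k'$ and the applicative size is additive (the empty‑index subcase $\mtype=\zero$ is vacuous). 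The $\typingruleApp$ case is entirely analogous: the argument multiset is split across the two premises, $\tderivtwo$ is split accordingly by Lemma~\ref{l:types-splitting-multisets}, the induction hypothesis is applied to both premises, and one final $\typingruleApp$ yields the conclusion, with index $(k_1+k'_1)+(k_2+k'_2)+\Kr\colr\colrtwo=(k_1+k_2+\Kr\colr\colrtwo)+(k'_1+k'_2)=k+k'$ and applicative size $(\insize{\tderiv_1}+\insize{\tderiv_2}+1)+(\insize{\tderivtwo_1}+\insize{\tderivtwo_2})=\insize\tderiv+\insize\tderivtwo$, since exactly one $\ruleAp$ is added on each side.

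There is no deep obstacle here: this is a routine substitution lemma for a non‑idempotent multi type system, and the proof is almost forced by the syntax. The points that need a little care are (a) extracting a linearly‑typed sub‑derivation from $\tderivtwo$ by peeling off a $\typingruleMany$ in the $y=\var$ axiom case, to pass from $[\ltype]$ to $\ltype$; (b) the $\alpha$‑renaming of the bound variable in the $\typingruleAbs$ case so that it clashes neither with $\var$ nor with the free variables of $\tmtwo$; and (c) threading the conservation of index and applicative size from Lemma~\ref{l:types-splitting-multisets} through all the additive bookkeeping. I expect (c)—keeping the numerical annotations consistent across the splits and recombinations—to be the only place where one could slip, and it is exactly why the Splitting Lemma was established with its quantitative statement immediately beforehand.
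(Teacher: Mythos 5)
Your proof is correct and follows essentially the same route as the paper's: induction on the derivation with $\gtype$ ranging over both linear and multi types, case analysis on the last rule, use of the Splitting Lemma~\ref{l:types-splitting-multisets} to distribute $\tderivtwo$ across premises, and the same additive bookkeeping for the index and the applicative size. The only differences are presentational (the paper organizes the case split first by the kind of $\gtype$ and then by the last rule), so there is nothing to add.
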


\begin{proof}
	By induction on $\tderiv$. Cases of $\gtype$:
	\begin{enumerate}
		\item Linear types, \ie $\gtype \defeq \ltype$. Cases of the last derivation rule:
		
		\begin{enumerate}
			\item \emph{Unsubstituted variable}, \ie: $$\tderiv = \infer[\typingruleAx]{\var\hastype \zero,\vartwo \hastype [\ltype] \ctypes {0} \vartwo \hastype \ltype}{}$$
			Then as $\mtype = \zero$, $\tderivtwo$ must be of the following shape: 
			
			$$\infer[\typingruleMany]{\typctxtwo \ctypes {0} \tmtwo \hastype \zero}{}$$ 
			
			Hence $\tderivp \defeq \tderiv$ works, as $\vartwo\isub\var\tmtwo = \vartwo$ and $k'=0$. Clearly, $\insize{\tderiv}=\insize{\tderiv'}=0$, as required.
			\item \emph{Substituted variable}, \ie: $$\tderiv = \infer[\typingruleAx]{\var \hastype [\ltype] \ctypes 0 \var \hastype \ltype}{}$$
			Then, as $\mtype = [\ltype]$, $\tderivtwo$ must be of the following shape: 
			
			$$\infer[\typingruleMany]{\typctxtwo \ctypes {k'} \tmtwo \hastype [\ltype]}{\tderivtwo' \exder \typctxtwo \ctypes {k'} \tmtwo \hastype \ltype}$$
			
			Hence $\tderivp \defeq \tderivtwo'$ works, as $\var\isub\var\tmtwo = \tmtwo$ and $k=0$. Clearly, $\insize{\tderiv'}=\insize{\tderivtwo}=\insize{\tderiv} + \insize{\tderivtwo}$.
			
			\item \emph{Abstraction}, \ie: $$\tderiv = \infer[\typingruleAbs]{\typctx, \var \hastype \mtype \ctypes {k} \cla{}\vartwo\tmp \hastype \mtypetwo \typearrowp{\clr{}} \ltype}{\rho\derives\typctx, \var \hastype \mtype, \vartwo \hastype \mtypetwo \ctypes {k} \tmp \hastype \ltype}$$
			
			By induction, there exists a derivation  $\rho'$ of final judgment $\typctx,\vartwo\hastype\mtypetwo \ctypes {k+k'} \tmp\isub\var\tmtwo \hastype \ltype$ such that $\insize{\rho'}=\insize{\rho} +\insize{\tderivtwo}$, from which we can conclude obtaining $\tderiv'$ by applying the derivation rule $\typingruleAbs$. We observe that $\insize{\tderiv'}=\insize{\rho'}=\insize{\rho} +\insize{\tderivtwo}= \insize{\tderiv} +\insize{\tderivtwo}$.
			
			\item \emph{Application}, \ie: 
			$$	\infer[\typingruleApp]{\typectx_1 \uplus \typectx_2, \var \hastype \mtype  \ctypes {k} \capp{}{\tm_1}{\tm_2} \hastype \ltype}{ \rho_1\derives\typectx_1, \var\hastype\mtype_1 \ctypes{l_1}{\tm_1} \hastype \mtype \typearrowp{\clr{}} \ltype & \rho_2 \derives\typectx_2, \var\hastype\mtype_2  \ctypes {l_2} {\tm_2} \hastype \mtype  }$$ where $k =  l_1 + l_2 + \Kr{\clr{}}{\clrtwo{}}$.
			
			We have that $\mtype = \mtype_1 \mplus \mtype_2$. By \reflemma{types-splitting-multisets}, the derivation $\tderivtwo$ splits into two derivations of final judgments $\tderivtwo_1 \derives \typctxtwo_1 \ctypes {k'_1} \tmtwo \hastype \mtype_1$ and $\tderivtwo_2 \derives \typctxtwo_2 \ctypes {k'_2} \tmtwo \hastype \mtype_2$ such that $k' = k'_1 + k'_2$, and $\insize{\tderivtwo}=\insize{\tderivtwo'} + \insize{\tderivtwo''}$.
			
			By \ih, there are two derivations of final judgment  $\rho_1\derives\typectx_1 \ctypes  {l_1+k'_1} {\tm_1\isub\var\tmtwo} \hastype \mtype \typearrowp{\clr{}} \ltype$ and  $\rho_2\derives \typectx_2 \ctypes {l_2+ k'_2} {\tm_2\isub\var\tmtwo} \hastype \mtype $, such that $\insize{\rho_i'}=\insize{\rho_i} +\insize{\tderivtwo_i}$. 
			We conclude obtaining $\tderiv'$ by applying to these two derivations the rule $\typingruleApp$, as $k+k' = l_1 + k'_1+ l_2 + k'_2+\Kr{\clr{}}{\clrtwo{}}$. We observe that $\insize{\tderiv'}=1+\insize{\rho'_1} + \insize{\rho'_2} =1+\insize{\rho_1} +\insize{\tderivtwo_1} + \insize{\rho_2} +\insize{\tderivtwo_2}= \insize{\tderiv} +\insize{\tderivtwo}$.
			
		\end{enumerate}
		
		\item Multi types, \ie $\gtype \defeq [\ltype_i]_{i\in I}$ with $\iset$ finite. The last rule of the derivation must be $\typingruleMany$:		
		$$
		\infer[\typingruleMany]{\uplus_{i\in I}\typectx_i, \var\hastype\mtype \ctypes{\sum_{i\in I} l_i}  \tm \hastype [\ltype_i]_{i\in I}}
		{(\tderiv_i \derives \typectx_i, \var\hastype\mtype_i \ctypes {l_i} \tm \hastype \ltype_i)_{i\in I}}$$		
		with $\mtype = \uplus_{i\in I} \mtype_i$ and $k = \sum_{i\in I}l_i$ and $\uplus_{i\in I}\typectx_i = \typctx$.
		
		By the multiset splitting lemma (\reflemma{types-splitting-multisets}), the derivation $\tderivtwo$ for $\tmtwo$ in the hypotheses splits in several derivations of final judgments $\tderivtwo_i \derives \typctxtwo_i \ctypes {k'_i} \tmtwo \hastype \mtype_i$ such that $k' = \sum_{i\in I}k'_i$ and $\uplus_{i\in I}\typctxtwo_i = \typctxtwo$, and $\sum\insize{\tderiv_i}=\insize{\tderivtwo}$.
		
		Then by induction hypothesis, there exist several derivations $\tderivp_i \derives \typectx_i \cup \typctxtwo_i \ctypes  {l_i+k'_i} {\tm\isub\var\tmtwo} \hastype \ltype_i$, such that $\insize{\tderiv_{i}'} = \insize{\tderiv_{i}} + \insize{\tderivtwo_i}$. The derivation $\tderivp$ of the statement is obtained by applying rule $\typingruleMany$ to the family of derivations $\{\tderivp_i\}_{i\in I}$, as follows:
		$$
		\infer[\typingruleMany]{\uplus_{i\in I}\typectx_i\cup\typctxtwo_i \ctypes{\sum_{i\in I} (l_i+k'_i)}  \tm\isub\var\tmtwo \hastype [\ltype_i]_{i\in I}}
		{(\tderivp_i \exder \typectx_i \cup \typctxtwo_i \ctypes  {l_i+k'_i} {\tm\isub\var\tmtwo} \hastype \ltype_i )_{i\in\iset}}$$		
		Note indeed that $k+k' = \sum_{i\in I} (l_i + k'_i)$ and $\uplus_{i\in I}\typectx_i\uplus\typctxtwo_i = \typctx \uplus \typctxtwo$. Moreover, $\insize{\tderiv'}= \sum_i \insize{\tderiv_{i}'} = \sum_i\insize{\tderiv_{i}} + \insize{\tderivtwo_i}  = \insize{\tderiv} + \insize{\tderivtwo}$.
	\end{enumerate}
\end{proof}

\begin{lemma}[Merging multisets w.r.t.\ derivations]
	\label{l:silly-merging-multisets}
	Let $\tm\in\Lambdac$. Consider two derivations:
	\begin{itemize}
		\item $\tderiv_{\mtypetwo} \derives  \typectx_{\mtypetwo} \ctypes{k_1} \tm : \mtypetwo$, and
		\item $\tderiv_{\mtypethree} \derives  \typectx_{\mtypethree} \ctypes{k_2} \tm : \mtypethree$.
	\end{itemize}
	Then there exists a derivation $\tderiv_{\mtypetwo} \derives  \typectx_{\mtypetwo}\uplus\typctx_{\mtypethree} \ctypes{k_1+k_2} \tm : \mtypetwo\mplus \mtypethree$.
\end{lemma}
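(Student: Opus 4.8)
The plan is to invert both given derivations at their final rule and then recombine them with a single application of $\typingruleMany$. A multi type occurs on the right-hand side of a judgment only as the conclusion of rule $\typingruleMany$, so the last rule of $\tderiv_{\mtypetwo}$ is necessarily $\typingruleMany$, with a finite family of premises $(\typectx_i \ctypes{k_i} \tm : \ltype_i)_{i\in I}$ such that $\mtypetwo = [\ltype_i]_{i\in I}$, $\typectx_{\mtypetwo} = \uplus_{i\in I}\typectx_i$ and $k_1 = \sum_{i\in I} k_i$. Likewise $\tderiv_{\mtypethree}$ ends with $\typingruleMany$ over some finite index set which --- up to renaming indices --- we may take disjoint from $I$ and call $J$, giving premises $(\typectx_j \ctypes{k_j} \tm : \ltype_j)_{j\in J}$ with $\mtypethree = [\ltype_j]_{j\in J}$, $\typectx_{\mtypethree} = \uplus_{j\in J}\typectx_j$ and $k_2 = \sum_{j\in J} k_j$.

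Next I would take the union of these two families of subderivations and apply rule $\typingruleMany$ to it once. The resulting derivation $\tderiv$ has as conclusion type $[\ltype_i]_{i\in I} \mplus [\ltype_j]_{j\in J} = \mtypetwo \mplus \mtypethree$, as environment $(\uplus_{i\in I}\typectx_i) \uplus (\uplus_{j\in J}\typectx_j) = \typectx_{\mtypetwo} \uplus \typectx_{\mtypethree}$, and as count $(\sum_{i\in I} k_i) + (\sum_{j\in J} k_j) = k_1 + k_2$; that is, $\tderiv \derives \typectx_{\mtypetwo}\uplus\typectx_{\mtypethree} \ctypes{k_1+k_2} \tm : \mtypetwo\mplus\mtypethree$, as required. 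Just as in \reflemma{types-splitting-multisets}, one moreover obtains $\insize{\tderiv} = \insize{\tderiv_{\mtypetwo}} + \insize{\tderiv_{\mtypethree}}$ for free, since $\typingruleMany$ does not count towards the applicative size.

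I expect no genuine obstacle here: this statement is exactly the converse of the splitting lemma (\reflemma{types-splitting-multisets}), and both rest on the single observation that a $\typingruleMany$-rule is merely a bookkeeping device gathering a finite family of linear-type derivations, so such families may be merged (as here) or split (as there) at will. The only points needing (minimal) care are keeping the index sets disjoint before forming their union, and the degenerate cases $I = \emptyset$ or $J = \emptyset$ --- i.e.\ $\mtypetwo = \zero$ or $\mtypethree = \zero$ --- which are anyway subsumed by the general argument, the union of families then reducing to the single non-empty one.
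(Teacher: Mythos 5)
Your proposal is correct and follows exactly the paper's (one-line) argument: since the last rule of each derivation must be $\typingruleMany$, one simply re-groups their premises under a single $\typingruleMany$ rule. You have merely spelled out the bookkeeping (disjoint index sets, sums of counts, unions of environments) in more detail than the paper does.
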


\begin{proof}
	The last rules of $\tderiv_{\mtypetwo} \derives  \typectx_{\mtypetwo} \ctypes{k_1} \tm : \mtypetwo$ and $\tderiv_{\mtypethree} \derives  \typectx_{\mtypethree} \ctypes{k_2} \tm : \mtypethree$ can only be the rule $\typingruleMany$, thus it is enough to re-group their hypotheses.
\end{proof}
\pagebreak
\begin{lemma}[Anti-substitution]
	\label{l:types-anti-substitution}
	Let $\tderiv \derives  \typectx\ \uplus \typectxtwo \ctypes{k+k'} \tm\isub\var\tmtwo : \gtype$. Then there exist
	\begin{itemize}
		\item a multi type $\mtype$,
		\item a derivation $\tderiv_\tm\derives{\typectx,\var:\mtype}\ctypes{k}{\tm}:{\gtype}$, and 
		\item a derivation $\tderiv_\tmtwo\derives{\typectxtwo}\ctypes{k'} {\tmtwo}:{\mtype}$.
	\end{itemize}
\end{lemma}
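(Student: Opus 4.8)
The plan is to prove the Anti-substitution Lemma by induction on the structure of the term $\tm$, and \emph{not} on the derivation $\tderiv$: since the subject $\tm\isub\var\tmtwo$ of the concluding judgement does not determine $\tm$ (a subterm of $\tm\isub\var\tmtwo$ may or may not arise from plugging $\tmtwo$ for an occurrence of $\var$), a structural induction on $\tm$ is the right granularity; the statement is read as ``from any derivation of $\typectx\uplus\typectxtwo\ctypes{k+k'}\tm\isub\var\tmtwo:\gtype$, one may \emph{choose} $\mtype$, the decomposition of the environment and of the index, and the two derivations''. Everything is universally quantified over $\gtype$. Whenever $\gtype$ is a multi type $[\ltype_i]_{i\in I}$ the last rule of $\tderiv$ is necessarily $\typingruleMany$, so we apply the induction hypothesis to its premises (which carry linear types), obtaining multi types $\mtype_i$ together with derivations of $\typectxtwo_i\ctypes{k'_i}\tmtwo:\mtype_i$ and of $\tm:\ltype_i$ with $\var:\mtype_i$ in the environment; we merge the former via \reflemma{silly-merging-multisets} and the latter via $\typingruleMany$, taking $\mtype\defeq\sum_i\mtype_i$. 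This reduces every case to the sub-case $\gtype=\ltype$ linear. (The same accounting also yields $\insize{\tderiv}=\insize{\tderiv_\tm}+\insize{\tderiv_\tmtwo}$, as in \reflemma{types-substitution}, should one wish to track the applicative size, since $\typingruleMany$ rules do not count towards $\insize{\cdot}$.) In the base cases, if $\tm=\var$ then $\tm\isub\var\tmtwo=\tmtwo$; we take $\mtype\defeq[\ltype]$ (or $\mtype\defeq\gtype$ directly when $\gtype$ is already a multi type), $\typectx\defeq\emptyset$, $k\defeq0$, let $\tderiv_\tmtwo$ be $\tderiv$ (wrapped in a one-premise $\typingruleMany$ in the linear case) and $\tderiv_\tm$ be $\typingruleAx$ (wrapped in $\typingruleMany$ in the multi-type case), so $\typectxtwo$ and $k'$ are just the environment and index of $\tderiv$. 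If $\tm=\vartwo\neq\var$ then $\tm\isub\var\tmtwo=\vartwo$; take $\mtype\defeq\zero$, $\tderiv_\tm\defeq\tderiv$ read in the same environment extended by $\var:\zero$, $\typectxtwo\defeq\emptyset$, $k'\defeq0$, and $\tderiv_\tmtwo$ the empty $\typingruleMany$ typing $\tmtwo$ with $\zero$.

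For $\tm=\cla{}\vartwo\tmp$ we first $\alpha$-rename so that $\vartwo\notin\FV{\tmtwo}\cup\set{\var}$. Reducing to $\gtype=\mtypetwo\typearrowp{\colr}\ltype$ linear, $\tderiv$ ends with $\typingruleAbs$; we apply the induction hypothesis to its premise, which types $\tmp\isub\var\tmtwo$ with $\ltype$ in the environment extended by $\vartwo:\mtypetwo$, obtaining $\mtype$, a splitting of that environment between the part used for $\tmp$ and the part used for $\tmtwo$, and indices summing correctly. The crucial observation is that the whole of $\vartwo:\mtypetwo$ lands in the $\tmp$-part: since $\vartwo\notin\FV{\tmtwo}$ and the type system is relevant --- $\mathrm{supp}(\typectx)\subseteq\FV{\tmthree}$ for every derivable $\typectx\ctypes{k}\tmthree:\gtype$, by a straightforward induction on the derivation --- the environment of the $\tmtwo$-derivation cannot contain $\vartwo$. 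Hence we may re-apply $\typingruleAbs$ (legitimate because $\var\neq\vartwo$) to the reconstructed $\tmp$-derivation, keep the $\tmtwo$-derivation untouched, and inherit the splittings of the environment and of the index.

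For $\tm=\capp{}{\tm_1}{\tm_2}$, reducing again to $\gtype=\ltype$ linear, $\tderiv$ ends with $\typingruleApp$ whose premises type $\tm_1\isub\var\tmtwo:\mtypetwo\typearrowp{\colrtwo}\ltype$ and $\tm_2\isub\var\tmtwo:\mtypetwo$ and whose root index is $k_1+k_2+\Kr{\colr}{\colrtwo}$. Applying the induction hypothesis to both premises yields multi types $\mtype_1,\mtype_2$ for $\tmtwo$ together with environment and index splittings; we set $\mtype\defeq\mtype_1+\mtype_2$, merge the two derivations for $\tmtwo$ via \reflemma{silly-merging-multisets}, and re-apply $\typingruleApp$ to the two reconstructed derivations for $\tm_1$ and $\tm_2$ --- the $\Kr{\colr}{\colrtwo}$ summand of the index being reintroduced by the rule itself. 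A short arithmetic check then confirms that the environments union back to the required decomposition $\typectx\uplus\typectxtwo$ and that the indices sum to the original one.

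I expect the main obstacle to be the abstraction case, which rests essentially on the relevance of the type system to ensure that the bound variable cannot leak into the derivation for $\tmtwo$, together with the accounting in the application case: splitting the argument multi type $\mtype$ into the contributions demanded by the two occurrences and reassembling them with the merging lemma is precisely what makes the environments and the indices balance. The remaining details are routine.
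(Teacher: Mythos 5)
Your proposal is correct and follows essentially the same route as the paper's proof: the paper argues by lexicographic induction on $(\gtype,\tm)$, with the multi-type case dispatched through $\typingruleMany$ and the merging lemma, the two variable sub-cases, and the abstraction and application cases reconstructed exactly as you describe. The one place where you go beyond the paper is the abstraction case, where you explicitly invoke relevance of the system ($\mathrm{supp}(\typectx)\subseteq\FV{\tmthree}$) to justify that the bound variable's multi type lands entirely in the $\tmp$-part of the split --- a detail the paper leaves implicit, and which is indeed the right justification.
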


\begin{proof}
	By lexicographic induction on $(\gtype,\tm)$. Cases of $\gtype$:
	\begin{enumerate}
		\item Linear types, \ie $\gtype \defeq \ltype$. Cases of the last derivation rule:
		
		\begin{enumerate}
			\item \emph{Unsubstituted variable}, \ie: $\vartwo\isub\var\tmtwo = \vartwo$ and
			
			$$\tderiv = \infer[\typingruleAx]{\vartwo \hastype [\ltype] \ctypes {0} \vartwo \hastype \ltype}{}$$
			Then, taking $\mtype \defeq \zero$, $\tderiv_\tmtwo$ must be of the following shape: 
			
			$$\infer[\typingruleMany]{\typctxtwo \ctypes {0} \tmtwo \hastype \zero}{}$$ 
			
			Hence $\tderiv_\tm \defeq \tderiv$ works.
			\item \emph{Substituted variable}, \ie: $\var\isub\var\tmtwo = \tmtwo$ and $\tderiv \derives \typctxtwo \ctypes {k'} \tmtwo \hastype \ltype$. We take $\mtype\defeq[\ltype]$. Then:
			\[
			\infer{\tderiv_\tmtwo \derives \typctxtwo \ctypes {k'} \tmtwo \hastype [\ltype]}{\tderiv \derives \typctxtwo \ctypes {k'} \tmtwo \hastype \ltype}
			\]
			
			and
			
			\[
			\infer[\typingruleAx]{\tderiv_\tm=\var \hastype [\ltype] \ctypes {0} \var \hastype \ltype}{}\]
			
			\item \emph{Abstraction}, \ie: $$\tderiv = \infer[\typingruleAbs]{\typctx\uplus\typectxtwo \ctypes {k+k'} \cla{}\vartwo\tmp\isub\var\tmtwo \hastype \mtypetwo \typearrowp{\clr{}} \ltype}{\rho\derives\typctx\uplus\typectxtwo, \vartwo \hastype \mtypetwo \ctypes {k+k'} \tmp\isub\var\tmtwo \hastype \ltype}$$
			
			By induction, there exists a derivation  $\rho'$ of final judgment $\typctx,\vartwo\hastype\mtypetwo,\var:\mtype \ctypes {k} \tmp \hastype \ltype$ and a type derivation $\tderiv_\tmtwo$ of final judgment $\typctxtwo\ctypes {k'} \tmtwo \hastype \mtype$. Then

			$$\tderiv_\tm = \infer[\typingruleAbs]{\typctx, \var \hastype \mtype \ctypes {k} \cla{}\vartwo\tmp \hastype \mtypetwo \typearrowp{\clr{}} \ltype}{\rho'\derives\typctx, \var \hastype \mtype, \vartwo \hastype \mtypetwo \ctypes {k} \tmp \hastype \ltype}$$
			
			\item \emph{Application}, \ie: 
			
			$$	\infer[\typingruleApp]{\typectx_1 \uplus \typectx_2\uplus\typectxtwo_1\uplus\typectxtwo_2  \ctypes  {k+k'+ \Kr{\clr{}}{\clrtwo{}}} (\capp{}{\tm_1}{\tm_2})\isub\var\tmtwo \hastype \ltype}{ \rho_1\derives\typectx_1\uplus\typectxtwo_1 \ctypes  {k_1+k_1'} {\tm_1}\isub\var\tmtwo \hastype \mtypetwo \typearrowp{\clr{}} \ltype & \rho_2 \derives\typectx_2\uplus\typectxtwo_2  \ctypes {k_2+k_2'} {\tm_2}\isub\var\tmtwo \hastype \mtypetwo  }$$ where $k =  k_1+k_2$ and $k' =  k_1'+k_2'$.
			
			By \ih, we have a type derivation $\rho_1'$ such that $\rho_1'\derives\typectx_1, \var\hastype\mtype_1 \ctypes  {k_1} {\tm_1} \hastype \mtypetwo \typearrowp{\clr{}} \ltype$, a type derivation $\tderiv_{\tmtwo_1}$ such that $\tderiv_{\tmtwo_1} \derives \typctxtwo_1 \ctypes {k'_1} \tmtwo \hastype \mtype_1$, a type derivation $\rho_2'$ such that $\rho_2' \derives\typectx_2, \var\hastype\mtype_2  \ctypes {k_2} {\tm_2} \hastype \mtypetwo$, and a type derivation $\tderiv_{\tmtwo_2}$ such that $\tderiv_{\tmtwo_2} \derives \typctxtwo_2 \ctypes {k'_2} \tmtwo \hastype \mtype_2$. $\tderiv_\tmtwo$ is obtained by merging $\tderiv_{\tmtwo_1}$ and $\tderiv_{\tmtwo_2}$ (\reflemma{silly-merging-multisets}). $\tderiv_\tm$ is as follows:
			$$	
			\infer[\typingruleApp]{\typectx_1 \uplus \typectx_2, \var \hastype \mtype_1\uplus\mtype_2  \ctypes  {k+ \Kr{\clr{}}{\clrtwo{}}} \capp{}{\tm_1}{\tm_2} \hastype \ltype}{ \rho_1'\derives\typectx_1, \var\hastype\mtype_1 \ctypes  {k_2} {\tm_1} \hastype \mtypetwo \typearrowp{\clr{}} \ltype & \rho_2' \derives\typectx_2, \var\hastype\mtype_2  \ctypes {k_2} {\tm_2} \hastype \mtypetwo  }
			$$
			
		\end{enumerate}
		
		\item Multi types, \ie $\gtype \defeq [\ltype_i]_{i\in I}$ with $\iset$ finite. The last rule of the derivation must be $\typingruleMany$:		
		$$
		\infer[\typingruleMany]{\uplus_{i\in I}\typectx_i\cup\typctxtwo_i \ctypes{\sum_{i\in I} (k_i+k'_i)}  \tm\isub\var\tmtwo \hastype [\ltype_i]_{i\in I}}
		{(\tderivp_i \exder \typectx_i \cup \typctxtwo_i \ctypes  {k_i+k'_i} {\tm\isub\var\tmtwo} \hastype \ltype_i )_{i\in\iset}}
		$$	
		
		By applying the \ih, we obtain for each $i\in I$: $\tderiv_i \derives \typectx_i, \var\hastype\mtype_i \ctypes {k_i} \tm \hastype \ltype_i$ and
		$\tderivtwo_i \derives \typctxtwo_i \ctypes {k'_i} \tmtwo \hastype \mtype_i$. 
		$\tderiv_\tmtwo$ is obtaining by merging the $\tderivtwo_i$ (\reflemma{silly-merging-multisets}). $\tderiv_\tm$ is as follows:	
		$$
		\infer[\typingruleMany]{\uplus_{i\in I}\typectx_i, \var\hastype\mtype \ctypes{\sum_{i\in I} k_i}  \tm \hastype [\ltype_i]_{i\in I}}
		{(\tderiv_i \exder \typectx_i, \var\hastype\mtype_i \ctypes {k_i} \tm \hastype \ltype_i)_{i\in I}}$$		
		with $\mtype = \uplus_{i\in I} \mtype_i$.
	\end{enumerate}
\end{proof}

In the following proof, we use the following grammar generating head contexts:
\[\begin{array}{r@{\hspace{.15cm}}r@{\hspace{.1cm}}l@{\hspace{.1cm}}ll}
	\textsc{Weak Head Contexts} & \hauxctx & \grameq & \ctxhole \mid \hauxctx\tm
	\\
	\textsc{Head Contexts} & \hctx & \grameq & \la\var\hctx \mid \hauxctx
	
	%\textsc{Notations} & \ \tovsc   & \defeq  & \tom \cup \toe \quad\ \  \tovscp    \ \defeq\   \tom \cup \toeabs
\end{array}\]
%%%%%%
%%%%%%
\gettoappendix{prop:ch-subject}
\begin{proof}
	\applabel{prop:ch-subject}
	\begin{enumerate}
		\item \begin{enumerate}
			\item \emph{Root step}, \ie $\tm = \capp{}{(\cla{}\var\tmthree)}\tmtwo\tohch \tmthree\isub\var\tmtwo=\tmp$.
			
			Then the last rule of the derivation $\tderiv$ is $\typingruleApp$ and is followed by the $\typingruleAbs$ rule on the left:
			
			$$	\infer[\typingruleApp]{\typectx \uplus \typectxtwo \ctypes {k} \capp{}{(\cla{}\var\tmthree)}\tmtwo \hastype \ltype}{ \infer{\typectx \ctypes {k_1} \cla{}\var\tmthree \hastype \mtype \typearrowp{\clr{}} \ltype}{\typectx, \var \hastype \mtype\ctypes {k_1} \tmthree \hastype \ltype} & \typectxtwo \ctypes{k_2} \tmtwo \hastype \mtype  }$$ where $k =  k_1 + k_2 + \Kr{\clr{}}{\clrtwo{}}$.
			
			By the Substitution \reflemma{types-substitution}, we have that there exists a derivation $\tderiv' \derives \typctx \uplus \typctxtwo \ctypes{k_1+k_2} \tmthree \isub\var\tmfour \hastype \ltype$ such that $\insize{\tderiv'}=\insize{\tderiv_s}+\insize{\tderiv_u}=\insize{\tderiv}-1$. We conclude by observing that if $\tm \tohnoint\tmp$ then $k = k_1 + k_2$, and otherwise if $\tm \tohint \tmp$ then $k_1+k_2=k-1$.
			
			\item \emph{Contextual closure}. We have two subcases:
			\begin{enumerate}
				\item Weak contexts, \ie $\tm= \capp{}{\hauxctxp{s}}\tmtwo\toh \capp{}{\hauxctxp{s'}}\tmtwo=\tm'$. Then the last rule of $\tderiv$ is $\typingruleApp$:
				\[
				\infer{\typectx \uplus \typectxtwo \ctypes {k} \capp{}{\hauxctxp{s}}\tmtwo \hastype \ltype}{\tderivtwo\derives
					\typectx \ctypes {k_1} \hauxctxp{s} \hastype \mtype \typearrowp{\clr{}} \ltype & 
					\typectxtwo \ctypes{k_2} \tmtwo \hastype \mtype } 
				\]
				where $k =  k_1 + k_2+ \Kr{\clr{}}{\clrtwo{}}$. By \ih, there exists a derivation $\tderivtwo'\derives\typectx\ctypes{k_1'} \hauxctxp{s'} \hastype \mtype \typearrowp{\clr{}} \ltype$, where $\insize{\tderivtwo'}=\insize{\tderivtwo}-1$ and 
				\[
				k_1' = \begin{cases}
					k_1, & if ~\hauxctxp{s} \tohnoint\hauxctxp{s'},
					\\
					k_1-1, & if~ \hauxctxp{s} \tohint \hauxctxp{s'}.
				\end{cases}
				\] Then, we can build the type derivation $\tderiv'$ as:
				\[
				\infer{\typectx \uplus \typectxtwo \ctypes {k'} \capp{}{\hauxctxp{s'}}\tmtwo \hastype \ltype}{\tderivtwo'\derives
					\typectx \ctypes {k_1'} \hauxctxp{s'} \hastype \mtype \typearrowp{\clr{}} \ltype & 
					\typectxtwo \ctypes{k_2} \tmtwo \hastype \mtype } 
				\]
				where clearly $\size{\tderiv'}=\size{\tderiv}-1$ and $k' = \begin{cases}
					k, & \textrm{if }~\capp{}{\hauxctxp{s}}\tmtwo \tohnoint\capp{}{\hauxctxp{s'}}\tmtwo,
					\\
					k-1, & \textrm{if }~ \capp{}{\hauxctxp{s}}\tmtwo \tohint \capp{}{\hauxctxp{s'}}\tmtwo.
				\end{cases}$
				\item Head contexts, \ie $\tm= \cla{}{\var}{H\ctxholep{s}}\toh \cla{}{\var}{H\ctxholep{s'}}=\tm'$. Analogous.
			\end{enumerate}
		\end{enumerate}
		\item \begin{enumerate}
			\item \emph{Root step}, \ie $\tm = (\cla{}\var\tmthree)\cdot^\clrtwo{}\tmtwo\rtobnoint \tmthree\isub\var\tmtwo=\tmp$.
			
			By the Anti-Substitution \reflemma{types-anti-substitution}, there exist
			$\tderivtwo \derives \typctx',\var:\mtype \ctypes { k_1}  \tmthree \hastype \ltype$ and $\rho \derives \typctx'' \ctypes {k_2}  \tmtwo \hastype \mtype$ such that $\typctx=\typctx'\uplus\typctx''$ and $k=k_1+k_2$.
			
			Then we can build the type derivation $\tderiv\derives \typctx \ctypes { k'}  \tm \hastype \ltype$ as follows:
			
			$$	\infer[\typingruleApp]{\typectx' \uplus \typectx'' \ctypes {k'} (\cla{}\var\tmthree)\cdot^\clrtwo{}\tmtwo \hastype \ltype}{ \infer{\typectx' \ctypes {k_1} \cla{}\var\tmthree \hastype \mtype \typearrowp{\clr{}} \ltype}{\sigma\derives\typectx', \var \hastype \mtype\ctypes {k_1} \tmthree \hastype \ltype} & \rho\derives\typectx'' \ctypes{k_2} \tmtwo \hastype \mtype  }$$
			where $k'=k+\Kr{\clr{}}{\clrtwo{}}$.
			\item \emph{Contextual closure}. We have two subcases:
			\begin{enumerate}
				\item Weak contexts, \ie $\tm= \capp{}{\hauxctxp{s}}\tmtwo\toh \capp{}{\hauxctxp{s'}}\tmtwo=\tm'$. Then the last rule of $\tderiv'$ is $\typingruleApp$:
				\[
				\infer{\typectx \uplus \typectxtwo \ctypes {k} \capp{}{\hauxctxp{s'}}\tmtwo \hastype \ltype}{\tderivtwo'\derives
					\typectx \ctypes {k_1} \hauxctxp{s'} \hastype \mtype \typearrowp{\clr{}} \ltype & 
					\typectxtwo \ctypes{k_2} \tmtwo \hastype \mtype } 
				\]
				where $k = k_1 + k_2 + \Kr{\clr{}}{\clrtwo{}}$. By \ih, there exists a derivation $\tderivtwo\derives\typectx\ctypes{k_1'} \hauxctxp{s} \hastype \mtype \typearrowp{\clr{}} \ltype$
				Then, we can build $\tderiv$ as follows:
				\[
				\infer{\typectx \uplus \typectxtwo \ctypes {k'} \capp{}{\hauxctxp{s}}\tmtwo \hastype \ltype}{\tderivtwo\derives
					\typectx \ctypes {k_1'} \hauxctxp{s} \hastype \mtype \typearrowp{\clr{}} \ltype & 
					\typectxtwo \ctypes{k_2} \tmtwo \hastype \mtype } 
				\]
				\item Head contexts, \ie $\tm= \cla{}{\var}{H\ctxholep{s}}\toh \cla{}{\var}{H\ctxholep{s'}}=\tm'$. Analogous.
			\end{enumerate}
		\end{enumerate}
	\end{enumerate}
\end{proof}

\begin{proposition}[Head normal forms are typable with zero weight]\label{prop:nf-are-typable-tight}
	Let $\htm\in\Lambdac$ be a $\hchsym$-normal form.
	Then, there exists a derivation $\typctx \ctypes{0} \htm \hastype \ltype$.
\end{proposition}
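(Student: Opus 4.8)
The plan is to build the required derivation by hand, exploiting the fact (recalled in Section~\ref{sec:checkers}, as a consequence of \reflemma{correspondence}) that a $\hchsym$-normal form has the shape $\htm = \clavec{\colr}{\var}\,\appp{\clrtwo{1}\cdots\clrtwo{k}}{\vartwo}{\tm_1\cdots\tm_k}$: a block $\lambda^{\colr_1}x_1\cdots\lambda^{\colr_n}x_n$ of abstractions, followed by a head variable $\vartwo$ applied, with application colors $\clrtwo{1},\dots,\clrtwo{k}$, to \emph{arbitrary} checkers terms $\tm_1,\dots,\tm_k$. The central observation is that the arguments $\tm_i$ need never be typed at all: rule $\typingruleMany$ instantiated with an empty family yields $\ctypes{0}\tm_i\hastype\zero$ for every $i$, so the arguments contribute nothing to the weight and nothing to the environment. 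This is exactly what will let the index collapse to $0$.

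First I would treat the neutral core $e := \appp{\clrtwo{1}\cdots\clrtwo{k}}{\vartwo}{\tm_1\cdots\tm_k}$, proving by induction on $k$ that for every linear type $\ltype$ there is a derivation
\[
	\vartwo\hastype[\,\zero\typearrowp{\clrtwo{1}}\cdots\zero\typearrowp{\clrtwo{k}}\ltype\,]\ \ctypes{0}\ e\hastype\ltype .
\]
For $k=0$ this is an instance of $\typingruleAx$. For $k>0$, write $e = e'\cdot^{\clrtwo{k}}\tm_k$ with $e'$ the core of length $k-1$; apply the induction hypothesis to $e'$ with target type $\zero\typearrowp{\clrtwo{k}}\ltype$, type $\tm_k$ by $\ctypes{0}\tm_k\hastype\zero$ via $\typingruleMany$, and combine the two premises with rule $\typingruleApp$. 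The point requiring care is that the colour of the arrow consumed by this application is $\clrtwo{k}$, i.e.\ exactly the colour of the application itself, so $\Kr{\clrtwo{k}}{\clrtwo{k}}=0$ and the index stays $0$; the environment is unchanged since the second premise has the empty environment.

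Then I would reintroduce the $n$ leading abstractions by applying $\typingruleAbs$ from the innermost outwards, starting from $\vartwo\hastype[\ell]\ctypes{0} e\hastype\vartype$ for a fixed atom $\vartype$, where $\ell=\zero\typearrowp{\clrtwo{1}}\cdots\zero\typearrowp{\clrtwo{k}}\vartype$. When abstracting $x_i$ I bind it with the multitype the current environment assigns to $x_i$; by construction the only variable occurring in that environment is $\vartwo$, so this multitype is $\zero$ unless $x_i=\vartwo$—which occurs for at most one $i$, the $x_i$ being pairwise distinct up to $\alpha$-conversion—in which case it is the singleton $[\ell]$. In every case $\typingruleAbs$ leaves the index at $0$. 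The resulting derivation has conclusion $\typctx\ctypes{0}\htm\hastype\mtype_1\typearrowp{\colr_1}\cdots\mtype_n\typearrowp{\colr_n}\vartype$, with $\typctx$ empty when $\vartwo\in\{x_1,\dots,x_n\}$ and $\typctx=\vartwo\hastype[\ell]$ otherwise, which is the statement. I do not expect a genuine obstacle here: the whole argument is a bookkeeping of arrow colours and of the position of the head variable, and the systematic use of empty multitypes for the arguments is precisely what keeps the weight equal to $0$.
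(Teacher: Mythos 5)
Your proof is correct and follows essentially the same route as the paper's: both type the head application spine by giving the head variable an arrow type with empty multiset sources and arrow colors matching the application colors (so each Kronecker contribution is $0$), type every argument with the empty multi type via a nullary $\typingruleMany$, and then re-abstract, distinguishing whether the head variable is bound or free. The only cosmetic difference is that you organize the spine as an explicit induction on $k$ where the paper simply stacks the $m$ application rules.
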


\begin{proof}\hfill
	% !TEX root = ../../main.tex
%\cben{
%Let $\ntm =\manyclam{k}{\var}\manycapp{h}{\var}{\tm}$. Set $\ltype \defeq\mtype_1 \typearrowpp{\clr{1}}{\clr{1}} \cdots \mtype_h \cben{\typearrowpp{\clr{h}}{\clr{h}}}{\typearrowpp{\clr{k}}{\clr{k}}} \vartype$ and $\ltype'\defeq[\emptytype \typearrowpp{\clrtwo{1}}{\clrtwo{1}} \cdots \emptytype \cben{\typearrowpp{\clrtwo{k}}{\clrtwo{k}}}{\typearrowpp{\clrtwo{h}}{\clrtwo{h}}} \vartype]$. We build the following derivation that types $\ntm$:
%	
%	
%	\[
%	\infer=[k \typingruleAbs]{\typctx \ctypes{0} \ntm \hastype \ltype}
%	{
%	\infer=[h \typingruleApp]{\cben{\typctx; (\var_i \hastype \mtype_i)_{1 \leq i\leq k}}{\var\hastype\mset{\ltype'}} \ctypes{0} \manycapp{h}{\var}{\tm} \hastype \vartype}
%		{\var\hastype[\ltype'] \ctypes{0} \var \hastype \ltype' & (\emptytypectx \ctypes{0} \tm_{i} \hastype [])_{1 \leq i\leq h}}
%	}
%	\]
%	
%	
%	Let us show that $(\typctx,\ltype)$ is a tight typing:
%	
%	\begin{enumerate}
%		\item \emph{Almost all types are empty:} $\typctx; (\var_i \hastype \mtype_i)_{1 \leq i\leq k}=\var\hastype\mset{\ltype'}$ hence all types in $\typctx$ and $\mtype_i$ are empty except for one.
%		
%		\item \emph{The head multi type only mono-chromatically erases:} $\ltype'$ matches the requirements.
%	\end{enumerate}
%	}{
Let $\htm =\manyclam{n}{\var}\manycapp{m}{\var}{\tm}$. Set $\ltype'\defeq\mset{\emptytype \typearrowp{\clrtwo{1}} \cdots \emptytype \typearrowp{\clrtwo{m}} \vartype}$. 
Firstly, we build the following derivation:
	\[
	\infer=[m \typingruleApp]{\var\hastype\mset{\ltype'} \ctypes{0} \manycapp{m}{\var}{\tm} \hastype \vartype}
			{\infer{\vdots}{
			\infer[\typingruleApp]{\var\hastype\mset{\ltype'} \ctypes{0} \capp{1}\var {\tm_1} \hastype \emptytype \typearrowp{\clrtwo{2}} \cdots \emptytype \typearrowp{\clrtwo{m}} \vartype
			}{
				\var\hastype[\ltype'] \ctypes{0} \var \hastype \ltype' & \infer[\typingruleMany]{\emptytypectx \ctypes{0} \tm_{1} \hastype \emptytype}{}}
			}
			& \infer[\typingruleMany]{\emptytypectx \ctypes{0} \tm_{m} \hastype \emptytype}{}
			}
	\]
Then, there are two cases:
\begin{itemize}
\item $\var = \var_i$ for some $i\in\set{1,\ldots,n}$. For the sake of simplicity, let us say that $i=n$, the other cases are analogous. Then, we obtain the following derivation:
	\[
\infer=[(n-1) \typingruleAbs]{\ctypes{0} \manyclam{n}{\var}\manycapp{m}{\var}{\tm} \hastype \emptytype \typearrowp{\clr{1}} \cdots \emptytype \typearrowp{\clr{n-1}}\mset{\ltype'}\typearrowp{\clr{n}} \vartype
}{
	\infer[\typingruleAbs]{ \ctypes{0} \cla{n}{\var_n}\manycapp{m}{\var}{\tm} \hastype \mset{\ltype'}\typearrowp{\clr{n}} \vartype
	}{
			\var\hastype\mset{\ltype'} \ctypes{0} \manycapp{m}{\var}{\tm} \hastype \vartype
	}
}
	\]

\item $\var \neq \var_i$ for all $i\in\set{1,\ldots,n}$.  Then, we obtain the following derivation:
	\[
\infer=[n\typingruleAbs]{\var\hastype\mset{\ltype'} \ctypes{0} \manyclam{n}{\var}\manycapp{m}{\var}{\tm} \hastype \emptytype \typearrowp{\clr{1}} \cdots \emptytype \typearrowp{\clr{n}} \vartype
}{
			\var\hastype\mset{\ltype'} \ctypes{0} \manycapp{m}{\var}{\tm} \hastype \vartype
}
	\]
\end{itemize}

%	}
\end{proof}

\gettoappendix{thm:soundandcomplete}
\begin{proof}
	\applabel{thm:soundandcomplete}
	
	\begin{enumerate}
		\item By induction on $\size{\tderiv}$ and case analysis on whether $\tm$ is head normal. If $\tm$ is head
		normal then $\tm$ is head normalizable in $0\leq k$ interaction steps. If $\tm\tohch\tmtwo$ then there two cases:
		\begin{itemize}
			\item $\tm\tohnoint\tmtwo$. Then by quantitative subject reduction
			(Prop.~\ref{prop:ch-subject}(1)), there is $\tderivtwo\derives\typectx\ctypes k\tmtwo \hastype\ltype$ such that $\insize{\tderiv} = \insize{\tderivtwo}+1$. By i.h., $\tmtwo$ is head normalizable in less than $k$ interaction steps. Then, the same holds for $\tm$.
			\item $\tm\tohint\tmtwo$. Then by quantitative subject reduction
			(Prop.~\ref{prop:ch-subject}(1)), there is $\tderivtwo\derives\typectx\ctypes {k-1}\tmtwo \hastype\ltype$ such that $\insize{\tderiv} = \insize{\tderivtwo}+1$. By i.h., $\tmtwo$ is head normalizable in less than $k-1$ interaction steps. Then, $\tm$ is head normalizable in less than $k$.
		\end{itemize}
		\item We have that $\tm \tohch^m \htm$, where $\hnf$ is a head normal form. By induction on $m$. 
		Cases:
		\begin{enumerate}
			\item If $m=0$, then $\tm = \htm$. Then we conclude by Proposition~\ref{prop:nf-are-typable-tight}. 
			
			\item If $m>0$, then $\tm \tohch \tmtwo \tohch^{m-1} \htm$. By \ih, there 
			exists 
			$\tderivtwo\derives\typectx \ctypes{k''} {\tmtwo}:{\ltype}$ and $\tmtwo \bshcol{k''}$. By subject expansion (Prop.~\ref{prop:ch-subject}(2)), there exists 
			$\tderiv\derives\typectx \ctypes{k'} {\tm}:{\ltype}$. By subject reduction (Prop.~\ref{prop:ch-subject}(1)), if $\tm \tohnoint \tmtwo$ then $k'=k''=k$, otherwise if $\tm \tohint \tmtwo$, then $k'= 1 + k''=k$.
		\end{enumerate}
	\end{enumerate}
\end{proof}

\gettoappendix{lem:whitercheaperrel}\applabel{lem:whitercheaperrel}
%!TEX root = ../../main.tex
\begin{proof}
	\begin{enumerate}[(i)]
	\item $(\Rightarrow)$ By an easy induction on a derivation of $\ltype\whiterpp{\posorneg}{0}\ltype$ (resp.\ $\typctx\whiterpp{\posorneg}{0}\typctx'$ or $\Pair{\typctx}{\ltype}\whiterpp{\posorneg}{0}\Pair{\typctx'}{\ltype'}$), using the fact that all coefficients must be 0. 
	
	$(\Leftarrow)$ Straightforward.
	\item $(\Rightarrow)$ We consider the case of positive polarity, the other case being symmetric. 
	By unpacking the definitions, we get $\typctx'\whiternegp{k_1}\typctx$, $\mtype'\whiternegp{k_2}\mtype$ and $\ltype'\whiterposp{k_3} \ltype$ with $k = k_1+k_2+k_3$. Conclude.
	
	$(\Leftarrow)$ Straightforward.	
	\item By (Inversion), it is sufficient to prove that $\whiterpp\posorneg{k}$ is transitive on linear types and on multi types. 
	We prove the following statements simultaneously ($\forall k_1,k_2\in\nat$):
	\begin{equation}\label{eq:transontypes}
	\ltype_1\whiterpp\posorneg{k_1}\ltype_2 \textrm{ and } \ltype_2\whiterpp\posorneg{k_2}\ltype_3 \qquad\!\!\Rightarrow\quad \ltype_1\whiterpp\posorneg{k_1+k_2} \ltype_3
	\end{equation}
	\begin{equation}\label{eq:transonmtypes}
	\mtype_1\whiterpp\posorneg{k_1}\mtype_2 \textrm{ and } \mtype_2\whiterpp\posorneg{k_2}\mtype_3 \quad\Rightarrow\quad \mtype_1\whiterpp\posorneg{k_1+k_2} \mtype_3
	\end{equation}	
	We proceed by induction on the derivations of $\ltype\whiterpp\posorneg{k_1}\ltype'$ and $\mtype\whiterpp\posorneg{k_1}\mtype'$, and call IH1 and IH2 the respective induction hypotheses. We split into cases depending on the last applied rule.

	\eqref{eq:transontypes}
	\ul{Base case}: $\alpha\whiterpp\posorneg0\alpha$, i.e.\ $\ltype_1 = \ltype_2= \alpha$ and $k_1 = 0$. By hypothesis, $\alpha\whiterpp\posorneg{k_2+0}\ltype_3$.
	
	\ul{Induction.} Case $\mtype'\typearrowp{\circ}\ltype'\whiterposp{k + l + 1} \mtype\typearrowp{\bullet}\ltype$, i.e.\ $\ltype_1 = \mtype'\typearrowp{\circ}\ltype'$, $\ltype_2 = \mtype\typearrowp{\bullet}\ltype$ with ${\tt a} = +$ and $k_1 = k + l + 1$. This holds because $\mtype'\whiternegp k \mtype$ and $\ltype'\whiterposp l\ltype$. Now, $\mtype\typearrowp{\bullet}\ltype\whiterposp{k_2}\ltype_3$ entails $\ltype_3 = \mtype''\typearrowp{\bullet}\ltype''$, for some  $\mtype'',\ltype''$ such that $\mtype\whiternegp{k'}\mtype''$ and $\ltype\whiterposp{l'}\ltype''$ with $k_2 = k'+l'$. 
	From $\mtype'\whiternegp k \mtype$ and $\mtype\whiternegp{k'}\mtype''$ we get $\mtype'\whiternegp{k+k'} \mtype''$ by IH2. 
	From $\ltype'\whiterposp l \ltype$ and $\ltype\whiterposp{l'}\ltype''$ we get $\ltype'\whiterposp{l+l'} \ltype''$ by IH1.
	Hence, we obtain $\mtype'\typearrowp{\circ}\ltype'\whiterposp{k+k'+l+l' + 1(= k_1+k_2+1)}\mtype''\typearrowp{\bullet}\ltype''$.

	Case $\mtype'\typearrowp{\colr}\ltype'\whiterposp{k + l } \mtype\typearrowp{\colr}{}\ltype$, i.e.\ $\ltype_1 = \mtype'\typearrowp{\colr}\ltype'$, $\ltype_2 = \mtype\typearrowp{\colr}{}\ltype$ with ${\tt a} = +$ and $k_1 = k + l$. This holds because $\mtype'\whiternegp k \mtype$ and $\ltype'\whiterposp l\ltype$.
	From $\mtype\typearrowp{\colr}{}\ltype\whiterposp{k_2}\ltype_3$ we obtain $\ltype_3 = \mtype''\typearrowp{\colrtwo}{}\ltype''$. 
	There are three subcases:
	\begin{itemize}
	\item $\colr = \colrtwo$. Then $\mtype\whiternegp{k'}\mtype''$ and $\ltype\whiterposp{l'}\ltype''$ with $k_2 = k'+l'.$
	From $\mtype'\whiternegp k \mtype$ and $\mtype\whiternegp{k'}\mtype''$ we get $\mtype'\whiternegp{k+k'} \mtype''$ by IH2. 
	From $\ltype'\whiterposp l \ltype$ and $\ltype\whiterposp{l'}\ltype''$ we get $\ltype'\whiterposp{l+l'} \ltype''$ by IH1.
	Hence $\mtype'\typearrowp{\colr}\ltype'\whiterposp{k+k'+l+l'(= k_1+k_2)}\mtype''\typearrowp{\colr}\ltype''$.
	\item $\colr = \bullet$ and $\colrtwo = \circ$. Vacuous, as no rule is applicable.
	\item $\colr = \circ$ and $\colrtwo = \bullet$. Then $\mtype\whiternegp{k'}\mtype''$ and $\ltype\whiterposp{l'}\ltype''$ with $k_2 = k' + l' +1.$
	From $\mtype'\whiternegp k \mtype$ and $\mtype\whiternegp{k'}\mtype''$ we get $\mtype'\whiternegp{k+k'} \mtype''$ by IH2. 
	From $\ltype'\whiterposp l \ltype$ and $\ltype\whiterposp{l'}\ltype''$ we get $\ltype'\whiterposp{l+l'} \ltype''$ by IH1.
	Hence, we obtain $\mtype'\typearrowp{\circ}\ltype'\whiterposp{k+k'+l+l' + 1(= k_1+k_2 + 1)}\mtype''\typearrowp{\bullet}\ltype''$.

	\end{itemize}
	Case $\mtype'\typearrowp{\colr}\ltype'\whiternegp{k+l} \mtype\typearrowp{\colr}\ltype$. Analogous to the previous case.

	\eqref{eq:transonmtypes} The only case is $[\ltype'_1,\dots,\ltype'_n] \whiterpp\posorneg{k_1+\cdots+k_n} [\ltype_1,\dots,\ltype_n]$, and it follows straightforwardly from the IH1. \qed
	\end{enumerate}	
\end{proof}

% !TEX root = ../main.tex
\section{Proofs of \refsect{bohm-trees} (\bohm trees)}

The dual of \emph{Kronecker's delta} $\Kr{\cdot}{\cdot}$ is defined by ($\forall \colr,\colrtwo\in\set{\circ\bullet}$):
\[
\Kr{\colr}{\colrtwo}:=\begin{cases}
1,&\textrm{ if }\colr=\colrtwo,\\
0,&\textrm{otherwise}.\\
\end{cases}
\]
and will be used often in the following proofs to express the dependency between the indices of the derivations and the colors appearing on the types.

\gettoappendix{l:eta-id} 
% !TEX spellcheck = en-US
% !TEX root = ../../main.tex
\begin{proof}\applabel{l:eta-id} We proceed by mutual induction and call HI1 and HI2 the corresponding induction hypotheses. 
	More precisely, we prove (1) by induction on a derivation of $\Gamma\ctypes{k}  \monoToBlue{\tm} : \ltype$ and (2) on a derivation of $\Gamma\ctypes{k} \monoToBlue{\tm} : \mtype$.
	Wlog we can consider $\tm$ to be in head normal form, because of subject reduction and--crucially--the fact that SR does not increase the size of the typing derivation. 
	(Note also that silent head SR does not change the index $k$ of the typing derivation.)
	\begin{enumerate}[(i)]
		\item 
		\begin{itemize}
			\item Base. $\tm = x$. Then $\Gamma = x\hastype[\ltype]$, 
			$\ltype \whiterneg_0 \ltype$,  and $\ltype \whiterpos_0 \ltype$. Take 
			$k' = k'' = 0$.
			\item
			Inductive case. $\tm = \lam z_1\ldots z_n.x\tmtwo_1\cdots 
			\tmtwo_n$ with $\tmtwo_i \slbtleq z_i \neq x$ and $z_i$ do not occur free in  the $\tmtwo_j$'s, for all $i \neq j$. 
			By definition of $\monoToBlue{(\cdot)}$, we have 
			\[
			\monoToBlue{\tm} = 
			\manyblam{n}{z}\bapp{\bapp{\bapp{x}{\monoToBlue{\tmtwo_1}}} 
			{\cdots}}{\monoToBlue{\tmtwo_n}}
			\] 
			whence $\ltype$ must have shape
			$\ltype = 
			\mtype'_1\typearrowp{\blueclr}\cdots\typearrowp{\blueclr}\mtype'_n\typearrowp{\blueclr}
			 \ltype_0$. Thus:
			\[
			\infer{\Gamma\ctypes{\sum_{i=1}^n (\Kr{\clr{i}}{\blueclr} + 
			k_i)} 
			\manyblam{n}{z}\bapp{\bapp{\bapp{x}{\monoToBlue{\tmtwo_1}}}{\cdots}}{\monoToBlue{\tmtwo_n}}
			 : 
			\mtype'_1\typearrowp{\blueclr}\cdots\mtype'_n\typearrowp{\blueclr}
			 \ltype_0}{
				\infer{\sum_{j=0}^n\Gamma_j
				\ctypes{\sum_{i=1}^n (\Kr{\clr{i}}{\blueclr} + k_i)}  
				\bapp{\bapp{\bapp{x}{\monoToBlue{\tmtwo_1}}}{\cdots}}{\monoToBlue{\tmtwo_n}}
				 : \ltype_0}{
					\Gamma_0 \ctypes{0} x : 
					\mtype_1\typearrowp{\clr{1}}\cdots\mtype_n\typearrowp{\clr{n}}
					 \ltype_0
					&
					(\Gamma_i \ctypes{k_i} \monoToBlue{\tmtwo_i} : 
					\mtype_i)_{i\leq n}
				}
			}
			\]
			with 
			$\sum_{j=0}^n\Gamma_j = \Gamma, z_1 :\mtype'_1,\dots,z_n:\mtype'_n $.

			From the axiom $\Gamma_0\ctypes{0} x : 
			\mtype_1\typearrowp{\clr{1}}\cdots\mtype_n\typearrowp{\clr{n}} 
			\ltype_0$, we get that $\Gamma_0 = x\hastype[\ltype']$ for 
			$\ltype' = 
			\mtype_1\typearrowp{\clr{1}}\cdots\mtype_n\typearrowp{\clr{n}} 
			\ltype_0$.
			By applying IH2 to $\Gamma_i \ctypes{k_i} 
			\monoToBlue{\tmtwo_i} : \mtype_i$, we obtain $\Gamma_i = z_i : \mtype'_i$, and the existence of $\mtypetwo_i$ such that 
			$\mtypetwo_i\whiterneg_{k'_i} \mtype'_i$ and 
			$\mtypetwo_i\whiterpos_{k''_i} \mtype_i$ with $k_i = k'_i+k''_i$. 
			This entails that $\Gamma = \Gamma_0 = x\hastype[\ltype']$.
			We conclude by taking 
			$\ltype'' = 
			\mtypetwo_1\typearrowp{\clr{1}}\cdots\mtypetwo_n\typearrowp{\clr{n}}
			 \ltype_0$ since
			\[
			\infer{\mtypetwo_1\typearrowp{\clr{1}}\cdots\mtypetwo_n\typearrowp{\clr{n}}
			 \ltype_0\whiterneg_{\sum_i 
			k''_i}\mtype_1\typearrowp{\clr{1}}\cdots\mtype_n\typearrowp{\clr{n}}
			 \ltype_0}{
				\mtypetwo_i\whiterpos_{k''_i} \mtype_i
				&
				\ltype_0\whiterneg_0\ltype_0
			}       
			\]
			and
			\[
			\infer{\mtypetwo_1\typearrowp{\clr{1}}\cdots\mtypetwo_n\typearrowp{\clr{n}}
			 \ltype_0\whiterpos_{\sum_i (\Kr{\clr{i}}{\blueclr} + 
			k'_i)}\mtype'_1\typearrowp{\blueclr}\cdots\mtype'_n\typearrowp{\blueclr}
			 \ltype_0}{
				\mtypetwo_i\whiterneg_{k'_i} \mtype'_i
				&
				\ltype_0\whiterpos_0\ltype_0
			}       
			\]
			
			Since $k'_i+k''_i = k_i$, we conclude $\sum_i (\Kr{\clr{i}}{\blueclr} + k_{i}) = \sum_i (\Kr{\clr{i}}{\blueclr} + k'_{i}) + \sum_ik''_{i}$.
			
		\end{itemize}
		
		\item 
		\begin{itemize}
			\item Base case. Namely, $\mtype = [\,]$ then $\Gamma = x\hastype [\,]$ and $k=0$. 
			Taking $k'=k''=0$, we get $[\,]\whiterpos_0 [\,]$ and 
			$[\,]\whiterneg_0 [\,]$.
			
			\item Induction case. Straightforward, from the IH1.\qed
		\end{itemize}
	\end{enumerate}
\end{proof}

\gettoappendix{lem:whitercheaper}
% !TEX spellcheck = en-US
% !TEX root = ../../main.tex

\begin{proof}\applabel{lem:whitercheaper} We proceed by induction on a derivation of $\typctx\ctypes{k} 
\monoToBlue{\tm} : \ltype$. Since $\tm$ is typable, it has a head normal 
form.
	By SR/SE, we can assume $\tm$ in head normal form without loosing 
	generality. 
	
	\begin{itemize}
		\item Base case $\tm = \lam x_1\dots x_n.y$. Since $\tm$ 
	contains no possibly infinite $\eta$-expansions, $\tm\slbtleq \tmtwo$ 
	entails 
	$\tm=_\beta \tmtwo$ and we are done by SR/SE on silent steps. 
	
	\item Induction case. Since $\tm\slbtleq \tmtwo$ we must have 
	\[ \tm = 
	\lam 
	x_1\dots x_{n}z_1\dots z_r.y\tm_1\cdots \tm_{m}\tmthree_1\cdots 
	\tmthree_r\textrm{ and }\tmtwo =_\beta 
	\lam x_1\dots x_{n}.y\tmtwo_1\cdots \tmtwo_{m}
	\] with $\tm_j\slbtleq 
	\tmtwo_j$, 
	$\tmthree_i\slbtleq z_i$ and $z_i$ fresh for $y\vec \tmtwo$.
	Because of the structure of $\tm$, $\ltype$ needs 
	to have the shape 
	$\ltype = \mtype_1\typearrowp{\bullet}\cdots\mtype_n\typearrowp{\bullet} 
	\mtype'_1\typearrowp{\bullet}\cdots\mtype'_r\typearrowp{\bullet}\ltype_0$.
	Moreover, there exists a decomposition $\typctx,\vec x : \vec \mtype = y : 
	[\ltype_1] + \sum_j\Delta_j$ for $\ltype_1 = 
	\mtypetwo_1\typearrowp{\clr{1}}\cdots\mtypetwo_m\typearrowp{\clr{m}} 
	\mtype''_1\typearrowp{\clrtwo{1}}\cdots\mtype''_r\typearrowp{\clrtwo{r}}\ltype_0$.
	 We have:
	\[
	\resizebox{\textwidth}{!}{%	
	\infer{\typctx\ctypes{\sum_j(k_j+\Kr{\clr{j}}{\bullet}) + \sum_i(e_i 
	+\Kr{\clrtwo{i}}{\bullet})}\lam_\bullet \vec x\vec 
	z.y\bullet\monoToBlue{\tm}_1\bullet\cdot\bullet 
	\monoToBlue{\tm}_{m}\bullet\monoToBlue{\tmthree}_1\bullet\cdot\bullet 
	\monoToBlue{\tmthree}_r : 
	\mtype_1\typearrowp{\bullet}\cdots\mtype_n\typearrowp{\bullet} 
	\mtype'_1\typearrowp{\bullet}\cdots\mtype'_r\typearrowp{\bullet}\ltype_0}{
		\infer{
			\typctx, \vec x : \vec \mtype \ctypes{\sum_j 
			(k_j+\Kr{\clr{j}}{\bullet}) + \sum_i(e_i 
			+\Kr{\clrtwo{i}}{\bullet})} 
			\lam^\bullet \vec 
			z.y\bullet\monoToBlue{\tm}_1\bullet\cdot\bullet 
			\monoToBlue{\tm}_{m}\bullet\monoToBlue{\tmthree}_1\bullet\cdot\bullet
			\monoToBlue{\tmthree}_r: 
			\mtype'_1\typearrowp{\bullet}\cdots\mtype'_r\typearrowp{\bullet}\ltype_0
		}{
			\infer{\typctx, \vec x : \vec \mtype, \vec z : 
			\vec\mtype'\ctypes{\sum_j(k_j+\Kr{\clr{j}}{\bullet}) + 
			\sum_i(e_i + \Kr{\clrtwo{i}}{\bullet})}
				y\bullet\monoToBlue{\tm}_1\bullet\cdot\bullet 
				\monoToBlue{\tm}_{m}\bullet\monoToBlue{\tmthree}_1\bullet\cdot\bullet
				\monoToBlue{\tmthree}_r: \ltype_0 }{
				\infer{y : [\ltype_1] + \sum_j\Delta_j 
				\ctypes{\sum_j(k_j+\Kr{\clr{j}}{\bullet})}
					y\bullet\monoToBlue{\tm}_1\bullet\cdot\bullet 
					\monoToBlue{\tm}_{m} :  
					\mtype''_1\typearrowp{\clrtwo{1}}\cdots\mtype''_r 
					\typearrowp{\clrtwo{r}}\ltype_0}{
					y:[\ltype_1]\ctypes{0} y : 
					\mtypetwo_1\typearrowp{\clr{1}}\cdots\mtypetwo_m 
					\typearrowp{\clr{m}} 
					\mtype''_1\typearrowp{\clrtwo{1}}\cdots\mtype''_r\typearrowp{\clrtwo{r}}\ltype_0
					&
					\Delta_j\ctypes{k_j} \monoToBlue \tm_j : \mtypetwo_j
				}
				&
				z_i : \mtype'_i\ctypes{e_i}\monoToBlue{\tmthree}_i : \mtype''_i
			}
		}
	}
	}
	\]
	From $\Delta_j\ctypes{k_j} \monoToBlue \tm_j : \mtypetwo_j$, it follows by 
	induction hypothesis that there are derivations of $\Delta'_j\ctypes{k'_j} 
	\monoToBlue \tmtwo_j : \mtypetwo'_j$ with $\Pair{\Delta'_j}{\mtypetwo'_j} 
	\whiterposp {d_j} \Pair{\Delta_j}{\mtypetwo_j}$ where $0\leq d_j = 
	k_j - k_j'$ for each $1\leq j \leq m$. 
	Since $\tmthree_i\slbtleq z_i$, and $z_i : 
	\mtype'_i\ctypes{e_i}\monoToBlue{\tmthree}_i : \mtype''_i$, by applying 
	Point 2 of \reflemma{eta-id} we get $\Pair{z_i : 
	\mtype'''_i}{\mtype'''_i}\whiterposp{e_i}\Pair{z_i : \mtype'_i}{\mtype_i''} 
	$, for some 
	$\mtype'''_i$. In particular, we have:
	\begin{itemize}
		\item $\mtype'''_i \whiternegp{e'_i} \mtype'_i$ and 
		$\mtype'''_i\whiterposp{e''_i} 
		\mtype''_i$ for some $e'_i + e''_i = e_i$.
		\item $\Delta'_j\whiternegp{d^1_j}\Delta_j$ and 
		$\mtypetwo_j'\whiterposp{d^2_j} 
		\mtypetwo_j$ for some $d^1_j + d^2_j = d_j$.
	\end{itemize}
	
	So, we can construct the following derivation for $\tmtwo$, where 
	$\typctx',\vec x : \vec \mtype^* = y : [\ltype_2] + \sum_j\Delta'_j$ for
	$\ltype_2 = 
	\mtypetwo'_1\typearrowp{\bullet}\cdots\mtypetwo'_m\typearrowp{\bullet} 
	\mtype'''_1\typearrowp{\clrtwo{1}}\cdots\mtype'''_r\typearrowp{\clrtwo{r}}\ltype_0$.
	\[
	\resizebox{\textwidth}{!}{%	
	\infer{\typctx'\ctypes{(\sum_j k'_j + \Kr{\clr{j}}{\bullet})}
		\lam^\bullet x_1\dots x_n.y\bullet 
		\monoToBlue{\tmtwo_1}\bullet\cdot\bullet\monoToBlue \tmtwo_m : 
		\mtype^*_1\typearrowp{\bullet}\cdots\mtype^*_n\typearrowp{\bullet} 
		\mtype'''_1\typearrowp{\clrtwo{1}}\cdots\mtype'''_r\typearrowp{\clrtwo{r}}\ltype_0}{
		\infer{\typctx',\vec x : \vec \mtype^*\ctypes{(\sum_j k'_j + 
		\Kr{\clr{j}}{\bullet})} y\bullet 
		\monoToBlue{\tmtwo_1}\bullet\cdot\bullet\monoToBlue \tmtwo_m : 
		\mtype'''_1\typearrowp{\clrtwo{1}}\cdots\mtype'''_r\typearrowp{\clrtwo{r}}\ltype_0}{
			y : [\ltype_2]\ctypes{0} y : 
			\mtypetwo'_1\typearrowp{\clr{1}}\cdots\mtypetwo'_m\typearrowp{\clr{m}}
			 \mtype'''_1\typearrowp{\clrtwo{1}}\cdots\mtype'''_r 
			 \typearrowp{\clrtwo{r}}\ltype_0
			&
			\Delta'_j\ctypes{k'_j} \monoToBlue \tmtwo_j : \mtypetwo'_j
		}
	}
	}
	\]
	We must show that there exists $0\leq p = 
	\sum_j(k_j -k'_j)
	+ \sum_i(e_i +\Kr{\clrtwo{i}}{\bullet})$ such that
	\[
	\begin{array}{lcl}
		&&\Pair{\typctx'}{\mtype^*_1\typearrowp{\bullet}\cdots\mtype^*_n\typearrowp{\bullet}
		\mtype'''_1\typearrowp{\clrtwo{1}}\cdots\mtype'''_r\typearrowp{\clrtwo{r}}\ltype_0}\\
		&\whiterposp 
		p&\Pair{\typctx}{\mtype_1\typearrowp{\bullet}\cdots\mtype_n\typearrowp{\bullet}
		\mtype'_1\typearrowp{\bullet}\cdots\mtype'_r\typearrowp{\bullet}\ltype_0}\\
	\end{array}
	\]
	We need to perform some calculations. We have: 
	\[
	\sum_j \Delta'_j \whiternegp{(\sum_j d^1_j)}\sum_j\Delta_j\qquad\textrm{ and 
	}\qquad 
	\ltype_2 \whiternegp{(\sum_j d^2_j + \sum_i e''_i)} \ltype_1
	\]
	Since $\typctx',\vec x : \vec\mtype^* = y : [\ltype_2] + 
	\sum_j\Delta'_j$ and 
	$\typctx,\vec x : \vec \mtype = y : [\ltype_1] + \sum_j\Delta_j$ and $\sum_j 
	d_j = \sum_j(d^1_j+d^2_j)$, we get:
	\begin{equation}\label{eq:context}
		\typctx',\vec x : \vec\mtype^* \whiternegp{\sum_j d_j+ \sum_i e''_i} 
		\typctx,\vec x : \vec \mtype
	\end{equation}
	From $\mtype'''_i \whiternegp{e'_i} \mtype'_i$, we get:
	\begin{equation}\label{eq:type}
		\mtype'''_1\typearrowp{\clrtwo{1}}\cdots\mtype'''_r\typearrowp{\clrtwo{r}}\ltype_0
		 \whiterposp{\sum_i(\Kr{\clrtwo{i}}{\bullet}
		 + 
		 e'_i)}\mtype'_1\typearrowp{\bullet}\cdots\mtype'_r\typearrowp{\bullet}\ltype_0
	\end{equation}
	Therefore, we are going to take
	\begin{align}
	p &:= (\sum_j d_j) + \sum_i e''_i + \sum_i(\Kr{\clrtwo{i}}{\bullet} + e'_i) = 
	\sum_j d_j+ \sum_i(e_i+\Kr{\clrtwo{i}}{\bullet})\\
	& = \sum_j (k_j -k'_j)+ \sum_i(e_i+\Kr{\clrtwo{i}}{\bullet})
	\end{align}
	
	Then, we have:
	\[
	\resizebox{\textwidth}{!}{%
	\infer{\Pair{\typctx'}{\mtype^*_1\typearrowp{\bullet}\cdots\mtype^*_n\typearrowp{\bullet}
			\mtype'''_1\typearrowp{\clrtwo{1}}\cdots\mtype'''_r\typearrowp{\clrtwo{r}}\ltype_0}
		\whiterposp 
		p\Pair{\typctx}{\mtype_1\typearrowp{\bullet}\cdots\mtype_n\typearrowp{\bullet}
			\mtype'_1\typearrowp{\bullet}\cdots\mtype'_r\typearrowp{\bullet}\ltype_0}}{
		\infer{\Pair{\typctx',\vec{x}:\mtype^*}{
				\mtype'''_1\typearrowp{\clrtwo{1}}\cdots\mtype'''_r\typearrowp{\clrtwo{r}}\ltype_0}
			\whiterposp 
			p\Pair{\typctx,\vec{x}:\vec{\mtype}}{
				\mtype'_1\typearrowp{\bullet}\cdots\mtype'_r\typearrowp{\bullet}\ltype_0}}{
		\infer{\typctx',\vec{x}:\mtype^* \whiternegp {\sum_j d_j+ \sum_i 
		e''_i} 
		\typctx,\vec{x}:\vec{\mtype}  }{\text{Eq. \ref{eq:context}}}	&& 
		\infer{\mtype'''_1\typearrowp{\clrtwo{1}}\cdots\mtype'''_r\typearrowp{\clrtwo{r}}\ltype_0
	\whiterposp {\sum_i(\Kr{\clrtwo{i}}{\bullet}
		+ 
		e'_i)}
	\mtype'_1\typearrowp{\bullet}\cdots\mtype'_r\typearrowp{\bullet}\ltype_0}
	 {\text{Eq. \ref{eq:type}}}}
	}
	}
	\]

%	We calculate the cost:
%	\[
%	\begin{array}{rll}
%		&\sum_j(k_j+\Kr{\clr{j}}{\bullet}) + \sum_i(e_i 
%+\Kr{\clrtwo{i}}{\bullet})\\
%		=&\sum_j(k'_j+d_j+\Kr{\clr{j}}{\bullet}) + \sum_i(e_i 
%+\Kr{\clrtwo{i}}{\bullet})&\textrm{by }k_j = k'_j+d_j\\
%		=&(\sum_j k'_j + \Kr{\clr{j}}{\bullet}) + \sum_j d_j+ 
%\sum_i(e_i+\Kr{\clrtwo{i}}{\bullet})\\
%		=&(\sum_j k'_j + \Kr{\clr{j}}{\bullet}) + p\\
%	\end{array}
%	\]
%	From the above calculations we get 
%	\[
%	\begin{array}{lcl}
%		&&\Tr{\typctx',\vec x : \vec 
%\mtype'}{\rho_1\typearrowp{\clrtwo{1}}\cdots\rho_r\typearrowp{\clrtwo{r}}\tau}{(\sum_j
% k'_j + \Kr{\clr{j}}{\bullet})}\\
%		&\le_{\sum_j d_j+ 
%\sum_i(e_i+\Kr{\clrtwo{i}}{\bullet})}&\Tr{\typctx,\vec x : \vec 
%\mtype}{\rho'_1\typearrowp{\bullet}\cdots\rho'_r\typearrowp{\bullet}\tau}{\sum_j(k_j+\Kr{\clr{j}}{\bullet})
% + \sum_i(e_i +\Kr{\clrtwo{i}}{\bullet})}\\
%	\end{array}
%	\]
%	By Remark~\ref{aboutwc}\eqref{thisone}, we conclude:
%	\[
%	\begin{array}{lcl}
%		
%&&\Tr{\typctx'}{\mtype'_1\typearrowp{\bullet}\cdots\mtype'_n\typearrowp{\bullet}
% 
%\rho_1\typearrowp{\clrtwo{1}}\cdots\rho_r\typearrowp{\clrtwo{r}}\tau}{(\sum_j
% k'_j + \Kr{\clr{j}}{\bullet})}\\
%		&\le_{\sum_j d_j+ 
%\sum_i(e_i+\Kr{\clrtwo{i}}{\bullet})}&\Tr{\typctx}{\mtype_1\typearrowp{\bullet}\cdots\mtype_n\typearrowp{\bullet}
% 
%\rho'_1\typearrowp{\bullet}\cdots\rho'_r\typearrowp{\bullet}\tau}{\sum_j(k_j+\Kr{\clr{j}}{\bullet})
% + \sum_i(e_i +\Kr{\clrtwo{i}}{\bullet})}\\
%	\end{array}
%	\]
	
	\end{itemize}
	This concludes the proof. \qed
\end{proof}

Checkers head reduction inherits various expected properties of head reduction, such as determinism. In particular, we shall use the following immediate substitutivity property.
\begin{lemma}[Substitutivity {\cite[Lemma 3.5]{InteractionEquivalence}}]
	\label{l:color-substitutivity}
	Let $\tm,\tmp,\tmtwo\in\Lambdac$ and $\relation{R}\in\{\bnointsym,\bintsym,\bchsym,\hnointsym,\hintsym,\hchsym\}$. If $\tm\Rew{\relation{R}}\tmp$ then $\tm\isub\var\tmtwo\Rew{\relation{R}}\tmp\isub\var\tmtwo$. 
\end{lemma}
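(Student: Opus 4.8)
The plan is to prove all six cases at once by a single induction on the derivation of $\tm\Rew{\relation{R}}\tmp$ (equivalently, on the structure of the context, resp.\ head context, witnessing the step), exploiting the fact that capture-avoiding substitution is \emph{colour-blind}: $\tm\isub\var\tmtwo$ has exactly the same tree of constructors as $\tm$, with each free occurrence of $\var$ replaced by $\tmtwo$, so no constructor changes colour. Hence substituting along the hole of a (head) context produces a (head) context of the very same shape, and a silent root redex is sent to a silent root redex and an interaction root redex to an interaction root redex. This is what lets the silent/interaction/general and the full/head cases collapse into one argument; the two ``$\mathrm{ch}$'' statements then follow since $\tobch=\tobnoint\cup\tobint$ and $\tohch=\tohnoint\cup\tohint$.

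First I would record the standard substitution lemma, which holds verbatim in $\Lambdac$ because substitution ignores colours: if $\var\neq\vartwo$ and $\vartwo\notin\FV\tmtwo$, then $\tm_1\isub\vartwo{\tm_2}\isub\var\tmtwo=\tm_1\isub\var\tmtwo\isub\vartwo{\tm_2\isub\var\tmtwo}$, by a routine induction on $\tm_1$. For the base case of the main induction, suppose $\tm=\appp{\colrtwo}{(\lap{\colr}{\vartwo}{\tm_1})}{\tm_2}$ fires a root rule (silent if $\colr=\colrtwo$, interaction if $\colr\neq\colrtwo$), so $\tmp=\tm_1\isub\vartwo{\tm_2}$. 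Choosing the bound variable by $\alpha$-conversion so that $\vartwo\neq\var$ and $\vartwo\notin\FV\tmtwo$, we get $\tm\isub\var\tmtwo=\appp{\colrtwo}{(\lap{\colr}{\vartwo}{\tm_1\isub\var\tmtwo})}{\tm_2\isub\var\tmtwo}$, which fires the \emph{same} root rule (same colours) to $\tm_1\isub\var\tmtwo\isub\vartwo{\tm_2\isub\var\tmtwo}$, and this equals $\tm_1\isub\vartwo{\tm_2}\isub\var\tmtwo=\tmp\isub\var\tmtwo$ by the substitution lemma.

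For the inductive step I would run through the context formers: $\lap{\colr}{\varthree}{(\cdot)}$, $\appp{\colr}{(\cdot)}{\tm_3}$ and $\appp{\colr}{\tm_3}{(\cdot)}$ for the full $\beta$-relations, and, for the head relations, only the head-context shape (a block of abstractions followed by the hole applied to finitely many arguments). In each case, after $\alpha$-renaming the context's binders away from $\var$ and $\FV\tmtwo$, substitution distributes over the top constructor, leaves the hole in place, and preserves the shape of the surrounding context — in particular a head context stays a head context — so applying the induction hypothesis to the sub-step inside the hole and re-applying the context former yields $\tm\isub\var\tmtwo\Rew{\relation{R}}\tmp\isub\var\tmtwo$. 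For the head relations this step is in fact one-shot: a decomposition $\tm=H\ctxholep{\tm_0}$ into a head context and a root redex is carried to $\tm\isub\var\tmtwo=H'\ctxholep{\tm_0\isub\var\tmtwo}$ with $H'$ again a head context.

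I do not expect a genuine obstacle here — the proof is a bookkeeping induction. The only points that need care are the iterated-substitution lemma and the systematic $\alpha$-renaming of binders, both in the contracted redex and in the enclosing context, away from $\var$ and the free variables of $\tmtwo$; the one fact worth stating explicitly is the colour-blindness of substitution, which is precisely what makes a single argument serve for all six relations.
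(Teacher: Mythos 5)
Your proof is correct, and it is the standard argument: induction on the (head) context witnessing the step, with the root case handled by the substitution-permutation lemma, using throughout that capture-avoiding substitution is colour-blind and hence preserves both the silent/interaction classification of the root redex and the shape of the enclosing (head) context. The paper itself gives no proof here but merely cites \cite[Lemma~3.5]{InteractionEquivalence}; your argument is a self-contained reconstruction along the expected lines, so there is nothing in the paper to compare it against beyond the citation.
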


%Completing \reflemma{correspondence} is the fact that any reduction happening in a washed term can happen in its painted version.
%
%\begin{lemma}[{\cite[Proposition~3.14]{InteractionEquivalence}}]
%	\label{l:washed-reductions-can-happen-unwashed}
%	Let $\tm\in\Lambdac$ and $\tmtwo\in\Lambda$ and $\relation{R}\in\{\beta,\hsym\}$.
%	If $\wash\tm\Rew{\relation R}\tmtwo$ then there exists $\tmp\in\Lambdac$ such that $\wash{\tmp}=\tmtwo$ and $\tm\Rew{\relation R_{\bullet\circ}}\tmp$.
%\end{lemma}

\paragraph{Terminology and Notations for the Proof}
As customary in mathematical analysis, we say that a relation $\relation{P}(-)$ holds for all $K\in\nat$ \emph{large enough} whenever there exists a $K'\in\nat$ such that $\relation{P}(K)$ holds for all $K\ge K'$. We also use the notation $tu^{\sim n}$ for $(\cdots((tu)u)\cdots)u$ ($n$ times). 
Also, we say that two head normal forms $h,h'$ are \emph{spine equivalent}, written $h \speq h'$, if there are $n,k\ge 0$ such that:
\begin{equation}\label{eq:t-le-u}
	h= \las{\var}{n} \vartwo \,\apps{\tm}{k} \quad\textrm{ and }\quad
	h'= \lambda \var_1\ldots \var_{n}. \vartwo \,\apps{\tmtwo}{k}.
\end{equation}
On closed \lam-terms, the B\"ohm out technique amounts to applying the \emph{tupler} $\Tupler{n}$ and the \emph{$i$-th selector $\Proj{n}{i}$} defined as follows:
\begin{center}$
\begin{array}{rrllllllll}
\textsc{$n$-tuples} & \Tuple{t_1,\dots,t_n} & \defeq & \lam x.xt_1\cdots t_n, & \mbox{ with $x$ fresh};
\\
\textsc{Tuplers}& \Tupler n &\defeq & \lam x_1\ldots x_n.\Tuple{x_1,\dots,x_n};
\\
\textsc{Selectors} & \Proj{n}{i} & \defeq & \lam x_1\ldots x_n.x_i,&\textrm{with } 1\le i \le n.
	\end{array}$
\end{center}
So, the tupler $\Tupler n$ takes $n$ arguments $\tm_1,\ldots,\tm_n$ and returns the tuple $\Tuple{\tm_1,\dots,\tm_n}$, while the selector $\Proj{n}{i}$ takes $n$ arguments $\tm_1,\dots,\tm_n$ and returns the $i$-th argument $\tm_i$.
Note that $\Proj{1}1 = \comb{I}$. Then, $\Tupler nt_1\cdots t_nu \toh^* ut_1\cdots t_n$ and $\Proj{n}{i}t_1\cdots t_n\toh^* t_i$, whence we have the following combined \emph{extraction property}:
\begin{equation}
\begin{array}{ccc}
\Tupler nt_1\cdots t_n\Proj{n}{i} &\toh^*& t_i. 
\end{array}
\label{eq:extraction-property}
\end{equation}

\gettoappendix{l:separating-eta-red}

% !TEX spellcheck = en-US
% !TEX root = ../../main.tex
\begin{proof} 
\applabel{l:separating-eta-red} 
We prove a stronger statement, \ie that there exist closed terms $\vec s\in\Lambda$ such that, for all $\vec y$ containing $\FV{t}\cup\FV{u}$  and for all $K\in\nat$ large enough, the following holds:
\begin{center}$
\rapp{
	\monoToBlue{t}\isub{\vec y}{\monoToRed{\Tupler{K}}}
}{\monoToRed{\vec{s}}}\bshcol{i}
\quad\text{ and }\quad
\rapp{
	\monoToBlue{u}\isub{\vec y}{\monoToRed{\Tupler{K}}}
}{\monoToRed{\vec{s}}}\bshcol{i'}\textrm{ with }i' > i.
$\end{center}
Given variables $\vec x$ and a \lam-term $t$ we write $\sigma_{\vec x}$ for $\isub{\vec x}{\monoToRed{\Tupler{K}}}$, and $t^{\sigma_{\vec x}}$ for $ t\isub{\vec x}{\monoToRed{\Tupler{K}}}$.

Note that $t \not\etaredbtleq u$ is only possible if $t\bsh{}$. Moreover, $t\bsh{}\!\!h$ and $t\etabtle u$ entail $u\bsh{}\!\!h'$, for some $h'$.
We proceed by induction on the length of a minimal path $\delta\in\nat^*$ such that $t\restr_\delta\ \not\speq u\restr_\delta$.

\underline{Base case} $\delta = \emptyseq$, \ie $h\ \not\speq h'$. Then $t\etabtle u$ is only possible if the amount of spine abstractions and applications in $h,h'$ can be matched via $\eta$-expansions appearing in $h'$, that is:
\begin{center}
	\begin{tabular}{c c}
	\multicolumn{2}{c}{$
	t \to^*_\head h = \lam x_1\dots x_n.y\,t_1\cdots t_k
	$}
	\\
	and &
	$
	u \to^*_\head h'= \lam x_1\dots x_nz_1\dots z_m.y\,u_1\cdots u_{k+m}
$
\end{tabular}
\end{center}
for $n,k\ge0$ and $m>0$. 
%(The symmetrical case where $h$ has more abstractions/applications than $h'$ is omitted because analogous.)
There are two subcases to consider, depending on whether $y$ is free.
\begin{enumerate}
\item \emph{$y$ is free}, \ie $y\in\vec y$. Take any $K \ge k+m$, and  empty $\vec s$. For $t$, we have:
{ \begin{center}$
	\begin{array}{c|ll}
	\intsym\textsc{-Steps}&\textsc{Terms and $\nointsym$-steps}
	\\[2pt]
	\hline
	&
	\monoToBlue{t}\isub{\vec y}{\monoToRed{\Tupler{K}}}
	\ \ \tohnoint^* \ \ 
	\monoToBlue{h}\isub{\vec y}{\monoToRed{\Tupler{K}}},\phantom{X^{X^{X^{}}}}\hfill\textrm{ by L.~\ref{l:correspondence}(i) \&  \reflemmaeq{color-substitutivity}}
		\\[2pt]
	=	
	&
	\bapp{\bapp{\bapp{
		\manyblam{n}{x}\monoToRed{\Tupler{K}}
	}{{\monoToBlue{t_1}}^{\sigma_{\vec y}}}}{\cdots}}{{\monoToBlue{t_k}}^{\sigma_{\vec y}}}
	\\[2pt]	
	\tohint^k	
	&
	\manyblam{n}{x}\manyrlam[k+1]{K}{w}\Tuple{{\monoToBlue{t_1}}^{\sigma_{\vec y}},\dots,{\monoToBlue{t_k}}^{\sigma_{\vec y}},w_{k+1},\dots,w_K}_{\redclr}\\[2pt]
	\end{array}
$\end{center}}
where $\Tuple{-,\dots,-}_{\redclr} $ denotes the tuple with white applications $ \lam z.z-\circ\cdots\circ -$. For $u$, we have:
{\begin{center}$
	\begin{array}{c|ll}
	\intsym\textsc{-Steps}&\textsc{Terms and $\nointsym$-steps}
	\\[2pt]
	\hline
	&

		\monoToBlue{u}\isub{\vec y}{\monoToRed{\Tupler{K}}}
	\ \ \tohnoint^* \ \ 
		\monoToBlue{h'}\isub{\vec y}{\monoToRed{\Tupler{K}}},\phantom{X^{X^{X^{}}}}\hfill\textrm{ by L.~\ref{l:correspondence}(i) \& \reflemmaeq{color-substitutivity}},
		\\[2pt]
=	&
	\bapp{\bapp{\bapp{
\monoToRed{\Tupler{K}}
	}{{\monoToBlue{u_1}}^{\sigma_{\vec y}}}}{\cdots}}{{\monoToBlue{u_{k+m}}}^{\sigma_{\vec y}}}
	\\[2pt]
\tohint^{k+m}	&
		\manyblam{n}{x}\manyrlam[k+1]{K}{w}\Tuple{{\monoToBlue{u_1}}^{\sigma_{\vec y}},\dots,{\monoToBlue{u_{k+m}}}^{\sigma_{\vec y}},w_{k+m+1},\dots,w_K}_{\redclr}
	\end{array}
$\end{center}}
Summing up, ${\monoToBlue{t}}^{\sigma_{\vec y}}\bshcol{k}$ and  ${\monoToBlue{u}}^{\sigma_{\vec y}}\bshcol{k+m}$. The statement holds since $m>0$.

\item \emph{$y$ is bound}, \ie $y = x_j\in\vec x$. Take any $K\ge k+m$, and let the arguments $\vec s$ be $n$ copies of $\Tupler{K}$ (noted $\Tupler{K}^{\sim n}$ for short). On the one hand:
{\begin{center}$
	\begin{array}{c|lll}
	\intsym\textsc{-Steps}&\textsc{Terms and $\nointsym$-steps}
	\\[2pt]
	\hline
	&\rapp{\monoToBlue{t}\isub{\vec y}{\monoToRed{\Tupler{K}}}}{{\monoToRed{\Tupler{K}}}^{\sim n}}
	\ \ \tohnoint^* \ \ 
\rapp{
		\monoToBlue{h}\isub{\vec y}{\monoToRed{\Tupler{K}}}
	}{{\monoToRed{\Tupler{K}}}^{\sim n}},
	&\textrm{by L.~\ref{l:correspondence}(i) \& \reflemmaeq{color-substitutivity}},
	\\[2pt]
=	&
	\rapp{
		\big(\manyblam{n}{x}\bapp{\bapp{\bapp{x_j}{{\monoToBlue{t_1}}^{\sigma_{\vec y}}}}{\cdots}}{{\monoToBlue{t_k}}^{\sigma_{\vec y}}}\big)
	}{{\monoToRed{\Tupler{K}}}^{\sim n}}\\[2pt]
\tohint^n	&
	\bapp{\bapp{\bapp{
		\monoToRed{\Tupler{K}}
	}{{\monoToBlue{t_1}}^{\sigma_{\vec x\vec y}}}}{\cdots}}{{\monoToBlue{t_k}}^{\sigma_{\vec x\vec y}}}\\[2pt]	
\tohint^k	&
		\manyrlam[k+1]{K}{w}\Tuple{{\monoToBlue{t_1}}^{\sigma_{\vec x\vec y}},\dots,{\monoToBlue{t_k}}^{\sigma_{\vec x\vec y}},w_{k+1},\dots,w_K}_{\redclr}
	\end{array}
$\end{center}}
On the other hand:
{\begin{center}$
	\begin{array}{c|lll}
	\intsym\textsc{-Steps}&\textsc{Terms and $\nointsym$-steps}
	\\[2pt]
	\hline
	&\rapp{\monoToBlue{u}\isub{\vec y}{\monoToRed{\Tupler{K}}}}{{\monoToRed{\Tupler{K}}}^{\sim n}}
	\ \ \tohnoint^* \ \ 
	\rapp{
		\monoToBlue{h'}\isub{\vec y}{\monoToRed{\Tupler{K}}}
	}{{\monoToRed{\Tupler{K}}}^{\sim n}},
		\ \ \ \ \textrm{by L.~\ref{l:correspondence}(i) \& \reflemmaeq{color-substitutivity}},
	\\[2pt]
=	&
	\rapp{
		\big({\lambda_{\blueclr\cdots \blueclr}}x_1\ldots x_{n}\vec z.\,\bapp{\bapp{\bapp{x_j}{{\monoToBlue{u_1}}^{\sigma_{\vec y}}}}{\cdots}}{{\monoToBlue{u_{k+m}}}^{\sigma_{\vec y}}}\big)
	}{{\monoToRed{\Tupler{K}}}^{\sim n}}\\[2pt]
\tohch^n	&
	\bapp{\bapp{\bapp{
				\monoToRed{\Tupler{K}}
	}{{\monoToBlue{u_1}}^{\sigma_{\vec x\vec y}}}}{\cdots}}{{\monoToBlue{u_{k+m}}}^{\sigma_{\vec x\vec y}}}\\[2pt]	
\tohch^{k+m}&
		\manyrlam[k+m+1]{K}{w}\Tuple{{\monoToBlue{u_1}}^{\sigma_{\vec x\vec y}},\dots,{\monoToBlue{u_{k+m}}}^{\sigma_{\vec x\vec y}},w_{k+m+1},\dots,w_K}_{\redclr}.
	\end{array}
$\end{center}}
Summing up, ${\monoToBlue{t}}^{\sigma_{\vec y}}\bshcol{n+k}$ and  ${\monoToBlue{u}}^{\sigma_{\vec y}}\bshcol{n+k+m}$. The statement holds as $m>0$.
\end{enumerate}

 \underline{Inductive case} $\delta = j\cdot\gamma$. In this case, we must have:
\begin{center}$
	t \to^*_\head h = \lam x_1\dots x_n.y\,t_1\cdots t_k
	\qquad
	\textrm{ and }
	\qquad	
	u \to^*_\head h'= \lam x_1\dots x_n.y\,u_1\cdots u_k
$\end{center}
with $t_j \not\etaredbtleq u_j$ and $(t_l \etabtle u_l)_{l \le k}$. By \ih, there exists $K'$ and $\vec{s'}$ such that for all $K\ge K'$:
\begin{center}$
	\rapp{{\monoToBlue{t_j}}^{\sigma_{\vec x\vec y}}}{\monoToRed{\,\vec{s'}}}
	\bshcol{i}
	\quad\mbox{ and }\quad
	\rapp{{\monoToBlue{u_j}}^{\sigma_{\vec x\vec y}}}{\monoToRed{\,\vec{s'}}}\bshcol{i'}\textrm{ with }i' > i.
$\end{center}
We consider any $K\ge \max\{K',k\}$.
We assume wlog.\ that $y$ is free, the other case being analogous.
{ \begin{center}$
	\begin{array}{c|lll}
	\textsc{Steps}&\textsc{Terms}
	\\[2pt]
	\hline
&
\rapp{
		\rapp{
			\rapp{
				\monoToBlue{t}\isub{\vec y}{\monoToRed{\Tupler{K}}}
			}{{\monoToRed{\Tupler{K}}}^{\sim n+K-k}}
		}{\monoToRed{\Proj{K}{j}}}
	}{\monoToRed{\,\vec{s'}}}, 
	\hfill \textrm{by L.~\ref{l:correspondence}(i) \& \reflemmaeq{color-substitutivity}},
	\\[2pt]
	\tohnoint^* &
	\rapp{
		\rapp{
			\rapp{
				\monoToBlue{h}\isub{\vec y}{\monoToRed{\Tupler{K}}}
			}{{\monoToRed{\Tupler{K}}}^{\sim n+K-k}}
		}{\monoToRed{\Proj{K}{j}}}
	}{\monoToRed{\,\vec{s'}}}
	&
	\\[2pt]
	=
	&
	\rapp{
		\rapp{
			\rapp{
				\big(\manyblam{n}{x}\bapp{\bapp{\bapp{\monoToRed{\Tupler{K}}}{{\monoToBlue{t_1}}^{\sigma_{\vec y}}}}{\cdots}}{{\monoToBlue{t_k}}^{\sigma_{\vec y}}}\big)
			}{{\monoToRed{\Tupler{K}}}^{\sim n+K-k}}
		}{\monoToRed{\Proj{K}{j}}}
	}{\monoToRed{\,\vec{s'}}}\\[2pt]
	\tohint^n&
	\rapp{
		\rapp{
			\rapp{
				\bapp{\bapp{\bapp{\monoToRed{\Tupler{K}}}{{\monoToBlue{t_1}}^{\sigma_{\vec x\vec y}}}}{\cdots}}{{\monoToBlue{t_k}}^{\sigma_{\vec x\vec y}}}
			}{{\monoToRed{\Tupler{K}}}^{\sim K-k}}
		}{\monoToRed{\Proj{K}{j}}}
	}{\monoToRed{\,\vec{s'}}}\\[2pt]				
	\tohint^k\tohnoint^*&
	\rapp{{\monoToBlue{t_j}}^{\sigma_{\vec x\vec y}}}{\monoToRed{\,\vec{s'}}} 
	\hfill \textrm{by \refeq{extraction-property}}.
	\end{array}
$\end{center}}
An identical sequence of steps extracts $\monoToBlue{u_j}$ from the other term, that is, we have:
\begin{center}
$\begin{array}{ccc}
\rapp{\rapp{\rapp{
				\monoToBlue{u}\isub{\vec y}{\monoToRed{\Tupler{K}}}
			}{{\monoToRed{\Tupler{K}}}^{\sim n+K-k}}
		}{\monoToRed{\Proj{K}{j}}}
	}{\monoToRed{\,\vec{s'}}}

&\tohnoint^*\tohint^{n+k}\tohnoint^*&
	\rapp{{\monoToBlue{u_j}}^{\sigma_{\vec x\vec y}}}{\monoToRed{\,\vec{s'}}}
	\end{array}$
	\end{center}
Note the same number of $\intsym$-steps. By defining $\monoToRed{\vec s}$ as the arguments ${\monoToRed{\Tupler{K}}}^{\sim n+K-k}$, $\monoToRed{\Proj{K}{j}}$, $\monoToRed{\,\vec{s'}}$, and by composing with what is obtained by the \ih, we obtain:
\begin{center}$
\rapp{
	\monoToBlue{t}\isub{\vec y}{\monoToRed{\Tupler{K}}}
}{\monoToRed{\vec{s}}}\bshcol{n+k+i}
\quad\text{and}\quad
\rapp{
	\monoToBlue{u}\isub{\vec y}{\monoToRed{\Tupler{K}}}
}{\monoToRed{\vec{s}}}\bshcol{n+k+i'},
$\end{center}
which is an instance of the statement because $i < i'$ by \ih\qed
\end{proof}

% !TEX root = ../main.tex

\section{Proofs of \refsect{compatibility} (Compositionality of Polarized Whiter-Cheaper)}

\gettoappendix{l:repaint}

% !TEX root = ../../main.tex

\begin{proof}\applabel{l:repaint}
	By induction on the structure of the term $\tm$. 
	
	\begin{itemize}
		\item Variable, that is $\tm=\var$.
	In that case we have a derivation:
	\[
	\infer{x : [\ltype]\hasstype[0] x : \ltype}{}
	\]
	if $\Pair{\typctx'}{\ltype'} \whiternegp 1 \Pair{x:[\ltype]}{\ltype}$, we have that either:
	\begin{itemize}
		\item \emph{The change is on the right:}
		$\typctx' = x:[\ltype]$ with $\ltype' \whiternegp 1 \ltype$.
		
		Then we can take the following derivation
		\[
		\infer{x : [\ltype']\hasstype[0] x : \ltype'}{}
		\]
		
		for which $\Pair{x:[\ltype']}{\ltype'} \whiterposp 1 \Pair{x:[\ltype]}{\ltype'}$ and the ($k$-)index of the derivation remains $0$.
		
		\item or \emph{the change is on the left:} $\typctx' = x:[\ltype'']$ with $\ltype'' \whiterposp 1 \ltype$ and $\ltype' = \ltype$.
		
		Then we can take the following derivation
		\[
		\infer{x : [\ltype'']\hasstype[0] x : \ltype''}{}. 
		\]
		
		for which $\Pair{x:[\ltype'']}{\ltype''} \whiterposp 1 \Pair{x:[\ltype'']}{\ltype}$ and the ($k$-)index of the derivation remains $0$.
	\end{itemize} 
	
	\item \emph{Abstraction}, that is $\tm = \lap\colr\var\tmtwo$ such that:
	\[
	\infer{\typctx\hasstype \lap\colr\var\tmtwo : \mtype \typearrowp{\colr}\ltype}{\typctx,x : \mtype \hasstype \tmtwo : \ltype}
	\] 
	suppose $\Pair{\typctx'}{\mtype' \typearrowp{\colr}\ltype'} \whiternegp 1 \Pair{\typctx}{\mtype \typearrowp{\colr}\ltype}$. (Note that the annotation on the arrow in the result type must be unchanged because we consider only $\whiternegp 1$.)
	
	We have that also $\Pair{\typctx',x:\mtype'}{\ltype'}\whiternegp 1 \Pair{\typctx, x:\mtype}{\ltype}$ so we can apply the inductive hypothesis and find a derivation of $\typctx'', x:\mtype'' \hasstype[k'] \tmtwo:\ltype''$ with either $k=k'$ and $\Pair{\typctx'', x:\mtype''}{\ltype''}\whiterposp 1
	\Pair{\typctx', x:\mtype'}{\ltype'}$, or $\typctx'' = \typctx'$, $\mtype'' = \mtype'$, $\ltype'' = \ltype'$ and $\abs{k'-k} =1$. In either case we can derive
	\[
	\typctx''\hasstype[k'] \lap\colr\var\tmtwo : \mtype'' \typearrowp{\colr}\ltype''
	\]
	satisfying the required condition.
	
%	\item \adr{[to do reorganize]} For the multiset case, we have a derivation 
%	\[
%	\infer{\sum_i\typctx_i\hasstype[\sum_i k_i] \tm : [\ltype_1,\dots,\ltype_n]}{\typctx_i\hasstype[k_i] \tm : \ltype_i}
%	\]
%	and we simply appeal to the inductive hypothesis.
	
	\item \emph{Application,} that is $\tm=\appp\colrtwo\tmtwo\tmthree$ such that:
	\[
	\infer{\typctx_1 + \typctx_2 \hasstype[k_1+k_2+\Kr{\colr}{\colrtwo}]\appp\colrtwo\tmtwo\tmthree :\ltype }
	{\typctx_1 \hasstype[k_1] \tmtwo : \mtype\typearrowp{\colr}\ltype
		&
		\typctx_2 \hasstype[k_2] \tmthree : \mtype
	}
	\]
	Let $\Pair{\typctx'}{\ltype'} \whiternegp 1 \Pair{\typctx_1+\typctx_2}{\ltype}$. Note that we can write $\typctx'$ (not necessarily uniquely) as $\typctx'_1 + \typctx'_2$ such that one of the following holds:
	\bsub
	\item $\typctx'_1 = \typctx_1$, $\ltype' = \ltype$ and $\typctx'_2 \whiterposp 1 \typctx_2$
	\item $\typctx'_1 = \typctx_1$, $\typctx'_2 = \typctx_2$ and $\ltype' \whiternegp 1 \ltype$; or
	\item $\typctx'_1 \whiterposp 1 \typctx_1$, $\typctx'_2 = \typctx_2$ and $\ltype' =  \ltype$. 
	\esub
	
	In each case, we repeatedly apply the induction hypothesis to $\tmtwo$ and $\tmthree$ until we reach a typing that works for both:
	
	\begin{center}
		there exists $\typctx_1'',\mtype'',\ltype'',\typctx_2''$ such that $\typctx_1'' \hasstype[k_1''] \tmtwo : \mtype''\typearrowp{\colr''}\ltype''
		$, $
		\typctx_2'' \hasstype[k_2''] \tmthree : \mtype''$ and there exists $0\leq i \leq 1$ such that $\Pair{\typctx_1''+ \typctx_2''}{\ltype''}\whiterposp i\Pair{\typctx'}{\ltype'}$ and $\abs{k_1''+k_2''+\Kr{\colr''}{\colrtwo} - k_1+k_2+\Kr{\colr}{\colrtwo}  } \leq 1 - i$
	\end{center}
	
	We fully develop how to obtain such an $\mtype''$ for the third case, but we can construct similarly $\mtype''$ for the other two cases.
	\begin{enumerate}
\item[(iii)]	
		We consider the third case for now. 
		We have that:		
		\[
		\typctx_1 \hasstype[k_1] \tmtwo : \mtype\typearrowp{\colr}\ltype \text{ and } \typctx'_1 \whiterposp 1 \typctx_1
		\]
		so we can apply the \ih to get $\typctxtwo_1, \mtype_1, \ltype_1,l_1, 0\leq i_1 \leq 1, \clr{1}$ such that 
		\[
		\typctxtwo_1 \hasstype[l_1] \tmtwo : \mtype_1\typearrowp{\clr{1}}\ltype_1 \text{, } \Pair{\typctxtwo_1}{\mtype_1\typearrowp{\clr{1}}\ltype_1} \whiterposp{i_1} \Pair{\typctx'_1}{ \mtype\typearrowp{\colr}\ltype} \text{ and } \abs{k_1 -l_1}\leq 1- i_1
		\]
		As there is only one color change, we can study two sub-cases:
		\begin{itemize}
			\item The color change is not in $\mtype$, that is $\mtype_1 = \mtype$.
			
			In this case, we can immediately return to the main derivation:
			
				\[
			\infer{\typctxtwo_1 + \typctx_2 \hasstype[l_1+k_2+\Kr{\clr{1}}{\colrtwo}]\appp\colrtwo\tmtwo\tmthree :\ltype }
			{\typctxtwo_1 \hasstype[l_1] \tmtwo : \mtype\typearrowp{\clr{1}}\ltype_1
				&
				\typctx_2 \hasstype[k_2] \tmthree : \mtype
			}
			\]
			
			Note that $\abs{\Kr{\clr{}}{\colrtwo} - \Kr{\clr{1}}{\colrtwo}} = \Kr{\clr{1}}{\colr}$ (as $\clr{1}$ is either the same color as $\colr$ or whiter), which allows us to conclude.
			
			\item The color change is in $\mtype$, hence $ \mtype_1 \whiternegp 1 \mtype$ but $\typctxtwo_1=\typctx'_1,\ltype= \ltype_1$ and $\clr{1} = \colr$. 
			
			In this case, we can reapply the \ih this time to 
			$\typctx_2 \hasstype[k_2] \tmthree : \mtype$ with $\mtype_1 \whiternegp 1 \mtype$ (in fact applying it to the one linear type in $\mtype$ whose color is changed). If we fall back to the case where the color change is not in $\mtype_1$ we are done. If we do not, we keep applying the \ih back and forth between $\tm$ and $\tmtwo$.
			
			We end up with a sequence of multi types $\mtype_0, \ldots, \mtype_j$
			%Let $\mtype_0, \ldots, \mtype_j$ be a maximal sequence of multi-types 
			such that $\mtype_0 = \mtype$ and 
			\[
			\ldots \whiternegp 1 \mtype_{2i} \whiterposp 1 \mtype_{2i-1} \whiternegp 1 \ldots \whiternegp 1 \mtype_2 \whiterposp 1 \mtype_1 \whiternegp 1 \mtype
			\]
			and for each $i$ we have
			\[
			\typctx'_1 \hasstype[k_1] \tm :\mtype_{2i+1} \typearrowp{\colr} \ltype
			\qquad
			\typctx_2 \hasstype[k_2] \tmtwo :\mtype_{2i}.
			\]
			Note that there cannot be an infinite such sequence because the type $\mtype$ is finite so can only be recoloured a finite number of times. We show that if $j$ is the maximal index of the sequence, $\mtype_j$ is the right candidate for our argument type. We proceed by case analysis on whether $j$ is odd or even:
			
			\begin{itemize}
				\item 
				Suppose $j$ is odd. Then we have
				\[
				\typctx'_1 \hasstype[k_1] \tm :\mtype_{j} \typearrowp{\colr} \ltype
				\qquad
				\typctx_2 \hasstype[k_2] \tmtwo :\mtype_{j-1}
				\]
				and $\mtype_j \whiternegp 1 \mtype_{j-1}$. By the inductive hypothesis there exists a derivation of $\typctx'_2 \hasstype[k'_2] \tmtwo : \mtype'_j$ with either
				\begin{itemize}
					\item $\Pair{\typctx'_2}{\mtype'_j} \whiterposp 1 \Pair{\typctx_2}{\mtype_j}$ and $k'_2 = k_2$; or
					\item $\Pair{\typctx'_2}{\mtype'_j} = \Pair{\typctx_2}{\mtype_j}$ and $\abs{k'_2 - k_2} =1$. 
				\end{itemize}
				In the first case, by maximality of the sequence of $\mtype_i$, it must be the case that $\mtype'_j = \mtype_j$ and $\typctx'_2 \whiternegp 1 \typctx_2$. Then we can derive
				\[
				\infer{\typctx'_1 + \typctx'_2 \hasstype[k_1+k_2+\Kr{\colr}{\colrtwo}] \tm\cdot^{\colrtwo} \tmtwo :\ltype }
				{\typctx'_1 \hasstype[k_1] \tm : \mtype_j\typearrowp{\colr}\ltype
					&
					\typctx''_2 \hasstype[k_2] \tmtwo : \mtype_j
				}
				\]
				Recalling that $\typctx' = \typctx'_1 + \typctx_2$ we observe that $\Pair{\typctx'_1 + \typctx'_2}{\ltype} \whiterposp 1 \Pair{\typctx'}{\ltype}$ as required.
				
				In the second case, we derive
				\[
				\infer{\typctx'_1 + \typctx_2 \hasstype[k_1+k'_2+\Kr{\colr}{\colrtwo}] \tm\cdot^{\colrtwo} \tmtwo :\ltype }
				{\typctx'_1 \hasstype[k_1] \tm : \mtype_j\typearrowp{\colr}\ltype
					&
					\typctx_2 \hasstype[k'_2] \tmtwo : \mtype_j
				}
				\]
				and since $\typctx' = \typctx'_1 + \typctx_2$ the argument is complete.
				
				\item 
				Now suppose $j$ is even. Then we have
				\[
				\typctx'_1 \hasstype[k_1] \tm :\mtype_{j-1} \typearrowp{\colr} \ltype
				\qquad
				\typctx_2 \hasstype[k_2] \tmtwo :\mtype_j
				\]
				and $\mtype_j \whiterposp 1 \mtype_{j-1}$, so that $\mtype_j\typearrowp{\colr}\ltype \whiternegp 1 \mtype_{j-1}\typearrowp{\colr}\ltype$. By the inductive hypothesis we obtain $\typctx''_1 \hasstype[k'_1] \tm :\mtype'_j \typearrowp{\colrp} \ltype'$ satisfying one of:
				\begin{itemize}
					\item $\Pair{\typctx''_1}{\mtype'_j \typearrowp{\colrp}\ltype'} \whiterposp 1
					\Pair{\typctx'_1}{\mtype_j \typearrowp{\colr} \ltype}$ and $k'_1=k_1$; or
					\item $\Pair{\typctx''_1}{\mtype'_j \typearrowp{\colrp} \ltype'} = 
					\Pair{\typctx'_1}{\mtype_j \typearrowp{\colr} \ltype}$ and $\abs{k'_1 - k_1}=1$. 
				\end{itemize}
				In the first case, note that the two typings differ in exactly one colouring, and that this cannot be between $\mtype'_j$ and $\mtype_j$ by maximality of the sequence of $\mtype_i$s. Therefore $\mtype'_j = \mtype_j$. If $\colrp = \colr$ then we have
				$\Pair{\typctx''_1}{\mtype_j \typearrowp{\colr} \ltype'} \whiterposp 1    \Pair{\typctx'_1}{\mtype_j \typearrowp{\colr} \ltype}$ and we can derive
				\[
				\infer{\typctx''_1 + \typctx_2 \hasstype[k_1+k_2+\Kr{\colr}{\colrtwo}] \tm\cdot^{\colrtwo} \tmtwo :\ltype }
				{\typctx''_1 \hasstype[k_1] \tm : \mtype_j\typearrowp{\colr}{}\ltype'
					&
					\typctx_2 \hasstype[k_2] \tmtwo : \mtype_j
				}
				\]
				with $\Pair{\typctx''_1+\typctx_2}{\ltype'} \whiterposp 1 \Pair{\typctx'_1+\typctx_2}{\ltype}$ as required. Alternatively if $\colrp\not =\colr$, then $\typctx''_1 = \typctx'_1$ and $\ltype'=\ltype$, and we have
				\[
				\infer{\typctx'_1 + \typctx_2 \hasstype[k_1+k_2+\Kr{\colrp}{\colrtwo}] \tm\cdot^{\colrtwo} \tmtwo :\ltype }
				{\typctx'_1 \hasstype[k_1] \tm : \mtype_j\typearrowp{\colrp}\ltype
					&
					\typctx_2 \hasstype[k_2] \tmtwo : \mtype_j
				}
				\]
				But now notice that
				\[
				\abs{(k_1+k_2+\Kr{\colrp}{\colrtwo}) - (k_1+k_2+\Kr{\colr}{\colrtwo})} =
				\abs{\Kr{\colrp}{\colrtwo}- \Kr{\colr}{\colrtwo}} = 1
				\]
				completing the argument. \qed
			\end{itemize}
		\end{itemize}

	\end{enumerate}

	\end{itemize}
\end{proof}

We formally define the following commutation property for the $\whiternegp 1$ and $\whiterposp 1$ relations. The property easily generalizes to $\whiternegp{k}$ and $\whiterposp{l}$ for any $k$ and $l$. Note that we only write down the case \(k=l=1\), from which the general case follows by induction. 

\ignore{
\noindent
\begin{minipage}[t]{0.63\textwidth}
	\vspace{0pt}
	
	\begin{lemma}
		\label{l:commutation-neg-and-pos}
		Let $\typctx,\madewhiterpos{\typctx}, \madewhiterneg{\typctx}$ and $\ltype,\madewhiterpos{\ltype},\madewhiterneg{\ltype}$ such that:
		\begin{itemize}
			\item  $\Pair{\madewhiterneg{\typctx}}{\madewhiterneg{\ltype}}\whiternegp 1 \Pair{\typctx}{\ltype}$ and;
			\item  $\Pair{\madewhiterpos{\typctx}}{\madewhiterpos{\ltype}}\whiterposp 1 \Pair{\typctx}{\ltype}$.
		\end{itemize}
		
		Then there exist $\madewhiternegandpos\typctx,\madewhiternegandpos\ltype$ such that:
		\begin{itemize}
			\item $ \Pair{\madewhiternegandpos\typctx}{\madewhiternegandpos\ltype}
			\whiterposp 1
			\Pair{\madewhiterneg{\typctx}}{\madewhiterneg{\ltype}}$
			\item $ \Pair{\madewhiternegandpos\typctx}{\madewhiternegandpos\ltype} 
			\whiternegp 1
			\Pair{\madewhiterpos{\typctx}}{\madewhiterpos{\ltype}}$
		\end{itemize}
	\end{lemma}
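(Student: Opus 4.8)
The plan is to establish the statement by a single mutual structural induction on linear and multi types; the cases for environments and for pairs $\Pair{\typctx}{\ltype}$ are entirely analogous, a pair behaving like an arrow whose ``domain'' is the environment and which carries the polarity flip dictated by the $\Pair{}{}$-rule of Figure~\ref{fig:polwhite}, so equivalently one may first fold the environments into the linear types via the Inversion clause of Lemma~\ref{lem:whitercheaperrel}(ii). It thus suffices to prove: for $\gtype$ a linear type or a multi type, whenever $\madewhiterneg{\gtype} \whiternegp 1 \gtype$ and $\madewhiterpos{\gtype} \whiterposp 1 \gtype$, there exists $\madewhiternegandpos{\gtype}$ with $\madewhiternegandpos{\gtype} \whiterposp 1 \madewhiterneg{\gtype}$ and $\madewhiternegandpos{\gtype} \whiternegp 1 \madewhiterpos{\gtype}$. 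The underlying picture, which the induction only makes precise, is that $\madewhiterneg{\gtype}$ recolours a single black arrow occurring \emph{negatively} in $\gtype$ while $\madewhiterpos{\gtype}$ recolours a single black arrow occurring \emph{positively}; since positive and negative occurrences are necessarily distinct arrows, $\madewhiternegandpos{\gtype}$ is obtained simply by performing both recolourings.

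The base case, $\gtype$ an atom, is vacuous: the only rule deriving $\vartype \whiterpol{\posorneg}{k} \vartype$ forces $k = 0$. For the arrow step $\gtype = \mtype \typearrowp{\colr} \gtype_0$, I would invert the two hypotheses. Since Figure~\ref{fig:polwhite} provides no rule that whitens an outermost arrow in the negative mode, $\madewhiterneg{\gtype}$ must preserve the top colour $\colr$ and recolour either inside $\mtype$ --- where, the domain flipping polarity, this appears as a $\whiterposp 1$-change of $\mtype$ --- or inside $\gtype_0$, as a $\whiternegp 1$-change. Dually, $\madewhiterpos{\gtype}$ either whitens the outermost arrow itself (possible only when $\colr = \bullet$) or preserves $\colr$ and recolours inside $\mtype$ (a $\whiternegp 1$-change) or inside $\gtype_0$ (a $\whiterposp 1$-change). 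A short case analysis then closes the step: whenever the two changes lie in disjoint subterms --- in particular whenever one of them whitens the outermost arrow --- one performs both and reassembles $\madewhiternegandpos{\gtype}$ with the appropriate rule of Figure~\ref{fig:polwhite} (preserving the outermost arrow, or whitening it on the side that was whitened), the two indices being evidently $1$; if both changes lie inside $\gtype_0$ one applies the induction hypothesis to $\gtype_0$; if both lie inside $\mtype$ one applies it to $\mtype$. For a multi type $\mtype = [\ltype_1,\dots,\ltype_n]$ each whitening affects exactly one component, so one either combines changes in two distinct components or invokes the induction hypothesis on the common component. Finally the general $\whiternegp{k}/\whiterposp{l}$ version follows by an outer induction on $k+l$, as noted just before the lemma.

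The only delicate point is the polarity bookkeeping in the case where both changes occur inside $\mtype$, compounded by the asymmetry of Figure~\ref{fig:polwhite} (no negative-mode top-whitening rule): when applying the induction hypothesis to $\mtype$ one must classify the two whitenings by their polarity \emph{relative to $\mtype$}, so that the $\whiternegp 1$-change of $\gtype$ is handed over as a $\whiterposp 1$-change of $\mtype$ and the $\whiterposp 1$-change of $\gtype$ as a $\whiternegp 1$-change of $\mtype$. This causes no trouble because the statement being proved is symmetric in the two whitenings --- it merely asserts the existence of a common refinement of a $\whiternegp 1$-child and a $\whiterposp 1$-child, refining them by $\whiterposp 1$ and $\whiternegp 1$ respectively --- and so may be applied in either mode without modification. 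With this in place each case closes by a single rule application with index $1$, and lifting back to the pair $\Pair{\typctx}{\ltype}$ through Lemma~\ref{lem:whitercheaperrel}(ii) is routine, the only subtlety being the additional polarity flip the $\Pair{}{}$-rule imposes on the environment component.
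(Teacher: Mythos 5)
Your proposal is correct and rests on exactly the same idea as the paper's (one-line) proof: the $\whiternegp 1$- and $\whiterposp 1$-whitenings recolour arrows at occurrences of opposite polarity, hence distinct occurrences, so both repaintings can be performed simultaneously. Your structural induction merely makes this disjointness argument formally explicit against the rules of Figure~\ref{fig:polwhite}, and the polarity bookkeeping you carry out is accurate.
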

\end{minipage}
\hfill
\begin{minipage}[t]{0.35\textwidth}
	\vspace{10pt}
	\begin{tabular}{|c}
		\begin{tikzcd}
			{			\Pair{\madewhiterneg{\typctx}}{\madewhiterneg{\ltype}}} \arrow[dd, "\rotatebox{90}{$\whiterposp 1$}" description, dashed, no head] 
			&   
			{\Pair{\typctx}{\ltype}} \arrow[dd, "
			\rotatebox{90}{$\whiterposp 1$}" description, no head] \arrow[l, "
			\whiternegp 1" description, no head] 
			\\
			&  
			\\
			{\Pair{\madewhiternegandpos\typctx}{\madewhiternegandpos\ltype}}                
			&  
			{\Pair{\madewhiterpos{\typctx}}{\madewhiterpos{\ltype}}}  \arrow[l, "
			\whiternegp 1" description, dashed, no head]               
		\end{tikzcd}
	\end{tabular}
	%\vspace{10pt}
\end{minipage}

\begin{proof}
	The typing \(\Pair{\madewhiterneg{\typctx}}{\madewhiterneg{\ltype}}\) arises by repainting one negatively-occurring arrow in \(\Pair{\typctx}{\ltype}\) from black to white; similarly \(\Pair{\madewhiterpos{\typctx}}{\madewhiterpos{\ltype}}\) arises by repainting one positively occurring arrow in \(\Pair{\typctx}{\ltype}\). Construct \(\Pair{\madewhiternegandpos\typctx}{\madewhiternegandpos\ltype}\) by making both these repaintings simultaneously.\qed
\end{proof}
}

Note that the following proposition also holds for multiset types instead of linear types.%The corresponding statement for multiset types also holds.

\gettoappendix{prop:multirepaint}

\begin{proof}\applabel{prop:multirepaint}
	We decompose $\Pair{\typctx'}{\ltype'} \whiternegp {k_1} \Pair{\typctx}{\ltype}$ in the following sequence, where $\Pair{\typctx'}{\ltype'}= \Pair{\typctx_{k_1}}{\ltype_{k_1}}$ and $\Pair{\typctx}{\ltype}=\Pair{\typctx_{0}}{\ltype_{0}}$ and reason by induction on $k_1$.
	
	\begin{tikzcd}[
		every matrix/.append style={name=mycd},
		execute at end picture={
			\draw[decorate, decoration={brace, amplitude=8pt}]
			(mycd-1-1.north) -- 
			(mycd-1-5.north) node[midway, yshift=15pt, font=\scriptsize]{$k_1$};
		}]
		{			\Pair{\typctx_{k_1}}{\ltype_{k_1}} } %\arrow[dd, "\rotatebox{90}{$\whiterposp 1$}" description, dashed, no head] 
		&   
		{\Pair{\typctx_{k_1 -1}}{\ltype_{k_1-1}}} %\arrow[dd, "	\rotatebox{90}{$\whiterposp 1$}" description, no head]
		\arrow[l, "
		\whiternegp 1" description, no head] 
		&
		{\cdots} %\arrow[dd, "\rotatebox{90}{$\whiterposp 1$}" description, no head] 
		\arrow[l, "\whiternegp 1" description, no head]		
		& 
		{\Pair{\typctx_{1}}{\ltype_{1}}} %\arrow[dd, "\rotatebox{90}{$\whiterposp 1$}" description, no head] 
		\arrow[l, "\whiternegp 1" description, no head] 
		& 
		{\Pair{\typctx_{0}}{\ltype_{0}}} %\arrow[dd, "\rotatebox{90}{$\whiterposp 1$}" description, no head] 
		\arrow[l, "\whiternegp 1" description, no head]       
	\end{tikzcd}
	
	\begin{itemize}
		\item $k_1 =0$. The same derivation  $\typctx\hasstype[k] \tm : \ltype$ has the required property.
		
		\item $k_1 \mapsto k_1 + 1$.
		
		\ignore{
			We apply the repainting \reflemma{repaint} to the first negative repainting,  which entails that there exists a derivation of $\typctx'_1 \hasstype[k'] \tm:\ltype_1'$ such that  such that one of the following holds:
			\begin{itemize}
				\item  $\Pair{\typctx'_1}{\ltype'_1}\whiterposp 1 \Pair{\typctx_1}{\ltype_1}$ and $k=k'$; or
				\item  $\Pair{\typctx'_1}{\ltype'_1} = \Pair{\typctx_1}{\ltype_1}$ and $\abs{k'-k} = 1$. 
			\end{itemize} 
			
		}

		\begin{tikzcd}[
			every matrix/.append style={name=mycd},
			execute at end picture={
				\draw[decorate, decoration={brace, amplitude=8pt}]
				(mycd-1-1.north) -- 
				(mycd-1-5.north) node[midway, yshift=15pt, font=\scriptsize]{$k_1 +1$};
				\draw[decorate, decoration={brace, amplitude=8pt,mirror}]
				(mycd-3-1.south) -- 
				(mycd-3-4.south) node[midway, yshift=-15pt, font=\scriptsize]{$k_1$};
			}]
			{			\Pair{\typctx_{k_1+1}}{\ltype_{k_1+1}} } \arrow[dd, "\rotatebox{90}{$\whiterposp i$}" description, dotted, no head] 
			&   
			{\Pair{\typctx_{k_1}}{\ltype_{k_1}}} \arrow[dd, "
			\rotatebox{90}{$\whiterposp i$}" description, no head, dotted] \arrow[l, "
			\whiternegp 1" description, no head] 
			&
			{\cdots} \arrow[dd, "\cdots" description, no head, dotted] 
			\arrow[l, "\whiternegp 1" description, no head]		
			& 
			{\Pair{\typctx_{1}}{\ltype_{1}}} %\arrow[dd, "\rotatebox{90}{$\whiterposp 1$}" description, no head] 
			\arrow[l, "\whiternegp 1" description, no head] 
			\arrow[dd, "\rotatebox{90}{$\whiterposp i$}" description, dashed, no head] 
			& 
			{\Pair{\typctx_{0}}{\ltype_{0}}} %\arrow[dd, "\rotatebox{90}{$\whiterposp 1$}" description, no head] 
			\arrow[l, "\whiternegp 1" description, no head] 
			\\
			&
			\\
			{			\Pair{\typctx'_{k_1+1}}{\ltype'_{k_1+1}} } %\arrow[dd, "\rotatebox{90}{$\whiterposp 1$}" description, dashed, no head] 
			&   
			{\Pair{\typctx'_{k_1 }}{\ltype'_{k_1}}} %\arrow[dd, "	\rotatebox{90}{$\whiterposp 1$}" description, no head] 
			\arrow[l, "\whiternegp 1" description, no head] 
			&
			{\cdots} %\arrow[dd, "\rotatebox{90}{$\whiterposp 1$}" description, no head] 
			\arrow[l, "\whiternegp 1" description, no head]		
			& 
			{\Pair{\typctx'_{1}}{\ltype'_{1}}} %\arrow[dd, "\rotatebox{90}{$\whiterposp 1$}" description, no head] 
			\arrow[l, "\whiternegp 1" description, no head] 
			& 
			%{\Pair{\typctx'_{0}}{\ltype'_{0}}} %\arrow[dd, "\rotatebox{90}{$\whiterposp 1$}" description, no head] 
			%\arrow[l, "\whiternegp 1" description, no head]             
		\end{tikzcd}
		
		where the dashed arrow is obtained by the Repainting \reflemma{repaint}, where $i=0$ or $i=1$ depending on the outcome of the repainting and for which $\typctx_1'\ctypes {l} \tm \hastype \ltype_1'$. Moreover, if $i=1$ then $k=l$ and if $i=0$ then $\abs{l-k} = 1$; which can be summed up as $\abs{l-k} \leq 1 - i$ (where $1-i$ is always positive). 
		
		We then close the diagram with the dotted arrows by commutation (\reflemma{commutation-neg-and-pos}). 
		
		From there we can apply the inductive hypothesis, which entails that there exist $\typctx'' \ctypes{k'} \tm\hastype\ltype''$ such that $\Pair{\typctx''}{\ltype''}\whiterposp {k_2} \Pair{\typctx'_{k_1+1}}{\ltype'_{k_1+1}}$
		where $0 \leq k_2 \leq k_1$,  and $\abs{l- k'} \leq k_1 - k_2$.
		\ignore{
			\begin{tikzcd}
				{			\Pair{\typctx_{k_1+1}}{\ltype_{k_1+1}} } \arrow[dd, "\rotatebox{90}{$\whiterposp i$}" description, no head] 
				&   
				{\Pair{\typctx_{k_1}}{\ltype_{k_1}}} \arrow[dd, "
				\rotatebox{90}{$\whiterposp i$}" description, no head] \arrow[l, "
				\whiternegp 1" description, no head] 
				&
				{\cdots} \arrow[dd, "\cdots" description, no head] 
				\arrow[l, "\whiternegp 1" description, no head]		
				& 
				{\Pair{\typctx_{1}}{\ltype_{1}}} %\arrow[dd, "\rotatebox{90}{$\whiterposp 1$}" description, no head] 
				\arrow[l, "\whiternegp 1" description, no head] 
				\arrow[dd, "\rotatebox{90}{$\whiterposp i$}" description, no head] 
				& 
				{\Pair{\typctx_{0}}{\ltype_{0}}} %\arrow[dd, "\rotatebox{90}{$\whiterposp 1$}" description, no head] 
				\arrow[l, "\whiternegp 1" description, no head] 
				\\
				&
				\\
				{			\Pair{\typctx'_{k_1+1}}{\ltype'_{k_1+1}} } \arrow[dd, "\rotatebox{90}{$\whiterposp {k_2}$}" description, dashed, no head] 
				&   
				{\Pair{\typctx'_{k_1 }}{\ltype'_{k_1}}} %\arrow[dd, "	\rotatebox{90}{$\whiterposp 1$}" description, no head] 
				\arrow[l, "\whiternegp 1" description, no head] 
				&
				{\cdots} %\arrow[dd, "\rotatebox{90}{$\whiterposp 1$}" description, no head] 
				\arrow[l, "\whiternegp 1" description, no head]		
				& 
				{\Pair{\typctx'_{1}}{\ltype'_{1}}} %\arrow[dd, "\rotatebox{90}{$\whiterposp 1$}" description, no head] 
				\arrow[l, "\whiternegp 1" description, no head] 
				& 
				%{\Pair{\typctx'_{0}}{\ltype'_{0}}} %\arrow[dd, "\rotatebox{90}{$\whiterposp 1$}" description, no head] 
				%\arrow[l, "\whiternegp 1" description, no head]             
				\\
				&
				\\
				{			\Pair{\typctx''}{\ltype''} } %\arrow[dd, "\rotatebox{90}{$\whiterposp 1$}" description, dashed, no head] 
				&  
				%{\Pair{\typctx'_{0}}{\ltype'_{0}}} %\arrow[dd, "\rotatebox{90}{$\whiterposp 1$}" description, no head] 
				%\arrow[l, "\whiternegp 1" description, no head]             
		\end{tikzcd}}
		
		We have that $\abs{k- k'} \leq \abs{k-l}+\abs{l-k'}\leq k_1 -k_2 + 1 -i$ and $\Pair{\typctx''}{\ltype''}\whiterposp {k_2+i} \Pair{\typctx'}{\ltype'}$ (by transitivity of $\whiterposp -$).	
		\qed

	\end{itemize}
	
\end{proof}

\gettoappendix{prop:apprepaint}

% !TEX root = ../../main.tex

\begin{proof}\applabel{prop:apprepaint}
	We proceed by induction on the number of black-annotated arrows in the types $\mtype$ and $\mtypetwo$. 
	
	\begin{itemize}
		\item In the base case, if there are no black-annotated arrows in $\mtype$ or $\mtypetwo$, then $\mtype \whiterneg_{d} \mtypetwo$ or $\mtypetwo \whiterpos_{d} \mtype$ entails that $d= 0$ and $\mtype = \mtypetwo$. Then we can immediately use the application rule to conclude that $\typctx+\typctxtwo \hasstype[k+l+\Kr{\colr}{\colrtwo}] \tm \cdot^{\colrtwo} \tmtwo : \ltype$, and taking $d'=0$ we conclude. 
		
		\item 
		For the inductive step, we treat the case where $\mtypetwo \whiterpos_{d} \mtype$; the other case is similar and simpler. 
		
		We have $\mtypetwo \typearrowp{\colr}\ltype \whiterneg_{d} \mtype \typearrowp{\colr}\ltype$.   Applying \refprop{multirepaint} to $\tm$ we obtain a typing $\typctx' \hasstype[k'] \tm:\mtype' \typearrowp{\colrp}\ltype'$ where $\Pair{\typctx'}{\mtype' \typearrowp{\colrp}\ltype'} \whiterpos_{d_1} \Pair{\typctx}{\mtypetwo \typearrowp{\colr}\ltype}$ and $k' \leq k + d-d_1$.
		
		The $d_1$ recolorings that transform $\Pair{\typctx}{\mtypetwo \typearrowp{\colr}\ltype}$ into $\Pair{\typctx'}{\mtype' \typearrowp{\colrp}\ltype'}$ split among the components of the type so that 
		there are $d_2, d_3$ such that $\Pair{\typctx'}{\ltype'} \whiterpos_{d_2} \Pair{\typctx}{\ltype}$, $\mtype' \whiterneg_{d_3} \mtypetwo$ and $d_1 = d_2 + d_3 + \Kr{\colr}{\colrp}$. 
		\begin{itemize}
			\item Suppose $d_3 = 0$, hence $\mtype' = \mtypetwo$. In this case, we easily construct the typing derivation for $\appp\colrtwo\tm\tmtwo$ $\typctx'+\typctxtwo \ctypes{k' + l + \Kr{\colrp}{\colrtwo}} \appp\colrtwo\tm\tmtwo \hastype \ltype'$ such that $\Pair{\typctx'+\typctxtwo'}{\ltype'} \whiterposp {d_2 } \Pair{\typctx+\typctxtwo}{\ltype}$ and we check that  $k' + l + \Kr{\colrp}{\colrtwo}\leq k+l+ \Kr{\colr}{\colrtwo} + d -d_2$:
			
			\begin{eqnarray*}
				k' + l + \Kr{\colrp}{\colrtwo}   & \leq & k + d-d_1+l+ \Kr{\colrp}{\colrtwo} \\
				& = &k +l+ d + \Kr{\colrp}{\colrtwo} - d_2 - \Kr{\colr}{\colrp}  \\
				& \leq  & k+l+ \Kr{\colr}{\colrtwo} + d -d_2 \\
			\end{eqnarray*}
			 
			noting that $ \Kr{\colrp}{\colrtwo} - \Kr{\colr}{\colrp} \leq \Kr{\colr}{\colrtwo}$.
			\item Otherwise, $d_3 > 0$, which means that $\mtype'$ has strictly fewer black-annotated arrows than $\mtypetwo$, hence
			we can apply the inductive hypothesis to $\typctx' \hasstype[k'] \tm:\mtype' \typearrowp{\colrp}\ltype'$ and $\typctxtwo \hasstype[l] \tmtwo: \mtypetwo$. This gives 
			us a typing $\typctx'' + \typctxtwo' \hasstype[m] \tm \cdot^{\colrtwo} \tmtwo: \ltype''$ where $\Pair{\typctx''+\typctxtwo'}{\ltype''} \whiterpos_{d'} \Pair{\typctx'+\typctxtwo}{\ltype'}$ and
			$m \leq k' + l +  \Kr{\colrp}{\colrtwo} + d_3 - d'$.
			Observe that $\Pair{\typctx''+\typctxtwo'}{\ltype''} \whiterpos_{d'+d_2} \Pair{\typctx+\typctxtwo}{\ltype}$. To complete the proof we calculate
			\begin{eqnarray*}
				m   & \leq & k'+l+ \Kr{\colrp}{\colrtwo} + d_3 - d'  \\
				& \leq & k + d - d_1 + l+ \Kr{\colrp}{\colrtwo} + d_3 - d'  \\
				& =  & k + l + \Kr{\colrp}{\colrtwo} + d - d_2 - d_3 - \Kr{\colr}{\colrp} + d_3 - d'  \\
				& = & k + l+ \Kr{\colrp}{\colrtwo} -  \Kr{\colr}{\colrp} + d - (d'+d_2)\\
				& \leq & k + l+ \Kr{\colr}{\colrtwo} + d - (d'+d_2)
			\end{eqnarray*}
			noting that $ \Kr{\colrp}{\colrtwo} - \Kr{\colr}{\colrp} \leq \Kr{\colr}{\colrtwo}$. \qed
		\end{itemize} 
	\end{itemize}
	
\end{proof}

\gettoappendix{prop:leqpwc-contextual}

% !TEX root = ../../main.tex
\begin{proof}\applabel{prop:leqpwc-contextual}
	Note that it is sufficient (by transitivity of $\lecolpos$) to prove for all $\tm,\tmtwo,\tmthree$ such that  $\tm \lecolpos \tmtwo$, we have that for all $\colr$:
	\begin{enumerate}[(i)]
		\item $\lap{\colr}\var\tm \lecolpos \lap{\colr}\var\tmtwo$; 
		\item $\appp\colr\tm\tmthree \lecolpos \appp\colr\tmtwo\tmthree$;
		\item $\appp\colr\tmthree\tm\lecolpos  \appp\colr\tmthree\tmtwo$.
	\end{enumerate}
	
	We now prove these three properties:
	
	\begin{enumerate}[(i)]
		\item For the abstraction case, note that $\Tr{\typctx}{\mtype \typearrowp{\colr}\ltype}{k} \in \semint{\lambda^{\colr} x. \tm}$ if and only if
		$\Tr{\typctx, x:\mtype}{\ltype}{k} \in \semint{\tm}$, and $\Pair{\typctx}{\mtype \typearrowp{\colr}\ltype} \whiterposp {d} \Pair{\typctx'}{\mtype' \typearrowp{\colr}\ltype'}$ if and only if $\Pair{\typctx,x:\mtype}{\ltype} \whiterposp d \Pair{\typctx', x:\mtype'}{\ltype'}.$ The result then follows directly.
		
		\item For the first application case, suppose $\Tr{\typctx}{\ltype}{k} \in \semint{\tm \cdot^{\colrtwo} \tmthree}$. Then we must have $\Tr{\typctx_1}{\mtype \typearrowp{\colr}\ltype}{k_1} \in \semint{\tm}$ and $\Tr{\typctx_2}{\mtype}{k_2} \in \semint{\tmthree}$ with $k = k_1 + k_2 + \Kr{\colr}{\colrtwo}$ and $\typctx = \typctx_1 + \typctx_2$. Since $\tm\lecolpos \tmtwo$ we can find $\Tr{\typctx'_1}{\mtype' \typearrowp{\clrp{}} \ltype'}{k'_1} \in \semint{\tmtwo}$ with $\Pair{\typctx'_1}{\mtype' \typearrowp{\clrp{}} \ltype'} \whiterposp {d} \Pair{\typctx_1}{\mtype \typearrowp{\colr}\ltype}$ and $k_1 \geq k'_1 + d$. This implies that there are $d_1$, $d_2$ such that $d = d_1 + d_2 + \Kr{\colr}{\clrp{}}$ and $\Pair{\typctx'_1}{\ltype'}\whiterposp {d_1} \Pair{\typctx_1}{\ltype}$ and $\mtype' \whiternegp {d_2} \mtype$. By Proposition~\ref{prop:apprepaint} there exists $\Tr{\typctx''_1+\typctx'_2}{\ltype''}{m} \in \semint{\tmtwo \cdot^{\colrtwo} \tmthree}$ with $\Pair{\typctx''_1+\typctx'_2}{\ltype''} \whiterposp {d'} \Pair{\typctx'_1+\typctx_2}{\ltype'}$ and $m \leq k'_1 + k_2 + \Kr{\clrp{}}{\colrtwo} +d_2-d'$. Then we have
		\[
		\Pair{\typctx''_1+\typctx'_2}{\ltype''} \whiterposp {d'} \Pair{\typctx'_1+\typctx_2}{\ltype'} \whiterposp {d_1} \Pair{\typctx_1+\typctx_2}{\ltype}
		\]
		and hence $\Pair{\typctx''_1+\typctx'_2}{\ltype''} \whiterposp {d'+d_1} \Pair{\typctx_1+\typctx_2}{\ltype} = \Pair{\typctx}{\ltype}$. It remains to show that $k \geq m+d'+d_1$:
		\begin{eqnarray*}
			m   & \leq & k'_1 + k_2 + \Kr{\clrp{}}{\colrtwo} + d_2 - d'  \\
			& \leq & k_1 - d + k_2 + \Kr{\clrp{}}{\colrtwo} + d_2 - d'  \\
			& =  & k_1 + k_2 + \Kr{\clrp{}}{\colrtwo} +d_2 - d' - d_1 - d_2 - \Kr{\colr}{\clrp{}} \\
			& = &  k_1 + k_2 + \Kr{\clrp{}}{\colrtwo} - \Kr{\colr}{\clrp{}} - d' - d_1 \\
			& \leq & k_1 + k_2 + \Kr{\colr}{\colrtwo}- d' - d_1 \\
			& = & k - d' - d_1
		\end{eqnarray*}
		as required.
		
		\item The other application case is handled symmetrically. Note that in this case we end up in the dual case of \refprop{apprepaint} where $\mtype \whiterposp {d_2} \mtype'$.\qed
	\end{enumerate}

\end{proof}
}
{}

\end{document}